\newlist{enum-hypothesis}{enumerate}{1}
\setlist[enum-hypothesis]{label=(\arabic*),itemsep=0pt, parsep=0pt}
\newtheorem{theorem}{Theorem}[section]
\newtheorem{proposition}[theorem]{Proposition}
\newtheorem{lemma}[theorem]{Lemma}
\newtheorem{corollary}[theorem]{Corollary}
\newtheorem{definition}[theorem]{Definition}
\theoremstyle{plain}
\newtheorem{remark}[theorem]{Remark}
\newtheorem{example}[theorem]{Example}
\theoremstyle{break}
\theoremstyle{nonumberplain}
\newtheorem{proof}{Proof}
\newcommand{\bbC}{\mathbb{C}}
\newcommand{\bbN}{\mathbb{N}}
\newcommand{\bbR}{\mathbb{R}}
\newcommand\bbZ{\mathbb{Z}}
\newcommand{\bbbone}{{\text{\usefont{U}{bbold}{m}{n}\char49}}} % from \DeclareSymbolFont{bbold}{U}{bbold}{m}{n} in mathbbol.sty
\newcommand{\bbbzero}{{\text{\usefont{U}{bbold}{m}{n}\char48}}} % from \DeclareSymbolFont{bbold}{U}{bbold}{m}{n} in mathbbol.sty
\newcommand{\hbbbone}{\widehat{\bbbone}}
\newcommand{\boldone}{\mathbf{1}}
\newcommand{\boldzero}{\mathbf{0}}
\newcommand{\calA}{\mathcal{A}}
\newcommand{\calB}{\mathcal{B}}
\newcommand{\calD}{\mathcal{D}}
\newcommand{\calG}{\mathcal{G}}
\newcommand{\calL}{\mathcal{L}}
\newcommand{\calM}{\mathcal{M}}
\newcommand{\calN}{\mathcal{N}}
\newcommand{\calP}{\mathcal{P}}
\newcommand{\calS}{\mathcal{S}}
\newcommand{\calU}{\mathcal{U}}
\newcommand{\calZ}{\mathcal{Z}}
\newcommand{\kX}{\mathfrak{X}}
\newcommand{\kY}{\mathfrak{Y}}
\newcommand{\ksl}{\mathfrak{sl}}
\newcommand{\ksu}{\mathfrak{su}}
\newcommand{\kS}{\mathfrak{S}}
\newcommand{\algA}{\calA}
\newcommand{\algB}{\calB}
\newcommand{\modM}{\calM}
\newcommand{\modN}{\calN}
\newcommand{\hstar}{\star} % Hodge star operator
\newcommand{\grast}{\bullet} % symbol for graded algebras
\newcommand{\omi}[1]{\buildrel { \buildrel{#1}\over{\vee} } \over .} % "omit" sign
\newcommand\exter{{\textstyle\bigwedge}} % big wedge for algebras
\newcommand\cdotaction{\mathord{\cdot}}
\newcommand\halgA{\widehat{\algA}}
\newcommand\halgB{\widehat{\algB}}
\newcommand\tphi{\widetilde{\phi}}
\newcommand{\defeq}{\vcentcolon=} % :=
\DeclareMathOperator{\tsum}{\textstyle\sum}
\DeclareMathOperator{\toplus}{\textstyle\oplus}
\DeclareMathOperator{\ad}{ad}
\DeclareMathOperator{\card}{card}
\DeclareMathOperator{\Der}{Der}
\DeclareMathOperator{\diag}{diag}
\DeclareMathOperator{\Hom}{Hom} %% Homomorphism
\DeclareMathOperator{\Id}{Id}
\DeclareMathOperator{\Int}{Int}
\DeclareMathOperator{\rank}{rank}	 %% Endomorphism
\DeclareMathOperator{\tr}{tr}	   %% trace
\DeclarePairedDelimiter\abs{\lvert}{\rvert}
\newcommand{\dd}{\text{\textup{d}}}
\newcommand{\dddR}{\dd_{\text{dR}}}
\newcommand{\vol}{\text{\textup{vol}}}
\newcommand{\iotaDer}{\iota^{\Der}}
\newcommand{\iotaMod}{\iota^{\text{\textup{Mod}}}}
\newcommand{\piMod}{\pi_{\text{\textup{Mod}}}}
\newcommand{\phiMod}[1][]{\phi_{\text{\textup{Mod}}#1}}
\newcommand{\OmegaDer}{\Omega_{\Der}}
\newcommand{\PsiDer}{\Psi_{\Der}}
\newcommand{\PsiMod}{\Psi_{\text{\textup{Mod}}}}
\newcommand{\mrnabla}{\mathring{\nabla}}
\newcommand{\mromega}{\mathring{\omega}}
\newcounter{mnotecount}[section]
\renewcommand{\themnotecount}{\thesection.\arabic{mnotecount}}
\newcommand{\mnote}[1]%
{\protect{\stepcounter{mnotecount}}${}^{\text{\footnotesize$\bullet$\themnotecount}}$%
\reversemarginpar%
\marginpar{\raggedleft\footnotesize$\bullet$\themnotecount: #1}}
\newlength{\mnotewidth}
\numberwithin{equation}{section}
\begin{document}
\renewcommand\figurename{Fig.}

%%%%%%%%%%%%%%%%%%%%%%%%%%%%
%% VERSION NON JOURNAL
{% beginning of local redefinition
\makeatletter\def\@fnsymbol{\@arabic}\makeatother % redefinition of \@fnsymbol
\title{Derivation-based\\
Noncommutative Field Theories\\ 
on $AF$ algebras}
\author{T. Masson, G. Nieuviarts\\
{\small Centre de Physique Théorique}%
%\footnote{thierry.masson@cpt.univ-mrs.fr, gaston.nieuviarts@etu.univ-amu.fr}%
\\
\small{Aix Marseille Univ, Université de Toulon, CNRS, CPT, Marseille, France}\\[2ex]
}
\date{}
\maketitle
}% end local redefinition
%% END OF VERSION NON JOURNAL
%%%%%%%%%%%%%%%%%%%%%%%%%%%%

\begin{abstract}
In this paper, we start the investigation of a new natural approach to “unifying” noncommutative gauge field theories (NCGFT) based on approximately finite-dimensional ($AF$) $C^*$-algebras. The defining inductive sequence of an $AF$ $C^*$-algebra is lifted to enable the construction of a sequence of NCGFT of Yang-Mills-Higgs types. The present paper focus on derivation-based noncommutative field theories. A mathematical study of the ingredients involved in the construction of a NCGFT is given in the framework of $AF$ $C^*$-algebras: derivation-based differential calculus, modules, connections, metrics and Hodge $\star$-operators, Lagrangians… Some physical applications concerning mass spectra generated by Spontaneous Symmetry Breaking Mechanisms (SSBM) are proposed using numerical computations for specific situations.
\end{abstract}

\tableofcontents

%%%%%%%%%%%%%%%%
\section{Introduction}
%%%%%%%%%%%%%%%%

Gauge field theories are an essential ingredient to model high energy particle physics. From the pioneer work by Yang and Mills to the Standard Model of Particle Physics (SMPP), the gauge principle has shown how insightful it is both technically and conceptually, and the search for the “right gauge group” has stimulated physicists to construct Grand Unified Theories (GUT). Unfortunately, none of these theories has been retained until now as a convincing model beyond the SMPP. 

The way these GUT are constructed relies on the classical mathematics of fiber bundles and connections: the gauge groups (infinite dimensional spaces) are the groups of vertical automorphisms of principal fibers over space-time with some convenient structure groups (finite dimensional Lie groups). The choice of the structure group is then the choice of the gauge group of the theory. GUT rely on finite dimensional Lie groups which are “big enough”, for instance $SU(5)$, to contain the group of the SMPP, $U(1)\times SU(2) \times SU(3)$. This unifying approach (for interactions) is then controlled by the choice of possible “not too large” finite dimensional Lie groups. This group has to be “not too large” because one has to reduce it to the group of the SMPP that we actually see in experiments, and the larger the original group, the more hypothesis it requires to perform this reduction, usually using some successive Spontaneous Symmetry Breaking Mechanisms (SSBM).

Since the 90's, noncommutative geometry (NCG) has shown that one can construct more general gauge field theories in a natural way (see \cite{ConnLott90a, DuboKernMado90a, DuboKernMado90b} for the seminal papers and \cite{Suij15a} for a review and references for more recent developments). NCG has permitted to include in a natural way the scalar fields used in the SMPP to make manifest the SSBM which gives masses to fermions and some gauge particles.\footnote{Notice that NCG is not the only mathematical framework beyond ordinary fiber geometry to make this happen in a natural way: connections on transitive Lie algebroids share some similar features, see \cite{FourLazzMass13a, LazzMass12a}.} In this approach, the gauge group is the group of inner automorphisms of an algebraic structure, which in general is an associative algebra (it could be the group of automorphisms of a module in some cases).

\medskip
In this paper, we start the investigation of a new natural approach to “unifying” noncommutative gauge field theories (NCGFT). This approach is based on approximately finite-dimensional ($AF$) $C^*$-algebras, a very natural class of algebras in NCG (see \cite{Blac06a, Davi96a, RordLarsLaus00a} for instance). By definition, $AF$ $C^*$-algebras are inductive limits of sequence of finite-dimensional $C^*$-algebras. Let us recall the following two important points (see Sect.~\ref{sec AF algebras} for more details):
\begin{enumerate}
\item A finite-dimensional $C^*$-algebra is, up to isomorphism, a finite sum of matrix algebras: $\algA = M_{n_1} \toplus \cdots \toplus M_{n_r}$ where $M_n \defeq M_n(\bbC)$ is the space of $n \times n$ matrices over $\bbC$. For a manifold $M$, NCGFT have been investigated on the algebras $C^\infty(M) \otimes \algA$ (referred to in the literature as “almost commutative” algebras) and these NCGFT are of Yang-Mills-Higgs types.

\item An $AF$ $C^*$-algebra is constructed in such a way that we get a control of the approximation of this algebra by the successive finite-dimensional $C^*$-algebras in its defining inductive sequence. This control (in terms of $C^*$-norms) can be used to approximate some structures defined on the $AF$ $C^*$-algebra. The best example is the one of the $K_0$-group that we briefly recall in Sect.~\ref{sec AF algebras} for sake of illustration.
\end{enumerate}

The motivation for the present approach can be summarized in the following way: point~1 suggests to use the defining inductive sequence of an $AF$ $C^*$-algebra to construct a sequence of NCGFT of Yang-Mills-Higgs types and  point~2 could be used to get some control of this sequence of NCGFT in a meaningful way as successive approximations of a “unifying” NCGFT on the full $AF$ $C^*$-algebra. This would be a way to implement inclusions of (finite dimensional) “gauge groups” as successive approximations of an (infinite dimensional) “unifying gauge group”.

Notice that the NCGFT that we can define on the full (maybe infinite dimensional) $AF$ $C^*$-algebra can be quite unusual from a physical point of view since it can involve an infinite number of degrees of freedom in the gauge sector. But if the control of approximations by “finite dimensional“ NCGFT (as suggested by point~2) is possible, then the content of this NCGFT could be understood as a limit of usual and manageable Yang-Mills-Higgs theories constructed gradually, for instance using empirical data.

In the GUT approach, the “big enough” gauge group must contain all the empirical phenomenology of present particle physics. But, in our approach, thanks to the approximation procedure, we only require any step in the inductive limit of NCGFT to \emph{approximate} the present empirical phenomenology and the future possible discoveries in particle physics. As the probing energy increases, a better approximating NCGFT in the sequence has to be taken at a farther position. In our bottom-to-top approximation, the number of degrees of freedom (in the gauge sector) can increase along the sequence, contrary to the usual SSBM, which is a top-to-bottom procedure that relies on reduction of degrees of freedom.

Notice that one could encounter a stationary sequence starting at some point, so that the full $AF$ $C^*$-algebra would be of finite dimension. In that case, the “final” NCGFT would make appear only a finite number of degrees of freedom in the gauge sector, very much like ordinary GUT. But even in that situation, some interesting features could be gained by the way the successive NCGFT in the sequence (here finite) of NCGFT are connected to each other, in particular concerning the mass spectra of the gauge bosons, see Sect.~\ref{sect direct limit NCGFT}.

At present time, we are not aware of any empirical fact suggesting that such a radical new approach could be relevant. But we hope that our new way to construct unifying gauge field theories beyond the SMPP could reveal new empirical content that would be suggestive to answer open questions in particle physics. It is out of the scope of the present paper to already get positive insight in that direction.

\medskip
In the present paper, we start this research program in the framework of the derivation-based noncommutative geometry. This very algebraic approach to NCG is convenient for a first study of this new approach to NCGFT on $AF$ $C^*$-algebras since the space of derivations, and so the space of differential forms, is canonically associated to the algebra we consider. This permits to lift in a natural way some properties of the inductive sequence defining the $AF$ $C^*$-algebra to many structures that are needed to construct a NCGFT (see \cite{Mass12a} for instance).  We follow the seminal papers \cite{DuboKernMado90a, DuboKernMado90b} to construct NCGFT at each step of this sequence. The only choices that remain to be done then concern the modules and the noncommutative connections on theses modules. We postpone to a forthcoming paper the exploration of our new NCGFT approach on $AF$ $C^*$-algebras using spectral triples.

\medskip
The paper is organized as follows. Sect.~\ref{sec notations and useful results} is dedicated to recalling all the technical ingredients involved in our constructions: $AF$ $C^*$-algebras, derivation-based differential calculus, NCGFT on algebras of matrices and on algebras of matrix-valued smooth functions on a manifold. Since the algebras in the sequence defining an $AF$ $C^*$-algebra are of the form $\algA = \toplus_{i=1}^{r} M_{n_i}$, it is necessary to study the derivation-based differential calculus of an algebra decomposed in this way, as well as modules, metrics and connections. This is the object of Sect.~\ref{sec some properties on sums of algebras}, where general results are obtained for algebras decomposed as $\algA = \toplus_{i=1}^{r} \algA_i$. Sect.~\ref{sec one step in the sequence} is devoted to the study of the relations between the structures involved in the construction of a NCGFT for an injective map $\phi : \algA \to \algB$ where $\algA = \toplus_{i=1}^{r} M_{n_i}$ and $\algB = \toplus_{j=1}^{s} M_{m_j}$. This is where the main definitions are proposed to connect the structures used to defined a NCGFT on $\algB$ with the similar structures defined on $\algA$, in a so-called $\phi$-compatible way which “extends” in a natural way (as much as possible) the inclusion $\phi : \algA \to \algB$. Finally, in Sect.~\ref{sect direct limit NCGFT}, we consider “direct limits” of NCGFT. As usual in the study of $AF$ $C^*$-algebras (see Sect.~\ref{sec AF algebras} for the example of the $K$-theory group), the characterization of the “induced” NCGFT on the full $AF$ $C^*$-algebra relies on the way a NCGFT on $\algB$ is related (via the $\phi$-compatibilities considered in Sect.~\ref{sec one step in the sequence}) to a NCGFT on $\algA$, where $\algA$ and $\algB$ are two algebras in the sequence defining the $AF$ $C^*$-algebra. So, this situation is described and some general conclusions are drawn. Since the SSBM is an essential feature of the NCGFT considered here, some numerical computations are presented on simple situations (for the choices of $\algA$ and $\algB$) to illustrate the way masses (obtained from the SSBM) can be related on $\algA$ and $\algB$ when $\phi$-compatibility is required. Some technical results are given in two appendices: one concerning the decomposition of the Hodge $\hstar$-operator (referred to in Sect.~\ref{sec some properties on sums of algebras}) and the other one on an explicit construction of an “extended” basis for derivations of an algebra $\algB$ for an basis of derivations for an algebra $\algA$ in the situation of Sect.~\ref{sec one step in the sequence}.

%%%%%%%%%%%%%%%%
\section{Notations and useful results}
\label{sec notations and useful results}
%%%%%%%%%%%%%%%%

%%%%%%%%%%%%%%%%
\subsection{\texorpdfstring{$AF$ $C^*$-algebras}{AF C*-algebras}}
\label{sec AF algebras}

In this section, we would like to recall the necessary structures involved in the definition of $AF$ $C^*$-algebras that will be used in the following. We would like also to illustrate, with the $K_0$-group example, the powerful approximation procedure by finite dimensional structures that we inherit in this framework.

\medskip
A $C^*$-algebra $\algA$ is said to be $AF$ (approximately finite-dimensional) if it is the closure of an increasing union of finite dimensional subalgebras $\algA_n$ \textit{i.e.} $\algA = \overline{\cup_{n\geq 0} \algA_n}$. We will always suppose that $\algA$ is unital and that $\algA_0 \simeq \bbC \bbbone_{\algA}$ \cite{Davi96a}.

It is convenient to describe $\algA$ as the direct limit $\algA = \varinjlim \algA_n$ of the inductive sequence of the finite dimensional (sub)algebras $\{ (\algA_n, \phi_{n,m}) \, / \,  0 \leq n < m \}$ where $\phi_{n,m} : \algA_n \to \algA_{m}$ are injective unital $*$-homomorphisms such that $\phi_{m,p} \circ \phi_{n,m} = \phi_{n,p}$ for any $0 \leq n < m < p$. From this composition property, one needs only to describe the homomorphisms $\phi_{n,n+1} : \algA_n \to \algA_{n+1}$. This can be done in two steps.

Firstly, any finite dimensional $C^*$-algebra $\algA$ is $*$-isomorphic to the direct sum of matrix algebras, $\algA \simeq \toplus_{i=1}^{r} M_{n_i}$ \cite[Thm.~III.1.1]{Davi96a}. Secondly, any unital $*$-homomorphism $\phi : \algA = \toplus_{i=1}^{r} M_{n_i} \to \algB = \toplus_{j=1}^{s} M_{m_j}$ is determined up to unitary equivalence in $\algB$ by a $s \times r$ matrix $A = ( \alpha_{ji} )$ where $\alpha_{ji} \in \bbN$ (non-negative integers) is the multiplicity of the inclusion of $M_{n_i}$ into $M_{m_j}$ \cite[Lemma~III.2.1]{Davi96a}. The multiplicity matrix $A$ is such that $\sum_{i=1}^{r} \alpha_{ji} n_i = m_j$. 

The characterization of $\phi$ up to unitary equivalence in $\algB$ permits to take a convenient presentation of the inclusions of the $M_{n_i}$'s into the $M_{m_j}$'s, for instance by increasing order of the $n_i$'s along the diagonal of $M_{m_j}$. 

\medskip
In this paper, we will not be interested in the $C^*$ aspect of $AF$ $C^*$-algebras since we will focus on the differentiable structures compatible with the increasing sequence of $\{ (\algA_n, \phi_{n,n+1}) \, / \,  n \geq 0 \}$. In our point of view, we will consider $\algA_\infty \defeq \cup_{n\geq 0} \algA_n$ as the dense subalgebra of $\algA = \overline{\cup_{n\geq 0} \algA_n}$ of “smooth” elements. $\algA_\infty$, as the direct limit of $\{ (\algA_n, \phi_{n,n+1}) \, / \,  n \geq 0 \}$ in the category of associative (unital) algebras, inherits some algebraic structures of the algebras $\algA_n$. 

\medskip
Let us now illustrate how the defining sequence $\{ (\algA_n, \phi_{n,n+1}) \, / \,  n \geq 0 \}$ can be used to construct “approximations” of the $K_0$ group of $\algA$. 

The definition (see \cite{RordLarsLaus00a,Davi96a} for details) of the $K_0$ group of a unital $C^*$-algebra $\algA$ starts with an equivalence class of projections in $M_\infty(\algA) \defeq \cup_{p\geq 1} M_p(\algA)$ where $M_p(\algA) \defeq M_p \otimes \algA$ is the $C^*$-algebra of $p\times p$ matrices with entries in $\algA$. We denote by $\calP_\infty(\algA)$ the semigroup of projections in $M_\infty(\algA)$. Two projections $P, Q \in \calP_\infty(\algA)$ are $*$-equivalent, $P \sim Q$, if there is a partial isometry $S \in M_\infty(\algA)$ such that $P = S^* S$ and $Q = S S^*$. The space $\calD(\algA) \defeq \calP_\infty(\algA)/\sim$ of equivalence classes $[P]$ of projections in $M_\infty(\algA)$ is an Abelian semigroup for the additive law $[P] + [Q] \defeq [P \oplus Q]$ where $P \oplus Q \defeq \begin{psmallmatrix} P & 0 \\ 0 & Q \end{psmallmatrix} \in M_\infty(\algA)$. Then $K_0(\algA)$ is the Grothendieck group of $\calD(\algA)$.\footnote{The Grothendieck group $G(S)$ of an Abelian semigroup $(S,+)$ is the unique Abelian group $K$ which satisfies the following universal property: there is a morphism of semigroups $\iota : S \to G(S)$ such that for any morphism of semigroups $\varphi : S \to T$ for any Abelian group $T$, there is a morphism of groups $\widetilde{\varphi} : G(S) \to T$ with $\varphi = \widetilde{\varphi} \circ \iota$.} We denote by $\iota : \calD(\algA) \to K_0(\algA)$ the map defining the universal property of $K_0(\algA)$. Then, for any $P, Q \in \calP_\infty(\algA)$, one has \cite[Prop.~3.1.7]{RordLarsLaus00a} $\iota([P \oplus Q]) = \iota([P]) + \iota([Q])$; $\iota([P]) = \iota([Q])$ if and only if there exists $R \in \calP_\infty(\algA)$ such that $P \oplus R \sim Q \oplus R$; and $K_0(\algA) = \{ \iota([P]) - \iota([Q]) \, / \, P, Q \in \calP_\infty(\algA) \}$. So, describing all the $\iota([P])$ is sufficient to get $K_0(\algA)$.

For a matrix algebra $\algA = M_n$, one has $M_p(M_n) = M_p \otimes M_n = M_{pn}$ and,  for any $P, Q \in \calP_\infty(M_n)$, $P \sim Q$ iff $\rank(P) = \rank(Q)$. So $\calD(M_n) \simeq \bbN$, $K_0(M_n) = \bbZ$ and $\iota$ is the inclusion $\iota : \bbN \hookrightarrow \bbZ$ and so it can be omitted in that case. For a finite dimensional algebra $\algA = \toplus_{i=1}^{r} M_{n_i}$, this result generalizes as $P = \toplus_{i=1}^{r} P_i \sim Q = \toplus_{i=1}^{r} Q_i$ iff $\rank(P_i) = \rank(Q_i)$ for any $i$, so that $\calD(\toplus_{i=1}^{r} M_{n_i}) \simeq \bbN^r$ and $K_0(\toplus_{i=1}^{r} M_{n_i}) = \bbZ^r$ \cite[Ex.~IV.2.1]{Davi96a}. Here again $\iota$ is the natural inclusion and it can be omitted.

Any morphism of $C^*$-algebras $\phi : \algA \to \algB$ induces a canonical morphism of groups $\phi_* : K_0(\algA) \to K_0(\algB)$ by $\phi_* \circ \iota_\algA([P]) = \iota_\algB([\phi(P)])$ where $\phi(P) \in \calP_\infty(\algB) \subset M_\infty(\algB)$ is defined by applying $\phi$ to the entries of the matrix $P \in M_p(\algA) \subset M_\infty(\algA)$. So, from the defining inductive sequence $\{ (\algA_n, \phi_{n,m}) \, / \,  0 \leq n < m \}$ of an $AF$ $C^*$-algebra $\algA = \varinjlim \algA_n$, we get an inductive sequence $\{ (K_0(\algA_n), \phi_{n,m\, *}) \, / \,  0 \leq n < m \}$. Then one has $K_0(\algA) = \varinjlim K_0(\algA_n)$ \cite[Thm~6.3.2]{RordLarsLaus00a}.

To get $\varinjlim K_0(\algA_n)$, one has to describe the morphisms $\phi_{n,n+1\, *} : K_0(\algA_n) \to K_0(\algA_{n+1})$. This can be done easily in terms of the multiplicity matrices $A_{n, n+1}$ associated to the morphisms $\phi_{n, n+1} : \algA_n \to \algA_{n+1}$. In order to do that, we switch from projections to finitely generated projective modules. 

Let $P \in M_p(M_n)$ be a projection with $\rank(P) = \alpha \ (\leq pn)$. Then $P$ can be diagonalized as $P = U^* E_\alpha U$ for a unitary $U \in M_{pn}$ and $E_\alpha = \begin{psmallmatrix} \bbbone_\alpha & 0 \\ 0 & 0 \end{psmallmatrix} \in M_{pn}$ where $\bbbone_\alpha$ is the unit $\alpha \times \alpha$ matrix. Then $S \defeq E_\alpha U$ satisfies $S^* S = (E_\alpha U)^* (E_\alpha U) = U^* E_\alpha^* E_\alpha U = U^* E_\alpha U = P$ and $S S^* = (E_\alpha U) (E_\alpha U)^* = E_\alpha U U^* E_\alpha^* = E_\alpha$ so that $[P] = [E_\alpha]$. Consider the free left $M_n$-module $(M_n)^p \simeq M_{pn, n}$ (row of $p$ copies of $M_n$ or rectangular matrices $pn \times n$). Then up to the unitary equivalence by $U$ (acting on the right on rectangular matrices), $P$ defines the submodule $\modM_P \simeq M_{pn, n} E_\alpha \simeq M_{\alpha, n} \simeq \bbC^n \otimes \bbC^\alpha$ (a finitely generated projective module over $M_n$). 

In the same way, a class $[P = \toplus_{i=1}^{r} P_r] \in \calD(\toplus_{i=1}^{r} M_{n_i})$ defines a class (modulo isomorphisms) of left (finitely generated projective) modules $[\modM_P]$ with $\modM_P = \bbC^{n_1} \otimes \bbC^{\alpha_1} \toplus \cdots \toplus \bbC^{n_r} \otimes \bbC^{\alpha_r}$ where $\alpha_i = \rank(P_i)$. Indeed, if $P \in M_p(\toplus_{i=1}^{r} M_{n_i}) = \toplus_{i=1}^{r} M_p \otimes M_{n_i}$ then $P_i \in M_p \otimes M_{n_i} = M_p(M_{n_i})$ and then we are in the previous situation for every $P_i$. The map $\phi_* : \calD(\toplus_{i=1}^{r} M_{n_i}) \to \calD(\toplus_{j=1}^{s} M_{m_i})$ induced by $\phi : \toplus_{i=1}^{r} M_{n_i} \to \toplus_{j=1}^{s} M_{m_j}$ with multiplicity matrix $A = ( \alpha_{ji} )$ sends $[P]$ to $[Q = \toplus_{j=1}^{s} Q_j]$ where every entry along $M_p$ in $Q_j \in M_{p}(M_{m_j})$ contains $\alpha_{ji}$ copies, distributed along the diagonal of $M_{m_j}$, of the entry at the same position along $M_p$ in $P_i$. Since the rank of a matrix projection is its trace, one gets $\beta_j \defeq \rank(Q_j) = \sum_{i=1}^{r} \alpha_{ji} \alpha_i$ and the associated module is then $\modN_Q = \bbC^{m_1} \otimes \bbC^{\beta_1} \toplus \cdots \toplus \bbC^{m_s} \otimes \bbC^{\beta_s}$ by the previous construction. So, in terms of modules, $\phi_*$ sends the class of $\bbC^{n_1} \otimes \bbC^{\alpha_1} \toplus \cdots \toplus \bbC^{n_r} \otimes \bbC^{\alpha_r}$ to the class of $\bbC^{m_1} \otimes \bbC^{\beta_1} \toplus \cdots \toplus \bbC^{m_s} \otimes \bbC^{\beta_s}$, where $\bbC^{n_i} \otimes \bbC^{\alpha_i} = M_{\alpha_i, n_i}$ is repeated $\alpha_{ji}$ times on the diagonal of $\bbC^{m_j} \otimes \bbC^{\beta_j} = M_{\beta_j, m_j}$. The diagonals is filled thanks to the relations $m_j = \sum_{i=1}^{r} \alpha_{ji} n_i$ and $\beta_j = \sum_{i=1}^{r} \alpha_{ji} \alpha_i$. See \cite[Ex.~IV.3.1]{Davi96a} where the identification of $\phi_*$ with $A$ is also presented using projections.

This describes the maps $\phi_{n,n+1\, *} : \calD(\algA_n) \to \calD(\algA_{n+1})$ in terms of (finitely generated projective) modules. For $AF$ $C^*$-algebras,  $\calD(\algA)$ equals the space $K_0^+(\algA)$ of stable equivalence classes of projections in $\calP_\infty(\algA)$,\footnote{$P,Q \in \calP_\infty(\algA)$ are stably equivalent, $P \approx Q$, if there is a projection $R \in \calP_\infty(\algA)$ such that $P \oplus R \sim Q \oplus R$.} and this is a cone in $K_0(\algA)$ such that $K_0(\algA) = K_0^+(\algA) - K_0^+(\algA)$ \cite[Thm~IV.1.6, Thm~IV.2.3, Thm~IV.2.4]{Davi96a}. So, for $AF$ $C^*$-algebras, it can be of practical importance to know what it means to approximate elements of $\calD(\algA) = K_0^+(\algA)$. A class $[P] \in \calD(\algA)$ can be looked at as a sequence of classes $[P_n] \in \calD(\algA_n)$ for $n \geq n_0$, related step-by-step by the maps $\phi_{n,n+1\, *}$. The sequence $\{  [P_n] \}_{n \geq n_0}$ corresponds then to a sequence $\{ [\modM_n] \}_{n \geq n_0}$ of equivalence classes of isomorphisms of finitely generated projective modules on the algebras $\algA_n$. Notice that it is only the sequence in the whole that permits to reconstruct the target element $[P] \in \calD(\algA)$. We could say that, for some $n \geq n_0$, the module $\modM_n$ “approximates” (as a representative element in $[\modM_n]$) the class $[P]$, but some information are encoded in the embedding maps $\phi_{n,n+1\, *} : \modM_n \to \modM_{n+1}$ which then participate to this notion of approximation. As seen before, concretely, the maps $\phi_{n,n+1\, *}$ are written in terms of the multiplicity matrices $A_{n,n+1}$ associated to the $\phi_{n,n+1}$.\footnote{The full sequence of multiplicity matrices $A_{n,n+1}$ is provided by the $AF$ $C^*$-algebra. It can be represented graphically by a Bratteli diagram, and it is known that two $AF$ $C^*$-algebras with the same Bratteli diagram are isomorphic \cite[Prop.~III.2.7]{Davi96a}}

It is well-known (Elliott's Theorem, see for instance \cite[Thm~IV.4.3]{Davi96a}) that the $K_0$-group, supplemented with a structure of scaled dimension group, is sufficient to classify $AF$ $C^*$-algebras. So there is no information outside of the one encoded in the sequence of multiplicity matrices $A_{n,n+1}$ to be expected in the constructions described before since it determines a unique $AF$ $C^*$-algebra and it permits to construct its scaled dimension group.\footnote{Keep in mind that an $AF$ $C^*$-algebra can be obtained from different sequences of multiplicity matrices.} 

In our approach to NCGFT based on a “sequence” of finite dimensional NCGFT on the $\algA_n$'s, we will not suppose that an approximation at a level $n_0$  gives us all the information about the “limiting” NCGFT in the $AF$ algebra. In other words, some new inputs (in addition to the $A_{n,n+1}$'s) could be “added” at every step. This implies (obviously) that many non equivalent NCGFT could be constructed on top of a unique $AF$ algebra. This relies on the fact that there may be physical motivations to construct one sequence rather than another and that the chosen embedding at every step could participate to the phenomenology. This is similar to, but also a departure from, GUT where some information are encoded in the SSBM reducing the large group to the group of the SMPP: in our research program, we can look at our embeddings as being in duality with the SSBM, in a way that will be illustrated in Sect.~\ref{sect direct limit NCGFT}.

\bigskip
Let us notice that one key result for the study of $AF$ algebras is \cite[Lemma~III.2.1]{Davi96a}, which describes the possible unital $*$-homomorphisms $\phi : \algA = \toplus_{i=1}^{r} M_{n_i} \to \algB = \toplus_{j=1}^{s} M_{m_j}$. For reasons that will be explained below (see Sect.~\ref{sect direct limit NCGFT}), we will consider non unital $*$-homomorphisms $\phi : \algA = \toplus_{i=1}^{r} M_{n_i} \to \algB = \toplus_{j=1}^{s} M_{m_j}$. In that case, we can use \cite[Lemma~III.2.2]{Davi96a} to describe $\phi$ up to unitary equivalence in $\algB$ with a matrix $A = ( \alpha_{ji} )$, with $\alpha_{ji} \in \bbN$, such that $\sum_{i=1}^{r} \alpha_{ji} n_i \leq m_j$.

Another important point to notice is that in the mathematical considerations described before, the $*$-homomorphisms $\phi : \algA = \toplus_{i=1}^{r} M_{n_i} \to \algB = \toplus_{j=1}^{s} M_{m_j}$ need only be characterized up to unitary equivalence in $\algB$. This is a consequence of the fact that we need only consider “classes” (modulo isomorphisms for instance) for the purpose of classifying the structures. \textit{A priori}, in physics, we may need to consider two $*$-homomorphisms $\phi$ as different even if they are related by a unitary equivalence. This is related to the fact mentioned above that we consider the algebraic structure $\algA_\infty \defeq \cup_{n\geq 0} \algA_n$ instead of its completion, and that its presentation (the sequence of $*$-homomorphisms $\phi_{n,n+1}$) may contain some phenomenological information. But, as will be shown, see Examples~\ref{example Mn} and \ref{example C(M) otimes Mn},  the action of (unitary) inner automorphism is not relevant from a physical point of view since it consists to a transport of structures. These inner automorphisms are similar to gauge transformations in the sense that one can chose a particular representative in the class of equivalent structures to describe a physical situation. This explains why the analysis in this paper relies on a chosen “standard form” for these $*$-homomorphisms which simplifies the presentation.

%%%%%%%%%%%%%%%%
\subsection{Derivation-based differential calculus}
\label{sec derivation bases differential calculus}

In this paper, we will consider the derivation-based differential calculus, which was defined in \cite{Dubo88a} and studied for various algebras, see for instance \cite{DuboKernMado90a, DuboKernMado90b, Mass96a, DuboMass98a, Mass99a, DuboMich94a, DuboMich96a, DuboMich97a, CagnMassWall11a}. Some reviews can also be found in \cite{Dubo01a, Mass08c, Mass08b}. The main ingredient is the space of derivations on an associative algebra.

Let $\algA$ be an associative algebra with unit $\bbbone$, and let $\calZ(\algA) = \{ a \in \algA \ / \ ab = ba, \forall b \in \algA \}$ be its center. The space of derivations of $\algA$ is
\begin{equation*}
\Der(\algA) = \{ \kX : \algA \rightarrow \algA \ / \ \kX \text{ linear}, \kX\cdotaction(ab) = (\kX\cdotaction a) b + a (\kX\cdotaction b), \forall a,b\in \algA\}.
\end{equation*}
This vector space is a Lie algebra for the bracket $[\kX, \kY ]a = \kX  \kY a - \kY \kX a$ for all $\kX,\kY \in \Der(\algA)$, and a $\calZ(\algA)$-module for the product $(f \kX )\cdotaction a = f ( \kX \cdotaction a)$ for all $f \in \calZ(\algA)$ and $\kX \in \Der(\algA)$. The subspace 
\begin{equation*}
\Int(\algA) = \{ \ad_a : b \mapsto [a,b]\ / \ a \in \algA\} \subset \Der(\algA)
\end{equation*}
is called the vector space of inner derivations: it is a Lie ideal and a $\calZ(\algA)$-submodule. 
 
Suppose that $\algA$ has an involution $a \mapsto a^*$. Then a real derivation on $\algA$ is a derivation $\kX$ such that $(\kX \cdotaction a)^* = \kX \cdotaction a^*$ for any $a \in \algA$.

\medskip
Let $\OmegaDer^p(\algA)$ be the vector space of $\calZ(\algA)$-multilinear antisymmetric maps from $\Der(\algA)^p$ to $\algA$, with $\OmegaDer^0(\algA) = \algA$. Then the total space
\begin{equation*}
\OmegaDer^\grast(\algA) = \toplus_{p \geq 0} \OmegaDer^p(\algA)
\end{equation*}
gets a structure of $\bbN$-graded differential algebra for the product
\begin{multline}
\label{eq def form product}
(\omega \wedge \eta)(\kX_1, \dots, \kX_{p+q}) 
\defeq
\\
 \frac{1}{p!q!} \sum_{\sigma\in \kS_{p+q}} (-1)^{\abs{\sigma}} \omega(\kX_{\sigma(1)}, \dots, \kX_{\sigma(p)}) \eta(\kX_{\sigma(p+1)}, \dots, \kX_{\sigma(p+q)})
\end{multline}
for any $\omega \in \OmegaDer^p(\algA)$, any $\eta \in \OmegaDer^q(\algA)$ and any $\kX_i \in \Der(\algA)$ where $\kS_{n}$ is the group of permutations of $n$ elements. A differential $\dd$ is defined by the so-called Koszul formula
\begin{multline}
\label{eq def differential}
\dd \omega (\kX_1, \dots , \kX_{p+1}) 
\defeq
\sum_{i=1}^{p+1} (-1)^{i+1} \kX_i \cdotaction \omega( \kX_1, \dots \omi{i} \dots, \kX_{p+1}) 
\\
 + \sum_{1 \leq i < j \leq p+1} (-1)^{i+j} \omega( [\kX_i, \kX_j], \dots \omi{i} \dots \omi{j} \dots, \kX_{p+1}). 
\end{multline}
This makes $(\OmegaDer^\grast(\algA), \dd)$ a graded differential algebra.

\begin{proposition}[Transport of forms by automorphisms]
\label{prop transport forms}
Let $\Psi : \algA \to \algA$ be an algebra automorphism. Then $\Psi$ induces an automorphism on $\calZ(\algA)$.

The map $\PsiDer : \Der(\algA) \to \Der(\algA)$ defined by $\PsiDer(\kX) \cdotaction a \defeq \Psi( \kX \cdotaction \Psi^{-1}(a) )$ for any $\kX \in \Der(\algA)$ and $a \in \algA$, is an automorphism of the Lie algebra $\Der(\algA)$ and $\PsiDer(f \kX) = \Psi(f) \PsiDer(\kX)$ for any $f \in \calZ(\algA)$ (so $\PsiDer$ is not necessary an automorphism for the structure of $\calZ(\algA)$-module). For inner derivations, one has $\PsiDer(\ad_a) = \ad_{\Psi(a)}$.

The maps $\Psi : \OmegaDer^p(\algA) \to \OmegaDer^p(\algA)$ defined by 
\begin{align*}
\Psi(\omega) (\kX_{1}, \dots , \kX_{p}) 
\defeq \Psi\left( \omega( \PsiDer^{-1}(\kX_{1}), \dots, \PsiDer^{-1}(\kX_{p}) ) \right)
\end{align*}
for any $\omega \in \OmegaDer^p(\algA)$ and $\kX_i \in \Der(\algA)$ define an automorphism of the graded differential algebra $(\OmegaDer^\grast(\algA), \dd)$.
\end{proposition}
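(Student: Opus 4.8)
The plan is to treat the three assertions in turn, isolating at the outset the single intertwining identity that encodes the definition of $\PsiDer$ and that will drive every later computation: applying the defining formula to $a = \Psi(b)$ gives $\PsiDer(\kX)\cdotaction\Psi(b) = \Psi(\kX\cdotaction b)$, and the same relation for $\Psi^{-1}$ reads $\Psi(\PsiDer^{-1}(\kX)\cdotaction b) = \kX\cdotaction\Psi(b)$. For the first claim, I would use only surjectivity of $\Psi$: any $c\in\algA$ is $c=\Psi(b)$, so for $f\in\calZ(\algA)$ one computes $\Psi(f)c = \Psi(f)\Psi(b) = \Psi(fb) = \Psi(bf) = c\Psi(f)$, whence $\Psi(\calZ(\algA))\subseteq\calZ(\algA)$; applying this to $\Psi^{-1}$ gives equality, and since $\Psi$ is already an algebra homomorphism its restriction is an automorphism of $\calZ(\algA)$.

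For the second claim I would first check that $\PsiDer(\kX)$ is a derivation by expanding $\PsiDer(\kX)\cdotaction(ab)=\Psi(\kX\cdotaction(\Psi^{-1}(a)\Psi^{-1}(b)))$, applying the Leibniz rule for $\kX$ and the multiplicativity of $\Psi$ so that the two terms reassemble as $(\PsiDer(\kX)\cdotaction a)b + a(\PsiDer(\kX)\cdotaction b)$. Linearity is immediate, and bijectivity follows by verifying that $(\Psi^{-1})_{\Der}$ is a two-sided inverse. For the Lie bracket the key remark is that the inner $\Psi^{-1}\Psi$ cancels, giving $\PsiDer(\kX)\PsiDer(\kY)\cdotaction a = \Psi(\kX\kY\cdotaction\Psi^{-1}(a))$, so taking the commutator yields $[\PsiDer(\kX),\PsiDer(\kY)] = \PsiDer([\kX,\kY])$. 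The module identity $\PsiDer(f\kX)=\Psi(f)\PsiDer(\kX)$ drops out of the definition together with multiplicativity of $\Psi$ and the first claim (ensuring $\Psi(f)\in\calZ(\algA)$), while $\PsiDer(\ad_a)\cdotaction b = \Psi([a,\Psi^{-1}(b)]) = [\Psi(a),b] = \ad_{\Psi(a)}\cdotaction b$ settles the inner-derivation case.

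For the third claim I would verify well-definedness, multiplicativity, and compatibility with $\dd$ separately. Antisymmetry of $\Psi(\omega)$ is inherited from $\omega$ since $\PsiDer^{-1}$ is linear, and $\calZ(\algA)$-multilinearity follows from $\PsiDer^{-1}(f\kX)=\Psi^{-1}(f)\PsiDer^{-1}(\kX)$ combined with the $\calZ(\algA)$-linearity of $\omega$ and multiplicativity of $\Psi$ (using the first claim for $\Psi^{-1}$). For $\Psi(\omega\wedge\eta)=\Psi(\omega)\wedge\Psi(\eta)$ I would substitute $\PsiDer^{-1}(\kX_i)$ into the product formula \eqref{eq def form product}, pull $\Psi$ across the sum and across each product of algebra elements, and recognize the two factors as $\Psi(\omega)$ and $\Psi(\eta)$ evaluated on the $\kX_{\sigma(k)}$; the sign $(-1)^{\abs{\sigma}}$ and the factor $1/p!q!$ are untouched by $\Psi$.

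The compatibility $\Psi(\dd\omega)=\dd(\Psi(\omega))$ is the heart of the proof and the step I expect to demand the most care. The strategy is to set $\kY_k\defeq\PsiDer^{-1}(\kX_k)$ and expand $\dd(\Psi(\omega))$ via the Koszul formula \eqref{eq def differential}, where the two families of terms are dispatched by the structural facts already in hand: the action terms $\kX_i\cdotaction(\Psi(\omega))(\dots)$ become $\Psi(\kY_i\cdotaction\omega(\kY_1,\dots))$ by the intertwining identity of the first paragraph, and the bracket terms use $\PsiDer^{-1}([\kX_i,\kX_j])=[\kY_i,\kY_j]$, which is precisely the Lie-morphism property from the second claim. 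After these substitutions every summand carries an outer $\Psi$ and the bracketed expression is exactly $\dd\omega(\kY_1,\dots,\kY_{p+1})$, i.e. $(\Psi(\dd\omega))(\kX_1,\dots,\kX_{p+1})$. The hard part is thus purely organizational—matching the indices $\omi{i}$, $\omi{j}$ and the signs $(-1)^{i+1}$, $(-1)^{i+j}$ across the substitution—rather than conceptual. Finally, bijectivity on $\OmegaDer^\grast(\algA)$ follows because the form-level map induced by $\Psi^{-1}$ inverts $\Psi$, so $\Psi$ is an automorphism of the graded differential algebra.
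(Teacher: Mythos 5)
Your proposal is correct and follows essentially the same route as the paper's proof: the same direct computations for the center, the Leibniz rule, the module identity, inner derivations, and the Lie-bracket compatibility (your cancellation of the inner $\Psi^{-1}\Psi$ is just the mirror image of the paper's insertion of it), and your treatment of $\dd\Psi(\omega)=\Psi(\dd\omega)$ via the Koszul formula — action terms handled by the intertwining identity, bracket terms by the Lie-morphism property of $\PsiDer^{-1}$ — is exactly the "similar computation" the paper sketches after doing the $p=0$ case. You simply spell out details (multilinearity of $\Psi(\omega)$, the wedge-product compatibility, bijectivity on forms) that the paper leaves as easy checks.
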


For $p=0$, $\Psi$ defined on $\OmegaDer^0(\algA) = \algA$ is exactly the original automorphism $\Psi$ of $\algA$, so that the notation is justified.

\begin{proof}
For any $f \in \calZ(\algA)$ and $a \in \algA$, one has $\Psi(f) a = \Psi( f \Psi^{-1}(a) ) = \Psi( \Psi^{-1}(a) f ) = a \Psi(f)$ so that $\Psi(f) \in \calZ(\algA)$.

With obvious notations, one has 
\begin{align*}
\PsiDer(\kX) \cdotaction (a b) 
&= \Psi\left( \kX \cdotaction \Psi^{-1}(a b) \right) 
= \Psi\left( \kX \cdotaction (\Psi^{-1}(a) \Psi^{-1}(b)) \right) 
\\
&= \Psi\left( \kX \cdotaction \Psi^{-1}(a) \right) b + a \Psi\left( \kX \cdotaction \Psi^{-1}(b)\right) 
= (\PsiDer(\kX) \cdotaction a) b + a (\PsiDer(\kX) \cdotaction b)
\end{align*}
so that $\PsiDer(\kX)$ is a derivation. 

In the same way, one has $\PsiDer(f \kX) \cdotaction a = \Psi( f (\kX \cdotaction \Psi^{-1}(a)) ) = \Psi(f) \Psi( \kX \cdotaction \Psi^{-1}(a) ) = \Psi(f) \PsiDer(\kX) \cdotaction a$. 

For $\kX, \kY \in \Der(\algA)$, one has 
\begin{align*}
\PsiDer([\kX, \kY]) \cdotaction a 
&= \Psi( \kX \cdotaction (\kY \cdotaction \Psi^{-1}(a)) ) - \Psi( \kY \cdotaction (\kX \cdotaction \Psi^{-1}(a)) ) 
\\
&= \Psi( \kX \cdotaction \Psi^{-1} ( \PsiDer(\kY) \cdotaction a) ) - \Psi( \kY \cdotaction \Psi^{-1} ( \PsiDer(\kX) \cdotaction a) )
\\
&= \PsiDer(\kX) \cdotaction ( \PsiDer(\kY) \cdotaction a ) - \PsiDer(\kY) \cdotaction ( \PsiDer(\kX) \cdotaction a )
\\
&= [\PsiDer(\kX), \PsiDer(\kY)] \cdotaction a
\end{align*}
so that $\PsiDer([\kX, \kY]) = [\PsiDer(\kX), \PsiDer(\kY)]$. The inverse $\PsiDer^{-1}$ is defined by $\PsiDer^{-1}(\kX) \cdotaction a \defeq \Psi^{-1}( \kX \cdotaction \Psi(a) )$ as can be easily checked. For inner derivations, one has $\PsiDer(\ad_a) \cdotaction b = \Psi( [a, \Psi^{-1}(b)]) = [\Psi(a), b] = \ad_{\Psi(a)} \cdotaction b$.

For any $\omega \in \OmegaDer^p(\algA)$, it is easy to check that $\Psi(\omega)$ is a $\calZ(\algA)$-multilinear antisymmetric maps from $\Der(\algA)^p$ to $\algA$. For any $\omega \in \OmegaDer^p(\algA)$ and any $\eta \in \OmegaDer^q(\algA)$, the relation $\Psi(\omega) \wedge \Psi(\eta) = \Psi( \omega \wedge \eta )$ is a direct consequence of the definition of $\Psi$ on forms. The proof of $\Psi(\dd a) = \dd \Psi(a)$ is a straightforward computation: $\Psi(\dd a)(\kX) = \Psi( \dd a (\PsiDer^{-1}(\kX))) = \Psi( \PsiDer^{-1}(\kX) \cdotaction a) = \kX \cdotaction \Psi(a)$ on the one hand and $(\dd \Psi(a))(\kX) = \kX \cdotaction \Psi(a)$ on the other hand. To prove $\dd \Psi(\omega) = \Psi(\dd \omega)$ for $\omega \in \OmegaDer^p(\algA)$, one has to use a similar computation and the fact that $\PsiDer^{-1}$ is a morphism of Lie algebras.
\end{proof}

\begin{example}[Transport of derivations by inner automorphisms]
\label{ex transport derivations inner automorphisms}
Let $u \in \algA$ be an invertible element (one can take $u$ to be unitary when $\algA$ has an involution). The map $\Psi(a) \defeq u a u^{-1}$ defines an automorphism of $\algA$ and a simple computation shows that $\PsiDer(\kX) = \kX + \ad_{u (\kX \cdotaction u^{-1})}$ for any $\kX \in \Der(\algA)$. In particular, if $\kX = \ad_a$ is an inner derivation, then $\PsiDer(\ad_a) = \ad_{u a u^{-1}} = \ad_{\Psi(a)}$ as expected. Notice also that $\Psi(f) = f$ for any $f \in \calZ(\algA)$ so that $\PsiDer : \Der(\algA) \to \Der(\algA)$ is an automorphism of $\calZ(\algA)$-module in that case.
\end{example}

%%%%%%%%%%%%%%%%
\subsection{Results on matrix algebras}
\label{sec matrix algebras}

Since the situation $\algA_i = M_{n_i}(\bbC)$ is our main objective for $AF$-algebras, we give here a series of notations and results that will be used below when this specific situation will be considered. This is for instance the case in Sect.~\ref{sec metric hodge}. We refer to \cite{DuboKernMado90b, Mass95a, DuboMass98a, Mass08b, Mass12a} for more details.

\medskip
The center of the algebra $M_n \defeq M_n(\bbC)$ is $\calZ(M_n) = \bbC \bbbone_n$ where $\bbbone_n$ is the unit matrix in $M_n$. Let $\ksl_n$ be the Lie algebra (for the commutator) of traceless matrices in $M_n$. Then the map $\ksl_n \ni a \mapsto \ad_a \in \Int(M_n)$ realizes an isomorphism $\ksl_n \simeq \Der(M_n) = \Int(M_n)$.

Let $\{ E_\alpha \}_{\alpha \in I}$ be a basis of $\ksl_n$, where $I$ is a totally ordered set with $\card(I) = n^2 -1 = \dim \ksl_n$. Choosing an abstract totally ordered set $I$ to label this basis will be convenient when the inductive sequence defining the $AF$-algebra will be considered since then the $\alpha$'s will be constructed as cumulative multi-indexes. The ordering will be used to order basis forms (for instance to define volume forms). Let us introduce the unique multiplet $(\alpha^0_1, \dots, \alpha^0_{n^2-1}) \in I^{n^2-1}$ such that $\alpha^0_1 < \cdots < \alpha^0_{n^2-1}$.  We will use the notation $C(n)_{\alpha\beta}^\gamma = C_{\alpha\beta}^\gamma$ for the structure constants of the Lie algebra $\ksl_n$ in the basis $\{ E_\alpha \}_{\alpha \in I}$: $[E_\alpha, E_\beta] = C_{\alpha\beta}^\gamma E_\gamma$.

 The basis $\{ E_\alpha \}_{\alpha \in I}$ induces a basis $\{ \partial_\alpha \defeq \ad_{E_\alpha} \}_{\alpha \in I}$ of $\Der(M_n) = \Int(M_n)$.  Let $\{ \theta^\alpha \}_{\alpha \in I}$ be its dual basis in $\ksl_n^*$. The derivation $\partial_\alpha$ is real if and only if $E_\alpha$ is anti-Hermitean and one has $[\partial_\alpha, \partial_\beta] = C_{\alpha\beta}^\gamma \partial_\gamma$.

\medskip
The space of noncommutative forms on $M_n$ has a simple structure:
\begin{align*}
\OmegaDer^\grast(M_n)
&= M_n \otimes \exter^\grast \ksl_n^*
\end{align*}
and the differential is the Chevalley-Eilenberg differential for the differential graded algebra associated to the Lie algebra $\ksl_n$ with values in $M_n$ using the adjoint representation. Identifying $\theta^\gamma$ with $\bbbone_n \otimes \theta^\gamma \in \OmegaDer^1(M_n) = M_n \otimes \exter^1 \ksl_n^*$, one has $\dd \theta^\gamma = - \tfrac{1}{2} C_{\alpha\beta}^\gamma \theta^\alpha \wedge \theta^\beta$.

Let us consider the canonical metric $g : \Der(M_n) \times \Der(M_n) \to \calZ(M_n) \simeq \bbC$ defined by $g(\ad_a, \ad_b) \defeq \tr(a b)$ for $a, b \in \ksl_n$. This is not the metric defined in \cite{DuboKernMado90b} where a factor $\tfrac{1}{n}$ was put in front of the trace (to get the \emph{normalized} trace). The reason for this convention will be explained below (see \eqref{eq gB and gA} and comments after).  Once the basis $\{ \partial_\alpha \}_{\alpha \in I}$ is given, one introduces the components $g_{\alpha \beta} \defeq g(\partial_\alpha, \partial_\beta) = \tr(E_\alpha E_\beta)$ of $g$.

\medskip
Let $\abs{g}$ be the determinant of the matrix $( g_{\alpha \beta} )_{\alpha,\beta}$. We define the (noncommutative) integral $\int_{M_n}$ on $\OmegaDer^\grast(M_n)$ by the following rule. For any $\omega \in \OmegaDer^p(M_n)$ with $p<n^2-1$, $\int_{M_n} \omega = 0$. Any $\omega \in \OmegaDer^{n^2-1}(M_n)$ can be written as $\omega = a \sqrt{\abs{g}} \theta^{\alpha^0_1} \wedge \cdots \wedge \theta^{\alpha^0_{n^2-1}}$ for a unique $a \in M_n$ which is independent of the chosen basis $\{ E_\alpha \}_{\alpha \in I}$ and we define 
\begin{align*}
\int_{M_n} \omega 
&= \int_{M_n} a \sqrt{\abs{g}} \theta^{\alpha^0_1} \wedge \cdots \wedge \theta^{\alpha^0_{n^2-1}} 
\defeq \tr(a)
\end{align*}
Once again, this is not the convention used in \cite{DuboKernMado90b} where a factor $\tfrac{1}{n}$ was put in front of the RHS. In our convention, $\omega_{\vol} \defeq \sqrt{\abs{g}} \theta^{\alpha^0_1} \wedge \cdots \wedge \theta^{\alpha^0_{n^2-1}}$ is the volume form whose integral is normalized to $n$.

\medskip
The metric permits to define the Hodge $\hstar$-operator 
\begin{align*}
\hstar : \OmegaDer^p(M_n) \to \OmegaDer^{n^2-1-p}(M_n)
\end{align*}
defined by
\begin{align}
\label{eq def hodge star}
\hstar (\theta^{\alpha_1} \wedge \cdots \wedge \theta^{\alpha_p})
&\defeq
\tfrac{1}{(n^2-1-p) !} \sqrt{\abs{g}} g^{\alpha_1\beta_1} \cdots g^{\alpha_p\beta_p} \epsilon_{\beta_1, \dots, \beta_{n^2-1}} \theta^{\beta_{p+1}} \wedge \cdots \wedge \theta^{\beta_{n^2-1}}
\end{align}
where $\epsilon_{\beta_1, \dots, \beta_{n^2-1}}$ is the completely antisymmetric tensor such that $\epsilon_{\alpha^0_1, \dots, \alpha^0_{n^2-1}} = 1$.

\begin{example}[Transport by inner automorphisms]
\label{ex transport matrix by inner automorphisms}
Let us consider the situation described in Example~\ref{ex transport derivations inner automorphisms} in the context of the matrix algebra. Let $\omega = \frac{1}{p!} \omega_{\alpha_1, \dots, \alpha_p} \theta^{\alpha_1} \wedge \cdots \wedge \theta^{\alpha_p}$ be a $p$-form. To compute $\omega^u \defeq \Psi(\omega)$, let us introduce the matrix $U = (U_\alpha^{\beta})$ defined by $u^{-1} E_\alpha u = U_\alpha^{\beta} E_{\beta}$, so that $\PsiDer^{-1}(\partial_\alpha) = \ad_{u^{-1} E_\alpha u} = U_\alpha^{\beta} \partial_{\beta}$. Since $GL_n$ is unimodular, one has $\det(U) = 1$. Notice also that $u^{-1} [E_{\alpha_1}, E_{\alpha_2}] u = [u^{-1} E_{\alpha_1} u, u^{-1} E_{\alpha_2} u] = U_{\alpha_1}^{\beta_1} U_{\alpha_2}^{\beta_2} [E_{\beta_1}, E_{\beta_2}] = U_{\alpha_1}^{\beta_1} U_{\alpha_2}^{\beta_2} C_{\beta_1 \beta_2}^{\beta_3} E_{\beta_3}$ on the one hand and $u^{-1} [E_{\alpha_1}, E_{\alpha_2}] u = C_{\alpha_1 \alpha_2}^{\alpha_3} u^{-1} E_{\alpha_3} u = C_{\alpha_1 \alpha_2}^{\alpha_3} U_{\alpha_3}^{\beta_3} E_{\beta_3}$ on the other hand, so that $U_{\alpha_1}^{\beta_1} U_{\alpha_2}^{\beta_2} C_{\beta_1 \beta_2}^{\beta_3} = C_{\alpha_1 \alpha_2}^{\alpha_3} U_{\alpha_3}^{\beta_3}$. By definition, $(\theta^\alpha)^u(\partial_{\alpha'}) = u \theta^\alpha(\PsiDer^{-1}(\partial_{\alpha'})) u^{-1} = u U_{\alpha'}^{\beta'} \delta_{\beta'}^{\alpha} u^{-1} = U_{\alpha'}^{\alpha} = U_{\beta}^{\alpha} \theta^{\beta}(\partial_{\alpha'})$ so that $(\theta^\alpha)^u = U_{\beta}^{\alpha} \theta^{\beta}$. In the same way, $\omega^u_{\alpha_1, \dots, \alpha_p} = \omega^u(\partial_{\alpha_1}, \dots, \partial_{\alpha_p}) = u \omega(\PsiDer^{-1}(\partial_{\alpha_1}), \dots, \PsiDer^{-1}(\partial_{\alpha_p})) u^{-1} = U_{\alpha_1}^{\beta_1} \cdots U_{\alpha_p}^{\beta_p} u \omega_{\beta_1, \dots, \beta_p} u^{-1}$, so that $\omega^u = \frac{1}{p!} U_{\alpha_1}^{\beta_1} \cdots U_{\alpha_p}^{\beta_p} u \omega_{\beta_1, \dots, \beta_p} u^{-1} \theta^{\alpha_1} \wedge \cdots \wedge \theta^{\alpha_p} = \frac{1}{p!} u \omega_{\alpha_1, \dots, \alpha_p} u^{-1} (\theta^{\alpha_1})^u \wedge \cdots \wedge (\theta^{\alpha_p})^u$. Now, the metric $g(\ad_a, \ad_b) = \tr(a b)$, for $a, b \in \ksl_n$, is invariant by the transport associated to the inner automorphism $\Psi(a) = u a u^{-1}$, and so one has $g_{\alpha \beta} = U_{\alpha}^{\alpha'} U_{\beta}^{\beta'} g_{\alpha' \beta'}$ and $g^{\alpha \beta} = U_{\alpha'}^{\alpha} U_{\beta'}^{\beta} g^{\alpha' \beta'}$ for the inverse metric. In particular, all the conditions and properties concerning orthonormality associated to $g$ are transported by $\Psi$. For $\omega = a \sqrt{\abs{g}} \theta^{\alpha^0_1} \wedge \cdots \wedge \theta^{\alpha^0_{n^2-1}}$, one has $\omega^u = u a u^{-1} \sqrt{\abs{g}} (\theta^{\alpha^0_1})^u \wedge \cdots \wedge (\theta^{\alpha^0_{n^2-1}})^u = u a u^{-1} \sqrt{\abs{g}} U_{\beta_1}^{\alpha^0_1} \cdots U_{\beta_{n^2-1}}^{\alpha^0_{n^2-1}} \theta^{\beta_1} \wedge \cdots \wedge \theta^{\beta_{n^2-1}} = u a u^{-1} \sqrt{\abs{g}} \theta^{\alpha^0_1} \wedge \cdots \wedge \theta^{\alpha^0_{n^2-1}}$ where we have used $\theta^{\beta_1} \wedge \cdots \wedge \theta^{\beta_{n^2-1}} = \epsilon^{\beta_1, \dots, \beta_{n^2-1}} \theta^{\alpha^0_1} \wedge \cdots \wedge \theta^{\alpha^0_{n^2-1}}$ and $\epsilon^{\beta_1, \dots, \beta_{n^2-1}} U_{\beta_1}^{\alpha^0_1} \cdots U_{\beta_{n^2-1}}^{\alpha^0_{n^2-1}} = \det(U) = 1$. This implies that $\int_{M_n} \omega^u = \tr(u a u^{-1}) = \tr(a) = \int_{M_n} \omega$. Since the metric $g$ is invariant, the Hodge $\hstar$-operator is also invariant according to \eqref{eq def hodge star}, and since the inverse metric is also invariant under the action of $U$, a straightforward computation of $\hstar ( (\theta^{\alpha_1})^u \wedge \cdots \wedge (\theta^{\alpha_p})^u)$ shows that the relation \eqref{eq def hodge star} is also valid when one replaces all the $\theta^\alpha$ by $(\theta^\alpha)^u$ on both sides. Combining all these results and the explicit relation \eqref{eq hodge star omega omega'}, one can show that for any $p$-forms $\omega$ and $\omega'$, one has $\int_{M_n} (\omega \wedge \hstar \omega')^u = \int_{M_n} \omega^u \wedge \hstar \omega'^u = \int_{M_n} \omega \wedge \hstar \omega'$.
\end{example}

%%%%%%%%%%%%%%%%
\subsection{Noncommutative Gauge Field Theories}
\label{sec NCGFT}

Gauge field theories of Yang-Mills type can be described in terms of fiber bundles and connections. Noncommutative geometry, as a natural extension of ordinary geometry, has been used to develop gauge field theories (hereafter mentioned as NCGFT) in which scalar fields are part of the generalized notion of connections. Then, the naturally constructed Lagrangians produce quadratic potentials for these fields, providing a SSBM in these models. See \cite{DuboKernMado90a, DuboKernMado90b, ConnLott90a} for the initial attempts and \cite{ChamConnMarc07a} for a more elaborated reconstruction of the Standard Model of particles physics. See also \cite{Suij15a} for a review and references. 

The necessary building blocks to constructed noncommutative gauge fields theories are motivated and described in \cite{FranLazzMass14a}. In order to fix notations, we summarize here the main ingredients in the case of the derivation-based differential calculus.

Let $\algA$ be a unital associative algebra equipped with an involution $a \mapsto a^*$ and let $\modM$ be a left $\algA$-module. A (noncommutative) connection $\nabla$ on $\modM$ is a family of linear maps $\nabla_\kX : \modM \to \modM$ defined for any $\kX \in \Der(\algA)$ such that
\begin{enumerate}
\item $\nabla_{f \kX} = f \nabla_{\kX}$ and $\nabla_{\kX + \kY} = \nabla_{\kX} + \nabla_{\kY}$ for any $f \in \calZ(\algA)$ and $\kX, \kY \in \Der(\algA)$.

\item $\nabla_{\kX} (a e) = (\kX \cdotaction a) e + a \nabla_{\kX} e$ for any $a \in \algA$, $e \in \modM$ and $\kX \in \Der(\algA)$.
\end{enumerate}

The curvature of $\nabla$ is the family of maps $R(\kX, \kY) : \modM \to \modM$ defined for any $e \in \modM$ and $\kX, \kY \in \Der(\algA)$ by
\begin{align*}
R(\kX, \kY) e \defeq ( \nabla_{\kX} \nabla_{\kY} - \nabla_{\kY} \nabla_{\kX} - \nabla_{[\kX, \kY]}) e
\end{align*}
It can be easily shown that $R(\kX, \kY) (a e) = a R(\kX, \kY) e$ for any $a \in \algA$ so that $R(\kX, \kY) \in \Hom_{\algA}(\modM, \modM)$ (space of homomorphisms of left modules).

\medskip
A Hermitian structure on $\modM$ is a $\bbR$-linear map $h : \modM \otimes \modM \to \algA$ such that $h(a_1 e_1, a_2 e_2) = a_1 h(e_1, e_2) a_2^*$ for any $a_1, a_2 \in \algA$ and $e_1, e_2 \in \modM$. A connection $\nabla$ is Hermitian if for any real derivation $\kX$ of $\algA$ and any $e_1, e_2 \in \modM$, one has
\begin{align*}
\kX \cdotaction h(e_1, e_2)
&= h(\nabla_\kX e_1, e_2) + h(e_1, \nabla_\kX e_2)
\end{align*}

\medskip
We suppose that $\modM$ is equipped with a Hermitian structure $h$.

\medskip
The gauge group $\calG$ of $\modM$ is the group of automorphisms of $\modM$ as a left module which preserve the Hermitian structure: so, for any $a \in \algA$, $e, e' \in \modM$, $\phi \in \calG$ satisfies 
\begin{align*}
\phi(a e) &= a \phi(e),
\quad
h(\phi(e_1),\phi(e_2)) = h(e_1, e_2).
\end{align*}
The action of $\phi$ on a connection is defined by the compositions $\nabla_{\kX} \mapsto \nabla^{\phi}_{\kX} \defeq \phi \circ \nabla_{\kX} \circ \phi^{-1}$. It is easy to check that $\nabla^{\phi}$ is a connection and that $\nabla^{\phi_2 \circ \phi_1} = (\nabla^{\phi_1})^{\phi_2}$ for any $\phi_1, \phi_2 \in \calG$.

\medskip
A special case of interest is the left module $\modM = \algA$ for the multiplication in $\algA$ equipped with the canonical Hermitian structure $h(a,b) \defeq a b^*$ for any $a,b \in \modM = \algA$. Then, since $\algA$ is unital, the connection $\nabla$ is completely given by its values on the unit $\bbbone \in \algA$: for any $e \in \modM = \algA$ and any $\kX \in \Der(\algA)$,
\begin{align*}
\nabla_{\kX} e 
= \nabla_{\kX} (e \bbbone)
= (\kX \cdotaction e) + e (\nabla_{\kX} \bbbone)
= (\kX \cdotaction e) + e \omega(\kX)
\end{align*}
where we define $\omega(\kX) \defeq \nabla_{\kX} \bbbone$. Then one has $\omega \in \OmegaDer^1(\algA)$ and $\omega$ is called the connection $1$-form of $\nabla$. The compatibility of $\nabla$ with $h$ implies that for any real derivation $\kX$, one has $\omega(\kX) + \omega(\kX)^* = 0$ since $0 = \kX \cdotaction \bbbone = \kX \cdotaction h(\bbbone, \bbbone) = \omega(\kX) + \omega(\kX)^*$.

The curvature can be computed in terms of $\omega$ as $R(\kX, \kY) e = e \Omega(\kX, \kY)$ where $\Omega(\kX, \kY) \defeq (\dd \omega)(\kX, \kY) - [\omega(\kX), \omega(\kY)]$ (here we use the fact that $\modM = \algA$ is also a right $\algA$-module). The $2$-form $\Omega \in \OmegaDer^2(\algA)$ is the curvature $2$-form of $\nabla$.

The gauge group $\calG$ is the space $\calU(\algA)$ of unitary elements in $\algA$ which act on the right on $\modM$. Indeed, any $\phi \in \calG$ is defined by its value $u \defeq \phi(\bbbone) \in \algA$ so that $\phi(e) = \phi(e \bbbone) = e \phi(\bbbone) = e u$. Since $\calG$ is a group, the element $u$ is invertible in $\algA$ and the unitary condition comes from the compatibility with the Hermitian structure: $\bbbone = h(\bbbone, \bbbone) = h(\bbbone u, \bbbone u) = u u^\ast$. The connection $1$-form associated to $\nabla^\phi$ is then $\omega^u \defeq u^{-1} \omega u - u^{-1} (\dd u)$ and its curvature $2$-form is $\Omega^u \defeq u^{-1} \Omega u$.

\medskip
To define a gauge field theory on $\modM$, one considers the “fields” defining a connection $\nabla$ on $\modM$ and a Lagrangian $\calL(\nabla)$ for these fields. This Lagrangian is usually constructed for the left module $\modM = \algA$ using a Hodge star operator $\hstar$ on the space of forms on $\algA$ as $\calL(\nabla) \defeq - \Omega \wedge \hstar \Omega$. Then, using a trace $\int_{\algA}$ (which sends forms to scalars) we can define an action $\calS[\nabla] \defeq \int_{\algA} \calL(\nabla) =  - \int_{\algA} \Omega \wedge \hstar \Omega$ (the sign is necessary for positivity). The matter Lagrangian can be defined in a similar way. One first consider $\nabla$ as a map $\nabla : \modM \to \OmegaDer^1(\algA) \otimes_{\algA} \modM$. Using a natural involution on $\OmegaDer^\grast(\algA)$ which extends the involution on $\algA$ (see \cite{FranLazzMass14a} for instance), one can extend $h$ to $(\OmegaDer^p(\algA) \otimes_{\algA} \modM )  \otimes (\OmegaDer^q(\algA) \otimes_{\algA} \modM ) \to \OmegaDer^{p+q}(\algA)$ by $h(\omega_p \otimes e, \omega_q \otimes e') \defeq \omega_p  h(e, e') \wedge \omega_q^*$. Then $\int_{\algA} h( \nabla e, \hstar \nabla e)$ defines a Klein-Gordon type action for matter fields $e \in \modM$.

Since we restrict our analysis to matrix algebras, we refer to Sect.~\ref{sec matrix algebras} for the construction of an explicit Hodge star operator and a trace.

\begin{proposition}[Transport of connections by automorphisms]
\label{prop transport of connections by automorphisms}
Let us consider the hypothesis of Prop.~\ref{prop transport forms}. 

Let $\PsiMod : \modM \to \modM$ be an invertible linear map such that $\PsiMod(a e) = \Psi(a) \PsiMod(e)$ for any $a \in \algA$ and $e \in \modM$.

Let $\nabla$ be a connection on $\modM$ compatible with a Hermitian structure $h$ on $\modM$. Then, for any $\kX \in \Der(\algA)$ and $e \in \modM$, the maps $\nabla^{\PsiMod}_\kX e \defeq \PsiMod \left( \nabla_{\PsiDer^{-1}(\kX)} \PsiMod^{-1}(e) \right)$ define a connection on $\modM$ which is compatible with the Hermitian structure $h^{\PsiMod}$ defined by $h^{\PsiMod}(e_1, e_2) \defeq \Psi \left( h( \PsiMod^{-1}(e_1), \PsiMod^{-1}(e_2)) \right)$. Its curvature $R^{\PsiMod}$ satisfies $R^{\PsiMod}(\kX, \kY) e = \PsiMod \left( R(\PsiDer^{-1}(\kX), \PsiDer^{-1}(\kY) ) \PsiMod^{-1}(e) \right)$ where $R$ is the curvature of $\nabla$.

Let $\phi \in \calG$ be a gauge transformation on $\modM$. Then $\phi^{\PsiMod} (e) \defeq \PsiMod \circ \phi \circ \PsiMod^{-1}(e)$ belongs to $\calG$. If $\phi$ is compatible with $h$ then $\phi^{\PsiMod}$ is compatible with $h^{\PsiMod}$. One has $(\nabla^\phi)^{\PsiMod} = (\nabla^{\PsiMod})^{\phi^{\PsiMod}}$.

For $\modM = \algA$, let $\PsiMod = \Psi$. Let $\omega$ (resp. $\omega^\Psi$) be the connection $1$-form of $\nabla$ (resp. of $\nabla^{\Psi}$). Then one has $\omega^\Psi = \Psi(\omega)$. Let $u = \phi(\bbbone)$ and $u^\Psi = \phi^\Psi(\bbbone)$, then $u^\Psi = \Psi(u)$.
\end{proposition}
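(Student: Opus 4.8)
The plan is to treat $\nabla^{\PsiMod}$, $h^{\PsiMod}$ and $\phi^{\PsiMod}$ as ``conjugates by $\PsiMod$'' and to reduce every assertion to the corresponding property of $\nabla$, $h$, $\phi$ by pushing $\PsiMod^{\pm 1}$, $\Psi^{\pm 1}$ and $\PsiDer^{-1}$ through the defining relations. The three algebraic identities I would isolate first are: (i) $\PsiMod^{-1}(ae) = \Psi^{-1}(a)\,\PsiMod^{-1}(e)$, obtained by applying $\PsiMod^{-1}$ to the intertwining relation $\PsiMod(ae)=\Psi(a)\PsiMod(e)$; (ii) $\PsiDer^{-1}(f\kX)=\Psi^{-1}(f)\,\PsiDer^{-1}(\kX)$ for $f\in\calZ(\algA)$, read off from $\PsiDer(f\kX)=\Psi(f)\PsiDer(\kX)$ in Prop.~\ref{prop transport forms} (using that $\Psi^{-1}(f)\in\calZ(\algA)$); and (iii) the explicit formula $\PsiDer^{-1}(\kX)\cdotaction b = \Psi^{-1}(\kX\cdotaction\Psi(b))$ together with the fact that $\PsiDer^{-1}$ is a Lie-algebra morphism, both established in Prop.~\ref{prop transport forms}.

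With these in hand the connection axioms for $\nabla^{\PsiMod}$ are routine. For $\calZ(\algA)$-linearity I would substitute (ii) into the definition and pull $\Psi^{-1}(f)$ out through $\PsiMod$ by the intertwining relation, giving $\nabla^{\PsiMod}_{f\kX} = f\nabla^{\PsiMod}_\kX$. For the Leibniz rule I would rewrite $\PsiMod^{-1}(ae)$ via (i), apply the Leibniz rule of $\nabla$, and simplify the term $\PsiDer^{-1}(\kX)\cdotaction\Psi^{-1}(a)$ using (iii), which collapses to $\Psi^{-1}(\kX\cdotaction a)$; reapplying $\PsiMod$ and $\Psi$ then reproduces $(\kX\cdotaction a)e + a\,\nabla^{\PsiMod}_\kX e$. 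The curvature formula is the cleanest part: since $\PsiMod^{-1}\nabla^{\PsiMod}_\kX = \nabla_{\PsiDer^{-1}(\kX)}\PsiMod^{-1}$ by definition, the two second-order terms compose directly, while the bracket term uses that $\PsiDer^{-1}$ preserves $[\,\cdot\,,\cdot\,]$, so the three terms assemble into $\PsiMod\,R(\PsiDer^{-1}(\kX),\PsiDer^{-1}(\kY))\,\PsiMod^{-1}$.

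For the Hermitian compatibility I would first note $\kX\cdotaction\Psi(b) = \Psi\big(\PsiDer^{-1}(\kX)\cdotaction b\big)$, so that $\kX\cdotaction h^{\PsiMod}(e_1,e_2) = \Psi\big(\PsiDer^{-1}(\kX)\cdotaction h(\PsiMod^{-1}e_1,\PsiMod^{-1}e_2)\big)$; applying the compatibility of $\nabla$ with $h$ for the derivation $\PsiDer^{-1}(\kX)$ and using $\PsiMod^{-1}\nabla^{\PsiMod}_\kX = \nabla_{\PsiDer^{-1}(\kX)}\PsiMod^{-1}$, the two resulting terms are recognised as $h^{\PsiMod}(\nabla^{\PsiMod}_\kX e_1,e_2)$ and $h^{\PsiMod}(e_1,\nabla^{\PsiMod}_\kX e_2)$. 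The one genuine subtlety — and the point where I expect to need an explicit hypothesis — is that compatibility of $\nabla$ is available only for a \emph{real} derivation, so one needs $\PsiDer^{-1}(\kX)$ to be real whenever $\kX$ is. This forces $\Psi$ to be a $*$-automorphism, exactly the setting of interest (e.g.\ the inner automorphisms by unitaries of Example~\ref{ex transport derivations inner automorphisms}); under that hypothesis the reality of $\PsiDer^{-1}(\kX)$ follows from $\big(\PsiDer^{-1}(\kX)\cdotaction a\big)^* = \Psi^{-1}\big((\kX\cdotaction\Psi(a))^*\big)$ and the reality of $\kX$.

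The gauge-transformation assertions are again conjugation bookkeeping. That $\phi^{\PsiMod}$ is a left-module automorphism follows from (i) and the left-linearity of $\phi$, with inverse $\PsiMod\circ\phi^{-1}\circ\PsiMod^{-1}$; preservation of $h^{\PsiMod}$ follows by cancelling the pair $\PsiMod^{-1}\PsiMod$ inside $h^{\PsiMod}(\phi^{\PsiMod}e_1,\phi^{\PsiMod}e_2)$ and invoking $h(\phi f_1,\phi f_2)=h(f_1,f_2)$. The identity $(\nabla^\phi)^{\PsiMod}=(\nabla^{\PsiMod})^{\phi^{\PsiMod}}$ is obtained by expanding both sides to the single expression $\PsiMod\circ\phi\circ\nabla_{\PsiDer^{-1}(\kX)}\circ\phi^{-1}\circ\PsiMod^{-1}$, the intermediate $\PsiMod^{-1}\PsiMod$ pairs cancelling. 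Finally, for $\modM=\algA$ with $\PsiMod=\Psi$, the relation $\Psi^{-1}(\bbbone)=\bbbone$ gives at once $\omega^\Psi(\kX)=\nabla^\Psi_\kX\bbbone = \Psi\big(\omega(\PsiDer^{-1}(\kX))\big) = \Psi(\omega)(\kX)$, the last equality being the definition of $\Psi$ on $1$-forms from Prop.~\ref{prop transport forms}, and $u^\Psi=\phi^\Psi(\bbbone)=\Psi(\phi(\bbbone))=\Psi(u)$. The only step demanding real care is thus the Hermitian compatibility and its implicit requirement that $\Psi$ respect the involution; everything else is substitution.
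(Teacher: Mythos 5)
Your proposal is correct and follows essentially the same route as the paper's proof: the same conjugation bookkeeping via the identities $\PsiMod^{-1}(ae)=\Psi^{-1}(a)\,\PsiMod^{-1}(e)$, $\PsiDer^{-1}(f\kX)=\Psi^{-1}(f)\,\PsiDer^{-1}(\kX)$ and $\PsiDer^{-1}(\kX)\cdotaction b=\Psi^{-1}(\kX\cdotaction\Psi(b))$, applied in the same order to the connection axioms, the curvature, the Hermitian compatibility, the gauge transformations, and the $\modM=\algA$ specialization. Your one departure is a genuine refinement rather than a gap: the paper's proof silently applies the $h$-compatibility of $\nabla$ to the derivation $\PsiDer^{-1}(\kX)$, which is only licensed if $\PsiDer^{-1}(\kX)$ is real whenever $\kX$ is, i.e.\ if $\Psi$ commutes with the involution, exactly as you observe (and as holds for the inner automorphisms by unitaries used later in the paper); the same hypothesis is in fact also needed for $h^{\PsiMod}$ to be a Hermitian structure at all, since one computes $h^{\PsiMod}(a_1e_1,a_2e_2)=a_1\,h^{\PsiMod}(e_1,e_2)\,\Psi\bigl((\Psi^{-1}(a_2))^*\bigr)$, which equals $a_1\,h^{\PsiMod}(e_1,e_2)\,a_2^*$ only when $\Psi$ is a $*$-automorphism.
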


\begin{proof}
For any $a \in \algA$, $f \in \calZ(\algA)$, $\kX \in \Der(\algA)$ and $e \in \modM$, one has 
\begin{align*}
\nabla^{\PsiMod}_\kX (a e)
&= \PsiMod \left( \nabla_{\PsiDer^{-1}(\kX)} \PsiMod^{-1}(a e) \right)
= \PsiMod \left( \nabla_{\PsiDer^{-1}(\kX)} \Psi^{-1}(a) \PsiMod^{-1}(e) \right)
\\
& = \begin{multlined}[t]
\PsiMod \left( (\PsiDer^{-1}(\kX) \cdotaction \Psi^{-1}(a) ) \PsiMod^{-1}(e) \right) 
\\
	+ \PsiMod \left( \Psi^{-1}(a) \nabla_{\PsiDer^{-1}(\kX)} \PsiMod^{-1}(e) \right)
	\end{multlined}
\\
&= \PsiMod \left( \Psi^{-1}(\kX \cdotaction a) \PsiMod^{-1}(e) \right)  + a \PsiMod \left( \nabla_{\PsiDer^{-1}(\kX)} \PsiMod^{-1}(e) \right)
\\
&= (\kX \cdotaction a) e + a \nabla^{\PsiMod}_\kX e
\end{align*}
and
\begin{align*}
\nabla^{\PsiMod}_{f \kX} e
&= \PsiMod \left( \nabla_{\PsiDer^{-1}(f \kX)} \PsiMod^{-1}(e) \right)
= \PsiMod \left( \nabla_{\Psi^{-1}(f )\PsiDer^{-1}(\kX)} \PsiMod^{-1}(e) \right)
\\
&= \PsiMod \left( \Psi^{-1}(f ) \nabla_{\PsiDer^{-1}(\kX)} \PsiMod^{-1}(e) \right)
= f \PsiMod \left( \nabla_{\PsiDer^{-1}(\kX)} \PsiMod^{-1}(e) \right)
\\
&= f \nabla^{\PsiMod}_{\kX} e
\end{align*}
so that $\nabla^{\PsiMod}$ is a connection. The compatibility with $h^{\PsiMod}$ is proved by
\begin{align*}
\kX \cdotaction h^{\PsiMod}(e_1, e_2)
&= \kX \cdotaction \Psi \left( h( \PsiMod^{-1}(e_1), \PsiMod^{-1}(e_2)) \right)
\\
&= \Psi \left( \PsiDer^{-1}(\kX) \cdotaction h( \PsiMod^{-1}(e_1), \PsiMod^{-1}(e_2)) \right)
\\
&= \begin{multlined}[t]
\Psi \left( h( \nabla_{\PsiDer^{-1}(\kX)} \PsiMod^{-1}(e_1), \PsiMod^{-1}(e_2)) \right)
\\
	+ \Psi \left( h( \PsiMod^{-1}(e_1), \nabla_{\PsiDer^{-1}(\kX)} \PsiMod^{-1}(e_2)) \right)
	\end{multlined}
\\
&= \begin{multlined}[t]
\Psi \left( h( \PsiMod^{-1} (\nabla^{\PsiMod}_{\kX} e_1), \PsiMod^{-1}(e_2)) \right)
\\
	+ \Psi \left( h( \PsiMod^{-1}(e_1), \PsiMod^{-1} (\nabla^{\PsiMod}_{\kX} e_2)) \right)
	\end{multlined}
\\
&= h^{\PsiMod}( \nabla^{\PsiMod}_{\kX} e_1, e_2)
	+ h^{\PsiMod}(e_1, \nabla^{\PsiMod}_{\kX} e_2) .
\end{align*}
The relation for the curvature $R^{\PsiMod}$ is a straightforward computation:
\begin{align*}
R^{\PsiMod}(\kX, \kY) e
&= \left(  \nabla^{\PsiMod}_{\kX}  \nabla^{\PsiMod}_{\kY} -  \nabla^{\PsiMod}_{\kY}  \nabla^{\PsiMod}_{\kX} -  \nabla^{\PsiMod}_{[\kX, \kY]} \right) e
\\
&= \begin{multlined}[t]
\PsiMod \circ \nabla_{\PsiDer^{-1}(\kX)} \circ \PsiMod^{-1} \circ \PsiMod \circ \nabla_{\PsiDer^{-1}(\kY)} \circ \PsiMod^{-1}(e)
\\
- \PsiMod \circ \nabla_{\PsiDer^{-1}(\kY)} \circ \PsiMod^{-1} \circ \PsiMod \circ \nabla_{\PsiDer^{-1}(\kX)} \circ \PsiMod^{-1}(e)
\\
- \PsiMod \circ \nabla_{\PsiDer^{-1}([\kX, \kY])} \circ \PsiMod^{-1}(e)
\end{multlined}
\\
&= \begin{multlined}[t]
\PsiMod \circ \nabla_{\PsiDer^{-1}(\kX)} \circ \nabla_{\PsiDer^{-1}(\kY )} \circ \PsiMod^{-1}(e)
\\
- \PsiMod \circ \nabla_{\PsiDer^{-1}(\kY)} \circ \nabla_{\PsiDer^{-1}(\kX)} \circ \PsiMod^{-1}(e)
\\
- \PsiMod \circ \nabla_{[\PsiDer^{-1}(\kX), \PsiDer^{-1}(\kY)]} \circ \PsiMod^{-1}(e)
\end{multlined}
\\
&= \PsiMod \left( R(\PsiDer^{-1}(\kX), \PsiDer^{-1}(\kY)) \PsiMod^{-1}(e) \right)
\end{align*} 

The map $\phi^{\PsiMod}$ is obviously invertible with inverse $(\phi^{\PsiMod})^{-1} = \PsiMod \circ \phi^{-1} \circ \PsiMod^{-1}$. It is a morphism of modules: $\phi^{\PsiMod}(a e) = \PsiMod \circ \phi \circ \PsiMod^{-1} (a e) = \PsiMod \circ \phi \left( \Psi^{-1}(a) \PsiMod^{-1}(e) \right) = \PsiMod \left( \Psi^{-1}(a) \phi \circ \PsiMod^{-1}(e) \right) = a \phi^{\PsiMod}(e)$. One has
\begin{align*}
h^{\PsiMod}&( \phi^{\PsiMod}(e_1), \phi^{\PsiMod}(e_2) )
\\
&= \Psi \left( h( \PsiMod^{-1} \circ \PsiMod \circ \phi \circ \PsiMod^{-1}(e_1), \PsiMod^{-1} \circ \PsiMod \circ \phi \circ \PsiMod^{-1}(e_2) \right)
\\
&= \Psi \left( h( \phi \circ \PsiMod^{-1}(e_1), \phi \circ \PsiMod^{-1}(e_2) \right)
\\
&= \Psi \left( h( \PsiMod^{-1}(e_1), \PsiMod^{-1}(e_2) \right)
\\
&= h^{\PsiMod}(e_1, e_2)
\end{align*}
and
\begin{align*}
(\nabla^\phi)^{\PsiMod}_\kX
&= \PsiMod \circ \nabla^\phi_{\PsiDer^{-1}(\kX)} \circ \PsiMod^{-1}
= \PsiMod \circ \phi \circ \nabla_{\PsiDer^{-1}(\kX)} \circ \phi^{-1} \circ \PsiMod^{-1}
\\
&= (\PsiMod \circ \phi \circ \PsiMod^{-1}) \!\circ\! (\PsiMod \circ \nabla_{\PsiDer^{-1}(\kX)} \circ \PsiMod^{-1}) \!\circ\! (\PsiMod \circ \phi \circ \PsiMod^{-1})^{-1}
\\
&= \phi^{\PsiMod} \circ \nabla^{\PsiMod}_\kX \circ (\phi^{\PsiMod})^{-1}
= (\nabla^{\PsiMod})^{\phi^{\PsiMod}}_\kX .
\end{align*}
Finally, one has 
\begin{align*}
\omega^\Psi(\kX) 
&= \nabla^\Psi_\kX \bbbone 
= \Psi \circ \nabla_{\PsiDer^{-1}(\kX)} \Psi^{-1} \bbbone 
= \Psi \circ \nabla_{\PsiDer^{-1}(\kX)} \bbbone 
\\
&= \Psi \left( \omega(\PsiDer^{-1}(\kX)) \right) 
= \Psi(\omega) (\kX)
\end{align*}
and $u^\Psi = \phi^\Psi(\bbbone) = \Psi \circ \phi \circ \Psi^{-1}(\bbbone) = \Psi \circ \phi(\bbbone) = \Psi(u)$.
\end{proof}

\medskip
Let us describe the degrees of freedom in the gauge sector of a NCGFT defined on $M_n$ and on $C^\infty(M) \otimes M_n$, see \cite{DuboKernMado90a, DuboKernMado90b, DuboMass98a, Mass12a, FranLazzMass14a} for some details.  We use some notations from Sect.~\ref{sec matrix algebras}.

\begin{example}[$\algA = M_n$]
\label{example Mn}
Let us consider $\algA = M_n$. Then $\Der(\algA) = \Der(M_n) \simeq \ksl_n$, and, for $k=1, \ldots, n^2-1$, let $\{ E_k \}$ be a basis of anti-Hermitean traceless matrices in $\ksl_n$ so that $\{ \partial_k \defeq \ad_{E_k} \}$ is a basis of real derivations of $M_n$.\footnote{We depart here from the conventions in many papers where the $E_k$ are chosen to be Hermitean and $\partial_k$ are defined as $\ad_{i E_k}$.} Let us consider the left module $\modM = \algA$ with the canonical Hermitian structure $h(a,b) \defeq a b^*$. There is a canonical connection $\mrnabla$ on $\modM$ defined by $\mrnabla_{\partial_k} a \defeq E_k a$ for any $k=1, \ldots, n^2-1$ and $a \in \algA = \modM$ with connection $1$-form  $\mromega(\partial_k) = E_k$. This canonical connection satisfies two important properties: firstly, its curvature is zero; secondly, it is gauge invariant (see also \cite{CagnMassWall11a} for another occurrence of such a canonical connection). Explicitly, one has  $\mromega = E_k \theta^k$, which makes it look very much like the Maurer-Cartan $1$-form on $\ksl_n$. It is then convenient to compare any connection $1$-form $\omega$ on $\modM$ to this canonical connection, by writing $\omega = \omega_k \theta^k = \mromega - B_k \theta^k = (E_k - B_k) \theta^k$. Then the curvature $2$-form $\Omega = \tfrac{1}{2} \Omega_{k\ell} \theta^k \wedge \theta^\ell$ has components $\Omega_{k\ell} \defeq \Omega(\partial_k, \partial_\ell) = - ([B_k, B_\ell] - C_{k\ell}^m B_m)$. This curvature vanishes iff $E_k \mapsto B_k$ is a representation of the Lie algebra $\ksl_n$ (for instance $B_k = 0$ or $B_k = E_k$). The connection $\omega$ is compatible with $h$ iff $\omega_k + \omega^*_k = 0$ for any $k$. Since the $E_k$'s are anti-Hermitean, this compatibility condition is equivalent to $B_k + B_k^* = 0$ for any $k$ and then $\Omega_{k\ell}^* = - \Omega_{k\ell}$. We can then decompose $B_k = B_k^\ell E_\ell + i B_k^0 \bbbone_n$ with real functions $B_k^\ell$, $\ell = 0, \ldots, n^2-1$, so that the number of degrees of freedom (number of real functions) in $\omega$ is $n^2 (n^2-1)$. The action of a gauge transformation $g \in U(n)$ induces the transformation $B_k \mapsto g^{-1} B_k g$ (the inhomogeneous part of the gauge transformation is absorbed by $\mromega$).

Notice that this approch is only interesting for $n \geq 2$ since for $n=1$, $M_1 = \bbC$ is commutative and so there is no derivation and so no degree of freedom $B_k$'s.

Suppose that the basis $\{ \partial_k \}$ is orthonormal for the metric $g$ defined as in Sect.~\ref{sec matrix algebras}. Since $\Omega \wedge \hstar \Omega = \tfrac{1}{2} \Omega_{k\ell} \Omega^{k\ell} \omega_{\vol}$ for $\Omega^{k\ell} = g^{kk'} g^{\ell \ell'} \Omega_{k'\ell'}$, the action is then $- \tfrac{1}{2} \sum_{k,\ell} \tr (\Omega_{k\ell})^2 = - \tfrac{1}{2} \sum_{k,\ell} \tr ([B_k, B_\ell] - C_{k\ell}^m B_m)^2$. Notice that $- \tr (\Omega_{k\ell})^2 = \tr (\Omega_{k\ell} \Omega_{k\ell}^*) \geq 0$.

From Prop.~\ref{prop transport of connections by automorphisms} and Examples~\ref{ex transport derivations inner automorphisms} and \ref{ex transport matrix by inner automorphisms}, an inner automorphism defined by a unitary element $u$ in $M_n$ produces a transport of all the structures defining the NCGFT on $M_n$. One has $\omega^u = U_k^\ell u \omega_\ell u^{-1} \theta^k$, and since $\mromega^u = U_k^\ell u E_\ell u^{-1} \theta^k = \mromega$, this implies that $B_k$ is mapped to $B^u_k = U_k^\ell u B_\ell u^{-1}$. One then has $[B^u_k, B^u_\ell] - C_{k\ell}^m B^u_m = U_k^{k'} U_\ell^{\ell'} u^{-1} \left( [B_{k'}, B_{\ell'}] - C_{k'\ell'}^{m' }B_{m'} \right) u$ and the Lagrangian in the $B^u_k$ is the same as the Lagrangian in the $B_k$. We conclude that such an action of inner automorphism is not relevant from a physical point of view.
\end{example}

\begin{example}[$\halgA = C^\infty(M) \otimes M_n$]
\label{example C(M) otimes Mn}
Let us consider the algebra $\halgA = C^\infty(M) \otimes M_n$ for a manifold $M$. The space of derivations is $\Der(\halgA) = [\Gamma(M) \otimes \bbbone_n] \oplus [C^\infty(M) \otimes \ksl_n]$ where $\Gamma(M) = \Der(C^\infty(M))$ is the space of vector fields on $M$. For $\mu = 1, \ldots, \dim M$, let $\{ \partial_\mu \}$ (usual partial derivatives in a coordinate system in a chart of $M$) be a basis of real derivations on the geometric part, and let $\{ \dd x^\mu \}$ be the dual basis of $1$-forms. Let us consider as before the left module $\modM = \halgA$ with the canonical Hermitian structure $h(a,b) \defeq a b^*$. Then a connection $1$-form $\omega$ can be written as $\omega = \omega_\mu \dd x^\mu + \omega_k \theta^k = A_\mu \dd x^\mu + (E_k - B_k) \theta^k$ with $A_\mu, B_k \in \halgA$ and this connection is compatible with $h$ when $A_\mu + A_\mu^* = 0$ and $B_k + B_k^* = 0$ (since the $E_k$'s are anti-Hermitean). As before, we can decompose $A_\mu = A_\mu^\ell E_\ell + i A_\mu^0 \bbbone_n$ and $B_k = B_k^\ell E_\ell + i B_k^0 \bbbone_n$ so that the number of degrees of freedom in $\omega$ is $n^2 (\dim M + n^2-1)$. A gauge transformation given by $g \in C^\infty(M) \otimes U(n)$ induces the transformations $A_\mu \mapsto g^{-1} A_\mu g - g^{-1} \dddR g$ and $B_k \mapsto g^{-1} B_k g$ where $\dddR$ is the ordinary de~Rham differential on $M$ (to simplify, we used the notation $\dd x^\mu$ instead of $\dddR x^\mu$). So $A_\mu \dd x^\mu$ can be identified with an ordinary $U(n)$-connection.

The curvature of $\omega$ can be decomposed into three parts: $\Omega = \tfrac{1}{2} \Omega_{\mu\nu} \dd x^\mu \wedge \dd x^\nu + \Omega_{\mu k} \dd x^\mu \wedge \theta^k + \tfrac{1}{2}  \Omega_{k\ell} \theta^k \wedge \theta^\ell$ with
\begin{align*}
\Omega_{\mu\nu}
&= \partial_\mu A_\nu - \partial_\nu A_\mu - [A_\mu, A_\nu],
\\
\Omega_{\mu k}
&= - ( \partial_\mu B_k - [A_\mu, B_k] ),
\\
\Omega_{k\ell}
&= - ( [B_k, B_\ell] - C_{k\ell}^m B_m ).
\end{align*}
The term $\Omega_{\mu\nu}$ is the usual field strength of $A_\mu$, $\Omega_{\mu k}$ is (up to a sign) the covariant derivative of $B_k$ along the connection $A_\mu$ and $\Omega_{k\ell}$ is the expression obtained for the algebra $\algA = M_n$. Using natural notions of metric and Hodge $\hstar$-operator in this context, a natural Lagrangian is the sum of 3 (positive) terms $- \tfrac{1}{2} \tr(\Omega_{\mu\nu} \Omega^{\mu\nu}) - \tr(\Omega_{\mu k} \Omega^{\mu k}) - \tfrac{1}{2} \tr(\Omega_{k\ell} \Omega^{k\ell})$. Finding a minimal configuration for such a Lagrangian is equivalent to minimizing independently these 3 terms. The last one vanishes if and only if $E_k \mapsto B_k$ is a representation of $\ksl_n$. One possibility is the take $B_k = 0$ for all $k$ (referred to as the “null-configuration” in the following), which cancels also the second term. Then one reduces the theory to massless gauge fields $A_\mu$. Another more stimulating configuration is to consider $B_k = E_k$ (referred to as the “basis-configuration” in the following), and then the second term reduces to $- \tr ([A_\mu, E_k] [A^\mu, E^k])$, which, after developing, produces mass terms for the $A_\mu$ fields, see Lemma~\ref{lem mass rep-config}. This is similar to the SSBM implemented in the SMPP to give masses to some gauge fields.

Notice that for an ordinary Yang-Mills theory in the framework of fiber bundles and connections, with structure group $U(n)$, we have only the fields $A_\mu^\ell$, $\ell = 0, \ldots, n^2-1$ since there is no “algebraic part” which produces the $B_k$'s. With the structure group $SU(n)$, there is no field $A_\mu^0$ (the matrices $E_\ell$, $\ell = 1, \ldots, n^2-1$, generate the real Lie algebra $\ksu(n)$).

Contrary to Example~\ref{example Mn}, this case is also interesting for $n=1$. In that case, the degrees of freedom are only in the spatial direction (the $A_\mu$'s) and they can be used to construct an ordinary $U(1)$ gauge field theory.

Once again, one can ask about the action of an inner automorphism defined by a unitary element in $\halgA$. The action of such an automorphism on a “spatial” derivation $X \in \Gamma(M)$ is given by $\PsiDer(X) = X + \ad_{u(X\cdotaction u^{-1})}$ (see Example~\ref{ex transport derivations inner automorphisms}). If $u$ is a unitary in $M_n$ (not depending on $M$), then one gets $\PsiDer(X) = X$. This implies that the spatial directions (the $\partial_\mu$'s) are only affected by $u$ through the action of $\Psi$, $\omega^u_\mu = u \omega_\mu u^{-1}$, while the “inner” directions (the $\partial_k$'s) change according to the rules given in Example~\ref{ex transport matrix by inner automorphisms}. This implies that the Lagrangian in the new fields is the same as the one in the original fields and so such an action of inner automorphism is not relevant from a physical point of view. When $u$ is depending on $M$, the second term in $\PsiDer(X)$ does not vanish and it produces mixing between spatial directions and inner directions: some degrees of freedom in the $B_k$'s are sent in the spatial part $\omega^u_\mu$. This situation will not be considered in the following.
\end{example}

\begin{lemma}
\label{lem mass rep-config}
Let us consider a NCGFT as in Example~\ref{example C(M) otimes Mn}. In the basis-configuration for the $B_k$'s, the masses induced on the fields $A_\mu^\ell$ for $\ell = 1, \dots, n^2-1$, in the decomposition $A_\mu = A_\mu^\ell E_\ell + i A_\mu^0 \bbbone_n$, are all the same and equal to $m_{\text{basis-config}} = \sqrt{2n}$, while the field $A_\mu^0$ is mass-less.
\end{lemma}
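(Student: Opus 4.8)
The plan is to substitute the basis-configuration $B_k = E_k$ into the cross term $\Omega_{\mu k}$ of the curvature computed in Example~\ref{example C(M) otimes Mn} and to read off the quadratic form in the $A_\mu^\ell$ that this produces in the Lagrangian term $-\tr(\Omega_{\mu k}\Omega^{\mu k})$. First I would observe that the $E_k$ are constant along $M$, so that $\partial_\mu B_k = 0$ and hence $\Omega_{\mu k} = [A_\mu, E_k]$. Writing $A_\mu = A_\mu^\ell E_\ell + i A_\mu^0 \bbbone_n$ and using $[\bbbone_n, E_k] = 0$, the abelian component $A_\mu^0$ drops out entirely, which already establishes that $A_\mu^0$ is mass-less; the surviving expression is $\Omega_{\mu k} = A_\mu^\ell [E_\ell, E_k] = A_\mu^\ell C_{\ell k}^m E_m$.

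The second step is to insert this into $-\tr(\Omega_{\mu k}\Omega^{\mu k})$. Taking the basis $\{\partial_k\}$ orthonormal for $g$, as allowed in Example~\ref{example Mn}, the inner index $k$ is raised through $g$ and, using $g_{mm'} = \tr(E_m E_{m'})$ from Sect.~\ref{sec matrix algebras}, the term becomes the spacetime contraction on $A_\mu^\ell A^{\mu\ell'}$ of the symmetric quadratic form $Q_{\ell\ell'} \defeq \sum_{k,m,m'} C_{\ell k}^m C_{\ell' k}^{m'} \tr(E_m E_{m'})$. Thus the entire content of the lemma reduces to showing that $Q_{\ell\ell'}$ is proportional to the orthonormalization metric $g_{\ell\ell'}$, with the proportionality constant fixing the mass.

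The third step, which is the genuine point, is to identify $Q_{\ell\ell'}$ (up to the trace normalization) with the Killing form of the compact simple Lie algebra $\ksu(n)$, since the anti-Hermitean traceless $E_k$ span $\ksu(n)$ as a real Lie algebra. Because $\ksu(n)$ is simple, its adjoint representation is irreducible, so by Schur's lemma this invariant symmetric form is necessarily a scalar multiple of $g_{\ell\ell'}$; this is exactly why all the $A_\mu^\ell$ receive a common mass. The scalar is then fixed by the standard Lie-theoretic fact that the Killing form of $\ksu(n)$ equals $2n$ times the defining trace form, which gives $Q_{\ell\ell'} = 2n\, g_{\ell\ell'}$ once all signs are accounted for. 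Combining this with the normalization of the kinetic term arising from $-\tfrac{1}{2}\tr(\Omega_{\mu\nu}\Omega^{\mu\nu})$ and reading off the coefficient of $A_\mu^\ell A^{\mu\ell}$ yields $m^2 = 2n$, hence $m_{\text{basis-config}} = \sqrt{2n}$.

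The main obstacle is purely the bookkeeping of the numerical constant $2n$: one must keep track of (i) the \emph{un-normalized} trace convention adopted in Sect.~\ref{sec matrix algebras}, where the authors deliberately drop the $\tfrac{1}{n}$ used in \cite{DuboKernMado90b}; (ii) the sign produced by the $E_k$ being anti-Hermitean, so that $\tr(E_m E_{m'})$ is negative-definite; and (iii) the relative normalization between the mass and kinetic terms, so that the coefficient can legitimately be interpreted as a mass squared. The two conceptual claims---that $A_\mu^0$ remains mass-less and that the $A_\mu^\ell$ all share a single mass---follow at once from $[\bbbone_n, E_k] = 0$ and from the irreducibility of the adjoint representation of $\ksu(n)$, respectively; only the exact value $\sqrt{2n}$ requires the careful constant-chasing above.
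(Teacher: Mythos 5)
Your proof is correct and, at its computational core, follows the same path as the paper: in the basis-configuration one has $\Omega_{\mu k} = [A_\mu, E_k]$, the component $A_\mu^0$ drops out because $\bbbone_n$ is central (hence mass-less), and the mass matrix reduces to the invariant quadratic form $M^2_{\ell_1\ell_2} = -g^{k_1k_2}g_{m_1m_2}C_{\ell_1 k_1}^{m_1}C_{\ell_2 k_2}^{m_2}$ built from the structure constants and the trace form. Where you genuinely diverge is in how the proportionality $M^2_{\ell_1\ell_2} \propto g_{\ell_1\ell_2}$ is established: the paper does it by direct index manipulation, lowering an index to the totally antisymmetric $C_{k\ell m}$ and rearranging until $M^2_{\ell_1\ell_2} = C_{\ell_1 m}^k C_{\ell_2 k}^m = K_{\ell_1\ell_2}$, whereas you invoke Schur's lemma on the irreducible adjoint representation of the simple real Lie algebra $\ksu(n)$. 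Your route is more structural: it explains the complete degeneracy of the spectrum for \emph{any} compact simple Lie algebra before any constant is computed, and it cleanly separates the conceptual claim (a single common mass) from the numerical one ($\sqrt{2n}$). Two caveats: first, applying Schur requires knowing that your form $Q_{\ell\ell'}$ is ad-invariant (it is, being a contraction of invariant tensors, but this deserves a sentence); second, fixing the scalar by "identifying $Q$ with the Killing form" is not pure bookkeeping --- that identification is precisely the cyclicity/ad-invariance manipulation of the trace that the paper performs explicitly via the antisymmetry of $C_{k\ell m}$. In the end both arguments rest on the same key identity $K(X,Y) = 2n\tr(XY)$ on $\ksu(n)$, so your proposal is best viewed as the paper's proof with the degeneracy statement factored out into a representation-theoretic lemma.
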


\begin{proof}
Using the metric $g$ defined as $g(E_k, E_\ell) = \tr(E_k E_\ell)$ (see Sect.~\ref{sec matrix algebras}), the masses for the fields $A_\mu^\ell$, $\ell = 1, \dots, n^2-1$, are given by the term 
\begin{align*}
M^2_{\ell_1 \ell_2} A_\mu^{\ell_1} A^{\mu, \ell_2}
&= - g^{k_1 k_2} \tr ([A_\mu, E_{k_1}] [A^\mu, E_{k_2}]) 
\\
&= - A_\mu^{\ell_1} A^{\mu, \ell_2} g^{k_1 k_2} \tr( [E_{\ell_1}, E_{k_1}] [E_{\ell_2}, E_{k_2}] )
\\
&= - A_\mu^{\ell_1} A^{\mu, \ell_2} g^{k_1 k_2} C_{\ell_1 k_1}^{m_1} C_{\ell_2 k_2}^{m_2} \tr( E_{m_1} E_{m_2})
\\
&= - A_\mu^{\ell_1} A^{\mu, \ell_2} g^{k_1 k_2} g_{m_1 m_2} C_{\ell_1 k_1}^{m_1} C_{\ell_2 k_2}^{m_2}
\end{align*}
where $A_\mu = A_\mu^\ell E_\ell + i A_\mu^0 \bbbone_n$. Since the field $A_\mu^0$ disappears, its mass is $0$.

For any $X, Y \in \ksu(n)$, the Killing form $K(X, Y) = \tr( \ad_{X} \circ \ad_{Y} )$ satisfies $K(X,Y) = 2n \tr(XY)$ so that, on the one hand, $K_{k\ell} \defeq K(E_k, E_\ell) = 2 n \, g_{k\ell}$ and on the other hand, $K_{k\ell} = C_{k m}^n C_{\ell n}^m$. Let us define $C_{k \ell m} \defeq g_{m n} C_{k \ell}^n$, so that $C_{k \ell}^n = g^{m n} C_{k \ell m}$ and $C_{k \ell m}$ is completely antisymmetric in $(k, \ell, m)$. This leads to $g^{k_1 k_2} g_{m_1 m_2} C_{\ell_1 k_1}^{m_1} C_{\ell_2 k_2}^{m_2} = g^{k_1 k_2} C_{\ell_1 k_1 m} C_{\ell_2 k_2}^{m} = - g^{k_1 k_2} C_{\ell_1 m k_1} C_{\ell_2 k_2}^{m} = - C_{\ell_1 m}^k C_{\ell_2 k}^{m}$, so that $M^2_{\ell_1 \ell_2} = C_{\ell_1 m}^k C_{\ell_2 k}^{m} = K_{\ell_1 \ell_2} = 2 n \, g_{\ell_1 \ell_2}$. This proves that the diagonalization of $(M^2_{\ell_1 \ell_2})$ gives a unique eigenvalue $2 n$ so that there is a unique mass $m_{\text{basis-config}} = \sqrt{2n}$.
\end{proof}

%%%%%%%%%%%%%%%%
\section{Some properties on sums of algebras}
\label{sec some properties on sums of algebras}
%%%%%%%%%%%%%%%%

In this section we consider the derivation-based differential calculus on algebras decomposed as
\begin{align*}
\algA = \algA_1 \toplus \cdots \toplus \algA_r = \toplus_{i=1}^{r} \algA_i
\end{align*}
where $\algA_i$ are unital algebras, not necessary of finite dimension. We define respectively
\begin{align*}
&\pi^i : \algA \to \algA_i,
\text{ and }
\iota_i : \algA_i \to \algA
\end{align*}
as the natural projection on the $i$-th term and the natural inclusion of the $i$-th term.

Some results are presented using this full generality but others will require $\algA_i = M_n$, the unital algebra of $n \times n$ matrices over $\bbC$. As far as we know, the results presented here have never been exposed elsewhere in a systematic way.

%%%%%%%%%%%%%%%%
\subsection{Center and derivations}

\begin{lemma}[Center of $\algA$]
The center of $\algA$ is $\calZ(\algA) = \toplus_{i=1}^{r} \calZ(\algA_i)$.
\end{lemma}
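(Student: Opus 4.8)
The plan is to exploit the fact that the algebra structure on $\algA = \toplus_{i=1}^{r} \algA_i$ is "componentwise", so that centrality can be checked one summand at a time. First I would record the structural facts underlying the direct sum: writing $e_i \defeq \iota_i(\bbbone_{\algA_i})$, the elements $e_1, \dots, e_r$ are orthogonal idempotents with $\sum_{i=1}^{r} e_i = \bbbone_{\algA}$, and for $i \neq j$ the mixed products vanish, $\iota_i(a_i)\, \iota_j(b_j) = 0$. Consequently each $\iota_i(\algA_i)$ is a two-sided ideal, the projection $\pi^i$ is an algebra homomorphism, and for any $a = \sum_i \iota_i(a_i)$ and $b = \sum_j \iota_j(b_j)$ (with $a_i = \pi^i(a)$ and $b_j = \pi^j(b)$) one has the componentwise product $\pi^i(ab) = a_i b_i$. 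In particular $\pi^i([a,b]) = [a_i, b_i]$ for every $i$.

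Granting this, the proof is a double inclusion. For $\toplus_{i} \calZ(\algA_i) \subseteq \calZ(\algA)$: if each $a_i \in \calZ(\algA_i)$, then $[a_i, b_i] = 0$ for all $b_i \in \algA_i$, hence $\pi^i([a,b]) = 0$ for every $i$ and every $b \in \algA$, which forces $[a,b] = 0$; thus $a \in \calZ(\algA)$. For the reverse inclusion, suppose $a \in \calZ(\algA)$ and fix an index $i$ together with an arbitrary $b_i \in \algA_i$. Applying centrality to the particular element $b = \iota_i(b_i)$ gives $0 = \pi^i([a,b]) = [a_i, b_i]$, so $a_i \in \calZ(\algA_i)$; since $i$ was arbitrary, $a \in \toplus_{i} \calZ(\algA_i)$.

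There is no genuine obstacle here; the only point requiring care is establishing the componentwise multiplication, which rests entirely on the orthogonality relations $\iota_i(a_i)\,\iota_j(b_j) = 0$ for $i \neq j$ coming from the direct-sum structure of $\algA$. Everything else is a routine verification, and the argument makes no use of finite-dimensionality, so it holds for the general unital algebras $\algA_i$ considered in this section.
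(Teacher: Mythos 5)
Your proof is correct and follows essentially the same route as the paper's: both directions are settled by the componentwise multiplication, with the key step being to test centrality of $a$ against elements $\iota_j(b_j)$ supported in a single summand. The only difference is that you spell out the easy inclusion $\toplus_{i=1}^{r} \calZ(\algA_i) \subseteq \calZ(\algA)$ and the underlying orthogonality relations explicitly, which the paper leaves as immediate.
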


\begin{proof}
Every $a = \toplus_{i=1}^{r} a_i \in \calZ(\algA)$ must commute with any $b = 0 \toplus \cdots 0 \toplus b_j \toplus 0 \toplus 0$ for any $j =1, \dots, r$ and any $b_j \in \algA_j$. This implies that $a_j \in \calZ(\algA_j)$ for any $j$. The result follows since $\toplus_{i=1}^{r} \calZ(\algA_i) \subset \calZ(\algA)$.
\end{proof}

Let us introduce the convenient notation for the elements 
\begin{align*}
\hbbbone_i \defeq \iota_i(\bbbone) = 0 \toplus \cdots \toplus 0 \toplus \bbbone \toplus 0 \toplus \cdots \toplus 0 \in \calZ(\algA) \subset \algA. 
\end{align*}
Notice that we use the fact that the $\algA_i$'s are unital.

\begin{proposition}[Decomposition of derivations]
\label{prop decomposition derivations}
One has
\begin{align}
\label{eq Der(A)}
\Der(\algA) = \toplus_{i=1}^{r} \Der(\algA_i),
\end{align}
\textsl{i.e.} for any $a = \toplus_{i=1}^{r} a_i \in \algA$ and $\kX = \toplus_{i=1}^{r} \kX_i \in \Der(\algA)$, one has $\kX (a) = \toplus_{i=1}^{r} \kX_i (a_i)$.

This decomposition holds true as Lie algebras and modules over $\calZ(\algA)$ on the left and over $\toplus_{i=1}^{r} \calZ(\algA_i)$ on the right.

If $\Der(\algA_i) = \Int(\algA_i)$ for any $i=1, \dots, r$, then 
\begin{align}
\label{eq Der(A) = Int(A)}
\Der(\algA) = \Int(\algA) = \toplus_{i=1}^{r} \Int(\algA_i)
\end{align}
\end{proposition}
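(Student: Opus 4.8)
The plan is to exploit the central idempotents $\hbbbone_i \in \calZ(\algA)$, which satisfy $\hbbbone_i \hbbbone_j = \delta_{ij}\hbbbone_i$ and $\sum_{i=1}^{r} \hbbbone_i = \bbbone$, to show that every $\kX \in \Der(\algA)$ automatically preserves the decomposition $\algA = \toplus_{i=1}^{r} \algA_i$, where $\algA_i$ is identified with its image $\iota_i(\algA_i) = \hbbbone_i \algA$. The entire statement then reduces to the single key fact that a derivation annihilates each $\hbbbone_i$.

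To establish $\kX \cdotaction \hbbbone_i = 0$, I would apply the Leibniz rule to the idempotency relation $\hbbbone_i^2 = \hbbbone_i$, which gives $\kX \cdotaction \hbbbone_i = (\kX \cdotaction \hbbbone_i)\hbbbone_i + \hbbbone_i(\kX \cdotaction \hbbbone_i)$. Writing $\kX \cdotaction \hbbbone_i = \toplus_k d_k$ with $d_k \in \algA_k$ and using that left or right multiplication by the central idempotent $\hbbbone_i$ merely projects onto the $i$-th slot, the right-hand side equals $2 d_i$ placed in slot $i$. Comparing components slot by slot then forces $d_k = 0$ for $k \neq i$ and $d_i = 2 d_i$, hence $d_i = 0$, so $\kX \cdotaction \hbbbone_i = 0$. (Equivalently, one may first note $\kX \cdotaction \bbbone = 0$ and combine this with the orthogonality relations $\hbbbone_i \hbbbone_j = 0$.) This is the main---and essentially the only---obstacle; everything afterwards is bookkeeping.

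With this in hand, for $a_i \in \algA_i$ I would write $\iota_i(a_i) = \hbbbone_i \iota_i(a_i) = \iota_i(a_i)\hbbbone_i$ and apply the Leibniz rule twice to obtain $\kX \cdotaction \iota_i(a_i) = \hbbbone_i(\kX \cdotaction \iota_i(a_i))\hbbbone_i$, which lies in $\algA_i$. Hence $\kX$ restricts to a derivation $\kX_i \defeq \pi^i \circ \kX \circ \iota_i$ of $\algA_i$, and by linearity $\kX \cdotaction (\toplus_i a_i) = \toplus_i \kX_i(a_i)$, proving \eqref{eq Der(A)}; the converse inclusion is immediate, since any family $(\kX_i)_i$ assembles into a derivation of $\algA$ acting componentwise.

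Finally, the remaining structural claims follow by componentwise computation. The bracket satisfies $[\kX, \kY] \cdotaction a = \toplus_i [\kX_i, \kY_i] \cdotaction a_i$, so the identification is one of Lie algebras; and for $f = \toplus_i f_i \in \calZ(\algA) = \toplus_i \calZ(\algA_i)$ one has $(f \kX) \cdotaction a = \toplus_i f_i(\kX_i \cdotaction a_i)$, yielding the stated module structures on both sides. For inner derivations, $\ad_a = \toplus_i \ad_{a_i}$ for $a = \toplus_i a_i$ shows $\Int(\algA) = \toplus_i \Int(\algA_i)$; combined with \eqref{eq Der(A)}, the hypothesis $\Der(\algA_i) = \Int(\algA_i)$ for every $i$ then gives \eqref{eq Der(A) = Int(A)}.
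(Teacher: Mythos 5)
Your proof is correct, and it is organized around a different key lemma than the paper's. You first establish that every derivation annihilates the central idempotents $\hbbbone_i$ (Leibniz applied to $\hbbbone_i^2 = \hbbbone_i$, together with the fact that multiplication by $\hbbbone_i$ projects onto the $i$-th slot), and deduce from this that every derivation preserves each ideal $\iota_i(\algA_i) = \hbbbone_i\algA$; after that, everything is componentwise bookkeeping. The paper instead introduces the full matrix of component maps $\kX_i^j \defeq \pi^j \circ \kX \circ \iota_i$, specializes the Leibniz identity to elements supported in a single slot to conclude that the diagonal components $\kX_k^k$ are derivations of $\algA_k$ (hence kill the unit), and only then substitutes $b_k = \bbbone$ to force the off-diagonal components $\kX_k^j$, $j \neq k$, to vanish. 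The raw ingredients are the same — the Leibniz rule and the units of the summands; indeed, setting $b_k = \bbbone$ in the paper's proof amounts to multiplying by $\hbbbone_k$ — but your version isolates the reusable conceptual fact that derivations vanish on central idempotents, which the paper uses only implicitly here (it resurfaces later, e.g.\ as $\iotaDer_i(\kX_i)\cdotaction\hbbbone_j = 0$ in the proof of Prop.~\ref{prop decomposition connections}), whereas the paper's componentwise analysis makes explicit exactly which entries of the matrix $(\kX_i^j)$ vanish and in what order the vanishing is derived. The remaining claims (Lie bracket, module structures over $\calZ(\algA)$, and the inner-derivation statement via $\ad_{a_1 \toplus \cdots \toplus a_r} = \ad_{a_1} \toplus \cdots \toplus \ad_{a_r}$) are handled identically in both arguments.
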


\begin{proof}
The vector space decomposition \eqref{eq Der(A)} can be established using the maps $\kX_i^j \defeq \pi^j \circ \kX \circ \iota_i : \algA_i \to \algA_j$. For any $a = \toplus_{i=1}^{r} a_i$ and $b = \toplus_{i=1}^{r} b_i$, one has
\begin{align*}
\kX(a) 
&= \toplus_{j=1}^{r} \left( \tsum_{i=1}^{r} \kX_i^j (a_i) \right)
\end{align*}
and the Leibniz rule $\kX(a b) = \kX(a) b + a \kX(b)$ can then be written as
\begin{align*}
\toplus_{j=1}^{r} \left( \tsum_{i=1}^{r} \kX_i^j (a_i b_i) \right)
&= \toplus_{j=1}^{r} \left( \tsum_{i=1}^{r} \kX_i^j (a_i) \right)  b_j
+ \toplus_{j=1}^{r} a_j \left( \tsum_{i=1}^{r} \kX_i^j (b_i) \right).
\end{align*}
For a fixed $k$, take $a_i = b_i = 0$ for $i \neq k$. Then this relation reduces to
\begin{align*}
\toplus_{j=1}^{r} \kX_k^j (a_k b_k)
&= 0 \toplus \cdots \toplus \kX_k^k(a_k) b_k \toplus 0 \toplus \cdots \toplus 0
+
 0 \toplus \cdots \toplus a_k \kX_k^k(b_k) \toplus 0 \toplus \cdots \toplus 0.
\end{align*}
The $k$-th term shows that $\kX_k^k \in \Der(\algA_k)$, which implies in particular that $\kX_k^k(\bbbone) = 0$. Then, with $b_k = \bbbone$, one gets $\kX_k^j = 0$ for $j \neq k$. This shows that $\kX(a) = \toplus_{i=1}^{r} \kX_i (a_i)$ with $\kX_i \defeq \kX_i^i \in \Der(\algA_i)$ and we write
\begin{align*}
\kX = \toplus_{i=1}^{r} \kX_i 
\end{align*}

For any $f = \toplus_{i=1}^{r} f_i \in \calZ(\algA) = \toplus_{i=1}^{r} \calZ(\algA_i)$, one has
\begin{align*}
(f \kX) (a)
&= f (\toplus_{i=1}^{r} \kX_i (a_i))
= \toplus_{i=1}^{r} (f_i \kX_i) (a_i)
\end{align*}
so that \eqref{eq Der(A)} holds true as modules over $\calZ(\algA)$ on the left and over $\toplus_{i=1}^{r} \calZ(\algA_i)$ on the right.

For $\kY = \toplus_{i=1}^{r} \kY_i \in \Der(\algA)$, one has
\begin{align*}
[\kX, \kY] (a)
&= \kX \circ \kY(a) - \kY \circ \kX(a)
= \kX \left( \toplus_{i=1}^{r} \kY_i(a_i) \right) - \kY \left( \toplus_{i=1}^{r} \kX_i(a_i) \right)
\\
&= \toplus_{i=1}^{r} \kX_i \circ \kY_i(a_i) - \toplus_{i=1}^{r} \kY_i \circ \kX_i(a_i)
\\
&= \toplus_{i=1}^{r} [\kX_i, \kY_i](a_i)
\end{align*}
so that 
\begin{align*}
[\kX, \kY] = \toplus_{i=1}^{r} [\kX_i, \kY_i]
\end{align*}
This shows that \eqref{eq Der(A)} holds true as Lie algebras.

The proof of \eqref{eq Der(A) = Int(A)} is then a direct consequence: assuming $\Der(\algA_i) = \Int(\algA_i)$, one has $\Int(\algA) \subset \Der(\algA) = \toplus_{i=1}^{r} \Int(\algA_i) \subset \Int(\algA)$ where the last inclusion follows from $\ad_{a_1} \toplus \cdots \toplus \ad_{a_r} = \ad_{a_1 \toplus \cdots \toplus a_r}$.
\end{proof}

We define $\iotaDer_i : \Der(\algA_i) \to \Der(\algA)$ as the inclusion $\kX_i \mapsto 0 \toplus \cdots \toplus 0 \toplus \kX_i \toplus 0 \cdots \toplus 0$ in the $i$-th term. This is a morphism of Lie algebras and for any $a_i \in \algA_i$ and $f_i \in \calZ(\algA_i)$, one has $\iota_i(\kX_i(a_i)) = \iotaDer_i( \kX_i)(\iota_i(a_i))$ and $\iotaDer_i( f_i \kX_i) = \iota_i(f_i) \iotaDer_i( \kX_i)$. Notice also that for any $\kX = \toplus_{i=1}^{r} \kX_i$, one has $\hbbbone_i  \kX = \iotaDer_i( \kX_i) = \hbbbone_i \iotaDer_i( \kX_i)$.

\bigskip
Using results given in Sect.~\ref{sec matrix algebras}, let us conclude this subsection with this direct Corollary of Prop.~\ref{prop decomposition derivations}:
\begin{corollary}
For $\algA = M_{n_1} \toplus \cdots \toplus M_{n_r}$, one has $\Der(\algA) = \Int(\algA) \simeq \ksl_{n_1} \toplus \cdots \toplus \ksl_{n_r}$.
\end{corollary}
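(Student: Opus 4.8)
The plan is to combine the general decomposition of derivations from Prop.~\ref{prop decomposition derivations} with the structure theory of the derivations of a single matrix algebra recalled in Sect.~\ref{sec matrix algebras}. Since every summand is $\algA_i = M_{n_i}$, the first task is to verify the hypothesis required by the second part of Prop.~\ref{prop decomposition derivations}, namely that $\Der(\algA_i) = \Int(\algA_i)$ holds for each $i = 1, \dots, r$.

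First I would invoke the fact recalled in Sect.~\ref{sec matrix algebras} that $\Der(M_n) = \Int(M_n)$ and that the map $\ksl_n \ni a \mapsto \ad_a$ realizes a Lie-algebra isomorphism $\ksl_n \simeq \Der(M_n) = \Int(M_n)$. Applying this with $n = n_i$ gives $\Der(M_{n_i}) = \Int(M_{n_i}) \simeq \ksl_{n_i}$ for every $i$, which is precisely the condition needed to trigger \eqref{eq Der(A) = Int(A)}.

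Then, by Prop.~\ref{prop decomposition derivations}, one obtains $\Der(\algA) = \Int(\algA) = \toplus_{i=1}^{r} \Int(M_{n_i})$ as an identity of Lie algebras. Composing with the direct sum of the $r$ isomorphisms $\Int(M_{n_i}) \simeq \ksl_{n_i}$ yields the claimed $\Der(\algA) \simeq \ksl_{n_1} \toplus \cdots \toplus \ksl_{n_r}$. This composite is again a Lie-algebra isomorphism because the bracket on $\Der(\algA)$ is diagonal with respect to the decomposition (as established in Prop.~\ref{prop decomposition derivations}), so it matches the bracket on the direct sum $\toplus_i \ksl_{n_i}$ summand by summand.

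There is essentially no obstacle here: the statement is a direct assembly of two already-established results, which is why it is phrased as a Corollary. The only point deserving minimal care is to keep track that every identification is taken in the category of Lie algebras, so that the final $\simeq$ respects the brackets; this is guaranteed since Prop.~\ref{prop decomposition derivations} asserts the decomposition holds \emph{as Lie algebras} and each $\ad$-map is a Lie-algebra isomorphism on its summand.
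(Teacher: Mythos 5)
Your proposal is correct and follows exactly the route the paper intends: the paper states this as a direct corollary of Prop.~\ref{prop decomposition derivations} combined with the fact from Sect.~\ref{sec matrix algebras} that $\ksl_n \ni a \mapsto \ad_a$ gives $\ksl_n \simeq \Der(M_n) = \Int(M_n)$, which is precisely your assembly. Your added remark that all identifications respect the Lie brackets is a sensible (if implicit in the paper) point of care, and nothing is missing.
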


%%%%%%%%%%%%%%%%
\subsection{Derivation-based differential calculus}

Here is a useful result concerning the structure of the derivation-based differential calculus associated to $\algA$.
\begin{proposition}[Decomposition of forms]
\label{prop decomposition forms}
For any $p \geq 0$, one has
\begin{align*}
\OmegaDer^p(\algA) &= \toplus_{i=1}^{r} \OmegaDer^p(\algA_i),
\end{align*}
that is, any $\omega \in \OmegaDer^p(\algA)$ decomposes as $\omega = \toplus_{i=1}^{r} \omega_i$ with $\omega_i \in \OmegaDer^p(\algA_i)$ and 
\begin{align*}
\omega(\kX_1, \dots, \kX_p)
&= \toplus_{i=1}^{r} \omega_i( \kX_{1, i}, \dots, \kX_{p, i} )
\quad\text{for any $\kX_k = \toplus_{i=1}^{r} \kX_{k, i} \in \Der(\algA)$.}
\end{align*}

This decomposition is compatible with the $\calZ(\algA)$-linearity on the left and the $\toplus_{i=1}^{r} \calZ(\algA_i)$-linearity on the right, and it is compatible with  the products in $\OmegaDer^\grast(\algA)$ and $\OmegaDer^\grast(\algA_i)$.

The differential $\dd$ on $\OmegaDer^\grast(\algA)$ decomposes along the differentials $\dd_i$ on $\OmegaDer^\grast(\algA_i)$ as
\begin{align*}
\dd \omega 
&= \toplus_{i=1}^{r} \dd_i \omega_i
\end{align*}
\end{proposition}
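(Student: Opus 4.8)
The plan is to mimic the strategy used for $\Der(\algA)$ in Prop.~\ref{prop decomposition derivations}, exploiting the orthogonal central idempotents $\hbbbone_i$. First I would define, for any $\omega \in \OmegaDer^p(\algA)$, candidate components $\omega_i \in \OmegaDer^p(\algA_i)$ by
\[
\omega_i(\kX_{1,i}, \dots, \kX_{p,i}) \defeq \pi^i\bigl( \omega( \iotaDer_i(\kX_{1,i}), \dots, \iotaDer_i(\kX_{p,i}) ) \bigr)
\]
for $\kX_{k,i} \in \Der(\algA_i)$. Antisymmetry of $\omega_i$ is inherited at once from that of $\omega$, and $\calZ(\algA_i)$-multilinearity follows from the $\calZ(\algA)$-multilinearity of $\omega$ together with the identity $\iotaDer_i(f_i \kX_{k,i}) = \iota_i(f_i)\, \iotaDer_i(\kX_{k,i})$ and the fact that $\pi^i$ intertwines multiplication by the central element $\iota_i(f_i)$ with multiplication by $f_i$; hence each $\omega_i$ genuinely lies in $\OmegaDer^p(\algA_i)$.

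The heart of the argument is the diagonal reconstruction formula. Writing each $\kX_k = \toplus_i \kX_{k,i} = \sum_i \iotaDer_i(\kX_{k,i})$ and expanding $\omega$ by multilinearity produces a sum over all multi-indices $(i_1,\dots,i_p)$ of terms $\omega(\iotaDer_{i_1}(\kX_{1,i_1}),\dots,\iotaDer_{i_p}(\kX_{p,i_p}))$. Using $\iotaDer_{i_k}(\kX_{k,i_k}) = \hbbbone_{i_k}\,\iotaDer_{i_k}(\kX_{k,i_k})$ and pulling each idempotent out of the corresponding slot by $\calZ(\algA)$-linearity, every summand acquires the scalar factor $\hbbbone_{i_1}\cdots\hbbbone_{i_p}$; since the $\hbbbone_i$ are orthogonal idempotents this product vanishes unless $i_1=\cdots=i_p$, so only the diagonal terms survive and, using $\hbbbone_i x = \iota_i(\pi^i(x))$,
\[
\omega(\kX_1,\dots,\kX_p) = \sum_{i=1}^{r} \hbbbone_i\, \omega\bigl(\iotaDer_i(\kX_{1,i}),\dots,\iotaDer_i(\kX_{p,i})\bigr) = \toplus_{i=1}^{r} \omega_i(\kX_{1,i},\dots,\kX_{p,i}).
\]
Conversely, any family $(\omega_i)$ reassembled through this same formula defines a $\calZ(\algA)$-multilinear antisymmetric map, and the two assignments are mutually inverse; this yields the direct sum decomposition together with its compatibility with $\calZ(\algA)$-linearity on the left and $\toplus_i \calZ(\algA_i)$-linearity on the right.

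With the diagonal decomposition of the pairing in hand, compatibility with the product and the differential is bookkeeping. For the product I would substitute the decomposition into the defining formula~\eqref{eq def form product}: the derivations split diagonally, the values $\omega_i(\cdots)\,\eta_i(\cdots)$ multiply componentwise in $\algA = \toplus_i \algA_i$, and the permutation sum factors through each summand, giving $(\omega\wedge\eta)_i = \omega_i \wedge \eta_i$. For the differential I would feed the decomposition into the Koszul formula~\eqref{eq def differential}: the first group of terms involves $\kX_k \cdotaction \omega(\cdots)$, which splits because derivations act diagonally (Prop.~\ref{prop decomposition derivations}), while the second group involves $\omega([\kX_k,\kX_\ell],\cdots)$, which splits because the bracket decomposes diagonally as well, so that $\dd\omega = \toplus_i \dd_i \omega_i$. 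I expect the only genuinely delicate point to be the vanishing of the off-diagonal terms in the reconstruction formula, which hinges precisely on the orthogonality $\hbbbone_i \hbbbone_j = \delta_{ij}\,\hbbbone_i$ combined with $\calZ(\algA)$-multilinearity; everything downstream is a mechanical transcription of the already-established diagonal behaviour of derivations, brackets, and the componentwise product.
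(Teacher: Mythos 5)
Your proof is correct and is essentially the paper's own argument: you define the components $\omega_i(\kX_{1,i},\dots,\kX_{p,i}) = \pi^i\bigl(\omega(\iotaDer_i(\kX_{1,i}),\dots,\iotaDer_i(\kX_{p,i}))\bigr)$ exactly as the paper does (its $\omega^i_{i,\dots,i}$), the vanishing of cross terms rests on the same two facts — the $\hbbbone_i$ are orthogonal central idempotents absorbed by the derivations, and $\omega$ is $\calZ(\algA)$-multilinear — and the product and Koszul computations are the same componentwise substitutions. The only difference is economy of execution: you insert all the idempotents simultaneously and invoke $\hbbbone_i\hbbbone_{i'} = \delta_{ii'}\hbbbone_i$ (together with $\hbbbone_i x = \iota_i(\pi^i(x))$ to localize the value) to kill every off-diagonal term in one stroke, whereas the paper reaches the same conclusion by successively specializing one central argument at a time to $\hbbbone_{j_k}$ while separately tracking the target component $\pi^j$.
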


We will extend the projection map $\pi^i : \OmegaDer^\grast(\algA) \to \OmegaDer^\grast(\algA_i)$ with the same notation.

\begin{proof}
Let $\omega \in \OmegaDer^p(\algA)$ and define, for any $i_1, \dots, i_p, j = 1, \dots, r$,
\begin{align*}
\omega^j_{i_1, \dots, i_p} : \Der(\algA_{i_1}) \times \cdots \times \Der(\algA_{i_p}) \to \algA_j
\end{align*}
by
\begin{align*}
\omega^j_{i_1, \dots, i_p}( \kX_{1, i_1}, \dots, \kX_{p, i_p} ) = \pi^j \circ \omega ( \iotaDer_{i_1}(\kX_{1, i_1}), \dots, \iotaDer_{i_p}(\kX_{p, i_p}) )
\end{align*}
for any $\kX_{k, i_k} \in \Der(\algA_{i_k})$ ($k=1, \dots, p$). Then one has
\begin{align*}
\omega(\kX_1, \dots, \kX_p)
&= \toplus_{j=1}^{r} \left( \tsum_{i_1, \dots, i_p = 1}^{r} \omega^j_{i_1, \dots, i_p}( \kX_{1, i_1}, \dots, \kX_{p, i_p} ) \right)
\end{align*}
with $\kX_k = \toplus_{i_k=1}^{r} \kX_{k, i_k}$. Let $f_k = \toplus_{i_k=1}^{r} f_{k, i_k} \in \calZ(\algA)$, then applying $\omega$ on the $p$ derivations $f_k \kX_k$, the $\calZ(\algA)$-linearity of $\omega$ gives
\begin{align}
\toplus_{j=1}^{r} & \left( \tsum_{i_1, \dots, i_p = 1}^{r} \omega^j_{i_1, \dots, i_p}( f_{1,i_1} \kX_{1, i_1}, \dots, f_{p,i_p} \kX_{p, i_p} ) \right)
\nonumber
\\
\label{eq Z(A) linearity omega general}
&=
\toplus_{j=1}^{r} f_{1,j} \cdots f_{p,j} \left( \tsum_{i_1, \dots, i_p = 1}^{r} \omega^j_{i_1, \dots, i_p}( \kX_{1, i_1}, \dots, \kX_{p, i_p} ) \right)
\end{align}
The arbitrariness on the $f_k$'s permits to simplify this relation in successive steps. Let us fix $j_1$ and take $f_{1} = \hbbbone_{j_1}$.
Then on the LHS, the term at $j$ in $\toplus_{j=1}^{r}$ reduces to
\begin{align*}
\tsum_{i_2, \dots, i_p = 1}^{r} \omega^j_{j_1, i_2, \dots, i_p}( \kX_{1, j_1}, f_{2,i_2} \kX_{p, i_p}, \dots, f_{p,i_p} \kX_{p, i_p} ). 
\end{align*}
On the RHS, the only non zero term along $\toplus_{j=1}^{r}$ occurs at $j = j_1$ and gives 
\begin{align*}
f_{2,j_1} \cdots f_{p,j_1} \left( \tsum_{i_1, \dots, i_p = 1}^{r} \omega^{j_1}_{i_1, \dots, i_p}( \kX_{1, i_1}, \dots, \kX_{p, i_p} ) \right).
\end{align*}
By arbitrariness on the $\kX_k$'s, this implies that for $j \neq j_1$ one has $\omega^j_{i_1, \dots, i_p} = 0$ and the remaining non trivial relation becomes (substituting $j_1$ to $j$)
\begin{align*}
\tsum_{i_2, \dots, i_p = 1}^{r} & \omega^j_{j, i_2, \dots, i_p}( \kX_{1, j}, f_{2,i_2} \kX_{2, i_2}, \dots, f_{p,i_p} \kX_{p, i_p} )
\\
&=
f_{2,j} \cdots f_{p,j} \left( \tsum_{i_2, \dots, i_p = 1}^{r} \omega^j_{j, i_2, \dots, i_p}( \kX_{1, j}, \kX_{2, i_2}, \dots, \kX_{p, i_p} ) \right).
\end{align*}
Let us now fix $j_2$ and take $f_{2} = \hbbbone_{j_2}$. Then the relation first gives
\begin{align*}
\tsum_{i_3, \dots, i_p = 1}^{r} \omega^j_{j, j_2, i_3, \dots, i_p} & ( \kX_{1, j}, \kX_{2, j_2}, f_{3,i_3} \kX_{3, i_3}, \dots, f_{p,i_p} \kX_{p, i_p} ) 
\\
&= 0  \text{ for $j \neq j_2$},
\end{align*}
which implies $\omega^j_{j, i_2, i_3, \dots, i_p} = 0$ for $i_2 \neq j$. Then, with this relation, the non vanishing term (at $j=j_2$) simplifies to (substituting $j_2$ to $j$)
\begin{align*}
\tsum_{i_3, \dots, i_p = 1}^{r} & \omega^{j}_{j, j, i_3, \dots, i_p}( \kX_{1, j}, \kX_{2, j}, f_{3,i_3} \kX_{3, i_3}, \dots, f_{p,i_p} \kX_{p, i_p} )
\\
&=
f_{3,j} \cdots f_{p,j} \left( \tsum_{i_3, \dots, i_p = 1}^{r} \omega^{j}_{j, j, i_3, \dots, i_p}( \kX_{1, j}, \kX_{2, j}, \kX_{3, i_3},\dots, \kX_{p, i_p} ) \right).
\end{align*}
We can repeat this argument for $i_k$ up to $k=p$ and conclude that the only non zero maps $\omega^j_{i_1, \dots, i_p}$ are $\omega^j_{j, \dots, j}$ for $j=1, \dots, r$. Defining $\omega_i \defeq \omega^i_{i, \dots, i}$, one then gets
\begin{align*}
\omega(\kX_1, \dots, \kX_p)
&= \toplus_{i=1}^{r} \omega_i( \kX_{1, i}, \dots, \kX_{p, i} )
\end{align*}
and \eqref{eq Z(A) linearity omega general} reduces to
\begin{align*}
\toplus_{i=1}^{r}  \omega_i( f_{1,i} \kX_{1, i}, \dots, f_{p,i} \kX_{p, i} )
&= \toplus_{i=1}^{r} f_{1,i} \cdots f_{p,i} \omega_i( \kX_{1, i}, \dots, \kX_{p, i} ).
\end{align*}
This shows that for any $i$, $\omega_i$ is $\calZ(\algA_i)$-linear. Finally, the antisymmetry of $\omega$ implies antisymmetry of the $\omega_i$'s. This proves that $\omega_i \in \OmegaDer^p(\algA_i)$.

To prove the compatibility of this decomposition with the products, consider $\omega = \toplus_{i=1}^{r}  \omega_i \in \OmegaDer^p(\algA)$ and $\eta = \toplus_{i=1}^{r}  \eta_i \in \OmegaDer^q(\algA)$, and $p+q$ derivations $\kX_{k} = \toplus_{i_k=1}^{r} \kX_{k, i_k} \in \Der(\algA)$. Then by definition
\begin{align*}
(\omega &\wedge \eta) (\kX_{1}, \dots, \kX_{p+q})
\\
&= \tfrac{1}{p! q!} \sum_{\sigma \in \kS_{p+q}} (-1)^{\abs{\sigma}} \omega (\kX_{\sigma(1)}, \dots, \kX_{\sigma(p)}) \eta (\kX_{\sigma(p+1)}, \dots, \kX_{\sigma(p+q)})
\\
&= \begin{multlined}[t]
\tfrac{1}{p! q!} \sum_{\sigma \in \kS_{p+q}} (-1)^{\abs{\sigma}} 
\left( \toplus_{i=1}^{r} \omega_i (\kX_{\sigma(1), i}, \dots, \kX_{\sigma(p), i}) \right)
\\
\left( \toplus_{j=1}^{r} \eta_j (\kX_{\sigma(p+1), j}, \dots, \kX_{\sigma(p+q), j}) \right)
\end{multlined}
\\
&= \tfrac{1}{p! q!} \sum_{\sigma \in \kS_{p+q}} (-1)^{\abs{\sigma}} 
\left( \toplus_{i=1}^{r} 
\omega_i (\kX_{\sigma(1), i}, \dots, \kX_{\sigma(p), i})
\eta_i (\kX_{\sigma(p+1), i}, \dots, \kX_{\sigma(p+q), i}) 
\right)
\\
&= \toplus_{i=1}^{r} \left( 
\tfrac{1}{p! q!} \sum_{\sigma \in \kS_{p+q}} (-1)^{\abs{\sigma}} 
\omega_i (\kX_{\sigma(1), i}, \dots, \kX_{\sigma(p), i})
\eta_i (\kX_{\sigma(p+1), i}, \dots, \kX_{\sigma(p+q), i}) 
\right)
\\
&= \toplus_{i=1}^{r}
(\omega_i \wedge \eta_i) (\kX_{1, i}, \dots, \kX_{p+q, i})
\end{align*}
so that $\omega \wedge \eta =  \toplus_{i=1}^{r} \omega_i \wedge \eta_i$.

Using similar notations, one has
\begin{align*}
(\dd \omega) & (\kX_{1}, \dots, \kX_{p+1})
\\
&= \begin{multlined}[t]
\sum_{k=1}^{p+1} (-1)^{k+1} \kX_{k} \cdotaction \omega (\kX_{1}, \dots \omi{k} \dots, \kX_{p+1})
\\
+ \sum_{1 \leq k < \ell \leq p+1} (-1)^{k+\ell} \omega ([\kX_{k}, \kX_{\ell}], \dots \omi{k} \dots \omi{\ell} \dots, \kX_{p+1})
\end{multlined}
\\
&= \begin{multlined}[t]
\sum_{k=1}^{p+1} (-1)^{k+1} \toplus_{i=1}^{r} \kX_{k, i} \cdotaction \omega_i (\kX_{1, i}, \dots \omi{k} \dots, \kX_{p+1, i})
\\
+ \sum_{1 \leq k < \ell \leq p+1} (-1)^{k+\ell} \toplus_{i=1}^{r} \omega_i ([\kX_{k, i}, \kX_{\ell, i}], \dots \omi{k} \dots \omi{\ell} \dots, \kX_{p+1, i})
\end{multlined}
\\
&= \toplus_{i=1}^{r} \Big( \begin{multlined}[t]
\sum_{k=1}^{p+1} (-1)^{k+1} \kX_{k, i} \cdotaction \omega_i (\kX_{1, i}, \dots \omi{k} \dots, \kX_{p+1, i})
\\
+ \sum_{1 \leq k < \ell \leq p+1} (-1)^{k+\ell} \omega_i ([\kX_{k, i}, \kX_{\ell, i}], \dots \omi{k} \dots \omi{\ell} \dots, \kX_{p+1, i})
\Big)
\end{multlined}
\\
&= \toplus_{i=1}^{r} (\dd_i \omega_i) (\kX_{1, i}, \dots, \kX_{p+1, i})
\end{align*}
so that $\dd \omega = \toplus_{i=1}^{r} \dd_i \omega_i$.
\end{proof}

%%%%%%%%%%%%%%%%
\subsection{Modules and connections}
\label{sec modules and connections}

We consider left modules on $\algA$ of the form $\modM = \toplus_{i=1}^{r} \modM_i$ where $\modM_i$ is a left module on $\algA_i$. This requirement is sufficient for the particular situation $\algA = M_{n_1} \toplus \cdots \toplus M_{n_r}$ since, according to \cite[Cor.~III.1.2]{Davi96a}, the modules of this algebra are of the form $\bbC^{n_1} \otimes \bbC^{\alpha_1} \toplus \cdots \toplus \bbC^{n_r} \otimes \bbC^{\alpha_r} = M_{n_1 \times \alpha_1} \toplus \cdots \toplus M_{n_r \times \alpha_r}$ for some integers $\alpha_i$, where $M_{n_i \times \alpha_i}$ is the vector space of $n_i \times \alpha_i$ matrices over $\bbC$.

\medskip
Define $\piMod^i : \modM \to \modM_i$ as the projection on the $i$-th term and $\iotaMod_i : \modM_i \to \modM$ as the natural inclusion. Then, $\piMod^i \circ \iotaMod_i = \Id_{\modM_i}$ and for any $a =\toplus_{i=1}^{r} a_i$ and $e = \toplus_{i=1}^{r} e_i$, one has $\piMod^i (a e) = \pi^i(a) \piMod^i (e)$ and $\piMod^i (\iota_i(a_i) e) = a_i \piMod^i (e)$.

\begin{proposition}[Decomposition of connections]
\label{prop decomposition connections}
A connection $\nabla$ on the left $\algA$ module $\modM$ defines a unique family of connections $\nabla^i$ on the left $\algA_i$ modules $\modM_i$ such that for any $e = \toplus_{i=1}^{r} e_i$ and any $\kX = \toplus_{i=1}^{r} \kX_i$, one has
\begin{align*}
\nabla_\kX e
&= \toplus_{i=1}^{r} \nabla^i_{\kX_i} e_i.
\end{align*}
Denote by $R_i$ the curvature associated to $\nabla^i$, then, for any $\kX = \toplus_{i=1}^{r} \kX_i$, any $\kY = \toplus_{i=1}^{r} \kY_i$, and any $e = \toplus_{i=1}^{r} e_i$, one has
\begin{align*}
R(\kX, \kY) e 
&=  \toplus_{i=1}^{r} R_i(\kX_i, \kY_i) e_i
\end{align*}
\end{proposition}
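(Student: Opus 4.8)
The plan is to construct each $\nabla^i$ explicitly by restricting and projecting $\nabla$ through the structure maps, and then to show that the decomposition of derivations (Prop.~\ref{prop decomposition derivations}) forces $\nabla$ to be block-diagonal. Concretely, I would set
\[
\nabla^i_{\kX_i} e_i \defeq \piMod^i\left( \nabla_{\iotaDer_i(\kX_i)} \iotaMod_i(e_i) \right)
\]
for $\kX_i \in \Der(\algA_i)$ and $e_i \in \modM_i$, and then prove both that each $\nabla^i$ is a connection on $\modM_i$ and that $\nabla_\kX e = \toplus_{i=1}^{r} \nabla^i_{\kX_i} e_i$.

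The crucial observation, and the step I expect to be the heart of the argument, is that $\nabla_\kX$ commutes with multiplication by the central idempotents $\hbbbone_i$. Indeed, any derivation annihilates $\hbbbone_i$: from $\hbbbone_i = \hbbbone_i^2$ and centrality one gets $\kX\cdotaction\hbbbone_i = 2\,\hbbbone_i(\kX\cdotaction\hbbbone_i)$, and multiplying by $\hbbbone_i$ forces $\hbbbone_i(\kX\cdotaction\hbbbone_i)=0$, hence $\kX\cdotaction\hbbbone_i = 0$ (this is also immediate from Prop.~\ref{prop decomposition derivations}, since each $\kX_i\cdotaction\bbbone=0$). The Leibniz rule of the connection then gives $\nabla_\kX(\hbbbone_i e) = (\kX\cdotaction\hbbbone_i)e + \hbbbone_i\nabla_\kX e = \hbbbone_i\nabla_\kX e$. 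Combining this with the $\calZ(\algA)$-linearity $\nabla_{\hbbbone_k\kX} = \hbbbone_k\nabla_\kX$ and the identity $\hbbbone_k\kX = \iotaDer_k(\kX_k)$ noted after Prop.~\ref{prop decomposition derivations}, I would argue that $v \defeq \nabla_{\iotaDer_k(\kX_k)}\iotaMod_j(e_j)$ satisfies both $v = \hbbbone_k v$ (from $\nabla_{\iotaDer_k(\kX_k)} = \hbbbone_k\nabla_{\iotaDer_k(\kX_k)}$) and $v = \hbbbone_j v$ (from $\iotaMod_j(e_j) = \hbbbone_j\iotaMod_j(e_j)$ and the commuting property above); since $\hbbbone_k\hbbbone_j = 0$ for $k\neq j$, this yields $v = \hbbbone_k\hbbbone_j v = 0$. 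Writing $\kX = \sum_k \iotaDer_k(\kX_k)$ and $e = \sum_j \iotaMod_j(e_j)$ and using additivity then gives $\nabla_\kX\iotaMod_j(e_j) = \nabla_{\iotaDer_j(\kX_j)}\iotaMod_j(e_j) = \iotaMod_j(\nabla^j_{\kX_j}e_j)$, whence $\nabla_\kX e = \toplus_j \nabla^j_{\kX_j}e_j$. This is the claimed block-diagonal decomposition, and it simultaneously shows that the $\nabla^i$ are uniquely determined.

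With the decomposition in hand, checking that each $\nabla^i$ is a genuine connection is routine. Additivity and $\calZ(\algA_i)$-linearity follow from the corresponding properties of $\nabla$ together with $\iotaDer_i(f_i\kX_i)=\iota_i(f_i)\iotaDer_i(\kX_i)$ and $\iota_i(f_i)\in\calZ(\algA)$; the Leibniz rule for $\nabla^i$ is obtained from that of $\nabla$ using the compatibilities $\iotaMod_i(a_i e_i)=\iota_i(a_i)\iotaMod_i(e_i)$, $\iotaDer_i(\kX_i)\cdotaction\iota_i(a_i)=\iota_i(\kX_i\cdotaction a_i)$, $\piMod^i(\iota_i(b_i)e)=b_i\piMod^i(e)$, and $\piMod^i\circ\iotaMod_i=\Id_{\modM_i}$.

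Finally, the curvature formula is a direct consequence of the decomposition. Applying $\nabla_\kX e=\toplus_i\nabla^i_{\kX_i}e_i$ twice (the second time with $\nabla_\kY e$ in place of $e$, which is legitimate because its $i$-th component is $\nabla^i_{\kY_i}e_i$) and using $[\kX,\kY]=\toplus_i[\kX_i,\kY_i]$ from Prop.~\ref{prop decomposition derivations}, the three terms defining $R(\kX,\kY)e = (\nabla_\kX\nabla_\kY - \nabla_\kY\nabla_\kX - \nabla_{[\kX,\kY]})e$ decompose summand-by-summand into the three terms defining $R_i(\kX_i,\kY_i)e_i$, giving $R(\kX,\kY)e=\toplus_{i=1}^{r} R_i(\kX_i,\kY_i)e_i$.
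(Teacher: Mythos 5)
Your proposal is correct and follows essentially the same route as the paper: you define $\nabla^i_{\kX_i} = \piMod^i \circ \nabla_{\iotaDer_i(\kX_i)} \circ \iotaMod_i$, use the $\calZ(\algA)$-linearity $\nabla_{\hbbbone_k\kX}=\hbbbone_k\nabla_\kX$ together with $\kX\cdotaction\hbbbone_j=0$ and the Leibniz rule to kill the cross terms $\nabla_{\iotaDer_k(\kX_k)}\iotaMod_j(e_j)$ for $k\neq j$, and then verify the connection axioms and the curvature decomposition exactly as in the paper. The only cosmetic differences are that you prove $\kX\cdotaction\hbbbone_i=0$ directly from the idempotent relation (the paper gets it from Prop.~\ref{prop decomposition derivations}) and you merge the paper's two-step vanishing argument into the single identity $v=\hbbbone_k\hbbbone_j v=0$.
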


\begin{proof}
Since $\kX = \toplus_{i=1}^{r} \kX_i = \tsum_{i=1}^{r} \iotaDer_i( \kX_i)$, one has $\nabla_{\kX} e = \tsum_{i=1}^{r} \nabla_{\iotaDer_i( \kX_i)} e$ for any $e \in \modM$. This implies that $\nabla_{\kX}$ is completely given by the $r$ maps $\nabla_{\iotaDer_i(\kX_i)} : \modM \to \modM$.

So, for a fixed $i$ and for any $\kX_i \in \Der(\algA_i)$, let us study the map $\nabla_{\iotaDer_i(\kX_i)} : \modM \to \modM$. Since $\iotaDer_i( \kX_i) = \hbbbone_i \iotaDer_i( \kX_i)$, one has, for any $e \in \modM$, $\nabla_{\iotaDer_i(\kX_i)} e = \nabla_{\hbbbone_i  \iotaDer_i(\kX_i)} e = \hbbbone_i  \nabla_{\iotaDer_i(\kX_i)} e$ so that $\piMod^j \circ \nabla_{\iotaDer_i(\kX_i)} = 0$ for $j \neq i$. In other words, $\nabla_{\iotaDer_i(\kX_i)} e$ takes its values in $\modM_i$.

For a fixed $j$, take now $e = \iotaMod_j(e_j)$ for some $e_j \in \modM_j$. Since $\hbbbone_j \iotaMod_j(e_j) = \iotaMod_j(e_j)$ one has $\nabla_{\iotaDer_i(\kX_i)} \iotaMod_j(e_j) = \nabla_{\iotaDer_i(\kX_i)} (\hbbbone_j \iotaMod_j(e_j)) = (\iotaDer_i(\kX_i) \cdotaction \hbbbone_j) \iotaMod_j(e_j) + \hbbbone_j \nabla_{\iotaDer_i(\kX_i)} e_j = \hbbbone_j \nabla_{\iotaDer_i(\kX_i)} e_j$ since $\iotaDer_i(\kX_i) \cdotaction \hbbbone_j = 0$ whatever $i$ and $j$. If $j \neq i$, then $\nabla_{\iotaDer_i(\kX_i)} \iotaMod_j(e_j) = \hbbbone_j \nabla_{\iotaDer_i(\kX_i)} e_j = 0$ since $\nabla_{\iotaDer_i(\kX_i)} e_j$ has only components in $\modM_i$. This implies that  $\nabla_{\iotaDer_i(\kX_i)}$ is only non zero on components in $\modM_i$, and so defines a map
\begin{equation*}
\nabla^i_{\kX_i} \defeq \piMod^i \circ \nabla_{\iotaDer_i(\kX_i)} \circ \iotaMod_i :  \modM_i \to \modM_i.
\end{equation*}
Then, by construction, one has, for $e = \toplus_{i=1}^{r} e_i$ and $\kX = \toplus_{i=1}^{r} \kX_i$, $\nabla_\kX e = \toplus_{i=1}^{r} \nabla^i_{\kX_i} e_i$.

Now, let $f_i \in \calZ(\algA_i)$, then 
\begin{align*}
\nabla^i_{f_i \kX_i} e_i 
&= \piMod^i \left( \nabla_{\iotaDer_i(f_i \kX_i)} \circ \iotaMod_i(e_i) \right) 
= \piMod^i \left( \nabla_{\iota_i(f_i) \iotaDer_i(\kX_i)} \circ \iotaMod_i(e_i)\right) 
\\
&= \piMod^i \left( \iota_i(f_i) \nabla_{\iotaDer_i(\kX_i)} \circ \iotaMod_i(e_i) \right) 
= f_i \piMod^i \circ \nabla_{\iotaDer_i(\kX_i)} \circ \iotaMod_i(e_i)
\\
&= f_i \nabla^i_{\kX_i} e_i 
\end{align*}
Let $a_i \in \algA_i$, then
\begin{align*}
\nabla^i_{\kX_i} a_i e_i 
&= \piMod^i \left( \nabla_{\iotaDer_i(\kX_i)} \circ \iotaMod_i(a_i e_i) \right)
= \piMod^i \left( \nabla_{\iotaDer_i(\kX_i)} \circ \iota_i(a_i) \iotaMod_i(e_i) \right)
\\
&= \piMod^i \left( (\iotaDer_i(\kX_i) \cdotaction \iota_i(a_i)) \iotaMod_i(e_i) \right)
+ \piMod^i \left( \iota_i(a_i) \nabla_{\iotaDer_i(\kX_i)} \circ \iotaMod_i(e_i) \right)
\\
&= (\kX_i \cdotaction a_i) \piMod^i \circ \iotaMod_i(e_i)
+ a_i \piMod^i \circ \nabla_{\iotaDer_i(\kX_i)} \circ \iotaMod_i(e_i)
\\
&= (\kX_i \cdotaction a_i) e_i + a_i \nabla^i_{\kX_i} e_i
\end{align*}
These two relations show that $\nabla^i$ defines a connection on the left $\algA_i$ module $\modM_i$.

Concerning the curvature, one has $\nabla_\kX \nabla_\kY e = \nabla_\kX ( \toplus_{i=1}^{r}  \nabla^i_{\kY_i} e_i) = \toplus_{i=1}^{r} \nabla^i_{\kX_i} \nabla^i_{\kY_i} e_i$ and $\nabla_{[\kX, \kY]} e = \toplus_{i=1}^{r} \nabla^i_{[\kX_i, \kY_i]} e_i$ so that $R(\kX, \kY) e = \toplus_{i=1}^{r} ( [\nabla^i_{\kX_i}, \nabla^i_{\kY_i}] - \nabla^i_{[\kX_i, \kY_i]} ) e_i = \toplus_{i=1}^{r} R_i(\kX_i, \kY_i) e_i$.
\end{proof}

Let us now consider the special case $\modM = \algA$ with the natural left module structure. In that situation, we can characterize $\nabla$ by its connection $1$-form $\omega \in \OmegaDer^1(\algA)$ defined by $\omega(\kX) \defeq \nabla_{\kX} \bbbone$ and its curvature takes the form of the multiplication on the right by the curvature $2$-form $\Omega \in \OmegaDer^2(\algA)$ defined by $\Omega(\kX, \kY) \defeq (\dd \omega)(\kX, \kY) - [\omega(\kX), \omega(\kY)]$.

\begin{proposition}
In the previous situation, the decomposition of the connection $\nabla_\kX = \toplus_{i=1}^{r} \nabla^i_{\kX_i}$ in Prop~\ref{prop decomposition connections} is related to the decomposition of the connection $1$-form $\omega =  \toplus_{i=1}^{r} \omega_i$ in Prop.~\ref{prop decomposition forms} where $\omega_i \in \OmegaDer^1(\algA_i)$ is the connection $1$-form associated to the connection $\nabla^i$.

In the same way, the connection $2$-form $\Omega$ of $\nabla$ decomposes along the connection $2$-forms $\Omega_i$ of $\nabla^i$: $\Omega = \toplus_{i=1}^{r} \Omega_i$.
\end{proposition}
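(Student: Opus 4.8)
The plan is to verify the two claimed identifications by direct computation, using the explicit description of $\nabla^i$ obtained in the proof of Prop.~\ref{prop decomposition connections} together with the decomposition of forms of Prop.~\ref{prop decomposition forms}. Here $\modM = \algA$, so $\modM_i = \algA_i$, $\iotaMod_i = \iota_i$ and $\piMod^i = \pi^i$, and the canonical left module structure is the multiplication in $\algA$. Recall that for $\modM = \algA$ one has $\nabla_\kX e = (\kX \cdotaction e) + e\, \omega(\kX)$, and that the connection $\nabla^i$ is given by $\nabla^i_{\kX_i} = \piMod^i \circ \nabla_{\iotaDer_i(\kX_i)} \circ \iotaMod_i$.

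First I would compute the connection $1$-form of $\nabla^i$, namely the map $\kX_i \mapsto \nabla^i_{\kX_i} \bbbone$ where $\bbbone$ now denotes the unit of $\algA_i$. Since $\iotaMod_i(\bbbone) = \hbbbone_i$ is central in $\algA$ and satisfies $\iotaDer_i(\kX_i) \cdotaction \hbbbone_i = 0$ (as used in the proof of Prop.~\ref{prop decomposition connections}), the connection formula gives
\begin{align*}
\nabla^i_{\kX_i} \bbbone
&= \pi^i \left( \nabla_{\iotaDer_i(\kX_i)} \hbbbone_i \right)
= \pi^i \left( (\iotaDer_i(\kX_i) \cdotaction \hbbbone_i) + \hbbbone_i\, \omega(\iotaDer_i(\kX_i)) \right)
= \pi^i \left( \hbbbone_i\, \omega(\iotaDer_i(\kX_i)) \right).
\end{align*}
Now $\iotaDer_i(\kX_i)$ has vanishing components except the $i$-th one, which equals $\kX_i$, so by Prop.~\ref{prop decomposition forms} one has $\omega(\iotaDer_i(\kX_i)) = \iota_i(\omega_i(\kX_i))$. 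Applying the algebra homomorphism $\pi^i$ (and using $\pi^i(\hbbbone_i) = \bbbone$) yields $\nabla^i_{\kX_i} \bbbone = \omega_i(\kX_i)$, which is exactly the statement that the form-component $\omega_i$ of Prop.~\ref{prop decomposition forms} is the connection $1$-form of the connection $\nabla^i$ of Prop.~\ref{prop decomposition connections}.

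For the curvature $2$-form, I would argue directly from $\Omega(\kX, \kY) = (\dd\omega)(\kX, \kY) - [\omega(\kX), \omega(\kY)]$. By Prop.~\ref{prop decomposition forms} the differential decomposes as $\dd\omega = \toplus_{i=1}^{r} \dd_i \omega_i$, and since $\omega(\kX) = \toplus_{i=1}^{r} \omega_i(\kX_i)$ and the commutator in $\algA = \toplus_{i=1}^{r} \algA_i$ is taken component-wise, one has $[\omega(\kX), \omega(\kY)] = \toplus_{i=1}^{r} [\omega_i(\kX_i), \omega_i(\kY_i)]$. Combining the two gives
\begin{align*}
\Omega(\kX, \kY)
&= \toplus_{i=1}^{r} \left( (\dd_i \omega_i)(\kX_i, \kY_i) - [\omega_i(\kX_i), \omega_i(\kY_i)] \right)
= \toplus_{i=1}^{r} \Omega_i(\kX_i, \kY_i),
\end{align*}
so that $\Omega = \toplus_{i=1}^{r} \Omega_i$. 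Alternatively, this also follows from the curvature statement of Prop.~\ref{prop decomposition connections} together with the right-multiplication characterizations $R(\kX, \kY) e = e\, \Omega(\kX, \kY)$ and $R_i(\kX_i, \kY_i) e_i = e_i\, \Omega_i(\kX_i, \kY_i)$ evaluated on $e = \bbbone$.

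I do not expect a genuine obstacle here; the only point requiring care is the bookkeeping of units, namely distinguishing the unit of $\algA$ from the central idempotents $\hbbbone_i = \iotaMod_i(\bbbone)$, and remembering the key identity $\iotaDer_i(\kX_i) \cdotaction \hbbbone_i = 0$, which makes the inhomogeneous term of the connection drop out when one restricts to the $i$-th block.
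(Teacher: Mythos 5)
Your proposal is correct and follows essentially the same route as the paper: both identify the $\omega_i$ by evaluating connections on units (the paper computes $\omega(\kX)=\nabla_\kX\bbbone=\toplus_{i=1}^{r}\nabla^i_{\kX_i}\bbbone_i$ directly, while you compute $\nabla^i_{\kX_i}\bbbone$ from the explicit formula $\nabla^i_{\kX_i}=\pi^i\circ\nabla_{\iotaDer_i(\kX_i)}\circ\iota_i$ and the form decomposition of $\omega$), and both obtain $\Omega=\toplus_{i=1}^{r}\Omega_i$ from the fact that the differential and the component-wise commutator respect the decomposition of forms. The logical content and the ingredients (Prop.~\ref{prop decomposition connections}, Prop.~\ref{prop decomposition forms}, and the vanishing of derivations on the central idempotents $\hbbbone_i$) are the same.
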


\begin{proof}
One has $\bbbone = \toplus_{i=1}^{r} \bbbone_i$ where $\bbbone_i$ is the unit in $\algA_i$. With $\kX = \toplus_{i=1}^{r} \kX_i$, one then has $\omega(\kX) = \nabla_{\kX} \bbbone = \toplus_{i=1}^{r} \nabla^i_{\kX_i} \bbbone_i = \toplus_{i=1}^{r} \omega_i(\kX_i)$.

The curvature $2$-forms are defined in term of differentials and Lie brackets (commutators in the respective algebras) from the connection $1$-forms. We have shown that these operations respect the decomposition of forms. This proves the relation of the curvature $2$-form of $\nabla$.
\end{proof}

%%%%%%%%%%%%%%%%
\subsection{\texorpdfstring{Metric and Hodge $\hstar$-operator}{Metric and Hodge *-operator}}
\label{sec metric hodge}

In this subsection, we consider only the situation $\algA_i = M_{n_i}$. This permits to limit the study of metrics and Hodge $\hstar$-operators to the structures defined in Sect.~\ref{sec matrix algebras} (see also \cite{DuboKernMado90b, Mass95a, DuboMass98a, Mass08b, Mass12a}).

For every $\algA_i = M_{n_i}$, one introduces a basis $\{ E^{i}_{\alpha} \}_{\alpha \in I_{i}}$ of $\ksl_{n_i}$ where $I_{i}$ is a totally ordered set of cardinal $n_i^2-1$. Let $\{ \partial^{i}_{\alpha} \defeq \ad_{E^{i}_{\alpha}} \}_{\alpha \in I_{i}}$ be the induced basis of $\Der(M_{n_i})$. The dual basis is denoted by $\{ \theta_{i}^{\alpha} \}_{\alpha \in I_{i}}$.

We consider the metric $g$ on $\algA = \toplus_{i=1}^{r} M_{n_i}$ defined by $g( \partial^{i}_{\alpha}, \partial^{i'}_{\alpha'} ) = 0$ for $i \neq i'$ and $g^{i}_{\alpha \alpha'} \defeq g( \partial^{i}_{\alpha}, \partial^{i}_{\alpha'}) = \tr ( E^{i}_{\alpha} E^{i}_{\alpha'} )$ as in Sect.~\ref{sec matrix algebras}. Then, by construction, $\Der(\algA_i)$ is orthogonal to $\Der(\algA_{i'})$ when $i \neq i'$. 

A natural way to define a (noncommutative) integral of forms on $\algA$ is to decompose it along the $\algA_i$ as
\begin{align*}
\int_{\algA} \omega
& \defeq \tsum_{i=1}^{r} \int_{i} \omega_i
\end{align*}
for any $\omega = \toplus_{i=1}^{r} \omega_i \in \OmegaDer^\grast(\algA)$. Here $\int_{i} = \int_{\algA_i}$ is defined as in Sect.~\ref{sec matrix algebras} using the volume form $\omega_{\vol, i} \defeq \sqrt{\abs{g^i}} \theta_i^{\alpha^0_1} \wedge \cdots \wedge \theta_i^{\alpha^0_{n_i^2-1}}$ where $(\alpha^0_1, \dots, \alpha^0_{n^2-1}) \in I_i^{n_i^2-1}$ is such that $\alpha^0_1 < \cdots < \alpha^0_{n_i^2-1}$.

In order to compute $\int_{\algA} \omega$, one has to find the unique element $a = \toplus_{i=1}^{r} a_i \in \algA$ such that $\toplus_{i=1}^{r} a_i \omega_{\vol, i}$ captures the highest degrees in every $\OmegaDer^\grast(\algA_i)$, and then one has $\int_{\algA} \omega = \tsum_{i=1}^{r} \tr(a_i)$. In particular, with $\omega_{\vol}
 \defeq \toplus_{i=1}^{r} \omega_{\vol, i}$, one has $\int_{\algA} \omega_{\vol} = \tsum_{i=1}^{r} n_i$.

The metric $g$ on $\algA$ gives rise to a well defined Hodge $\hstar$-operator, which, according to the proof of Lemma~\ref{lemma *omega decomposed}, can be written using the Hodge $\hstar$-operators on each $\algA_i$ for the metric $g^i$. For any $\omega = \toplus_{i=1}^{r} \omega_i, \omega' = \toplus_{i=1}^{r} \omega'_i \in \OmegaDer^\grast(\algA)$, one has 
\begin{align*}
\omega \wedge \hstar \omega'
&= \tsum_{i=1}^{r} \omega_i \wedge \hstar_i \omega'_i
\end{align*}
where $\hstar_i$ is defined on $\OmegaDer^\grast(M_{n_i})$ as in Sect.~\ref{sec matrix algebras}, and we define the noncommutative scalar product of forms on $\algA = \toplus_{i=1}^{r} M_{n_i}$ by
\begin{align}
\label{eq def scalar product forms}
(\omega, \omega') \defeq
\int_\algA \omega \wedge \hstar \omega'
= \tsum_{i=1}^{r} \int_i \omega_i \wedge \hstar_i \omega'_i
\end{align}
This expression will be used to define action functional out of a connection $1$-form.

%%%%%%%%%%%%%%%%
\section{Lifting one step of the defining inductive sequence}
\label{sec one step in the sequence}
%%%%%%%%%%%%%%%%

In this section we study the lifting of an inclusion $\phi : \algA \to \algB$ regarding some of the structures defined on $\algA$ and $\algB$. Contrary to the previous section, here we consider the special case of sums of matrix algebras, $\algA = \toplus_{i=1}^{r} M_{n_i}$ and $\algB = \toplus_{j=1}^{s} M_{m_j}$. For reasons that will be explained in Sect.~\ref{sect direct limit NCGFT}, $\phi$ is not necessarily unital. We define the corresponding projection and injection maps $\pi_\algA^i$, $\pi_\algB^j$, $\iota^\algA_i$ and $\iota^\algB_j$.

The inclusion $\phi$ is taken in its simplest form, and we normalize it such that, for any $a = \toplus_{i=1}^{r} a_i$,
\begin{align*}
\phi^j (a) \defeq \pi_\algB^j \circ \phi (a)
&= 
\begin{pmatrix}
a_1 \otimes \bbbone_{\alpha_{j1}} & 0 & \cdots & 0 & 0\\
0 & a_2 \otimes \bbbone_{\alpha_{j2}} & \cdots & 0 & 0\\
\vdots & \vdots & \ddots & \vdots & \vdots\\
0 & 0 & \cdots & a_r \otimes \bbbone_{\alpha_{jr}} & 0 \\
0 & 0 & \cdots & 0 & \bbbzero_{n_0}
\end{pmatrix}
\end{align*}
where the integer $\alpha_{ji} \geq 0$ is the multiplicity of the inclusion of $M_{n_i}$ into $M_{m_j}$,  $\bbbzero_{n_0}$ is the $n_0 \times n_0$ zero matrix such that $n_0 \geq 0$ satisfies $m_j = n_0 + \tsum_{i=1}^{r} \alpha_{ji} n_i$, and 
\begin{align*}
a_i \otimes \bbbone_{\alpha_{ji}}
&= \left.
\begin{pmatrix}
a_i & 0 & 0 & 0 \\
0 & a_i & 0 & 0 \\
\vdots & \vdots & \ddots & \vdots \\
0 & 0 & \cdots & a_i \\
\end{pmatrix}
\right\} \text{$\alpha_{ji}$ times.}
\end{align*}
We define the maps $\phi_{i}^{j} \defeq \phi^j \circ \iota^\algA_i : M_{n_i} \to M_{m_j}$, which takes the explicit form
\begin{align*}
\phi_{i}^{j}(a_i)
&= \begin{pmatrix}
0  & \cdots & 0 & 0 & 0 & \cdots & 0 \\
\vdots  & \ddots & \vdots & \vdots  & \vdots & \cdots & \vdots \\
0  & \cdots & 0 & 0 & 0 & \cdots & 0 \\
0  & \cdots & 0 & a_i \otimes \bbbone_{\alpha_{ji}} & 0 & \cdots & 0 \\
0  & \cdots & 0 & 0 & 0 & \cdots & 0 \\
\vdots  & \vdots & \vdots & \vdots  & \vdots & \ddots & \vdots \\
0  & \cdots & 0 & 0 & 0 & \cdots & 0
\end{pmatrix}
\end{align*}
When $\alpha_{ji} > 0$, for $1 \leq \ell \leq \alpha_{ji}$ we define the maps $\phi_{i}^{j, \ell} : M_{n_i} \to M_{m_j}$ which insert $a_i$ at the $\ell$-th entry on the diagonal of $\bbbone_{\alpha_{ji}}$ in the previous expression, so that $a_i$ appears only once on the RHS. The maps $\phi^{j}$, $\phi_{i}^{j}$, and $\phi_{i}^{j, \ell}$ are morphisms of algebras and one has
\begin{align}
\phi &= \toplus_{j=1}^{s} \phi^j : \toplus_{i=1}^{r} M_{n_i} \to \toplus_{j=1}^{s} M_{m_j},
\nonumber
\\
\label{eq decompositions phi}
\phi^j &= \tsum_{i=1}^{r} \phi_{i}^{j} \circ \pi_\algA^i : \toplus_{i=1}^{r} M_{n_i} \to M_{m_j},
\\
\phi_{i}^{j} &= \tsum_{\ell=1}^{\alpha_{ji}} \phi_{i}^{j, \ell} : M_{n_i} \to M_{m_j}.
\nonumber
\end{align}
Notice then that $\tsum_{i=1}^{r} \tsum_{\ell=1}^{\alpha_{ji}} \phi_{i}^{j, \ell}(\bbbone_{\algA_i})$ fills the diagonal of $M_{m_j}$ with $\tsum_{i=1}^{r} \alpha_{ji} n_i$ copies of $1$ except for the last $n_0$ entries. When $n_0 = 0$, one gets $\tsum_{i=1}^{r} \tsum_{\ell=1}^{\alpha_{ji}} \phi_{i}^{j, \ell}(\bbbone_{\algA_i}) = \bbbone_{\algB_i}$.

We will make use of the following result.
\begin{lemma}
\label{lem product phiijell}
For any $a,b \in \algA$, any $i_1, i_2 \in \{1, \dots, r\}$, any $\ell_1 \in \{1, \dots, \alpha_{j i_1} \}$ and any $\ell_2 \in \{1, \dots, \alpha_{j i_2} \}$,
\begin{align}
\label{eq product phiijell}
\phi_{i_1}^{j, \ell_1}(a)\phi_{i_2}^{j, \ell_2}(b) = \delta_{i_1, i_2} \delta_{\ell_1, \ell_2} \phi_{i_1}^{j, \ell_1}(a b).
\end{align}
\end{lemma}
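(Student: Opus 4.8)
The plan is to realize each map $\phi_i^{j,\ell}$ as conjugation by an isometric block embedding, after which the identity reduces to the mutual orthogonality of distinct diagonal blocks of $M_{m_j}$. Since $\phi_i^{j,\ell} : M_{n_i} \to M_{m_j}$ depends only on the $i$-th component of its argument, I would first reduce to the case $a \in M_{n_{i_1}}$ and $b \in M_{n_{i_2}}$; the stated identity for $a,b \in \algA$ follows by replacing $a,b$ with $\pi_\algA^{i_1}(a)$, $\pi_\algA^{i_2}(b)$ and using $\bbC$-bilinearity, noting that when the right-hand side is nonzero one has $i_1=i_2=i$ and then $\pi_\algA^{i}(a)\,\pi_\algA^{i}(b) = \pi_\algA^{i}(ab)$.

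Next, for each admissible triple $(i,j,\ell)$ with $1 \le \ell \le \alpha_{ji}$, I would introduce the rectangular matrix $V_i^{j,\ell} \in M_{m_j \times n_i}$ that embeds $\bbC^{n_i}$ isometrically onto the diagonal block of $\bbC^{m_j}$ occupied by the $\ell$-th copy of $M_{n_i}$. In the normalized (standard) form of $\phi$ fixed above, these blocks are arranged in order along the diagonal and are pairwise disjoint, the final $n_0$ coordinates being left untouched. By the very definition of $\phi_i^{j,\ell}$ as the insertion of $a$ into that block, one has $\phi_i^{j,\ell}(a) = V_i^{j,\ell}\, a\, (V_i^{j,\ell})^*$ for all $a \in M_{n_i}$.

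The key computation is then the value of $(V_{i_1}^{j,\ell_1})^* V_{i_2}^{j,\ell_2} \in M_{n_{i_1} \times n_{i_2}}$. Because each $V_i^{j,\ell}$ is an isometry, $(V_i^{j,\ell})^* V_i^{j,\ell} = \bbbone_{n_i}$; and because the ranges of $V_{i_1}^{j,\ell_1}$ and $V_{i_2}^{j,\ell_2}$ are mutually orthogonal whenever $(i_1,\ell_1) \ne (i_2,\ell_2)$, one gets $(V_{i_1}^{j,\ell_1})^* V_{i_2}^{j,\ell_2} = \delta_{i_1,i_2}\,\delta_{\ell_1,\ell_2}\,\bbbone_{n_{i_1}}$. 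Substituting this into
\begin{align*}
\phi_{i_1}^{j,\ell_1}(a)\,\phi_{i_2}^{j,\ell_2}(b)
= V_{i_1}^{j,\ell_1}\, a\, (V_{i_1}^{j,\ell_1})^*\, V_{i_2}^{j,\ell_2}\, b\, (V_{i_2}^{j,\ell_2})^*
\end{align*}
immediately gives $\delta_{i_1,i_2}\,\delta_{\ell_1,\ell_2}\, V_{i_1}^{j,\ell_1}\,(ab)\,(V_{i_1}^{j,\ell_1})^* = \delta_{i_1,i_2}\,\delta_{\ell_1,\ell_2}\,\phi_{i_1}^{j,\ell_1}(ab)$, which is exactly \eqref{eq product phiijell}.

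There is no serious obstacle here: the entire content is the disjointness of the diagonal blocks, which is built into the normalized form of $\phi$. The only points needing care are the bookkeeping of block positions — checking that the $\ell$-th copy of $M_{n_i}$ and the $\ell'$-th copy of $M_{n_{i'}}$ occupy disjoint coordinate ranges of $\bbC^{m_j}$ unless $(i,\ell)=(i',\ell')$ — and the harmless abuse whereby $\phi_i^{j,\ell}$ is applied to elements of $\algA$ rather than of $M_{n_i}$. Alternatively, one could dispense with the isometries and argue directly: $\phi_i^{j,\ell}(a)$ is supported on the single $(i,\ell)$ diagonal block with entry $a$ there, and the product of two matrices each supported on one diagonal block is supported on the intersection of the two blocks, hence vanishes unless the blocks coincide and equals the product $ab$ of the block entries when they do.
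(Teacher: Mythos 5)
Your proof is correct and rests on exactly the same fact as the paper's own (one-line) proof, namely that in the normalized form of $\phi$ the blocks $\phi_{i}^{j,\ell}(M_{n_i})$ occupy pairwise disjoint diagonal coordinate ranges of $M_{m_j}$, so that products of matrices supported on distinct blocks vanish. The paper simply invokes ``the definition of $\phi_{i}^{j,\ell}$ and the multiplications of block diagonal matrices,'' which is precisely the direct argument you give in your closing paragraph; your isometry realization $\phi_{i}^{j,\ell}(a) = V_{i}^{j,\ell}\, a\, (V_{i}^{j,\ell})^*$ with $(V_{i_1}^{j,\ell_1})^* V_{i_2}^{j,\ell_2} = \delta_{i_1,i_2}\delta_{\ell_1,\ell_2}\bbbone_{n_{i_1}}$ is just a clean formalization of that same computation.
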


\begin{proof}
This is just a direct consequence of the definition of $\phi_{i}^{j, \ell}$ and the multiplications of block diagonal matrices.
\end{proof}

\medskip
\begin{definition}
An injective map $\phiMod : \modM \to \modN$ between a left $\algA$-module $\modM$ and a left $\algB$-module $\modN$ is $\phi$-compatible if $\phiMod(a e) = \phi(a) \phiMod(e)$ for any $a \in \algA$ and $e \in \modM$.
\end{definition}

In the following, since we want to construct a direct limit of modules accompanying a direct limit of algebras with injective maps, we always suppose that $\phiMod$ is also injective. As before we define the corresponding projection and injection maps $\pi_\modM^i$, $\pi_\modN^j$, $\iota^\modM_i$ and $\iota^\modN_j$.

\medskip
From \cite[Cor.~III.1.2]{Davi96a}, we know that all the left modules on $\algA$ and $\algB$ are of the form $\modM = \bbC^{n_1} \otimes \bbC^{\alpha_1} \toplus \cdots \toplus \bbC^{n_r} \otimes \bbC^{\alpha_r}$ and $\modN = \bbC^{m_1} \otimes \bbC^{\beta_1} \toplus \cdots \toplus \bbC^{m_s} \otimes \bbC^{\beta_s}$ for some integers $\alpha_i$ and $\beta_j$. Two situations are easily handled to construct an injective $\phi$-compatible map $\phiMod : \modM \to \modN$.
\begin{enumerate}
\item The case $\alpha_i = \beta_j = 1$ for any $i$ and $j$. In that situation, $\phiMod$ can be constructed in a natural way using the multiplicities $\alpha_{ji}$ of $\phi$. Denote by $\boldone_n \in \bbC^n$ (resp. $\boldzero_{n} \in \bbC^{n}$) the vector with all the entries equal to $1$ (resp. $0$). Then one can define $\phiMod$ by
\begin{align*}
\pi_\modN^j \circ \phiMod (e)
&=
\begin{pmatrix}
e_1 \otimes \boldone_{\alpha_{j1}} \\
e_2 \otimes \boldone_{\alpha_{j2}}\\
\vdots \\
e_r \otimes \boldone_{\alpha_{jr}}\\
\boldzero_{n_0}
\end{pmatrix}
\in \modN_j = \bbC^{m_j}
\end{align*}

\item The case $\alpha_i = n_i$ and $\beta_j = m_j$ for any $i$ and $j$. This situation corresponds to $\modM = \algA$ and $\modN = \algB$. The canonical map $\phiMod$ is taken to be $\phi$ itself.
\end{enumerate}
By construction, these two maps are $\phi$-compatible. In the more general situation, we have to inject $\alpha_{ji}$ times (as rows) $\bbC^{n_i} \otimes \bbC^{\alpha_i}$ into $\bbC^{m_j} \otimes \bbC^{\beta_j}$ ($\bbC^{n_i} \otimes \bbC^{\alpha_i}$ is injected only  when $\alpha_{ji}>0$). A necessary condition is that $\beta_j$ is large enough to accept the largest $\alpha_i$. This necessary condition leaves open the possibility of constructing many modules and many maps $\phiMod$ which are $\phi$-compatible.

Similarly to $\phi$, we decompose $\phiMod$ as $\phiMod[,i]^j \defeq \pi_\modN^j \circ \phiMod \circ \iota^\modM_i : \modM_i \to \modN_j$ and for any $1 \leq \ell \leq \alpha_{ji}$, $\phiMod[,i]^{j,\ell} : \modM_i \to \modN_j$ which insert $e_i \in \modM_i$ at the $\ell$-th row. 

\medskip
As expected, $\phi$ does not relate the centers of $\algA$ and $\algB$. This implies in particular that we can't expect to find or to construct a “general” map to inject $\Der(\algA)$ into $\Der(\algB)$ as modules over the centers, or, with less ambition, to inject a sub module and sub Lie algebra of $\Der(\algA)$ into a sub module and sub Lie algebra of $\Der(\algB)$. This strategy may indeed require very specific situations. 

Since it is convenient to consider all the derivations of $\algA$ and $\algB$, our approach is to keep track of the derivations in $\Der(\algB)$ which “come from” (to be defined below) derivations in $\Der(\algA)$. These derivations will propagate along the sequence of the direct limit, while new derivations will be introduced at each step of the limit. 

For any $i$, let us chose an \emph{orthogonal basis} $\{ \partial^{i}_{\!\!\algA, \alpha} \defeq \ad_{E^{i}_{\!\!\algA, \alpha}} \}_{\alpha \in I_{i}}$ of $\Der(\algA_i) = \Int(M_{n_i})$ where $E^{i}_{\!\!\algA, \alpha} \in \ksl_{n_i}$ and $I_{i}$ is a totally ordered set of cardinal $n_i^2-1$. For any $j$, we can introduce a basis of $\Der(\algB_j) = \Int(M_{m_j})$ in two steps. Let us define the set
\begin{align*}
J^\phi_{j} \defeq 
\left\{
(i, \ell, \alpha) \, / \, i \in \{ 1, \dots, r\}, \, \ell \in \{1, \dots, \alpha_{ji}\}, \, \alpha \in I_{i}  
\right\}
\end{align*}
and for any $\beta = (i, \ell, \alpha) \in J^\phi_{j}$, define
\begin{align*}
E^{j}_{\algB, \beta} \defeq \phi_{i}^{j, \ell}(E^{i}_{\!\!\algA, \alpha}) \in \ksl_{m_j}
\text{ and } \partial^{j}_{\algB, \beta} \defeq \ad_{E^{j}_{\algB, \beta}} \in \Der(\algB_j).
\end{align*}
The set $J^\phi_{j}$ is totally ordered for $\beta = (i,\ell,\alpha) < \beta' = (i',\ell',\alpha')$ iff $i<i'$ or [$i=i'$ and $\ell < \ell'$] or [$i=i'$ and $\ell=\ell'$ and $\alpha < \alpha' \in I_{i}$]. 

Denote by $g_{\algA}$ and $g_{\algB}$ the metrics on $\algA$ and $\algB$, defined as in Sect.~\ref{sec metric hodge}. We know that $g_{\algB}( \ad_{E^{j}_{\algB, \beta}}, \ad_{E^{j'}_{\algB, \beta'}} ) = 0$ for $j \neq j'$. So, let us consider a fixed value $j$. With the previous notations, one then has $g_{\algB}( \ad_{E^{j}_{\algB, (i,\ell,\alpha)}}, \ad_{E^{j}_{\algB, (i',\ell',\alpha')}} ) = 0$ when $i \neq i'$ or [$i = i'$ and $\ell \neq \ell'$] since then the product of matrices $E^{j}_{\algB, (i,\ell,\alpha)} E^{j}_{\algB, (i',\ell',\alpha')}$ is zero. Then one has 
\begin{align}
g^{j}_{\algB, \beta\beta'}
& \defeq
g_{\algB}( \ad_{E^{j}_{\algB, (i,\ell,\alpha)}}, \ad_{E^{j}_{\algB, (i',\ell',\alpha')}} ) 
= \tr( E^{j}_{\algB, (i,\ell,\alpha)} E^{j}_{\algB, (i',\ell',\alpha')} ) 
\nonumber
\\
&= \delta_{i i'} \delta_{\ell \ell'} \tr( E^{i}_{\!\!\algA, \alpha} E^{i}_{\!\!\algA, \alpha'} ) 
= \delta_{i i'} \delta_{\ell \ell'} g_{\algA}( \ad_{E^{i}_{\!\!\algA, \alpha}}, \ad_{E^{i}_{\!\!\algA, \alpha'}} )
\nonumber
\\
&= \delta_{i i'} \delta_{\ell \ell'} g^{i}_{\algA, \alpha\alpha'}
\label{eq gB and gA}
\end{align}
In the following, we will use the fact that the metric $( g^{j}_{\algB, \beta\beta'} )_{\beta, \beta' \in J^\phi_{j}}$ (and also its inverse $( g_{\algB, j}^{\beta\beta'} )_{\beta, \beta' \in J^\phi_{j}}$) is diagonal by blocks along the divisions induced by the choice of a couple $(i, \ell)$ for which $\beta = (i, \ell, \alpha)$. Notice also that if the $\partial^{i}_{\!\!\algA, \alpha}$'s are orthogonal (resp. orthonormal) for $g_{\algA}$, so are the $\partial^{j}_{\algB, \beta}$'s for the metric $g_{\algB}$ for any $\beta \in J^\phi_{j}$. This is the reason we chose to remove the “$\tfrac{1}{n}$” factor in front of the definition of the metrics (see Sect.~\ref{sec metric hodge}).

\medskip
We can now complete the family $\{ \partial^{j}_{\algB, \beta} \}_{\beta \in J^\phi_{j}}$ into a full basis of $\Der(\algB_j)$ with the same notation, $\beta \in J_{j} = J^\phi_{j}  \cup J^c_{j}$ where $J^c_{j}$ is a complementary set to get $\card(J_{j}) = m_j^2 -1$, in such a way that 
\begin{align}
\label{eq block orthogonality basis}
g_{\algB} ( \partial^{j}_{\algB, \beta}, \partial^{j}_{\algB, \beta'} ) = 0 
\text{ for any $\beta \in J^\phi_{j}$ and $\beta' \in J^c_{j}$.}
\end{align}
In other words, the metric $g_{\algB}$ is block diagonal and decomposes $\Der(\algB_j)$ into two orthogonal summands (See Sect.~\ref{sec construction block orthogonal basis} for an explicit way to construct such a basis adapted to $\phi$). We choose any total order on $J_{j}$ which extends the one on $J^\phi_{j}$. 

Notice that \eqref{eq block orthogonality basis} implies that the inverse of the matrix $( g^{j}_{\algB, \beta\beta'} )_{\beta,\beta' \in J_{j}}$, denoted by $( g_{\algB, j}^{\beta\beta'} )_{\beta,\beta' \in J_{j}}$, is also block diagonal  and is such that $( g_{\algB, j}^{\beta\beta'} )_{\beta,\beta' \in J^\phi_{j}}$ is the inverse of $( g^{j}_{\algB, \beta\beta'} )_{\beta,\beta' \in J^\phi_{j}}$ with $g_{\algB, j}^{(i,\ell,\alpha)(i',\ell',\alpha')} = \delta^{i i'} \delta^{\ell \ell'} g_{\algA, i}^{\alpha \alpha'}$.

\medskip
The derivations $\partial^{j}_{\algB, \beta}$ for $\beta \in J^\phi_{j}$ are the one “inherited” from the derivations on $\algA$. We will use the convenient notation $\partial^{j}_{\algB, \beta} = \phi_{i}^{j, \ell}(\partial^{i}_{\!\!\algA, \alpha})$ for $\beta = (i,\ell,\alpha)$. 

\begin{lemma}
\label{lem inherited derivations relations}
For any $1 \leq j \leq s$, $1 \leq i, i' \leq r$, $1 \leq \ell \leq \alpha_{ji}$, $1 \leq \ell' \leq \alpha_{ji'}$, $\alpha \in I_{i}$, $\alpha' \in I_{i'}$, $a_{i'} \in \algA_{i'}$, one has
\begin{align*}
\partial^{j}_{\algB, (i,\ell,\alpha)} \cdotaction \phi_{i'}^{j, \ell'}(a_{i'})
&=
\phi_{i}^{j, \ell}(\partial^{i}_{\!\!\algA, \alpha}) \cdotaction \phi_{i'}^{j, \ell'}(a_{i'})
= \delta_{i, i'} \delta_{\ell, \ell'} \phi_{i}^{j, \ell}(\partial^{i}_{\!\!\algA, \alpha} \cdotaction a_{i'})
\end{align*}
and
\begin{align*}
[\partial^{j}_{\algB, (i,\ell,\alpha)}, \partial^{j}_{\algB, (i',\ell',\alpha')}  ]
&= [ \phi_{i}^{j, \ell}(\partial^{i}_{\!\!\algA, \alpha}), \phi_{i'}^{j, \ell'}(\partial^{i'}_{\!\!\algA, \alpha'}) ]
= \delta_{i, i'} \delta_{\ell, \ell'} \phi_{i}^{j, \ell}( [\partial^{i}_{\!\!\algA, \alpha}, \partial^{i}_{\!\!\algA, \alpha'}])
\end{align*}
\end{lemma}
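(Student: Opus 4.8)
The plan is to reduce both identities to the multiplication rule \eqref{eq product phiijell} of Lemma~\ref{lem product phiijell} together with the fact that every derivation involved is inner. Recall that $\partial^{j}_{\algB, (i,\ell,\alpha)} = \ad_{E^{j}_{\algB, (i,\ell,\alpha)}}$ with $E^{j}_{\algB, (i,\ell,\alpha)} = \phi_{i}^{j,\ell}(E^{i}_{\!\!\algA, \alpha})$, so that every bracket to be computed can be transferred onto the matrices $E^{i}_{\!\!\algA, \alpha}$ sitting inside $\ksl_{m_j}$, where the block-multiplication formula applies.

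First I would establish the action formula. Since $\partial^{j}_{\algB, (i,\ell,\alpha)}$ is the inner derivation $\ad_{E^{j}_{\algB, (i,\ell,\alpha)}}$, its action on $\phi_{i'}^{j,\ell'}(a_{i'})$ is the commutator $[\phi_{i}^{j,\ell}(E^{i}_{\!\!\algA, \alpha}), \phi_{i'}^{j,\ell'}(a_{i'})]$. Expanding this as a difference of two products and applying \eqref{eq product phiijell} to each product, the factor $\delta_{i,i'}\delta_{\ell,\ell'}$ is pulled out and both terms collapse onto $\phi_{i}^{j,\ell}$. Using the linearity and the morphism property of $\phi_{i}^{j,\ell}$, the difference becomes $\delta_{i,i'}\delta_{\ell,\ell'}\phi_{i}^{j,\ell}([E^{i}_{\!\!\algA, \alpha}, a_{i'}])$, which is precisely $\delta_{i,i'}\delta_{\ell,\ell'}\phi_{i}^{j,\ell}(\partial^{i}_{\!\!\algA, \alpha}\cdotaction a_{i'})$ by the definition $\partial^{i}_{\!\!\algA, \alpha} = \ad_{E^{i}_{\!\!\algA, \alpha}}$. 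When $(i,\ell)\neq(i',\ell')$ both products already vanish by \eqref{eq product phiijell}, consistently with the Kronecker factors.

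Next I would handle the bracket of derivations. Because the bracket of inner derivations is again inner, with $[\ad_a, \ad_b] = \ad_{[a,b]}$, the left-hand side equals $\ad_{[E^{j}_{\algB, (i,\ell,\alpha)}, E^{j}_{\algB, (i',\ell',\alpha')}]}$. The commutator of the two matrices $\phi_{i}^{j,\ell}(E^{i}_{\!\!\algA, \alpha})$ and $\phi_{i'}^{j,\ell'}(E^{i'}_{\!\!\algA, \alpha'})$ is computed exactly as in the previous step using \eqref{eq product phiijell}, yielding $\delta_{i,i'}\delta_{\ell,\ell'}\phi_{i}^{j,\ell}([E^{i}_{\!\!\algA, \alpha}, E^{i}_{\!\!\algA, \alpha'}])$. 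Taking $\ad$ of this, and recalling that the notation $\phi_{i}^{j,\ell}$ applied to an inner derivation is defined precisely by $\phi_{i}^{j,\ell}(\ad_c) = \ad_{\phi_{i}^{j,\ell}(c)}$, we recover the claimed $\delta_{i,i'}\delta_{\ell,\ell'}\phi_{i}^{j,\ell}([\partial^{i}_{\!\!\algA, \alpha}, \partial^{i}_{\!\!\algA, \alpha'}])$.

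I do not anticipate any serious obstacle: once the derivations are written as $\ad$ of explicit matrices, everything reduces to the block-multiplication rule \eqref{eq product phiijell}, which is purely formal. The only point requiring a little care is the bookkeeping of the Kronecker deltas, namely checking that the vanishing of the off-diagonal products is compatible with the claimed right-hand sides, and confirming that $a_{i'}$ (resp. $E^{i'}_{\!\!\algA, \alpha'}$) lies in the same matrix block $M_{n_i}$ as $E^{i}_{\!\!\algA, \alpha}$ exactly when $\delta_{i,i'}\delta_{\ell,\ell'}\neq 0$, so that the products appearing inside $\phi_{i}^{j,\ell}$ are well defined wherever they are needed.
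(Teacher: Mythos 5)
Your proof is correct and follows essentially the same route as the paper's: both reduce everything to inner derivations acting as matrix commutators and then exploit the block-diagonal placement of the images of the maps $\phi_{i}^{j,\ell}$, the only cosmetic difference being that you channel the block computation through Lemma~\ref{lem product phiijell} while the paper argues directly from the positions of the blocks on the diagonal of $M_{m_j}$. Your closing remarks (the identity $[\ad_a,\ad_b]=\ad_{[a,b]}$, the convention $\phi_{i}^{j,\ell}(\ad_c)=\ad_{\phi_{i}^{j,\ell}(c)}$, and the check that products inside $\phi_{i}^{j,\ell}$ are only needed when the Kronecker deltas are nonzero) are exactly the right points of care and are consistent with the paper.
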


\begin{proof}
$\phi_{i}^{j, \ell}(\partial^{i}_{\!\!\algA, \alpha})$ is an inner derivation for the matrix $E^{j}_{\algB, (i,\ell,\alpha)}$ in which the only non zero part $E^{i}_{\!\!\algA, \alpha}$ is located on the diagonal of $M_{m_j}$ at a position depending on $i$ and $\ell$, see above. In the same way, the non zero part of $\phi_{i'}^{j, \ell'}(a_{i'})$ is $a_{i'}$ on the diagonal of $M_{m_j}$. When $i \neq i'$ or $\ell \neq \ell'$, the commutator of these two matrices is zero. When $i = i'$ and $\ell = \ell'$, the commutator is $[E^{i}_{\!\!\algA, \alpha}, a_i] = \partial^{i}_{\!\!\algA, \alpha} \cdotaction a_i$ on the diagonal of $M_{m_j}$ at the position designated by $i$ and $\ell$. This matrix is obviously $\phi_{i}^{j, \ell}(\partial^{i}_{\!\!\algA, \alpha} \cdotaction a_i)$. This proves the first relation.

The proof of the second relation relies on the same kind of argument since the $\partial^{i}_{\!\!\algA, \alpha}$ are inner derivations.
\end{proof}

\begin{definition}[$\phi$-compatible forms]
A form $\omega = \toplus_{i=1}^{r} \omega_i \in \OmegaDer^\grast(\algA)$ is $\phi$-compatible with a form $\eta = \toplus_{j=1}^{s} \eta_j \in \OmegaDer^\grast(\algB)$ if and only if for any $1 \leq i \leq r$, $1 \leq j \leq s$, $1 \leq \ell \leq \alpha_{ji}$, $\omega_i$ and $\eta_j$ have the same degree $p$ and for any $\partial^{i}_{\!\!\algA, \alpha_1}, \dots, \partial^{i}_{\!\!\algA, \alpha_p} \in \Der(\algA_i)$ ($\alpha_k \in I_{i}$) , one has
\begin{align}
\label{eq forms compatibility}
\phi_{i}^{j, \ell}\left( \omega_i( \partial^{i}_{\!\!\algA, \alpha_1}, \dots, \partial^{i}_{\!\!\algA, \alpha_p}) \right)
&=
\eta_j \left( \phi_{i}^{j, \ell}(\partial^{i}_{\!\!\algA, \alpha_1}), \dots,  \phi_{i}^{j, \ell}(\partial^{i}_{\!\!\algA, \alpha_p}) \right)
\end{align}
\end{definition}

Notice that the LHS of \eqref{eq forms compatibility} has only non zero values in block matrices on the diagonal of $M_{m_j}$ at a position which depends on $i$ and $\ell$ (see above). This implies that the RHS has the same structure.

\begin{proposition}
\label{prop compatible forms product differential}
Let $\omega = \toplus_{i=1}^{r} \omega_i, \omega' = \toplus_{i=1}^{r} \omega'_i \in \OmegaDer^\grast(\algA)$ and $\eta = \toplus_{j=1}^{s} \eta_j, \eta' = \toplus_{j=1}^{s} \eta'_j \in \OmegaDer^\grast(\algB)$ such that $\omega$ is $\phi$-compatible with $\eta$ and $\omega'$ is $\phi$-compatible with $\eta'$. Then $\omega \wedge \omega'$ is $\phi$-compatible with $\eta \wedge \eta'$ and $\dd \omega$ is $\phi$-compatible with $\dd \eta$.
\end{proposition}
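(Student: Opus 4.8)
The plan is to read off both statements directly from the definition \eqref{eq forms compatibility}, using only that each $\phi_{i}^{j,\ell}$ is an algebra morphism (Lemma~\ref{lem product phiijell}) and that it intertwines the action and the Lie bracket of the inherited derivations (Lemma~\ref{lem inherited derivations relations}). Throughout I treat homogeneous forms: $\omega \in \OmegaDer^p(\algA)$ and $\omega' \in \OmegaDer^q(\algA)$, with $\phi$-compatible partners $\eta \in \OmegaDer^p(\algB)$ and $\eta' \in \OmegaDer^q(\algB)$, and once $i$, $j$, $\ell$ (with $1 \leq \ell \leq \alpha_{ji}$) and basis derivations $\partial^{i}_{\!\!\algA,\alpha_1}, \dots$ are fixed I abbreviate $\kY_k \defeq \phi_{i}^{j,\ell}(\partial^{i}_{\!\!\algA,\alpha_k}) = \partial^{j}_{\algB,(i,\ell,\alpha_k)}$. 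Degrees match automatically, since $\omega\wedge\omega'$ and $\eta\wedge\eta'$ are both of degree $p+q$, while $\dd\omega$ and $\dd\eta$ are both of degree $p+1$.

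For the wedge product I would expand $(\omega_i \wedge \omega'_i)(\partial^{i}_{\!\!\algA,\alpha_1}, \dots, \partial^{i}_{\!\!\algA,\alpha_{p+q}})$ by the product formula \eqref{eq def form product} and apply $\phi_{i}^{j,\ell}$. Being an algebra morphism, $\phi_{i}^{j,\ell}$ is linear and multiplicative, so it passes through the sum over $\kS_{p+q}$ and splits each summand $\omega_i(\cdots)\,\omega'_i(\cdots)$ as $\phi_{i}^{j,\ell}(\omega_i(\cdots))\,\phi_{i}^{j,\ell}(\omega'_i(\cdots))$. Applying the compatibility hypotheses to the two factors turns them into $\eta_j(\kY_{\sigma(1)}, \dots)$ and $\eta'_j(\kY_{\sigma(p+1)}, \dots)$, and the remaining expression is exactly \eqref{eq def form product} for $(\eta_j \wedge \eta'_j)(\kY_1, \dots, \kY_{p+q})$, which is the required identity for $\omega\wedge\omega'$ and $\eta\wedge\eta'$.

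For the differential I would expand $(\dd_i \omega_i)(\partial^{i}_{\!\!\algA,\alpha_1}, \dots, \partial^{i}_{\!\!\algA,\alpha_{p+1}})$ by the Koszul formula \eqref{eq def differential} and push the linear map $\phi_{i}^{j,\ell}$ through it term by term. In the action terms, $\phi_{i}^{j,\ell}(\partial^{i}_{\!\!\algA,\alpha_k} \cdotaction \omega_i(\cdots))$ becomes $\kY_k \cdotaction \phi_{i}^{j,\ell}(\omega_i(\cdots))$ by the first identity of Lemma~\ref{lem inherited derivations relations} (with $i'=i$, $\ell'=\ell$), and then $\kY_k \cdotaction \eta_j(\dots \omi{k} \dots)$ by compatibility of $\omega$ with $\eta$. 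In the bracket terms, the second identity of Lemma~\ref{lem inherited derivations relations} gives $\phi_{i}^{j,\ell}([\partial^{i}_{\!\!\algA,\alpha_k}, \partial^{i}_{\!\!\algA,\alpha_l}]) = [\kY_k, \kY_l]$, and compatibility then produces $\eta_j([\kY_k, \kY_l], \dots)$. Reassembling the two families of terms is precisely the Koszul formula \eqref{eq def differential} for $(\dd_j \eta_j)(\kY_1, \dots, \kY_{p+1})$, giving the required identity for $\dd\omega$ and $\dd\eta$.

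The only delicate point — and the main (mild) obstacle — is that the bracket $[\partial^{i}_{\!\!\algA,\alpha_k}, \partial^{i}_{\!\!\algA,\alpha_l}] = C^\gamma_{\alpha_k\alpha_l}\,\partial^{i}_{\!\!\algA,\gamma}$ occurring in the Koszul formula is a linear combination rather than a single basis derivation, whereas \eqref{eq forms compatibility} is stated only on basis derivations. I would first observe that both sides of \eqref{eq forms compatibility} are $\bbC$-multilinear in their derivation arguments: $\omega_i$ and $\eta_j$ are $\calZ$-multilinear with $\calZ(M_{n_i}) = \bbC\bbbone$, while $\phi_{i}^{j,\ell}$ is $\bbC$-linear on $M_{n_i}$ and hence, via $\phi_{i}^{j,\ell}(\ad_a) = \ad_{\phi_{i}^{j,\ell}(a)}$, on $\Int(M_{n_i}) = \Der(\algA_i)$. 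Therefore compatibility extends from the chosen basis to arbitrary derivations, and in particular applies with $[\partial^{i}_{\!\!\algA,\alpha_k}, \partial^{i}_{\!\!\algA,\alpha_l}]$ in the first slot. Once this is recorded, both computations reduce to routine bookkeeping with the two lemmas.
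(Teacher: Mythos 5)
Your proof is correct and takes essentially the same approach as the paper's: expand \eqref{eq def form product} and the Koszul formula \eqref{eq def differential}, then push $\phi_{i}^{j,\ell}$ through each term using its multiplicativity and Lemma~\ref{lem inherited derivations relations}; that you compute from the $\algA$ side while the paper inserts \eqref{eq forms compatibility} starting from the $\algB$ side is only a cosmetic reversal of the same chain of equalities. Your explicit observation that \eqref{eq forms compatibility} extends by $\bbC$-multilinearity from basis derivations to arbitrary ones, such as $[\partial^{i}_{\!\!\algA,\alpha_k},\partial^{i}_{\!\!\algA,\alpha_{k'}}]$, fills in a step the paper's proof uses tacitly in its bracket terms.
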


\begin{proof}
Since the product of forms decompose along the indices $i$ and $j$, we fix these indices in the proof and we suppose that the degres of $\omega_i$ and $\omega'_i$ are $p$ and $p'$ respectively. Inserting the RHS of \eqref{eq forms compatibility} into \eqref{eq def form product} one has
\begin{align*}
(\eta_j \wedge \eta'_j) & \left( \phi_{i}^{j, \ell}(\partial^{i}_{\!\!\algA, \alpha_1}), \dots,  \phi_{i}^{j, \ell}(\partial^{i}_{\!\!\algA, \alpha_{p+p'}}) \right)
\\
&= \begin{multlined}[t]
\frac{1}{p!p'!} \!\! \sum_{\sigma\in \kS_{p+p'}} (-1)^{\abs{\sigma}} 
\eta_j \left( \! \phi_{i}^{j, \ell}(\partial^{i}_{\!\!\algA, \alpha_{\sigma(1)}}), \dots,  \phi_{i}^{j, \ell}(\partial^{i}_{\!\!\algA, \alpha_{\sigma(p)}}) \! \right)
\\
\eta'_j \left( \! \phi_{i}^{j, \ell}(\partial^{i}_{\!\!\algA, \alpha_{\sigma(p+1)}}), \dots,  \phi_{i}^{j, \ell}(\partial^{i}_{\!\!\algA, \alpha_{\sigma(p+p')}}) \! \right)
\end{multlined}
\\
&= \begin{multlined}[t]
\frac{1}{p!p'!} \sum_{\sigma\in \kS_{p+p'}} (-1)^{\abs{\sigma}} 
\phi_{i}^{j, \ell} \Big(
\omega_i ( \partial^{i}_{\!\!\algA, \alpha_{\sigma(1)}}, \dots, \partial^{i}_{\!\!\algA, \alpha_{\sigma(p)}} )
\\
\omega'_i ( \partial^{i}_{\!\!\algA, \alpha_{\sigma(p+1)}}, \dots, \partial^{i}_{\!\!\algA, \alpha_{\sigma(p+p')}} )
\Big)
\end{multlined}
\\
&= \phi_{i}^{j, \ell} \left( (\omega_i \wedge \omega'_i) ( \partial^{i}_{\!\!\algA, \alpha_1}, \dots, \partial^{i}_{\!\!\algA, \alpha_{p+p'}} ) \right)
\end{align*}
In the same way, inserting the RHS of \eqref{eq forms compatibility} into \eqref{eq def differential} one has
\begin{align*}
(\dd_j \eta_j) &\left( \phi_{i}^{j, \ell}(\partial^{i}_{\!\!\algA, \alpha_1}), \dots,  \phi_{i}^{j, \ell}(\partial^{i}_{\!\!\algA, \alpha_{p+1}}) \right)
\\
&= 
\begin{multlined}[t]
\sum_{k=1}^{p+1} (-1)^{k+1}
\phi_{i}^{j, \ell}(\partial^{i}_{\!\!\algA, \alpha_{k}}) \cdotaction \eta_j \left(
\phi_{i}^{j, \ell}(\partial^{i}_{\!\!\algA, \alpha_1}), \dots \omi{k} \dots, \phi_{i}^{j, \ell}(\partial^{i}_{\!\!\algA, \alpha_{p+1}})
\right)
\\
+ \sum_{1 \leq k < k' \leq p+1} (-1)^{k+k'}
\eta_j \Big(
\left[ \phi_{i}^{j, \ell}(\partial^{i}_{\!\!\algA, \alpha_{k}}), \phi_{i}^{j, \ell}(\partial^{i}_{\!\!\algA, \alpha_{k'}}) \right], 
\\
\dots \omi{i} \dots \omi{j} \dots,
\phi_{i}^{j, \ell}(\partial^{i}_{\!\!\algA, \alpha_{p+1}})
\Big)
\end{multlined}
\\
&= \begin{multlined}[t]
\sum_{k=1}^{p+1} (-1)^{k+1}
\phi_{i}^{j, \ell} \left(
\partial^{i}_{\!\!\algA, \alpha_{k}} \cdotaction \omega_i( \partial^{i}_{\!\!\algA, \alpha_1}, \dots \omi{k} \dots, \partial^{i}_{\!\!\algA, \alpha_{p+1}} )
\right)
\\
+ \sum_{1 \leq k < k' \leq p+1} (-1)^{k+k'}
\phi_{i}^{j, \ell} \left(
\omega_i( [\partial^{i}_{\!\!\algA, \alpha_{k}}, \partial^{i}_{\!\!\algA, \alpha_{k'}}], \dots \omi{k} \dots, \partial^{i}_{\!\!\algA, \alpha_{p+1}} )
\right)
\end{multlined}
\\
&= \phi_{i}^{j, \ell} \left(
(\dd_i \omega_i) ( \partial^{i}_{\!\!\algA, \alpha_1}, \dots, \partial^{i}_{\!\!\algA, \alpha_{p+1}} )
\right)
\end{align*}
\end{proof}

Let $\{ \theta^\alpha_{\!\!\algA, i} \}_{\alpha \in I_i}$ be the dual basis of $\{ \partial^{i}_{\!\!\algA, \alpha} \}_{\alpha \in I_{i}}$. Then one has $\theta^\alpha_{\!\!\algA, i} ( \partial^{i'}_{\!\!\algA, \alpha'} ) = \delta_{i}^{i'} \delta_{\alpha'}^{\alpha'}$. In the same way, denote by $\{ \theta^\beta_{\algB, j} \}_{\beta \in J_{j}}$ the dual basis of $\{ \partial^{j}_{\algB, \beta} \}_{\beta \in J_{j}}$.

\begin{remark}[$\phi$-compatibility and components of forms]
\label{ex one forms compatibility}
Let us first illustrate $\phi$-compatibility for $1$-forms. One has $\omega = \toplus_{i=1}^{r} \omega^{i}_{\alpha} \otimes \theta^\alpha_{\!\!\algA, i}$ for $\omega^{i}_{\alpha} \in \algA_i$ and $\eta = \toplus_{j=1}^{s} \eta^{j}_{\beta} \otimes \theta^\beta_{\algB, j}$ for $\eta^{j}_{\beta} \in \algB_j$. Then \eqref{eq forms compatibility} reduces to $\phi_{i}^{j, \ell} ( \omega^{i}_{\alpha} ) = \eta^{j}_{(i,\ell, \alpha)}$. This means that the components $\eta^{j}_{\beta}$ of $\eta$ in the “inherited directions” $\phi_{i}^{j, \ell}(\partial^{i}_{\!\!\algA, \alpha})$'s are inherited from $\omega$.

In the same way, for $p$-forms, the LHS of \eqref{eq forms compatibility} is non zero only for components along the $\beta$'s of the form $(i,\ell, \alpha)$ \emph{with the same couple} $(i, \ell)$, and these components are given by the RHS. So, all the components $\eta^{j}_{\beta_1 \dots \beta_p}$ in the “inherited directions” $\beta_1, \dots, \beta_p$ for $\beta_k = (i, \ell, \alpha_k)$ (same $i$ and $\ell$) are constrained by the $\phi$-compatibility condition.
\end{remark}

\medskip
Let $\nabla^{\modM}$ and $\nabla^{\modN}$ be two connections on a $\algA$-module $\modM$ and a $\algB$-module $\modN$ with an injective $\phi$-compatible map $\phiMod : \modM \to \modN$. We will used (as defined before) the maps $\phiMod[,i]^{j,\ell} : \modM_i \to \modN_j$. These connections define connections $\nabla^{\modM,i}$ on $\modM_i$ and $\nabla^{\modN,j}$ on $\modN_j$.

\begin{definition}
The two connections $\nabla^{\modM}$ and $\nabla^{\modN}$ are said to be $\phi$-compatible if and only if, for any $1 \leq i \leq r$, $1 \leq j \leq s$, $1 \leq \ell \leq \alpha_{ji}$, $\alpha \in I_{i}$, one has
\begin{align}
\label{eq def compatibility connections}
\phiMod[,i]^{j,\ell} \left(
\nabla^{\modM,i}_{\partial^{i}_{\!\!\algA, \alpha}} e_i
\right)
&= \nabla^{\modN,j}_{\phi_{i}^{j,\ell}(\partial^{i}_{\!\!\algA, \alpha})} \phiMod[,i]^{j,\ell} (e_i).
\end{align}
\end{definition}

When $\modM = \algA$ and $\modN = \algB$, one can introduce the connection $1$-forms $\omega_\modM$ and $\omega_\modN$ for $\nabla^{\modM}$ and $\nabla^{\modN}$. Then one has
\begin{lemma}
If $\omega_\modM$ and $\omega_\modN$ are $\phi$-compatible, then $\nabla^{\modM}$ and $\nabla^{\modN}$ are $\phi$-compatible. 
\end{lemma}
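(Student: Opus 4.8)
The plan is to verify directly the defining relation \eqref{eq def compatibility connections}, specialised to $\modM = \algA$ and $\modN = \algB$, in which case the canonical $\phi$-compatible map is $\phiMod = \phi$ itself and $\phiMod[,i]^{j,\ell} = \phi_{i}^{j, \ell}$. The essential input is the explicit description of a connection on a free rank-one module recalled in Sect.~\ref{sec NCGFT}: for the left module $\algA$ with connection $1$-form $\omega_\modM = \toplus_{i=1}^{r}\omega_{\modM,i}$, the induced connection $\nabla^{\modM,i}$ on $\modM_i = \algA_i$ reads $\nabla^{\modM,i}_{\kX_i} e_i = \kX_i \cdotaction e_i + e_i\, \omega_{\modM,i}(\kX_i)$, and likewise $\nabla^{\modN,j}_{\kY_j} f_j = \kY_j \cdotaction f_j + f_j\, \omega_{\modN,j}(\kY_j)$ on $\modN_j = \algB_j$, where $\omega_{\modM,i}$ and $\omega_{\modN,j}$ are the components of $\omega_\modM$ and $\omega_\modN$ given by Prop.~\ref{prop decomposition forms}.

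First I would expand the left-hand side of \eqref{eq def compatibility connections} by applying $\phi_{i}^{j, \ell}$ to $\nabla^{\modM,i}_{\partial^{i}_{\!\!\algA, \alpha}} e_i = \partial^{i}_{\!\!\algA, \alpha}\cdotaction e_i + e_i\,\omega_{\modM,i}(\partial^{i}_{\!\!\algA, \alpha})$. Because $\phi_{i}^{j, \ell}$ is a morphism of algebras onto its diagonal block (the multiplicativity being the case $i_1=i_2=i$, $\ell_1=\ell_2=\ell$ of Lemma~\ref{lem product phiijell}), it distributes over the product, giving
\[
\phi_{i}^{j, \ell}\!\left(\partial^{i}_{\!\!\algA, \alpha}\cdotaction e_i\right) + \phi_{i}^{j, \ell}(e_i)\,\phi_{i}^{j, \ell}\!\left(\omega_{\modM,i}(\partial^{i}_{\!\!\algA, \alpha})\right).
\]
Now $\phi$-compatibility of the forms $\omega_\modM$ and $\omega_\modN$ is exactly the $p=1$ instance of \eqref{eq forms compatibility} (see Remark~\ref{ex one forms compatibility}), which rewrites the last factor as $\omega_{\modN,j}\!\left(\phi_{i}^{j, \ell}(\partial^{i}_{\!\!\algA, \alpha})\right)$.

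Then I would expand the right-hand side $\nabla^{\modN,j}_{\phi_{i}^{j, \ell}(\partial^{i}_{\!\!\algA, \alpha})}\phi_{i}^{j, \ell}(e_i)$ using the connection formula on $\algB_j$ with $\kY_j = \phi_{i}^{j, \ell}(\partial^{i}_{\!\!\algA, \alpha}) = \partial^{j}_{\algB, (i,\ell,\alpha)}$ and $f_j = \phi_{i}^{j, \ell}(e_i)$. Its $1$-form term $f_j\,\omega_{\modN,j}(\kY_j)$ coincides with the second term produced on the left. For the derivative term $\partial^{j}_{\algB, (i,\ell,\alpha)}\cdotaction \phi_{i}^{j, \ell}(e_i)$, the first relation of Lemma~\ref{lem inherited derivations relations} (with $i'=i$, $\ell'=\ell$) gives $\phi_{i}^{j, \ell}(\partial^{i}_{\!\!\algA, \alpha}\cdotaction e_i)$, which matches the first term on the left. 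Hence the two sides agree and \eqref{eq def compatibility connections} holds.

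I do not expect any serious obstacle: the statement is a direct verification once the ingredients are assembled. The only point needing care is bookkeeping, namely recognising that the two operations appearing after expansion — the multiplicativity of $\phi_{i}^{j, \ell}$ on the $1$-form factor and the action of the inherited derivation $\partial^{j}_{\algB, (i,\ell,\alpha)}$ on the block-embedded element $\phi_{i}^{j, \ell}(e_i)$ — are precisely the two relations already established in Lemmas~\ref{lem product phiijell} and \ref{lem inherited derivations relations}, so that the derivative term and the $1$-form term can be matched independently.
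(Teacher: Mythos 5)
Your proof is correct and follows essentially the same route as the paper's: expand both sides via the connection $1$-form formula, use the algebra-morphism property of $\phi_{i}^{j,\ell}$ together with the $p=1$ case of \eqref{eq forms compatibility}, and match the derivative terms via Lemma~\ref{lem inherited derivations relations}. The only cosmetic difference is that the paper carries the redundant factor $\bbbone_{\algA_i}$ (so it must discard the projector $\phi_{i}^{j,\ell}(\bbbone_{\algA_i})$ by a small Leibniz argument), which your bookkeeping avoids.
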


\begin{proof}
Here we have $\phiMod = \phi$. Using $\algA_i \ni e_i = e_i \bbbone_{\algA_i}$ and the definitions of the connection $1$-forms, one has $\nabla^{\modM,i}_{\partial^{i}_{\!\!\algA, \alpha}} e_i = (\partial^{i}_{\!\!\algA, \alpha} \cdotaction e_i) \bbbone_{\algA_i} + e_i \omega_{\modM,i}( \partial^{i}_{\!\!\algA, \alpha} )$, so that
\begin{align*}
\phi_{i}^{j,\ell} \left(
\nabla^{\modM,i}_{\partial^{i}_{\!\!\algA, \alpha}} e_i
\right)
&= \phi_{i}^{j,\ell} \left( (\partial^{i}_{\!\!\algA, \alpha} \cdotaction e_i) \bbbone_{\algA_i} \right)
+ \phi_{i}^{j,\ell} \left( e_i \omega_{\modM,i}( \partial^{i}_{\!\!\algA, \alpha} ) \right)
\\
&= \left( \phi_{i}^{j,\ell}(\partial^{i}_{\!\!\algA, \alpha}) \cdotaction \phi_{i}^{j,\ell}(e_i) \right) \phi_{i}^{j,\ell}( \bbbone_{\algA_i} )
+ \phi_{i}^{j,\ell}(e_i) 
\omega_{\modN,j} \left( \phi_{i}^{j,\ell}(\partial^{i}_{\!\!\algA, \alpha}) \right)
\\
&= \left( \phi_{i}^{j,\ell}(\partial^{i}_{\!\!\algA, \alpha}) \cdotaction \phi_{i}^{j,\ell}(e_i) \right) \phi_{i}^{j,\ell}( \bbbone_{\algA_i} )
+ \phi_{i}^{j,\ell}(e_i) 
\nabla^{\modN,j}_{\phi_{i}^{j,\ell}(\partial^{i}_{\!\!\algA, \alpha})} \bbbone_{\algB_j}
\\
&= \phi_{i}^{j,\ell}(\partial^{i}_{\!\!\algA, \alpha}) \cdotaction \phi_{i}^{j,\ell}(e_i)
+ \phi_{i}^{j,\ell}(e_i) \nabla^{\modN,j}_{\phi_{i}^{j,\ell}(\partial^{i}_{\!\!\algA, \alpha})} \bbbone_{\algB_j}
\\
&= \nabla^{\modN,j}_{\phi_{i}^{j,\ell}(\partial^{i}_{\!\!\algA, \alpha})} \phi_{i}^{j,\ell}(e_i)
\end{align*}
where we have used $\phi_{i}^{j,\ell}(e_i) = \phi_{i}^{j,\ell}(e_i) \phi_{i}^{j,\ell}( \bbbone_{\algA_i} )$ and $\phi_{i}^{j,\ell}(\partial^{i}_{\!\!\algA, \alpha}) \cdotaction \phi_{i}^{j,\ell}( \bbbone_{\algA_i} ) = \phi_{i}^{j,\ell}( \partial^{i}_{\!\!\algA, \alpha} \cdotaction \bbbone_{\algA_i} ) = 0$ so that, using the Leibniz rule, $\left( \phi_{i}^{j,\ell}(\partial^{i}_{\!\!\algA, \alpha}) \cdotaction \phi_{i}^{j,\ell}(e_i) \right) \phi_{i}^{j,\ell}( \bbbone_{\algA_i} ) = \phi_{i}^{j,\ell}(\partial^{i}_{\!\!\algA, \alpha}) \cdotaction \phi_{i}^{j,\ell}(e_i)$.
\end{proof}

\begin{remark}
Let us stress that the reverse of this Lemma is not true: $\phi$-compatibility between connections is weaker than $\phi$-compatibility between their connection $1$-forms. Indeed, let us assume that the two connections are $\phi$-compatible, that is that the first and last expressions are equal in the computation in the above proof. Then one can extract an equality at the second line between the connection $1$-forms: Take $e_i = \bbbone_{\algA_i}$ and since then the first terms in both sides are zero, one gets
\begin{align}
\label{eq weak form compatibility}
\phi_{i}^{j,\ell} \left( \omega_{\modM,i}( \partial^{i}_{\!\!\algA, \alpha} ) \right)
&= \phi_{i}^{j,\ell}(\bbbone_{\algA_i}) 
\omega_{\modN,j} \left( \phi_{i}^{j,\ell}(\partial^{i}_{\!\!\algA, \alpha}) \right)
\end{align}
This a weaker relation than \eqref{eq forms compatibility}. In Example~\ref{ex one forms compatibility}, we noticed that the $\phi$-compatibility \eqref{eq forms compatibility} between forms implies that all the values $\omega_{\modN,j} \left( \phi_{i}^{j,\ell}(\partial^{i}_{\!\!\algA, \alpha}) \right)$ come exactly from the values $\omega_{\modM,i}( \partial^{i}_{\!\!\algA, \alpha} )$. What \eqref{eq weak form compatibility} says is that $\phi_{i}^{j,\ell}(\bbbone_{\algA_i})$, as a projector, select only a part of the matrix $\omega_{\modN,j} \left( \phi_{i}^{j,\ell}(\partial^{i}_{\!\!\algA, \alpha}) \right)$ to be compared with the matrix $\omega_{\modM,i}( \partial^{i}_{\!\!\algA, \alpha} )$. So some parts of the matrix $\omega_{\modN,j} \left( \phi_{i}^{j,\ell}(\partial^{i}_{\!\!\algA, \alpha}) \right)$ may not be inherited. 

For practical reasons, we prefer to deal with the stronger $\phi$-compatibility condition, since it permits to “trace” (to “follow”) the degrees of freedom of the $\omega_{\modM,i}$'s “inside” the $\omega_{\modN,j}$'s. The weaker $\phi$-compatibility condition mix up these degrees of freedom into the matrices $\omega_{\modN,j} \left( \phi_{i}^{j,\ell}(\partial^{i}_{\!\!\algA, \alpha}) \right)$.
\end{remark}

\medskip
Let us consider the following specific situation for $\phi : \algA \to \algB$, where $\algA = \toplus_{i=1}^{r} M_{n_i}$ is embedded into $\algB = M_m$ with $m \geq \tsum_{i=1}^{r} n_i$ in such a way that each $M_{n_i}$ appears once and only once on the diagonal of $M_m$. We consider on $\algA$ and $\algB$ the integral defined in Sect.~\ref{sec metric hodge}.
\begin{proposition}
\label{prop eta start eta omega star omega}
Let $\omega = \toplus_{i=1}^{r} \omega_i \in \OmegaDer^\grast(\algA)$ and $\eta \in \OmegaDer^\grast(\algB)$ be such that $\eta$ is $\phi$-compatible with $\omega$ and $\eta$ vanishes on every derivation $\partial_{\algB, \beta}$ with $\beta \in J^c$ (here we omit the $j=1$ index). Then
\begin{align*}
\int_{\algB} \eta \wedge \hstar_\algB \eta
&=
\int_{\algA} \omega \wedge \hstar_\algA \omega
\end{align*}
\end{proposition}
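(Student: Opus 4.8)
The plan is to reduce both sides to the explicit bilinear ``scalar product'' formula for a single matrix factor and then match them block by block using the structure inherited from $\phi$. Since $\algB=M_m$ is a single matrix algebra, $\int_\algB\eta\wedge\hstar_\algB\eta$ can be computed directly from the Hodge formula \eqref{eq def hodge star} (equivalently from the explicit pairing \eqref{eq hodge star omega omega'}): writing $\eta=\tfrac1{p!}\eta_{\beta_1\dots\beta_p}\theta^{\beta_1}_{\algB}\wedge\cdots\wedge\theta^{\beta_p}_{\algB}$ in the basis $\{\partial_{\algB,\beta}\}_{\beta\in J}$, $J=J^\phi\cup J^c$, I would obtain
\[
\int_\algB\eta\wedge\hstar_\algB\eta=\tfrac1{p!}\,g_\algB^{\beta_1\gamma_1}\cdots g_\algB^{\beta_p\gamma_p}\,\tr\bigl(\eta_{\beta_1\dots\beta_p}\,\eta_{\gamma_1\dots\gamma_p}\bigr),
\]
the indices running over $J$. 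On the other side, \eqref{eq def scalar product forms} already gives $\int_\algA\omega\wedge\hstar_\algA\omega=\sum_{i=1}^r\int_i\omega_i\wedge\hstar_i\omega_i$, and each term is the analogous $M_{n_i}$-pairing $\tfrac1{p!}g_{\algA,i}^{\alpha_1\alpha_1'}\cdots g_{\algA,i}^{\alpha_p\alpha_p'}\tr\bigl((\omega_i)_{\alpha_1\dots\alpha_p}(\omega_i)_{\alpha_1'\dots\alpha_p'}\bigr)$. So the whole task is to show that the $M_m$-sum collapses precisely onto these block-diagonal pieces.

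The key steps would go as follows. First, since $\eta$ vanishes on every $\partial_{\algB,\beta}$ with $\beta\in J^c$, all indices restrict to $J^\phi$. Next, by \eqref{eq gB and gA} together with the block-orthogonality \eqref{eq block orthogonality basis}, the inverse metric is block diagonal, so $g_\algB^{\beta_k\gamma_k}$ is nonzero only when $\beta_k=(i_k,1,\alpha_k)$ and $\gamma_k=(i_k,1,\alpha_k')$ share the same block index $i_k$ (recall $\alpha_{1i}=1$, hence $\ell=1$ throughout), in which case it equals $g_{\algA,i_k}^{\alpha_k\alpha_k'}$. The decisive point is that the nonzero components of $\eta$ are the inherited ``pure-block'' ones, for which all $i_k$ coincide, say $=i$; on such a tuple the $\phi$-compatibility \eqref{eq forms compatibility} gives $\eta_{(i,1,\alpha_1)\dots(i,1,\alpha_p)}=\phi_i^{1,1}\bigl((\omega_i)_{\alpha_1\dots\alpha_p}\bigr)$, a matrix supported in the $i$-th diagonal block of $M_m$. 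Matching block indices then forces $\gamma$ to be pure-block with the \emph{same} $i$, so that, using $\phi_i^{1,1}(X)\phi_i^{1,1}(Y)=\phi_i^{1,1}(XY)$ from Lemma~\ref{lem product phiijell} and $\tr_{M_m}\circ\phi_i^{1,1}=\tr_{M_{n_i}}$, each pure-block contribution reduces to $\tfrac1{p!}g_{\algA,i}^{\alpha_1\alpha_1'}\cdots\tr_{M_{n_i}}\bigl((\omega_i)_{\alpha_1\dots\alpha_p}(\omega_i)_{\alpha_1'\dots\alpha_p'}\bigr)$. Summing over $i$ reproduces exactly $\sum_i\int_i\omega_i\wedge\hstar_i\omega_i$.

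The hard part will be the control of the non-pure-block terms: a priori a form that is merely $\phi$-compatible and $J^c$-vanishing could carry components $\eta_{(i_1,1,\alpha_1)\dots(i_p,1,\alpha_p)}$ mixing several blocks, which are not pinned down by \eqref{eq forms compatibility}, and one must use that for the $\eta$ under consideration these cross-block components are absent (its nonzero components being exactly the inherited ones), so that together with the block-diagonality of $g_\algB$ no cross-block contribution survives. A structurally cleaner route would be to first establish, as the single-factor analogue of Lemma~\ref{lemma *omega decomposed}, that $\hstar_\algB$ respects the $g_\algB$-orthogonal splitting $\Der(M_m)=\bigoplus_i V_i^\phi\oplus V^c$ and that $\eta$ lives in the $\bigoplus_i V_i^\phi$ sector, which renders the decoupling automatic. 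The remaining care concerns only the combinatorial normalisations and the matrix ordering in the pairing formula, and the use of the multiplicity-one hypothesis, without which $\tr_{M_m}\circ\phi_i^{1,1}$ would pick up a multiplicity factor and the two integrals would no longer agree.
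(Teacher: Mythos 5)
Your proposal is correct, but it takes a genuinely different route from the paper's. You compute both integrals in coordinates from the pairing formula \eqref{eq hodge star omega omega'} and decouple the blocks by hand: the $J^c$-vanishing restricts all indices to $J^\phi$; the block-diagonality of the inverse metric (from \eqref{eq gB and gA} and \eqref{eq block orthogonality basis}) pairs only same-block indices; $\phi$-compatibility \eqref{eq forms compatibility} identifies each pure-block component of $\eta$ with $\phi_i^{1,1}$ applied to the corresponding component of $\omega_i$; and Lemma~\ref{lem product phiijell} together with $\tr\bigl(\phi_i^{1,1}(a)\bigr)=\tr(a)$ (multiplicity one) collapses each block to the $M_{n_i}$ pairing, whose sum over $i$ is $\int_\algA\omega\wedge\hstar_\algA\omega$ by \eqref{eq def scalar product forms}. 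The paper argues structurally instead: it writes $\Der(\algB)=\toplus_{i}\Der\bigl(\phi_i(M_{n_i})\bigr)\oplus\Der(\algB)^c$ as a $g_\algB$-orthogonal splitting, notes that $\eta=\tsum_i\eta_i$ with each $\eta_i$ supported on the $i$-th diagonal block, and invokes the appendix Lemma~\ref{lemma *omega decomposed} (with $\tau$ the trace on $M_m$) to split $\int_\algB\eta\wedge\hstar_\algB\eta$ into $\tsum_i\int_{\phi_i(M_{n_i})}\eta_i\wedge\hstar_{\phi_i(M_{n_i})}\eta_i$, each term being identified with $\int_{M_{n_i}}\omega_i\wedge\hstar_{M_{n_i}}\omega_i$ under $\Der\bigl(\phi_i(M_{n_i})\bigr)\simeq\Der(M_{n_i})$; this is exactly the ``structurally cleaner route'' you sketch at the end. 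What the paper's route buys is that all the index and shuffle combinatorics is done once and for all in Lemma~\ref{lemma *omega decomposed}, and non-homogeneous forms are handled automatically; what your route buys is self-containedness and an explicit view of where multiplicity one enters (with multiplicities $\alpha_i>1$ the traces pick up factors $\alpha_i$, which is precisely the Corollary stated right after the Proposition). Finally, the point you single out as the hard part --- cross-block components $\eta_{(i_1,1,\alpha_1)\dots(i_p,1,\alpha_p)}$ with distinct $i_k$'s, which are constrained neither by \eqref{eq forms compatibility} nor by the $J^c$-vanishing --- is treated no more rigorously in the paper, whose proof simply asserts that the inherited values ``are the only possible values for $\eta$''. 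Both arguments therefore rest on the same reading of the hypotheses, namely that the nonzero components of $\eta$ are exactly the inherited ones; you are at least explicit about needing it.
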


\begin{proof}
Since $j=1$ and $\ell =1$ are the only possible values, to simplify the notations we write $\phi_i$ for $\phi_i^{j, \ell}$. We will  refer to $\phi_i(M_{n_i}) \subset M_m$ as the $i$-th block on the diagonal of $M_m$. The metric $g_\algB$ induces an orthogonal decomposition of the derivations of $\algB$ which is compatible with these blocks.  Two derivations $\ad_E$ and $\ad_{E'}$ such that $E \in \phi_i(M_{n_i})$ and $E'$ has no non vanishing entries in $\phi_i(M_{n_i})$ are orthogonal. So, $\Der(\algB) = \toplus_{i=1}^{r} \Der( \phi_i(M_{n_i}) ) \oplus \Der(\algB)^{c}$ where the decomposition is orthogonal for $g_\algB$ and $\Der(\algB)^{c}$ are derivations “outside” of the $\phi_i(M_{n_i})$'s.

For any $p$ one has $\eta( \phi_i(\partial^{i}_{\!\!\algA, \alpha_1}), \dots, \phi_i(\partial^{i}_{\!\!\algA, \alpha_{p}}) ) = \phi_i \left( \omega_i( \partial^{i}_{\!\!\algA, \alpha_1}, \dots, \partial^{i}_{\!\!\algA, \alpha_{p}} ) \right)$.\footnote{Since the $\omega_i$ and $\eta$ are not supposed to be homogeneous in $\OmegaDer^\grast(\algA_i)$ and $\OmegaDer^\grast(\algB)$, one has to evaluate them against any number of derivations.} This are the only possible values for $\eta$. In particular $\eta$ vanishes on any derivation in $\Der(\algB)^{c}$. Notice that $\phi_i \left( \omega_i( \partial^{i}_{\!\!\algA, \alpha_1}, \dots, \partial^{i}_{\!\!\algA, \alpha_{p}} ) \right)$ is $\omega_i( \partial^{i}_{\!\!\algA, \alpha_1}, \dots, \partial^{i}_{\!\!\algA, \alpha_{p}} ) \in M_{n_i}$ at the $i$-th block on the diagonal of $M_m$. We can then write $\eta = \tsum_{i=1}^{r} \eta_i$ where $\eta_i \in \phi_i(M_{n_i}) \otimes \exter^\grast \Der( \phi_i(M_{n_i}) )^* \subset M_m \otimes \exter^\grast \Der( \phi_i(M_{n_i}) )^*$. 

Let us define the metric $g_{\phi_i(M_{n_i})}$ as the restriction of $g_\algB$ to $\Der( \phi_i(M_{n_i}) )$. From this metric we construct its Hodge $\hstar$-operator $\hstar_{\phi_i(M_{n_i})}$ and its noncommutative integral $\int_{\phi_i(M_{n_i})}$ along the derivations of the matrix block $\phi_i(M_{n_i})$. Then we can apply Lemma~\ref{lemma *omega decomposed}, where the linear form $\tau$ is the ordinary trace on $M_m$, to get
\begin{align*}
\int_{\algB} \eta \wedge \hstar_\algB \eta
&= \tsum_{i=1}^{r} \int_{\phi_i(M_{n_i})} \eta_i \wedge \hstar_{\phi_i(M_{n_i})} \eta_i
\end{align*}
One has $\Der( \phi_i(M_{n_i}) ) \simeq \Der(M_{n_i})$ and, using \eqref{eq gB and gA}, $g_{\phi_i(M_{n_i})}$ becomes $g_{\algA, i}$ in this identification, so that  $\hstar_{\phi_i(M_{n_i})}$ corresponds to $\hstar_{M_{n_i}}$. In this identification, the form $\eta_i$ is then $\omega_i$, so that $\int_{\phi_i(M_{n_i})} \eta_i \wedge \hstar_{\phi_i(M_{n_i})} \eta_i = \int_{M_{n_i}} \omega_i \wedge \hstar_{M_{n_i}} \omega_i$. This shows that $\int_{\algB} \eta \wedge \hstar_\algB \eta = \tsum_{i=1}^{r} \int_{M_{n_i}} \omega_i \wedge \hstar_{M_{n_i}} \omega_i = \int_{\algA} \omega \wedge \hstar_\algA \omega$ by \eqref{eq def scalar product forms}.
\end{proof}

A slight adjustment of the proof of Prop.~\ref{prop eta start eta omega star omega} gives (with the same notations):
\begin{corollary}
Suppose that $\phi : \algA \to \algB$ includes $\alpha_i$ times $M_{n_i}$ on the diagonal of $M_m$. Then, with the same assumptions, one has
\begin{align*}
\int_{\algB} \eta \wedge \hstar_\algB \eta
&=
\tsum_{i=1}^{r} \alpha_{i} \int_{i} \omega_i \wedge \hstar_i \omega_i
\end{align*}
\end{corollary}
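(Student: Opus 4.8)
The plan is to follow the proof of Proposition~\ref{prop eta start eta omega star omega} almost verbatim, refining the orthogonal block decomposition of $\Der(\algB)$ so that it runs over the pairs $(i,\ell)$ with $1 \leq i \leq r$ and $1 \leq \ell \leq \alpha_i$, rather than over $i$ alone. Since each $M_{n_i}$ now appears $\alpha_i$ times on the diagonal of $M_m$, there are $\alpha_i$ distinct blocks $\phi_i^\ell(M_{n_i})$ (one for each $\ell$), and the whole argument reduces to checking that every one of these blocks contributes the same quantity $\int_i \omega_i \wedge \hstar_i \omega_i$, whence the multiplicity factor $\alpha_i$.

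First I would record the orthogonal decomposition. By \eqref{eq gB and gA} (dropping the $j$ index as in the statement), the metric $g_\algB$ is diagonal by blocks along the divisions indexed by the couples $(i,\ell)$, so
\begin{align*}
\Der(\algB) = \toplus_{i=1}^{r} \toplus_{\ell=1}^{\alpha_i} \Der\left( \phi_i^\ell(M_{n_i}) \right) \oplus \Der(\algB)^{c}
\end{align*}
is an orthogonal decomposition for $g_\algB$, where $\Der(\algB)^{c}$ collects the derivations associated to $J^c$. The hypotheses on $\eta$ (namely that $\eta$ is $\phi$-compatible with $\omega$ and vanishes on every $\partial_{\algB,\beta}$ with $\beta \in J^c$) then force $\eta = \tsum_{i=1}^{r} \tsum_{\ell=1}^{\alpha_i} \eta_{i,\ell}$, where each $\eta_{i,\ell}$ is supported on the single block $\phi_i^\ell(M_{n_i})$ and is fixed by the compatibility relation \eqref{eq forms compatibility}, that is $\eta_{i,\ell}(\phi_i^\ell(\partial^{i}_{\!\!\algA,\alpha_1}), \dots) = \phi_i^\ell(\omega_i(\partial^{i}_{\!\!\algA,\alpha_1}, \dots))$.

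Next I would invoke Lemma~\ref{lemma *omega decomposed}, taking $\tau$ to be the ordinary trace on $M_m$, to split the global integral along the orthogonal blocks:
\begin{align*}
\int_{\algB} \eta \wedge \hstar_\algB \eta = \tsum_{i=1}^{r} \tsum_{\ell=1}^{\alpha_i} \int_{\phi_i^\ell(M_{n_i})} \eta_{i,\ell} \wedge \hstar_{\phi_i^\ell(M_{n_i})} \eta_{i,\ell}.
\end{align*}
For each fixed $(i,\ell)$ the identification $\Der(\phi_i^\ell(M_{n_i})) \simeq \Der(M_{n_i})$ carries $g_{\phi_i^\ell(M_{n_i})}$ to $g_{\algA,i}$ by \eqref{eq gB and gA}, hence $\hstar_{\phi_i^\ell(M_{n_i})}$ to $\hstar_{M_{n_i}}$ and $\eta_{i,\ell}$ to $\omega_i$; consequently each term equals $\int_{M_{n_i}} \omega_i \wedge \hstar_{M_{n_i}} \omega_i = \int_i \omega_i \wedge \hstar_i \omega_i$.

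The decisive observation — and the only point where the argument departs from Proposition~\ref{prop eta start eta omega star omega} — is that this value is \emph{independent of} $\ell$: the right-hand side of \eqref{eq forms compatibility} reads $\phi_i^\ell(\omega_i(\dots))$ with the \emph{same} $\omega_i$ for every $\ell$, so the $\alpha_i$ blocks are genuinely identical copies, all reproducing the same trace via \eqref{eq gB and gA}. The inner sum over $\ell$ therefore collapses to the factor $\alpha_i$, and summing over $i$ yields $\int_{\algB} \eta \wedge \hstar_\algB \eta = \tsum_{i=1}^{r} \alpha_i \int_i \omega_i \wedge \hstar_i \omega_i$, as claimed. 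I expect no genuine obstacle here; the only care required is to keep the block bookkeeping over the pairs $(i,\ell)$ consistent and to confirm that the $\ell$-independence of the compatibility constraint makes the copies equal.
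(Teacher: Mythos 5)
Your proof is correct and is precisely the argument the paper intends: the paper's own ``proof'' of this corollary is the one-line remark that it follows by ``a slight adjustment'' of the proof of Prop.~\ref{prop eta start eta omega star omega}, and your refinement of the orthogonal block decomposition from the index $i$ to the pairs $(i,\ell)$ --- with the observation that $\phi$-compatibility pins each $\eta_{i,\ell}$ to the \emph{same} $\omega_i$, so the $\alpha_i$ blocks contribute equal terms --- is exactly that adjustment.
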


%%%%%%%%%%%%%%%%
\section{Direct limit of NC gauge field theories}
\label{sect direct limit NCGFT}
%%%%%%%%%%%%%%%%

%%%%%%%%%%%%%%%%
\subsection{\texorpdfstring{$\phi$-compatibility of NC gauge field theories}{phi-compatibility of NC gauge field theories}}
\label{sec phi compatibility of NCGFT}

As mentioned in Sect.~\ref{sec AF algebras}, we will consider non unital $*$-homomorphisms $\phi : \algA = \toplus_{i=1}^{r} M_{n_i} \to \algB = \toplus_{j=1}^{s} M_{m_j}$. The reason for this choice is that we would like to cover physical situations where the gauge group are enlarged at each step of the defining inductive sequence $\{ (\algA_n, \phi_{n,m}) \, / \,  0 \leq n < m \}$. For instance, one may ask for the inclusion of $U(2)$ into $U(3)$, which can be performed in our framework by considering a natural inclusion $\phi : M_2 \to M_3$. This inclusion cannot be unital. A unital morphism would require for instance to consider the inclusion $\phi : M_1 \oplus M_2 \to M_3$, which may not correspond to a phenomenological requirement.

\smallskip
We first consider a NCGFT on the algebra $\algA = \toplus_{i=1}^{r} M_{n_i}$. Let us use the notations of Sect.~\ref{sec one step in the sequence}: for any $i$, let $\{ \partial^{i}_{\!\!\algA, \alpha} \defeq \ad_{E^{i}_{\!\!\algA, \alpha}} \}_{\alpha \in I_{i}}$ be an orthogonal basis  of $\Der(\algA_i) = \Int(M_{n_i})$ where $E^{i}_{\!\!\algA, \alpha} \in \ksl_{n_i}$ and $I_{i}$ is a totally ordered set of cardinal $n_i^2-1$, and let $\{ \theta^\alpha_{\!\!\algA, i} \}_{\alpha \in I_i}$ be the dual basis of $\{ \partial^{i}_{\!\!\algA, \alpha} \}_{\alpha \in I_{i}}$.

Since we are interested in the manipulation of connections as $1$-forms, we restrict our analysis to the left module $\modM = \algA$. From Example~\ref{example Mn} and the results in Sect.~\ref{sec modules and connections}, with obvious notations, a connection $1$-form can be written as $\omega =  \toplus_{i=1}^{r} \omega_i$ and its curvature $2$-form as $\Omega = \toplus_{i=1}^{r} \Omega_i$, with $\omega_i = \omega^{i}_{\alpha} \theta^\alpha_{\!\!\algA, i} = \mromega_{i} - B^{i}_{\!\!\algA, \alpha} \theta^\alpha_{\!\!\algA, i} = (E^{i}_{\!\!\algA, \alpha} - B^{i}_{\!\!\algA, \alpha}) \theta^\alpha_{\!\!\algA, i}$ and $\Omega_i = \tfrac{1}{2} \Omega^{i}_{\alpha_1\alpha_2} \theta^{\alpha_1}_{\!\!\algA, i} \wedge \theta^{\alpha_2}_{\!\!\algA, i}$ with $\Omega^{i}_{\alpha_1\alpha_2} = - ([B^{i}_{\!\!\algA, \alpha_1}, B^{i}_{\!\!\algA, \alpha_2}] - C(n_i)_{\alpha_1\alpha_2}^{\alpha_3} B^{i}_{\!\!\algA, \alpha_3})$ where $C(n_i)_{\alpha_1\alpha_2}^{\alpha_3}$ are the structure constants for the basis $\{ E^{i}_{\!\!\algA, \alpha} \}$ of $\ksl_{n_i}$.

The natural action for this NCGFT is then 
\begin{align*}
S &=
- \tsum_{i=1}^{r} \int_i \Omega_i \wedge \hstar_i \Omega_i
= - \tsum_{i=1}^{r} \tfrac{1}{2} \tr ( \Omega^{i}_{\alpha_1\alpha_2} \Omega^{i, \alpha_1\alpha_2})
\\
&= - \tsum_{i=1}^{r} \tfrac{1}{2} \tsum_{\alpha_1\alpha_2 \in I_i} \tr ( \Omega^{i}_{\alpha_1\alpha_2} )^2
\\
&= - \tsum_{i=1}^{r} \tfrac{1}{2} \tsum_{\alpha_1\alpha_2 \in I_i} \tr ( [B^{i}_{\!\!\algA, \alpha_1}, B^{i}_{\!\!\algA, \alpha_2}] - C(n_i)_{\alpha_1\alpha_2}^{\alpha_3} B^{i}_{\!\!\algA, \alpha_3} )^2
\end{align*}
where on the last line we have used the fact that the metric is diagonal.

As in Example~\ref{example C(M) otimes Mn}, one can also consider a NCGFT on the algebra $\halgA \defeq C^\infty(M) \otimes \algA = \toplus_{i=1}^{r} C^\infty(M) \otimes M_{n_i}$ for a manifold $M$. Then $\omega_i = A^{i}_{\!\!\algA,\mu} \dd x^\mu + (E^{i}_{\!\!\algA, \alpha} - B^{i}_{\!\!\algA, \alpha}) \theta^\alpha_{\!\!\algA, i}$ and $\Omega_i = \tfrac{1}{2} \Omega^{i}_{\mu_1\mu_2} \dd x^{\mu_1} \wedge \dd x^{\mu_2} + \Omega^{i}_{\mu \alpha} \dd x^\mu \wedge \theta^\alpha_{\!\!\algA, i} + \tfrac{1}{2}  \Omega^{i}_{\alpha_1\alpha_2} \theta^{\alpha_1}_{\!\!\algA, i} \wedge \theta^{\alpha_2}_{\!\!\algA, i}$ with
\begin{align*}
\Omega^{i}_{\mu_1\mu_2}
&= \partial_{\mu_1} A^{i}_{\!\!\algA, \mu_2} - \partial_{\mu_2} A^{i}_{\!\!\algA, \mu_1} - [A^{i}_{\!\!\algA, \mu_1}, A^{i}_{\!\!\algA, \mu_2}],
\\
\Omega^{i}_{\mu \alpha}
&= - ( \partial_\mu B^{i}_{\!\!\algA, \alpha} - [A^{i}_{\!\!\algA, \mu}, B^{i}_{\!\!\algA, \alpha}] ),
\\
\Omega^{i}_{\alpha_1\alpha_2}
&= - ( [B^{i}_{\!\!\algA, \alpha_1}, B^{i}_{\!\!\algA, \alpha_2}] - C(n_i)_{\alpha_1\alpha_2}^{\alpha_3} B^{i}_{\!\!\algA, \alpha_3} ).
\end{align*}
In that case, the natural action is
\begin{align*}
S &= \begin{multlined}[t]
- \tsum_{i=1}^r \int_M \Big(
\tfrac{1}{2} \tr( \Omega^{i}_{\mu_1\mu_2} \Omega^{i, \mu_1\mu_2} ) 
+ \tr( \Omega^{i}_{\mu \alpha} \Omega^{i, \mu \alpha} ) 
\\
+ \tfrac{1}{2} \tr( \Omega^{i}_{\alpha_1\alpha_2} \Omega^{i, \alpha_1\alpha_2} ) 
\Big) \sqrt{\abs{g_M}} \dd x
\end{multlined}
\end{align*}
where $g_M$ is a metric on $M$.

This action shares the same main feature as mentioned in Example~\ref{example C(M) otimes Mn}: it makes appear a SSBM thanks to the presence of the scalar fields $B^{i}_{\!\!\algA, \alpha} = B^{i, \alpha'}_{\!\!\algA, \alpha} E^{i}_{\!\!\algA, \alpha'} + i B^{i, 0}_{\!\!\algA, \alpha} \bbbone_{n_i}$ which can be non zero for a minimal configuration of the Higgs potential $- \tfrac{1}{2} \tsum_{i=1}^r \tr( \Omega^{i}_{\alpha_1\alpha_2} \Omega^{i, \alpha_1\alpha_2} )$. Then the couplings in $- \tsum_{i=1}^r \tr( \Omega^{i}_{\mu \alpha} \Omega^{i, \mu \alpha} )$ induce mass terms for the (gauge bosons) fields $A^{i}_{\!\!\algA,\mu} = A^{i, \alpha}_{\!\!\algA,\mu} E^{i}_{\!\!\algA, \alpha} + i A^{i, 0}_{\!\!\algA,\mu} \bbbone_{n_i}$. We will concentrate of this feature in the following.

\medskip
In order to simplify the analysis of the relation between NCGFT defined at each step of an inductive sequence of finite dimensional algebras $\{ (\algA_n, \phi_{n,m}) \, / \,  0 \leq n < m \}$, we will consider a unique inclusion $\phi : \algA \to \algB$ with $\algA = \toplus_{i=1}^{r} M_{n_i}$ and $\algB = \toplus_{j=1}^{s} M_{m_j}$ as in Sect.~\ref{sec one step in the sequence}.

Let $\eta = \toplus_{j=1}^{s} \eta_j$ be a connection $1$-form on $\algB$ for the module $\modN = \algB$. Denote by $\Theta = \toplus_{j=1}^{s} \Theta_j$ its curvature $2$-form. We use the notation $\eta_j = \eta^{j}_{\beta} \theta^\beta_{\!\algB, j} = (E^{j}_{\!\algB, \beta} - B^{j}_{\!\algB, \beta}) \theta^\beta_{\!\algB, j}$ and $\Theta_j = \tfrac{1}{2} \Theta^{j}_{\beta_1\beta_2} \theta^{\beta_1}_{\!\algB, j} \wedge \theta^{\beta_2}_{\!\algB, j}$ with $\Theta^{j}_{\beta_1\beta_2} = - ([B^{j}_{\!\algB, \beta_1}, B^{j}_{\!\algB, \beta_2}] - C(m_j)_{\beta_1\beta_2}^{\beta_3} B^{j}_{\!\algB, \beta_3})$ where $C(m_j)_{\beta_1\beta_2}^{\beta_3}$ are the structure constants for the basis $\{ E^{j}_{\!\algB, \beta} \}$ of $\ksl_{m_j}$.

We suppose that $\eta$ is $\phi$-compatible with $\omega$. This implies that, for all $\beta = (i, \ell, \alpha) \in  J^\phi_{j}$, $\eta^{j}_{\beta} = \phi_{i}^{j, \ell}(\omega^{i}_{\alpha})$ and that $B^{j}_{\!\algB, \beta} = \phi_{i}^{j, \ell}(B^{i}_{\!\!\algA, \alpha})$.

\begin{lemma}
\label{lem curvature inherited indices}
For any $\beta_1 = (i_1, \ell_1, \alpha_1), \beta_2 = (i_2, \ell_2, \alpha_2) \in J^\phi_{j}$, $\Theta^{j}_{\beta_1\beta_2} = 0$ for $i_1 \neq i_2$ or $\ell_1 \neq \ell_2$ and $\Theta^{j}_{\beta_1\beta_2} =\phi_{i}^{j, \ell}( \Omega^{i}_{\alpha_1\alpha_2})$ for $i = i_1 = i_2$ and $\ell = \ell_1 = \ell_2$.
\end{lemma}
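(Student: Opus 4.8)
The plan is to compute $\Theta^{j}_{\beta_1\beta_2} = -\left([B^{j}_{\!\algB, \beta_1}, B^{j}_{\!\algB, \beta_2}] - C(m_j)_{\beta_1\beta_2}^{\beta_3} B^{j}_{\!\algB, \beta_3}\right)$ directly, treating the commutator term and the structure-constant term separately and invoking $\phi$-compatibility at the outset. Since $\eta$ is $\phi$-compatible with $\omega$, for every inherited index $\beta=(i,\ell,\alpha)\in J^\phi_j$ one has $B^{j}_{\!\algB,(i,\ell,\alpha)}=\phi_{i}^{j,\ell}(B^{i}_{\!\!\algA,\alpha})$, and this is the relation I would substitute everywhere.

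First I would dispose of the commutator. Writing $B^{j}_{\!\algB,\beta_1}=\phi_{i_1}^{j,\ell_1}(B^{i_1}_{\!\!\algA,\alpha_1})$ and likewise for $\beta_2$, and expanding the bracket, Lemma~\ref{lem product phiijell} collapses each product $\phi_{i_1}^{j,\ell_1}(\cdot)\,\phi_{i_2}^{j,\ell_2}(\cdot)$ to a single term carrying $\delta_{i_1 i_2}\delta_{\ell_1\ell_2}$, giving
\[
[B^{j}_{\!\algB, \beta_1}, B^{j}_{\!\algB, \beta_2}] = \delta_{i_1 i_2}\,\delta_{\ell_1\ell_2}\,\phi_{i_1}^{j,\ell_1}\!\left([B^{i_1}_{\!\!\algA,\alpha_1}, B^{i_1}_{\!\!\algA,\alpha_2}]\right).
\]
This settles both regimes of the commutator term in one stroke: it vanishes unless $i_1=i_2$ and $\ell_1=\ell_2$.

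Next I would pin down the structure constants $C(m_j)_{\beta_1\beta_2}^{\beta_3}$ for $\beta_1,\beta_2\in J^\phi_j$. The bracket identity of Lemma~\ref{lem inherited derivations relations} gives $[\partial^{j}_{\algB,\beta_1},\partial^{j}_{\algB,\beta_2}]=\delta_{i_1 i_2}\delta_{\ell_1\ell_2}\,\phi_{i_1}^{j,\ell_1}([\partial^{i_1}_{\!\!\algA,\alpha_1},\partial^{i_1}_{\!\!\algA,\alpha_2}])$. When $(i_1,\ell_1)\neq(i_2,\ell_2)$ this is zero, so (the $\partial^{j}_{\algB,\beta}$ being a basis) all $C(m_j)_{\beta_1\beta_2}^{\beta_3}=0$. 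When $i=i_1=i_2$, $\ell=\ell_1=\ell_2$, the right-hand side equals $C(n_i)_{\alpha_1\alpha_2}^{\alpha_3}\,\partial^{j}_{\algB,(i,\ell,\alpha_3)}$ via the notation $\phi_{i}^{j,\ell}(\partial^{i}_{\!\!\algA,\alpha_3})=\partial^{j}_{\algB,(i,\ell,\alpha_3)}$; comparing against $C(m_j)_{\beta_1\beta_2}^{\beta_3}\partial^{j}_{\algB,\beta_3}$ shows that the only nonzero targets are the inherited indices sharing the same $(i,\ell)$, with $C(m_j)_{\beta_1\beta_2}^{(i,\ell,\alpha_3)}=C(n_i)_{\alpha_1\alpha_2}^{\alpha_3}$. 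Feeding this back and using $\phi$-compatibility once more, $C(m_j)_{\beta_1\beta_2}^{\beta_3}B^{j}_{\!\algB,\beta_3}=\phi_{i}^{j,\ell}\!\left(C(n_i)_{\alpha_1\alpha_2}^{\alpha_3}B^{i}_{\!\!\algA,\alpha_3}\right)$. Combining the two summands: for $(i_1,\ell_1)\neq(i_2,\ell_2)$ both vanish, so $\Theta^{j}_{\beta_1\beta_2}=0$; otherwise, factoring $\phi_{i}^{j,\ell}$ out by linearity yields $\Theta^{j}_{\beta_1\beta_2}=\phi_{i}^{j,\ell}(\Omega^{i}_{\alpha_1\alpha_2})$ by the defining expressions for $\Theta^j_{\beta_1\beta_2}$ and $\Omega^i_{\alpha_1\alpha_2}$.

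The one point that deserves care — and the only place the argument could quietly go wrong — is the claim that the structure constants $C(m_j)_{\beta_1\beta_2}^{\beta_3}$ with $\beta_1,\beta_2\in J^\phi_j$ have no \emph{leakage} into the complementary directions $\beta_3\in J^c_j$ (nor into inherited directions with a different $(i,\ell)$). This is precisely what Lemma~\ref{lem inherited derivations relations} guarantees: it expresses the bracket of two inherited derivations as a linear combination of inherited derivations carrying the \emph{same} pair $(i,\ell)$, so its decomposition in the full basis $\{\partial^{j}_{\algB,\beta}\}_{\beta\in J_j}$ is supported entirely on $\{(i,\ell,\alpha_3)\}_{\alpha_3\in I_i}$. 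Everything else is routine bookkeeping with the Kronecker deltas.
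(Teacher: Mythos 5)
Your proof is correct, and it is a genuinely different argument from the one the paper writes out: it is precisely the component-level alternative that the paper's proof only mentions in its final sentence (that the result is also a direct consequence of the expression of $\Theta^{j}_{\beta_1\beta_2}$ in terms of the $B^{j}_{\algB,\beta}$'s) without carrying it out. The paper instead argues intrinsically: it evaluates $\Theta_j$ on the pair of derivations $\phi_{i_1}^{j,\ell_1}(\partial^{i_1}_{\algA,\alpha_1})$ and $\phi_{i_2}^{j,\ell_2}(\partial^{i_2}_{\algA,\alpha_2})$ via $\Theta_j(\kX,\kY) = \dd\eta_j(\kX,\kY) - [\eta_j(\kX),\eta_j(\kY)]$, substitutes the $\phi$-compatibility relation $\eta_j\left(\phi_{i}^{j,\ell}(\partial^{i}_{\algA,\alpha})\right) = \phi_{i}^{j,\ell}\left(\omega_i(\partial^{i}_{\algA,\alpha})\right)$, and disposes of every term with Lemma~\ref{lem inherited derivations relations}, noting that the equal-index case also follows from Prop.~\ref{prop compatible forms product differential}; it never touches the structure constants of $\ksl_{m_j}$. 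By working in the fixed basis you take on the extra burden of identifying $C(m_j)_{\beta_1\beta_2}^{\beta_3}$ for $\beta_1,\beta_2\in J^\phi_{j}$, and you correctly discharge the one dangerous point --- possible leakage of $C(m_j)_{\beta_1\beta_2}^{\beta_3}$ into $\beta_3\in J^c_{j}$, or into inherited indices with a different pair $(i,\ell)$, which would drag unconstrained fields $B^{j}_{\algB,\beta_3}$ into the curvature --- by expanding the bracket identity of Lemma~\ref{lem inherited derivations relations} in the basis $\{\partial^{j}_{\algB,\beta}\}_{\beta\in J_{j}}$, the commutator term being handled by Lemma~\ref{lem product phiijell}. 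What each route buys: the paper's computation is shorter and avoids structure constants entirely, while yours yields as a by-product the explicit fact that the embedded copies of $\ksl_{n_i}$ close among themselves, with $C(m_j)_{(i,\ell,\alpha_1)(i,\ell,\alpha_2)}^{(i,\ell,\alpha_3)} = C(n_i)_{\alpha_1\alpha_2}^{\alpha_3}$ and all other components zero --- which is exactly the mechanism exploited right after the lemma in the decomposition of the action on $\algB$ into copies of terms of the action on $\algA$.
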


\begin{proof}
One has to evaluate 
\begin{align*}
\Theta_{j} & (\phi_{i_1}^{j, \ell_1}(\partial^{i_1}_{\!\!\algA, \alpha_1}), \phi_{i_2}^{j, \ell_2}(\partial^{i_2}_{\!\!\algA, \alpha_2})) 
\\
&= \begin{multlined}[t]
\phi_{i_1}^{j, \ell_1}(\partial^{i_1}_{\!\!\algA, \alpha_1}) \cdotaction \eta_{j}(\phi_{i_2}^{j, \ell_2}(\partial^{i_2}_{\!\!\algA, \alpha_2}))
- \phi_{i_2}^{j, \ell_2}(\partial^{i_2}_{\!\!\algA, \alpha_2}) \cdotaction \eta_{j}(\phi_{i_1}^{j, \ell_1}(\partial^{i_1}_{\!\!\algA, \alpha_1}))
\\
- \eta_{j}( [\phi_{i_1}^{j, \ell_1}(\partial^{i_1}_{\!\!\algA, \alpha_1}), \phi_{i_2}^{j, \ell_2}(\partial^{i_2}_{\!\!\algA, \alpha_2})] )
- [ \eta_{j}( \phi_{i_1}^{j, \ell_1}(\partial^{i_1}_{\!\!\algA, \alpha_1}) ), \eta_{j}( \phi_{i_2}^{j, \ell_2}(\partial^{i_2}_{\!\!\algA, \alpha_2}) ) ]
\end{multlined}
\\
&= \begin{multlined}[t]
\phi_{i_1}^{j, \ell_1}(\partial^{i_1}_{\!\!\algA, \alpha_1}) \cdotaction \phi_{i_2}^{j, \ell_2}(\omega_{i_2}(\partial^{i_2}_{\!\!\algA, \alpha_2}))
- \phi_{i_2}^{j, \ell_2}(\partial^{i_2}_{\!\!\algA, \alpha_2}) \cdotaction \phi_{i_1}^{j, \ell_1}(\omega_{i_1}(\partial^{i_1}_{\!\!\algA, \alpha_1}))
\\
- \eta_{j}( [\phi_{i_1}^{j, \ell_1}(\partial^{i_1}_{\!\!\algA, \alpha_1}), \phi_{i_2}^{j, \ell_2}(\partial^{i_2}_{\!\!\algA, \alpha_2})] )
- [ \phi_{i_1}^{j, \ell_1}(\omega_{i_1}( \partial^{i_1}_{\!\!\algA, \alpha_1}) ), \phi_{i_2}^{j, \ell_2}(\omega_{i_2}( \partial^{i_2}_{\!\!\algA, \alpha_2}) ) ]
\end{multlined}
\end{align*}
For $i_1 \neq i_2$ or $\ell_1 \neq \ell_2$, from Lemma~\ref{lem inherited derivations relations}, all the terms in the first line vanish while the last commutator is zero since the two matrices involved do not occupy the same position on the diagonal of $M_{m_j}$. For $i_1 = i_2 = i$ and $\ell_1 = \ell_2 = \ell$, the expression reduces to $\phi_{i}^{j, \ell} ( \Omega_{i}( \partial^{i}_{\!\!\algA, \alpha_1}, \partial^{i}_{\!\!\algA, \alpha_2}) )$ (which is also a consequence of Prop.~\ref{prop compatible forms product differential}).

Notice that this result is also a direct consequence of the expression of $\Theta^{j}_{\beta_1\beta_2}$ in terms of the $B^{j}_{\!\algB, \beta}$'s.
\end{proof}

The curvature components $\Theta^{j}_{\beta_1\beta_2}$ for $\beta_1, \beta_2 \in J_{j}$ can be separated according to the 3 possibilities: (1) $(\beta_1, \beta_2)$ in $J^\phi_{j} \times J^\phi_{j}$; (2) $(\beta_1, \beta_2)$ or $(\beta_2, \beta_1)$ in $J^\phi_{j} \times J^c_{j}$; (3) $(\beta_1, \beta_2)$ in $J^c_{j} \times J^c_{j}$. From \eqref{eq block orthogonality basis}, the metric $g_\algB$ (and its inverse) is block diagonal for these subsets of indices. The natural action on $\algB$ can then be decomposed as
\begin{align*}
S &=
- \tsum_{j=1}^{s} \tfrac{1}{2} \tsum_{\beta_1\beta_2 \in J_{j}} \tr ( \Theta^{j}_{\beta_1\beta_2} \Theta^{j, \beta_1\beta_2} )
\\
&= \begin{multlined}[t]
- \tsum_{j=1}^{s} \Big(
\tfrac{1}{2} \tsum_{\beta_1\beta_2 \in J^\phi_{j}} \tr ( \Theta^{j}_{\beta_1\beta_2} \Theta^{j, \beta_1\beta_2} )
\\
+ \tsum_{\beta_1 \in J^\phi_{j}, \beta_2 \in J^c_{j}} \tr ( \Theta^{j}_{\beta_1\beta_2} \Theta^{j, \beta_1\beta_2} )
+ \tfrac{1}{2} \tsum_{\beta_1\beta_2 \in J^c_{j}} \tr ( \Theta^{j}_{\beta_1\beta_2} \Theta^{j, \beta_1\beta_2} )
\Big)
\end{multlined}
\end{align*}
For fixed $j$, let us consider the first summation on $J^\phi_{j}$. By Lemma~\ref{lem curvature inherited indices}, the two indices $\beta_1 = (i_1, \ell_1, \alpha_1), \beta_2 = (i_2, \ell_2, \alpha_2) \in J^\phi_{j}$ must satisfy $i_1 = i_2$ and $\ell_1 = \ell_2$ to get a non zero contribution, so that 
\begin{align*}
\tfrac{1}{2} \tsum_{\beta_1\beta_2 \in J^\phi_{j}} 
& \tr ( \Theta^{j}_{\beta_1\beta_2} \Theta^{j, \beta_1\beta_2} ) 
\\
& = 
\tfrac{1}{2} \tsum_{i=1}^{r} \tsum_{\ell=1}^{\alpha_{ji}} \tsum_{\alpha_1, \alpha_2 \in I_i} 
\tr ( \Theta^{j}_{(i, \ell, \alpha_1)(i, \ell, \alpha_2)} \Theta^{j, (i, \ell, \alpha_1)(i, \ell, \alpha_2)} ) 
\\
& =
\tfrac{1}{2} \tsum_{i=1}^{r} \tsum_{\ell=1}^{\alpha_{ji}} \tsum_{\alpha_1, \alpha_2 \in I_i} 
\tr ( \Omega^{i}_{\alpha_1\alpha_2} \Omega^{i, \alpha_1\alpha_2} )
\\
& =
\tfrac{1}{2} \tsum_{i=1}^{r} \alpha_{ji} \tsum_{\alpha_1, \alpha_2 \in I_i} 
\tr ( \Omega^{i}_{\alpha_1\alpha_2} \Omega^{i, \alpha_1\alpha_2} )
\end{align*}
where we have used \eqref{eq gB and gA} (which holds true also for the inverse metrics) to make an equivalence between raising the indices $\beta$ and raising the indices $\alpha$.

This relation tells us that the \emph{action on $\algB$ contains copies of terms from the action on $\algA$}. As expected, these terms involve the degrees of freedom which are inherited on $\algB$ from those on $\algA$. They involve also the multiplicities of the inclusions of $M_{n_i}$ into $M_{m_j}$. This implies that the relative weights of these terms are not the same on $\algB$ as they are on $\algA$.

\medskip
Let us now consider the algebra $\halgB \defeq C^\infty(M) \otimes \algB = \toplus_{j=1}^{s} C^\infty(M) \otimes M_{m_j}$. The connection $1$-form is parametrized as $\eta_j = A^{j}_{\algB,\mu} \dd x^\mu + (E^{j}_{\!\algB, \beta} - B^{j}_{\!\algB, \beta}) \theta^\beta_{\!\algB, j}$. We extend the $\phi$-compatibility between $\eta$ and $\omega$ on the geometrical part by the condition that $A^{j}_{\algB,\mu} = \tsum_{i=1}^{r} \phi_{i}^{j}(A^{i}_{\!\!\algA,\mu}) + A^{j, c}_{\algB,\mu}$ where $A^{j, c}_{\algB,\mu} \in M_{m_j}$ has zero entries in the image of $\phi^{j}$ (which is concentred as blocks on the diagonal). In other words, all the degrees of freedom in $A^{i}_{\!\!\algA,\mu}$ are copied into $A^{j}_{\algB,\mu}$ according to the map $\phi$.

From this decomposition, the components of the curvature $2$-forms $\Theta_j = \tfrac{1}{2} \Theta^{j}_{\mu_1\mu_2} \dd x^{\mu_1} \wedge \dd x^{\mu_2} + \Theta^{j}_{\mu \beta} \dd x^\mu \wedge \theta^\beta_{\!\algB, j} + \tfrac{1}{2}  \Theta^{j}_{\beta_1\beta_2} \theta^{\beta_1}_{\!\algB, j} \wedge \theta^{\beta_2}_{\!\algB, i}$ can be separated into inherited components from the curvature $2$-forms $\Omega_{i}$ on $\halgA$, interactions terms between inherited components of the $\omega_i$ $1$-forms with new components of the $\eta_j$ $1$-forms, and completely new terms from new components of the $\eta_j$. Explicitly, one has
\begin{align*}
\Theta^{j}_{\mu_1\mu_2}
&= \partial_{\mu_1} A^{j}_{\algB, \mu_2} - \partial_{\mu_2} A^{j}_{\algB, \mu_1} - [A^{j}_{\algB, \mu_1}, A^{j}_{\algB, \mu_2}]
\\
&= \begin{multlined}[t]
\tsum_{i=1}^{r} \phi_{i}^{j}\left(
\partial_{\mu_1} A^{i}_{\!\!\algA,\mu_2} - \partial_{\mu_2} A^{i}_{\!\!\algA,\mu_1} - [ A^{i}_{\!\!\algA,\mu_1}, A^{i}_{\!\!\algA,\mu_2}]
\right)
\\
- \tsum_{i=1}^{r} \left(
[ \phi_{i}^{j}( A^{i}_{\!\!\algA,\mu_1} ), A^{j, c}_{\algB,\mu_2} ]
+ [ A^{j, c}_{\algB,\mu_1}, \phi_{i}^{j}( A^{i}_{\!\!\algA,\mu_2} ) ]
\right)
\\
+ \partial_{\mu_1} A^{j, c}_{\algB,\mu_2} - \partial_{\mu_2} A^{j, c}_{\algB,\mu_1} - [ A^{j, c}_{\algB,\mu_1}, A^{j, c}_{\algB,\mu_2} ]
\end{multlined}
\end{align*}
In the same way, for any $\beta = (i, \ell, \alpha) \in J^\phi_{j}$, one has
\begin{align*}
\Theta^{j}_{\mu \beta}
&= - ( \partial_\mu B^{j}_{\!\algB, \beta} - [A^{j}_{\algB, \mu}, B^{j}_{\!\algB, \beta}] )
\\
&
= - \phi_{i}^{j, \ell}\left( \partial_\mu B^{i}_{\!\!\algA, \alpha} - [ A^{i}_{\!\!\algA,\mu} , B^{i}_{\!\!\algA, \alpha}]  \right)
+ [A^{j, c}_{\algB,\mu}, \phi_{i}^{j, \ell}(B^{i}_{\!\!\algA, \alpha})]
\end{align*}
where the last term is off diagonal. For $\beta \in J^c_{j}$, $\Theta^{j}_{\mu \beta}$ depends on the fields $A^{j, c}_{\algB,\mu}$ which couple with $B^{j}_{\!\algB, \beta}$. Finally, $\Theta^{j}_{\beta_1\beta_2}$ has been explored before.

\medskip
Denote by $\algA^\times$ and $\algB^\times$ the groups of invertible elements in $\algA$ and $\algB$ and let us define $\tphi : \algA^\times \to \algB^\times$, for any $a = \toplus_{i=1}^{r} a_i \in \algA^\times$, as
\begin{align*}
\tphi^j (a) \defeq \pi_\algB^j \circ \tphi (a)
&= 
\begin{pmatrix}
a_1 \otimes \bbbone_{\alpha_{j1}} & 0 & \cdots & 0 & 0\\
0 & a_2 \otimes \bbbone_{\alpha_{j2}} & \cdots & 0 & 0\\
\vdots & \vdots & \ddots & \vdots & \vdots\\
0 & 0 & \cdots & a_r \otimes \bbbone_{\alpha_{jr}} & 0 \\
0 & 0 & \cdots & 0 & \bbbone_{n_0}
\end{pmatrix}
\end{align*}
It is easy to check that $\tphi$ is a morphism of groups: for $a \in \algA^\times$, one has $\tphi(a) \in \algB^\times$ and $\tphi(a)^{-1} = \tphi(a^{-1})$. One has also $\tphi(a)^* = \tphi(a^*)$ so that if $u \in \calU(\algA)$ is a unitary element in $\algA$, so is $\tphi(u)$ in $\algB$.

\begin{lemma}
\label{lem hphi product phi}
For any $a \in \algA$ and any $u \in \algA^\times$, one has $\tphi(u)\phi(a) = \phi(ua)$ and $\phi(a)\tphi(u) = \phi(au)$. For any $j \in \{1, \dots, s\}$, any $i \in \{1, \dots, r\}$, and any $\ell \in \{1, \dots, \alpha_{ji} \}$, one has $\tphi(u)\phi_{i}^{j, \ell}(a) = \phi_{i}^{j, \ell}(ua)$ and $\phi_{i}^{j, \ell}(a)\tphi(u) = \phi_{i}^{j, \ell}(au)$.
\end{lemma}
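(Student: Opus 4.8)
The plan is to reduce everything to a block-matrix computation inside a single summand $M_{m_j}$ and then invoke the already-established product rule of Lemma~\ref{lem product phiijell}. Since $\phi = \toplus_{j=1}^{s}\phi^j$ and $\tphi = \toplus_{j=1}^{s}\tphi^j$, and multiplication in $\algB = \toplus_{j=1}^{s} M_{m_j}$ is componentwise, it suffices to establish each identity after applying $\pi_\algB^j$; that is, to prove $\tphi^j(u)\phi^j(a) = \phi^j(ua)$, $\phi^j(a)\tphi^j(u) = \phi^j(au)$, and their $\phi_i^{j,\ell}$-analogues, for each fixed $j$.

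The key observation I would isolate first is that $\tphi^j$ and $\phi^j$ coincide on every block except the last: their difference $P^j \defeq \tphi^j(u) - \phi^j(u)$ is the fixed matrix carrying $\bbbone_{n_0}$ in the bottom-right $n_0 \times n_0$ corner and $0$ elsewhere, independent of $u$. In particular $P^j$ is supported on the last $n_0$ coordinates of $M_{m_j}$, which are disjoint from the image of every $\phi_i^{j,\ell}$; hence $P^j\,\phi_i^{j,\ell}(x) = \phi_i^{j,\ell}(x)\,P^j = 0$ for all $x$. Writing $\tphi^j(u) = \tsum_{i,\ell}\phi_i^{j,\ell}(u_i) + P^j$ (using the decomposition \eqref{eq decompositions phi}) and $\phi^j(a) = \tsum_{i',\ell'}\phi_{i'}^{j,\ell'}(a_{i'})$, I would expand the product $\tphi^j(u)\phi^j(a)$: the $P^j$-term annihilates, and by Lemma~\ref{lem product phiijell} each cross-term collapses via $\phi_i^{j,\ell}(u_i)\phi_{i'}^{j,\ell'}(a_{i'}) = \delta_{ii'}\delta_{\ell\ell'}\phi_i^{j,\ell}(u_i a_i)$, leaving $\tsum_{i,\ell}\phi_i^{j,\ell}((ua)_i) = \phi^j(ua)$. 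This settles the first identity; the second follows by the mirror computation, in which $P^j$ is now killed on the right.

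For the $\phi_i^{j,\ell}$-versions the same expansion applies, with $\phi^j(a)$ replaced by the single term $\phi_i^{j,\ell}(a)$: multiplying $\tsum_{i',\ell'}\phi_{i'}^{j,\ell'}(u_{i'}) + P^j$ against $\phi_i^{j,\ell}(a)$ and again invoking Lemma~\ref{lem product phiijell} leaves only the $(i',\ell') = (i,\ell)$ survivor, namely $\phi_i^{j,\ell}(u_i a) = \phi_i^{j,\ell}(ua)$, and symmetrically for right-multiplication. Here $\phi_i^{j,\ell}(a)$ and $\phi_i^{j,\ell}(ua)$ are read with the convention $\phi_i^{j,\ell}(x) = \phi_i^{j,\ell}(\pi_\algA^i(x))$ already in force in Lemma~\ref{lem product phiijell}.

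I expect no genuine obstacle: the whole content is that the extra identity block distinguishing $\tphi$ from the non-unital $\phi$ lies outside the support of the image of $\phi$, hence is invisible under multiplication against $\phi(a)$ or $\phi_i^{j,\ell}(a)$. This is precisely what makes $\tphi(u)$ — which, unlike $\phi(u)$, is invertible (and unitary when $u \in \calU(\algA)$) in $\algB$ — act on inherited fields exactly as a unital morphism would, and it is the reason the lemma is needed to transport gauge transformations from $\algA$ to $\algB$. The only bookkeeping care is to confirm $\bbbone_{n_0}\cdot\bbbzero_{n_0} = \bbbzero_{n_0} = \bbbzero_{n_0}\cdot\bbbone_{n_0}$ in the last block, which is immediate.
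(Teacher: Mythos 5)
Your proof is correct and is essentially the paper's own argument: both rest on the observation that $\tphi(u)$ differs from $\phi(u)$ only in the last $n_0 \times n_0$ diagonal block, which annihilates every $\phi_{i}^{j,\ell}(a)$, and then reduce the remaining product to Lemma~\ref{lem product phiijell} via the decomposition \eqref{eq decompositions phi}. The only cosmetic difference is the order of reduction: you write $\tphi^j(u) = \phi^j(u) + P^j$ and prove the full $\phi$-identities first by expansion, whereas the paper reduces immediately to the $\phi_{i}^{j,\ell}$-identities and obtains the $\phi$-identities by summation.
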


\begin{proof}
From \eqref{eq decompositions phi}, it is sufficient to prove the last relations involving $\phi_{i}^{j, \ell}$. Since $\tphi(u)$ differs only from $\phi(u)$ at the last $n_0 \times n_0$ block entry on the diagonal where $\phi_{i}^{j, \ell}(a)$ is zero, one has $\tphi(u)\phi_{i}^{j, \ell}(a) = \phi(u)\phi_{i}^{j, \ell}(a) = \phi_{i}^{j, \ell}(ua)$ by \eqref{eq decompositions phi} and \eqref{eq product phiijell}. The same proof applies for the right multiplication by $u$.
\end{proof}

\begin{proposition}
\label{prop gauge transformations phi A and B}
Let $\omega$ be a connection $1$-form on $\algA$ and let $\eta$ be a $\phi$-compatible connection $1$-form on $\algB$. Let $u \in \calU(\algA)$ and $v \defeq \tphi(u) \in \calU(\algB)$. Then $\omega^u$ and $\eta^v$ are $\phi$-compatible.
\end{proposition}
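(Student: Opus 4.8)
The plan is to reduce the statement to the blockwise form of the two gauge transformations and then feed everything through Lemma~\ref{lem hphi product phi}. First I would note that since $\algA = \toplus_{i=1}^{r} M_{n_i}$, a unitary $u \in \calU(\algA)$ splits as $u = \toplus_{i=1}^{r} u_i$ with $u_i \in \calU(M_{n_i})$, and that by Prop.~\ref{prop decomposition forms} (products and the differential decompose along the blocks) the gauge transform $\omega^u = u^{-1}\omega u - u^{-1}\dd u$ from Sect.~\ref{sec NCGFT} decomposes as $\omega^u = \toplus_{i=1}^{r} \omega_i^{u_i}$ with $\omega_i^{u_i} = u_i^{-1}\omega_i u_i - u_i^{-1}\dd_i u_i$. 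Likewise, writing $v = \tphi(u)$ and $v_j \defeq \pi_\algB^j(v) = \tphi^j(u) \in \calU(M_{m_j})$, one gets $\eta^v = \toplus_{j=1}^{s} \eta_j^{v_j}$. Since $\omega^u$ and $\eta^v$ are genuine connection $1$-forms, their $\phi$-compatibility is just the $1$-form case of \eqref{eq forms compatibility} spelled out in Example~\ref{ex one forms compatibility}: I must verify, for every $i,j$, every $\ell \in \{1,\dots,\alpha_{ji}\}$ and $\alpha \in I_i$, the single scalar identity $\phi_i^{j,\ell}\big((\omega^u)_i(\partial^{i}_{\!\!\algA,\alpha})\big) = (\eta^v)_j(\partial^{j}_{\algB,(i,\ell,\alpha)})$.

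Next I would expand both sides. Setting $\omega^i_\alpha \defeq \omega_i(\partial^{i}_{\!\!\algA,\alpha})$, the left-hand side is $\phi_i^{j,\ell}\big(u_i^{-1}\omega^i_\alpha u_i - u_i^{-1}(\partial^{i}_{\!\!\algA,\alpha}\cdotaction u_i)\big)$. On the right-hand side, using $\partial^{j}_{\algB,(i,\ell,\alpha)} = \ad_{\phi_i^{j,\ell}(E^{i}_{\!\!\algA,\alpha})}$ and the $\phi$-compatibility hypothesis $\eta_j(\partial^{j}_{\algB,(i,\ell,\alpha)}) = \phi_i^{j,\ell}(\omega^i_\alpha)$, I obtain $v_j^{-1}\phi_i^{j,\ell}(\omega^i_\alpha) v_j - v_j^{-1}(\partial^{j}_{\algB,(i,\ell,\alpha)}\cdotaction v_j)$. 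The decisive tool is Lemma~\ref{lem hphi product phi}, whose blockwise form reads $v_j\,\phi_i^{j,\ell}(x) = \phi_i^{j,\ell}(u_i x)$ and $\phi_i^{j,\ell}(x)\,v_j = \phi_i^{j,\ell}(x u_i)$, together with the same statements for $u^{-1}$ since $v^{-1} = \tphi(u^{-1})$. Applying it twice gives $v_j^{-1}\phi_i^{j,\ell}(\omega^i_\alpha)v_j = \phi_i^{j,\ell}(u_i^{-1}\omega^i_\alpha u_i)$, which matches the homogeneous part of the left-hand side.

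For the inhomogeneous term I would compute $\partial^{j}_{\algB,(i,\ell,\alpha)}\cdotaction v_j = [\phi_i^{j,\ell}(E^{i}_{\!\!\algA,\alpha}), v_j]$, and again by Lemma~\ref{lem hphi product phi}, $\phi_i^{j,\ell}(E^{i}_{\!\!\algA,\alpha})v_j = \phi_i^{j,\ell}(E^{i}_{\!\!\algA,\alpha}u_i)$ and $v_j\phi_i^{j,\ell}(E^{i}_{\!\!\algA,\alpha}) = \phi_i^{j,\ell}(u_i E^{i}_{\!\!\algA,\alpha})$, so the commutator collapses to $\phi_i^{j,\ell}([E^{i}_{\!\!\algA,\alpha},u_i]) = \phi_i^{j,\ell}(\partial^{i}_{\!\!\algA,\alpha}\cdotaction u_i)$. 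One further application of the lemma yields $v_j^{-1}(\partial^{j}_{\algB,(i,\ell,\alpha)}\cdotaction v_j) = \phi_i^{j,\ell}\big(u_i^{-1}(\partial^{i}_{\!\!\algA,\alpha}\cdotaction u_i)\big)$, which is exactly the inhomogeneous part of the left-hand side. Combining the two computations establishes the required identity for all admissible $(i,j,\ell,\alpha)$, i.e. the $\phi$-compatibility of $\omega^u$ and $\eta^v$.

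The main subtlety — and the reason $\tphi$ rather than $\phi$ governs the gauge group — is that $\phi_i^{j,\ell}$ is a \emph{non-unital} algebra morphism, so $\phi_i^{j,\ell}(u_i)$ fails to be invertible in $M_{m_j}$ and one cannot naively conjugate the image forms by it. What rescues the argument is that the genuinely invertible (indeed unitary) element is $v_j = \tphi^j(u)$, and Lemma~\ref{lem hphi product phi} says precisely that left or right multiplication by $v_j$ on an image $\phi_i^{j,\ell}(x)$ amounts to multiplication by $u_i$ \emph{inside} the $(i,\ell)$-block. Consequently the only genuine work is the index bookkeeping and the repeated, careful use of that lemma; no new estimate or structural input is required.
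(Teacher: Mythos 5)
Your proposal is correct and follows essentially the same route as the paper's proof: decompose $\omega^u$ and $\eta^v$ blockwise, evaluate $\eta^v_j$ on the inherited derivations $\phi_i^{j,\ell}(\partial^{i}_{\!\!\algA,\alpha})$, invoke the $\phi$-compatibility of $\eta$ with $\omega$, and use Lemma~\ref{lem hphi product phi} (together with $\tphi(u)^{-1}=\tphi(u^{-1})$) to convert conjugation by $v_j$ and the commutator $[\phi_i^{j,\ell}(E^{i}_{\!\!\algA,\alpha}),v_j]$ into $\phi_i^{j,\ell}$ of the corresponding expressions in $u_i$. The only difference is presentational — you spell out the two applications of the lemma separately for the homogeneous and inhomogeneous terms, where the paper runs them in one chain of equalities.
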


\begin{proof}
Recall that $\omega^u = u^{-1} \omega u - u^{-1} (\dd u)$ and $\eta^v = v^{-1} \eta v - v^{-1} (\dd v)$. Notice that $\pi_\algA^i(\omega^u) = u_i^{-1} \omega_i u_i - u_i^{-1} (\dd u_i) \in \OmegaDer^1(\algA_i)$ and similarly $\pi_\algB^j(\eta^v) = v_j^{-1} \eta_j v_j - v_j^{-1} (\dd v_j) \in \OmegaDer^1(\algB_j)$. For any $j$ and any $\beta = (i, \ell, \alpha) \in J^\phi_{j}$, one has
\begin{align*}
\pi_\algB^j(\eta^v) ( \phi_{i}^{j, \ell}(\partial^{i}_{\!\!\algA, \alpha}) )
&=
 v_j^{-1} \eta_j( \phi_{i}^{j, \ell}(\partial^{i}_{\!\!\algA, \alpha}) ) v_j - v_j^{-1} [\phi_{i}^{j, \ell}(E^{i}_{\!\!\algA, \alpha}), v_j]
 \\
&= v_j^{-1} \phi_{i}^{j, \ell} \left( \omega_i(\partial^{i}_{\!\!\algA, \alpha}) \right) v_j  - v_j^{-1} \phi_{i}^{j, \ell}([E^{i}_{\!\!\algA, \alpha}, u_i])
 \\
& =
\phi_{i}^{j, \ell} \left( u_i^{-1} \omega_i(\partial^{i}_{\!\!\algA, \alpha}) u_i - u_i^{-1} (\dd u_i)(\partial^{i}_{\!\!\algA, \alpha}) \right)
= \phi_{i}^{j, \ell} \left( \pi_\algA^i(\omega^u)(\partial^{i}_{\!\!\algA, \alpha}) \right)
\end{align*}
where we have used Lemma~\ref{lem hphi product phi}.
\end{proof}

We have a similar result for connections on $\halgA$ and $\halgB$:
\begin{proposition}
\label{prop gauge transformations phi hA and hB}
Let $\omega$ be a connection $1$-form on $\halgA$ and let $\eta$ be a $\phi$-compatible (in the extended version) connection $1$-form on $\halgB$. Let $u \in \calU(\halgA)$ and $v \defeq \tphi(u) \in \calU(\halgB)$. Then $\omega^u$ and $\eta^v$ are $\phi$-compatible (in the extended version).
\end{proposition}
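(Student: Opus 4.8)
The plan is to read Prop.~\ref{prop gauge transformations phi hA and hB} as the geometric enrichment of Prop.~\ref{prop gauge transformations phi A and B}. Extended $\phi$-compatibility of two connection $1$-forms on $\halgA$ and $\halgB$ amounts to two independent conditions: the inner condition $B^{j}_{\algB,(i,\ell,\alpha)} = \phi_{i}^{j,\ell}(B^{i}_{\!\!\algA,\alpha})$ on the algebraic directions $\phi_{i}^{j,\ell}(\partial^{i}_{\!\!\algA,\alpha})$, and the geometric decomposition $A^{j}_{\algB,\mu} = \tsum_{i=1}^{r}\phi_{i}^{j}(A^{i}_{\!\!\algA,\mu}) + A^{j,c}_{\algB,\mu}$ with $A^{j,c}_{\algB,\mu}$ having zero entries in $\Ran(\phi^{j})$. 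I would check the two sectors separately. Write $u = \toplus_{i} u_i \in \calU(\halgA)$ and $v = \tphi(u) = \toplus_{j} v_j \in \calU(\halgB)$ (unitarity of $v$ being the extension to $\halgB$ of the remark before Lemma~\ref{lem hphi product phi}), and recall $\omega^u = u^{-1}\omega u - u^{-1}(\dd u)$ and $\eta^v = v^{-1}\eta v - v^{-1}(\dd v)$, where $\dd$ is the full differential.

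For the inner sector, the computation is verbatim the one in the proof of Prop.~\ref{prop gauge transformations phi A and B}. Every inner derivation $\partial^{j}_{\algB,\beta}$ annihilates $C^\infty(M)$, so evaluating $\pi_\algB^j(\eta^v)$ on $\phi_{i}^{j,\ell}(\partial^{i}_{\!\!\algA,\alpha})$ only produces the pointwise commutator $[\phi_{i}^{j,\ell}(E^{i}_{\!\!\algA,\alpha}), v_j]$ from the $\dd v$ term. Since Lemma~\ref{lem hphi product phi} extends verbatim to $\halgA, \halgB$ (its proof is block-matrix multiplication performed entrywise, the entries now being smooth functions), the same manipulation yields $\pi_\algB^j(\eta^v)(\phi_{i}^{j,\ell}(\partial^{i}_{\!\!\algA,\alpha})) = \phi_{i}^{j,\ell}\!\left(\pi_\algA^i(\omega^u)(\partial^{i}_{\!\!\algA,\alpha})\right)$, which is the inner $\phi$-compatibility of $\eta^v$ with $\omega^u$.

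The new content is the geometric sector. I would compute the $\dd x^\mu$-component of $\pi_\algB^j(\eta^v)$, namely $v_j^{-1}A^{j}_{\algB,\mu} v_j - v_j^{-1}\partial_\mu v_j$, and insert the decomposition of $A^{j}_{\algB,\mu}$. Lemma~\ref{lem hphi product phi} (in its extended form, and using $v_j^{-1} = \tphi^j(u^{-1})$) gives $v_j^{-1}\phi_{i}^{j}(A^{i}_{\!\!\algA,\mu}) v_j = \phi_{i}^{j}(u_i^{-1} A^{i}_{\!\!\algA,\mu} u_i)$. Moreover $v_j = \tphi^j(u)$ is block-diagonal with blocks $u_i \otimes \bbbone_{\alpha_{ji}}$ and a constant final block $\bbbone_{n_0}$, so $\partial_\mu v_j$ has blocks $(\partial_\mu u_i)\otimes \bbbone_{\alpha_{ji}}$ and $0$, whence $v_j^{-1}\partial_\mu v_j = \tsum_{i}\phi_{i}^{j}(u_i^{-1}\partial_\mu u_i)$. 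Combining and using linearity of $\phi_{i}^{j}$, the transformed component equals $\tsum_{i}\phi_{i}^{j}\!\left(u_i^{-1}A^{i}_{\!\!\algA,\mu}u_i - u_i^{-1}\partial_\mu u_i\right) + v_j^{-1}A^{j,c}_{\algB,\mu} v_j$; since $u_i^{-1}A^{i}_{\!\!\algA,\mu}u_i - u_i^{-1}\partial_\mu u_i$ is exactly the $\dd x^\mu$-component of $\pi_\algA^i(\omega^u)$, this already has the shape demanded by extended $\phi$-compatibility.

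The single genuine step is therefore to show that $v_j^{-1}A^{j,c}_{\algB,\mu} v_j$ again has zero entries in $\Ran(\phi^{j})$. Here I would exploit the tensor structure of each diagonal block: identifying the $i$-th block with $M_{n_i}\otimes M_{\alpha_{ji}}$, conjugation by $u_i \otimes \bbbone_{\alpha_{ji}}$ acts as $b \otimes c \mapsto (u_i b u_i^{-1})\otimes c$, which preserves $\Ran(\phi_{i}^{j}) = M_{n_i}\otimes \bbbone_{\alpha_{ji}}$ and, being unitary, preserves its trace-orthogonal complement as well; the block-diagonal conjugation by $v_j$ also maps every off-diagonal block position (and the final $n_0$-block) to itself, and all such positions lie outside $\Ran(\phi^{j})$. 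Hence conjugation by $v_j$ stabilises both $\Ran(\phi^{j})$ and the complement in which $A^{j,c}_{\algB,\mu}$ lives, so $v_j^{-1}A^{j,c}_{\algB,\mu} v_j$ stays complementary. This establishes the geometric compatibility and, with the inner sector, proves that $\omega^u$ and $\eta^v$ are $\phi$-compatible in the extended version. The main obstacle is thus conceptual rather than computational: fixing a conjugation-stable meaning of ``zero entries in $\Ran(\phi^{j})$'', which the trace-orthogonal complement together with the unitarity of $v_j$ supplies.
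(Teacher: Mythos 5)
Your proposal is correct and follows essentially the same route as the paper's own proof: the algebraic sector is handled verbatim by the computation of Prop.~\ref{prop gauge transformations phi A and B} (inner derivations kill $C^\infty(M)$, then Lemma~\ref{lem hphi product phi} applies pointwise), and the geometric sector uses the block-diagonal form of $v_j = \tphi^j(u)$ to get $v_j^{-1}\phi_i^j(A^{i}_{\!\!\algA,\mu})v_j = \phi_i^j(u_i^{-1}A^{i}_{\!\!\algA,\mu}u_i)$, $v_j^{-1}\partial_\mu v_j = \tsum_i \phi_i^j(u_i^{-1}\partial_\mu u_i)$, and stability of the complementary part under conjugation. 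One caveat on your last step: the paper's condition on $A^{j,c}_{\algB,\mu}$ is the entrywise one (vanishing of every diagonal sub-block $\phi_i^{j,\ell}(M_{n_i})$), which is already conjugation-stable precisely because $v_j$ maps every block position to itself, so no reinterpretation is needed; your trace-orthogonal substitute is strictly weaker whenever some $\alpha_{ji}\geq 2$ (it only constrains the sum of the diagonal sub-blocks) and should not replace the definition, although your block-position argument already closes the proof under the paper's definition.
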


\begin{proof}
Concerning the algebraic parts of $\omega_i$ and $\eta_j$, the proof is the same as for Prop.~\ref{prop gauge transformations phi A and B}. It remains to show that the $A^{i, u_i}_{\!\!\algA,\mu} \defeq u_i^{-1} A^{i}_{\!\!\algA,\mu} u_i - u_i^{-1} \partial_\mu u_i$ are copied into $A^{j, v_j}_{\algB,\mu} \defeq v_j^{-1} A^{j}_{\algB,\mu} v_j - v_j^{-1} \partial_\mu v_j$ according to the map $\phi$. Using the fact that $v_j$ is block diagonal, and that these blocks are $\phi_{i}^{j, \ell}(u_i)$ or $\bbbone_{n_0}$, the diagonal part of the first term $v_j^{-1} A^{j}_{\algB,\mu} v_j$ is exactly $\tsum_{i=1}^{r} \phi_{i}^{j}(u_i^{-1} A^{i}_{\!\!\algA,\mu}  u_i)$. The second term $v_j^{-1} \partial_\mu v_j$ contains only blocks on the diagonal: the zero block from the block $\bbbone_{n_0}$ and blocks $\phi_{i}^{j, \ell}(u_i)^{-1} \partial_\mu \phi_{i}^{j, \ell}(u_i) = \phi_{i}^{j, \ell}(u_i^{-1} \partial_\mu u_i)$ otherwise. This proves that the blocks on the diagonal of $A^{j, v_j}_{\algB,\mu}$ are copies of the $A^{i, u_i}_{\!\!\algA,\mu}$ according to the map $\phi$. Obviously, the off diagonal part of $A^{j, v_j}_{\algB,\mu}$ mixes the degrees of freedom from the $A^{i}_{\!\!\algA,\mu}$'s and $u_i$'s.
\end{proof}

Prop.~\ref{prop gauge transformations phi A and B} and \ref{prop gauge transformations phi hA and hB} show that $\phi$-compatibility of connections is compatible with gauge transformations.

\medskip
We have now at hand all the technical ingredients to discuss NCGFT on the $AF$ $C^*$-algebra defined by a sequence $\{ (\algA_n, \phi_{n,m}) \, / \,  0 \leq n < m \}$. This NCGFT uses the derivation-based differential calculus constructed on the dense “smooth” subalgebra $\algA_\infty \defeq \cup_{n\geq 0} \algA_n$ as the inductive limit of the differential calculi $(\OmegaDer^\grast(\algA_n), \dd)$, and a natural module is the algebra $\algA_\infty$ itself. With obvious notations, the same holds for $\halgA_\infty$. All these constructions are canonical. A connection is constructed as a limit of connections on each $\algA_n$ (with module the algebra itself). If we insist the connection $1$-form on $\algA_{n+1}$ to be $\phi_{n,n+1}$-compatible with the connection $1$-form on $\algA_{n}$, then some degrees of freedom in this connection are inherited from those of the connections on $\algA_{n}$, and new degrees of freedom are added. This limiting procedure is compatible with a good notion of gauge transformations (see Prop.~\ref{prop gauge transformations phi A and B} and Prop.~\ref{prop gauge transformations phi hA and hB}). 

Concerning the dynamics, the terms in the action functional on $\algA_{n}$ can be found (with possible different weights) as terms in the action functional on $\algA_{n+1}$. If a solution for the gauge field degrees of freedom has been found on $\algA_{n}$, then these degrees of freedom appear as \emph{fixed fields} in the action on $\algA_{n+1}$, and so as constrains when one solves the field equations on $\algA_{n+1}$ for the new fields (non inherited degrees of freedom). The same applies to an inductive sequence $\{ (\halgA_n, \phi_{n,m}) \, / \,  0 \leq n < m \}$. The Lagrangian on $\algA_\infty$ (or $\halgA_\infty$) should be constructed as a limiting procedure by adding new terms at each step in order to take into account the new degrees of freedom. But then this Lagrangian could contain an infinite number of terms. From a physical point of view, we do not expect to reach that point: only some finite dimensional “approximations” (at some levels $n$) can be considered and tested in experiments. \emph{In other word, the purpose of our construction is not to define a “target” NCGFT (which could be quite singular) but to construct a direct sequence of finite dimensional NCFGT.} We expect all the empirical data to be encoded into this sequence which \emph{formally defines} a NCFGT on $\algA_\infty$ (or $\halgA_\infty$).

As already mentioned at the end of Sect.~\ref{sec AF algebras}, the $*$-homomorphisms $\phi_{n,n+1} : \algA_n \to \algA_{n+1}$ are only characterized up to unitary equivalence in $\algA_{n+1}$. We have shown that the action of such an unitary equivalence, which takes the form of an inner automorphism on $\algA_{n+1}$, is a transport of structures that does not change the physics. This is why it is convenient to work with the standard form used in this paper for these $*$-homomorphisms.

%%%%%%%%%%%%%%%%
\subsection{Numerical exploration of the SSBM}
\label{sec numerical exploration of the SSBM}

We would like now to concentrate on the SSBM in our framework. Using previous notations, when the Higgs potential $- \tfrac{1}{2} \tsum_{i=1}^r \tr( \Omega^{i}_{\alpha_1\alpha_2} \Omega^{i, \alpha_1\alpha_2} )$ for $\halgA$ is minimized, the degrees of freedom in the $B^{i}_{\!\!\algA, \alpha} = B^{i, \alpha'}_{\!\!\algA, \alpha} E^{i}_{\!\!\algA, \alpha'} + i B^{i, 0}_{\!\!\algA, \alpha} \bbbone_{n_i}$ are fixed (possibly with a choice in many possible configurations) and the $\phi$-compatibility transports these values into the $B^{j}_{\!\algB, \beta} = B^{j, \beta'}_{\!\algB, \beta} E^{j}_{\!\algB, \beta'} + i B^{j, 0}_{\!\algB, \beta} \bbbone_{m_i}$'s. Then, using these fixed values, minimizing the Higgs potential for $\halgB$ only concern a subset of all the $B^{j, \beta'}_{\!\algB, \beta}$'s. The configuration they define is not necessarily the minimum of the Higgs potential for $\halgB$ if it were computed along all the $B^{j, \beta'}_{\!\algB, \beta}$'s. 

The configuration of the fields $B^{i, \alpha'}_{\!\!\algA, \alpha}$'s on $\halgA$ (resp. the fields $B^{j, \beta'}_{\!\algB, \beta}$'s on $\halgB$) induces a mass spectrum for the gauge fields $A^{i, \alpha}_{\!\!\algA,\mu}$'s (resp. the gauge fields $A^{j, \beta}_{\algB, \mu}$'s). In order to illustrate the way the masses of the $A^{j, \beta}_{\algB, \mu}$'s are related to the masses of the $A^{i, \alpha}_{\!\!\algA,\mu}$'s by the constrains induced by $\phi$, we have produced numerical computations of the mass spectra for simple situations $\phi : \algA \to \algB$. These computations have been performed using the software \textit{Mathematica} \cite{Mathematica}. The following situations have been considered:
\begin{enumerate}
\item $\algA = M_2$ and $\algB = M_3$. This is the minimal non trivial situation one can consider. It illustrates many features of the other situations concerning the masses of the fields $A^{j, \beta}_{\algB, \mu}$'s.

\item $\algA = M_2 \oplus M_2$ and $\algB = M_4$. This situation is used to illustrate how two different configurations for the fields $B^{i, \alpha'}_{\!\!\algA, \alpha}$'s (one for each $M_2$) can conflict to produce a rich typology for the masses of the fields $A^{j, \beta}_{\algB, \mu}$'s.

\item $\algA = M_2 \oplus M_2$ and $\algB = M_5$. This situation is used to show, by comparison with the preceding one, how the target algebra influences the mass spectrum.

\item $\algA = M_2 \oplus M_3$ and $\algB = M_5$. This situation is used to show, by comparison with the preceding one, how the source algebra influences the mass spectrum.
\end{enumerate}

Due to the large number of parameters involved in the mathematical expressions, the numerical computations cannot explore the full space of configurations for the fields $B^{i, \alpha'}_{\!\!\algA, \alpha}$'s. This is why we have chosen to work with a very simplified situation: for every $i$, the fields $B^{i, \alpha'}_{\!\!\algA, \alpha}$'s are parametrized by a single real parameter $\lambda_i$ which interpolates, on the interval $[0,1]$, between the null-configuration and the basis-configuration (see Example~\ref{example C(M) otimes Mn}) as $B^{i}_{\!\!\algA, \alpha} = \lambda_i E^{i}_{\!\!\algA, \alpha}$. For each value of the $\lambda_i$'s, the minimum of the Higgs potential on $\halgB$ along the fields $B^{j, \beta'}_{\!\algB, \beta}$'s which are not inherited (via $\phi$) is computed. Then this minimum configuration is inserted into the couplings with the fields $A^{j, \beta}_{\algB, \mu}$'s to compute the mass spectrum. A comparison with the mass spectrum of the fields $A^{i, \alpha}_{\!\!\algA,\mu}$'s is easily done, since this spectrum is fully degenerate: all these fields have the same mass $\lambda_i = \sqrt{2 n_i}$ according to Lemma~\ref{lem mass rep-config}.

Let us first make some general remarks on the expected results. When $\lambda_i = 0$ for all $i=1, \dots, r$ (in our examples $r=2$ at most), the configuration on $\halgA$ is the null-configuration. So, we expect the constraints exerted by the values of the fields $B^{i, \alpha'}_{\!\!\algA, \alpha} = 0$ when computing the minimum of the Higgs potential on $\halgB$ to produce the null-configuration for the fields $B^{j, \beta'}_{\!\algB, \beta}$'s. In the same way, when $\lambda_i = 1$ for all $i=1, \dots, r$, the configuration on $\halgA$ is the basis-configuration, and since $\phi$ preserves Lie brackets, the minimum of the Higgs potential on $\halgB$ is expected to be the basis-configuration for the fields $B^{j, \beta'}_{\!\algB, \beta}$'s. All the numerical computations presented below are consistent with these expected results.

When $\lambda_i$ is neither $0$ nor $1$, the configuration for the fields $B^{i, \alpha'}_{\!\!\algA, \alpha}$'s is not a minimum of the Higgs potential on $\halgA$ (with value $0$ in that case). Nevertheless, we consider these configuration as “possible” since $\algA$ is not necessarily the first algebra in the sequence $\{ (\algA_n, \phi_{n,m}) \, / \,  0 \leq n < m \}$. Indeed, as the results will show, and as already mentioned, the minimum of the Higgs potential on $\halgB$ is not the minimum along all the $B^{j, \beta'}_{\!\algB, \beta}$'s, and its value can be non zero. So, we are not reduced to considering only zero minima on $\halgA$ and it is legitimate to explore other configurations for the fields $B^{i, \alpha'}_{\!\!\algA, \alpha}$'s on $\halgA$.

Before describing the four cases, let us consider the situation in Fig.~\ref{fig-5Vd-5Md} which concerns the algebra $M_2$ only. The plot in Fig.~\ref{fig-5Vd} is the Higgs potential for the fields $B^{1}_{\!\!\algA, \alpha} = \lambda_1 E^{1}_{\!\!\algA, \alpha}$ depending only on $\lambda_1$. It is a quadratic polynomials in $\lambda_1$ and it looks very much like the Higgs potential of the SMPP in this approximation (reduction to a $1$-parameter dependency). The plot in Fig.~\ref{fig-5Md} is the mass spectrum for the $A^{1, \alpha}_{\!\!\algA,\mu}$ fields. As proved in Lemma~\ref{lem mass rep-config}, it is fully degenerated and depends linearly on $\lambda_1$ with slope $\sqrt{2 n_1}$ where $n_1 = 2$ in the present case. Similar plots can be obtained for any value $n_1$. These plots can be compared to the ones obtained in the four cases numerically explored.

\smallskip
All the numerical computations have been performed using orthonormal basis. We have noticed that the mass matrices, which have been computed in terms of an orthonormal basis $E^{1}_{\algB, \beta}$ (here we have only $j=1$) constructed as in Sect.~\ref{sec construction block orthogonal basis}, are almost diagonal, up to terms of order $10^{-6}$ (these small values could be considered as numerical artifacts). This motivates the introduction of the following nomenclature for the labels appearing in the plots of the mass spectra for the gauge bosons. These labels refer to directions defined for specific subsets of matrices $E^{1}_{\!\algB, \beta} \in M_{m_1} = \algB$. The labels $a^i$ will refer to the inherited directions in $\phi(M_{n_i}) \subset \algB$. Let $\widetilde{\phi(\algA)}$ be the smallest square matrix block in $M_{m_1}$ that contains all the $\phi(M_{n_i})$. The label $b$ (resp. $d$) will refer to non diagonal (resp. diagonal) directions in $\widetilde{\phi(\algA)}$ that are not labeled by the $a^i$. When $\widetilde{\phi(\algA)} \neq \algB$, the labels $c^i$ will refer to non diagonal directions in $\algB \backslash \widetilde{\phi(\algA)}$ that commute with all the $\phi(M_{n_{i'}})$ for $i' \neq i$: this means that these directions are matrices in $M_{m_1}$ with non zero entries only in $\algB \backslash \widetilde{\phi(\algA)}$ at the same rows and columns occupied by $\phi(M_{n_i})$. In the same situation, the label $e$ will refer to diagonal directions with non zero entries in $\algB \backslash \widetilde{\phi(\algA)}$ and which commute with $\widetilde{\phi(\algA)}$.  

\smallskip
For the first case $\algA = M_2$ and $\algB = M_3$, see Fig.~\ref{fig-1Vd-1Md}, there is only one real parameter $\lambda_1$. On the plots, this parameter is restricted to $\lambda_1 \in [-1,3]$.\footnote{A numerical exploration on the interval $[-100, 100]$ shows that the lines presented on Fig.~\ref{fig-1Md} are extended linearly.} On Fig.~\ref{fig-1Vd}, one sees that minimum values for the configurations of the $B^{1}_{\!\algB, \beta}$ fields in the Higgs potential, taking into account the fixed values of the inherited fields $B^{1}_{\!\!\algA, \alpha} = \lambda_1 E^{1}_{\!\!\algA, \alpha}$, are zero only at $\lambda_1 = 0, 1$. These values show a maximum near $\lambda_1 \simeq 0.563$ and grow rapidly outside of $\lambda_1 \in [0,1]$. This plot shows a similar global conformation as the one in Fig.~\ref{fig-5Vd}. 
On Fig.~\ref{fig-1Md}, the induced masses for the gauge bosons $A^{1}_{\algB, \mu}$ are presented. This mass spectrum is richer than the one in Fig.~\ref{fig-5Md}. It is not continuous, and one of its discontinuities coincides with the maximum of the values of the Higgs potential minima near $\lambda_1 \simeq 0.563$. The second discontinuity, near $\lambda_1 \simeq 2.376$, corresponds to a discontinuity in the Higgs potential minima plot that is visible at larger scale, as shown in the zoom effect circle. This mass spectrum is organized as follows: the $a^1$-lines have degeneracy $3$, the $c^1$-lines have degeneracy $4$, and the $e$-lines have degeneracy $1$, which amounts to the $8$ fields $A^{1, \beta}_{\algB, \mu}$ on $\halgB$. Notice that the $a^1_1$ and $a^1_3$ (resp. $c^1_1$ and $c^1_3$) straight lines are part of the same straight line (as shown by the dotted lines) with slope $2 = \sqrt{2 n_1}$ (resp. $\sqrt{3/2}$). Up to these small off-diagonal  values in the mass matrix, the $3$ fields $A^{1, \beta}_{\algB, \mu}$ belonging to the $a^1$-lines are the inherited $3$ fields $A^{1, \alpha}_{\!\!\algA,\mu}$ on $M_2$. The slope of the $a^1_1$ and $a^1_3$ lines shows that the inherited fields retain their masses when they are induced by $\phi : \algA \to \algB$. The $a^1_2$-line reveals that there is a slight breaking of this invariance for a specific range in $\lambda_1$. The $e$-lines correspond to the diagonal direction $\diag(1,1,-2)/\sqrt{6} \in M_3$.

\smallskip
For the next three cases, there are two parameters $\lambda_i$, $i=1,2$ and the plots explore the square  $(\lambda_1, \lambda_2) \in [0,1]^2$. Concerning the minimum values for the Higgs potential, all the points in the square can be displayed. But concerning the mass spectra, all the points in the square would give a cloud of points impossible to interpret. This is why we have chosen to display what happens along 7 specific lines in the square $[0,1]^2$, which are given in Fig.~\ref{fig-lambda-square}. Since the resulting plots in $3d$ may be quite difficult to read nevertheless, we have displayed two specific directions: the first one is the diagonal in the plane $(\lambda_1, \lambda_2)$, for $\lambda_1 = \lambda_2 \in [-1, 3]$; the second one is the anti-diagonal $\lambda_1 + \lambda_2 = 0.5$ in the square $[0,1]^2$. The diagonal plots can be directly compared to the first case, and they display a comparable rich structure: a restricted number of degenerated masses and some discontinuities. The anti-diagonal plots can be used to better understand how the inherited and new degrees of freedom behave in relation to each other (as encoded in the nomenclature for the labels). The choice of the parameter $0.5$ for the anti-diagonal line $\lambda_1 + \lambda_2 = 0.5$ is justified by the fact that we then explore in the 3 cases a region without discontinuity.

\smallskip
Let us consider the second case $\algA = M_2 \oplus M_2$ and $\algB = M_4$. The minimum values for the Higgs potential in Fig.~\ref{fig-2Vn} show a line of discontinuity that have a counterpart in the mass spectrum in Fig.~\ref{fig-2Mlc}. Exploring $\lambda_1 = \lambda_2 \in [-1, 3]$ in Fig.~\ref{fig-2Vd-2Md} shows that there are other discontinuities (at least one in the range considered) as in Fig.~\ref{fig-1Vd-1Md}. In the mass spectrum in Fig.~\ref{fig-2Md}, the $a^1$ and $a^2$-lines have degeneracy $3$, the $b$-lines have degeneracy $8$, and the $d$-lines have degeneracy $1$. For $i=1,2$, the slope of the $a^i_1$ and $a^i_3$ (resp. $b_1$ and $b_3$) straight lines is $2 = \sqrt{2 n_1} = \sqrt{2 n_2}$ (resp. $\sqrt{3}$). As in the previous case, modulo very small off-diagonal values in the mass matrix, the $6 = 2 \times 3$ fields in the $a^i$-lines are inherited from the $2 \times 3$ fields $A^{1, \alpha}_{\!\!\algA,\mu}$ and $A^{2, \alpha}_{\!\!\algA,\mu}$ from the two copies of $M_2$ and the $d$-lines correspond to the diagonal direction $\diag(1,1,-1,-1)/\sqrt{4} \in M_4$. The plot in Fig.~\ref{fig-2Mad_0.5} shows how the distribution of these gauge fields change along the anti-diagonal $\lambda_1 + \lambda_2 = 0.5$. The perfect symmetry around the diagonal at $\lambda_1 = 0.25$ in Fig.~\ref{fig-2Mad_0.5} shows that the two $M_2$ blocks play equal role, as expected. At $\lambda_1 = 0.5$, we end up on the side $\lambda_2 = 0$ in Fig.~\ref{fig-2Mlc}, where the top line (end of the $a^1$-line) has a slope $2 = \sqrt{2 n_1}$ with respect to $\lambda_1 \in [0,1]$; the middle line has a slope $\sqrt{3/2}$ (the $b$-line); and the lower line has a slope $0$ (it corresponds to the ends of the $a^2$-line and the $e$-line). Here again, at least in the region $\lambda_1 + \lambda_2 \leq 0.5$ (before the first discontinuity), we have checked numerically that the masses of the inherited fields are preserved by the map $\phi : \algA \to \algB$.

\smallskip
The third case $\algA = M_2 \oplus M_2$ and $\algB = M_5$, illustrated in Figs.~\ref{fig-3Vn-3Mlc}, \ref{fig-3Vd-3Md}, and \ref{fig-3Mad_0.5}, differs from the previous one by the greater number of new degrees of freedom in $\algB$. The discontinuity in Figs.~\ref{fig-3Vn} and \ref{fig-3Mlc} is larger. Its position has also moved, as can be seen also in Fig.~\ref{fig-3Md}. In this latter plot, the $a^1$ and $a^2$-lines have degeneracy $3$, the $b$ have degeneracy $8$, the $c^1$ and $c^2$-lines have degeneracy $4$, the $d$-lines have degeneracy $1$, and the $e$-lines have degeneracy $1$. Notice that the $d$ and $e$ lines are almost always merged in the plot, except for $d_2$ and $e_2$ which are close but clearly separated. For $i=1,2$, the slope of the $a^i_1$ and $a^i_3$ (resp. $b_1$ and $b_3$,  resp. $c^i_1$ and $c^i_3$) straight lines is $2 = \sqrt{2 n_1} = \sqrt{2 n_2}$ (resp. $\sqrt{3}$, resp. $\sqrt{3/2}$). Modulo very small off-diagonal values in the mass matrix, the $6= 2 \times 3$ fields in the $a^i$-lines are inherited from the $2 \times 3$ fields $A^{1, \alpha}_{\!\!\algA,\mu}$ and $A^{2, \alpha}_{\!\!\algA,\mu}$ from the two copies of $M_2$. In accordance with the nomenclature of the labels, the $8$ fields in the $b$-lines are new degrees of freedom along directions $E^{1}_{\!\algB, \beta}$ that are contained in $M_4 \subset M_5$, where $M_4$ contains the two copies of $M_2$. The $8= 2 \times 4$ fields in the $c^i$-lines are new degrees of freedom along directions $E^{1}_{\!\algB, \beta}$ that are defined with components outside of this $M_4 \subset M_5$: the $c^1$-line (resp. $c^2$-line) corresponds to fields in the directions $E^{1}_{\!\algB, \beta}$ with non zero entries outside of $M_4 \subset M_5$ and in the same rows and same columns as the ones in $M_{n_1}$ (resp. $M_{n_2}$). In other words, the $E^{1}_{\!\algB, \beta}$ for the $c^1$-line do not commute with $\phi(M_{n_1})$ while they commute with $\phi(M_{n_2})$, and vice versa for the $c^2$-line. The $d$-lines correspond to the diagonal direction $\diag(1,1,-1,-1,0)/\sqrt{4} \in M_5$ and the $e$-lines correspond to the diagonal direction $\diag(1,1,1,1,-4)/\sqrt{20} \in M_5$. The anti-diagonal plot in Fig.~\ref{fig-3Mad_0.5} brings us more information concerning the relationship between the $a^i$ and $c^i$-lines: it seems that there is a correlation between the $a^1$-line (resp. $a^2$-line) and the $c^1$-line (resp. $c^2$-line) due to the fact that their associated directions $E^{1}_{\!\algB, \beta}$ do not commute. This non commutativity could also explain the curved $b$-line which is “constrained” by the directions in the $a^1$ and $a^2$-lines.

\smallskip
Finally, the fourth case $\algA = M_2 \oplus M_3$ and $\algB = M_5$, illustrated in Figs.~\ref{fig-4Vn-4Mlc}, \ref{fig-4Vd-4Md}, and \ref{fig-4Mad_0.5}, is closer to the second case than to the third case. We conjecture that this is due to the fact that the diagonal in $\algB$ is filled by $\phi$ in the second and fourth cases, while there is a remaining $0$ in the third case (which permits the existence of the directions for the $c^1$ and $c^2$-lines in Fig.~\ref{fig-3Mad_0.5}). In the mass spectrum in Fig.~\ref{fig-4Md}, the $a^1$-lines have degeneracy $3$, the $a^2$-lines have degeneracy $8$, the $b$-lines have degeneracy $12$, and the $d$-lines have degeneracy $1$. The slope of the $a^1_1$ and $a^1_3$ (resp. $a^2_1$ and $a^2_3$,  resp. $b_1$ and $b_3$) straight lines is $2 = \sqrt{2 n_1}$ (resp. $\sqrt{6} = \sqrt{2 n_2}$, resp. $\sqrt{25/6}$). Modulo very small off-diagonal values in the mass matrix, the $3$ fields in the $a^1$-lines are inherited from the $3$ fields $A^{1, \alpha}_{\!\!\algA,\mu}$ from $M_{n_1} = M_2$ and the $8$ fields in the $a^2$-lines are inherited from the $8$ fields $A^{2, \alpha}_{\!\!\algA,\mu}$ from $M_{n_2} = M_3$. The $d$-lines correspond to the diagonal direction $\diag(1,1,1,1,-4)/\sqrt{20} \in M_5$. As showed in Fig.~\ref{fig-4Mad_0.5}, the mass spectrum along the anti-diagonal is no more symmetric, as can also be seen in Fig.~\ref{fig-4Mlc} (look for instance at the singular line in the mass spectrum): this distinguishes this case from the second one and illustrates how a change in the algebra $\algA$ affects the mass spectrum.

\smallskip
Let us make comments on these results. The exploration of the space of configurations for the fields $B^{i, \alpha'}_{\!\!\algA, \alpha}$'s along paths parametrized by the $\lambda_i$'s already shows a rich typology concerning the possible masses for the gauge bosons  $A^{j, \beta}_{\algB,\mu}$.

As seen in Fig.~\ref{fig-1Vd-1Md} for instance, the minimum for a conflictual situation $\lambda_1 = 1$ and $\lambda_2 = 0$ (conflict between the two minimal configurations for the $B^{1}_{\!\!\algA, \alpha}$ and $B^{2}_{\!\!\algA, \alpha}$ in $M_2$) is non zero and produces a global configuration for the fields $B^{1}_{\!\algB, \beta}$ that is neither the null-configuration nor the basis-configuration. The induced masses shows 3 possible values with degeneracies. Inserting this configuration as a initial data for another step into a sequence of NCGFT constructed on the sequence $\{ (\algA_n, \phi_{n,m}) \, / \,  0 \leq n < m \}$, may propagate this in-between result and produce more subtle configurations with richer possibilities for the masses of the gauge bosons.

Since the exploration of the space of configurations for the fields $B^{i, \alpha'}_{\!\!\algA, \alpha}$'s is reduced to paths parametrized by the $\lambda_i$'s, our results do not offer a general and systematic view of what could happen in our kind of models. Nevertheless, the results presented above already displays a rich phenomenology from which some information can be drawn.  The first noticeable feature is that the mass spectra, constrained by the $\phi$-compatibility, reveal that the masses are grouped in specific directions, so that we have neither a full degeneracy (as in Fig.~\ref{fig-5Md}) nor a complete list of independent masses (as many masses as degrees of freedom): these specific directions are grouped according to the inherited degrees of freedom (the $a^i$-lines), according to the way the new degrees of freedom commute or not with the inherited ones (the $b$ and $c^i$-lines), and according to the possible new diagonal degrees of freedom one can introduce (the $d$ and $e$-lines). Masses for inherited gauge bosons are preserved by the $\phi$-compatibility condition quite systematically near the null-configuration. Concerning the first discontinuity on the diagonal plots before the basis-configuration, the position of this discontinuity seems to be related, by an approximate linear relationship, to the ratio of the number of new degrees of freedom over the number of inherited degrees of freedom, see Table~\ref{table discontinuity ratio}. For the second discontinuity, a trend can be detected but without such a similar relationship. More advanced and time-consuming computations will be carried out as part of the thesis work of one of us (G.~N.) to further analyze the possible phenomenology of the models based on our approach. For instance, a computation will explore the behavior of mass matrices under successive embeddings $\phi_{n, n+1} : \algA_n \to \algA_{n+1}$, starting with different configurations. 

\newcommand{\centercell}[1]{\multicolumn{1}{c}{#1}}
\newcolumntype{d}[1]{D{.}{.}{#1}}
\begin{table}[t]
\centering
\begin{tabular}{rd{2.0}d{2.0}d{1.3}d{1.3}d{1.3}}
\toprule
\centercell{Case} & \centercell{$n_{\text{ndof}}$} & \centercell{$n_{\text{idof}}$} & \centercell{$r_{\text{dof}}$} & \centercell{$\lambda_{1, \text{first}}$} & \centercell{$\lambda_{1, \text{second}}$}\\
\midrule
$M_2 \oplus M_3 \to M_5$ & 13 & 11 & 1.182 & 0.475 & 2.526 \\
$M_2 \oplus M_2 \to M_4$ & 9   & 6    & 1.5      & 0.542 & 2.456 \\
$M_2 \to M_3$                         & 5   & 3    & 1.667 & 0.563 & 2.376 \\
$M_2 \oplus M_2 \to M_5$ & 18 & 6   & 3          & 0.734 & 2.263 \\
\bottomrule
\end{tabular}
\caption{Relationship between the positions of the first and second discontinuities and the ratio of the number of new degrees of freedom over the number of inherited degrees of freedom in the diagonal plots Figs.~\ref{fig-4Md}, \ref{fig-2Md}, \ref{fig-1Md}, and \ref{fig-3Md}: $n_{\text{ndof}}$ (resp. $n_{\text{idof}}$) is the number of new (resp. inherited) degrees of freedom, $r_{\text{dof}} = n_{\text{ndof}}/n_{\text{idof}}$ is the ratio of these degrees of freedom, $\lambda_{1, \text{first}}$ (resp. $\lambda_{1, \text{second}}$) is the value of $\lambda_1$ at the first (resp. second) discontinuity.}
\label{table discontinuity ratio}
\end{table}

It is out of the scope of this paper to elaborate on more realistic models and to try to analytically prove some of the mathematical conjectures that the numerical simulations suggest.

%%%%%%%%%%%%%%%%
\section{Conclusion}
%%%%%%%%%%%%%%%%

In this paper we have presented a mathematical framework in NCG to lift, in a very natural way, a defining sequence of an $AF$ $C^*$-algebra to a sequence of NCGFT of Yang-Mills-Higgs types. We have restricted the analysis to the derivation-based NCG, and we have exhibited and studied the essential ingredients which permit this lifting: derivation-based differential calculus, modules, connections, metrics and Hodge $\hstar$-operators, Lagrangians… In order to illustrate some possible applications in physics, we have numerically studied some specific toy models, with a focus concerning the typologies of the mass spectra one can obtained by the SSBM naturally present in these NCGFT of Yang-Mills-Higgs types.

A large part of the mathematical study presented here, in particular on the derivation-based NCG of an $AF$ $C^*$-algebra, could be used outside of the context of NCGFT. 

Since, in the literature, more realistic NCGFT have been constructed using spectral triples, we are interested to explore our new NCGFT approach on $AF$ $C^*$-algebras using spectral triples. This study is postponed to a forthcoming paper.

\appendix
\renewcommand\thesection{\Alph{section}}

%%%%%%%%%%%%%%%%
\section{Appendices}
\label{sec appendices}
%%%%%%%%%%%%%%%%

%%%%%%%%%%%%%%%%
\subsection{\texorpdfstring{A technical result on Hodge $\hstar$-operators}{A technical result on Hodge *-operators}}
\label{sec technical result on Hodge * operators}

Let $V$ be a finite dimensional vector space, with $\dim V = n$ and let $V^*$ its dual. Denote by $\{ e_k \}_{k}$ and $\{ \theta^k \}_{k}$ a basis of $V$ and its dual basis in $V^*$. Let $g$ be a metric on $V$ and let $g_{k\ell} \defeq g(e_k, e_\ell)$. Denote by $(g^{k\ell})$ its inverse matrix and by $\abs{g}$ the determinant of $(g_{k\ell})$. 

Let $\algA$ be an associative unital algebra equipped with a linear form $\tau : \algA \to \bbC$. We define forms on $V$ with values in $\algA$ as elements in $\algA \otimes \exter^\grast V^*$. There is then a natural multiplication: for any $\omega \in \algA \otimes \exter^p V^*$ and $\eta \in \algA \otimes \exter^q V^*$, $\omega \wedge \eta \in \algA \otimes \exter^{p+q} V^*$ is defined by
\begin{align*}
(\omega \wedge \eta)&(e_1, \dots, e_{p+q})
\\
& \defeq
\frac{1}{p!q!} \sum_{\sigma\in \kS_{p+q}} (-1)^{\abs{\sigma}} \omega(e_{\sigma(1)}, \dots, e_{\sigma(p)}) \eta(e_{\sigma(p+1)}, \dots, e_{\sigma(p+q)})
\end{align*}
for any $e_I, \dots, e_{p+q} \in V$ where $\kS_{n}$ is the group of permutations of $n$ elements.

Given an orientation $\theta^1 \wedge \cdots \wedge \theta^n$ of the basis $\{ \theta^k \}_{k}$, the metric $g$ and the linear form $\tau$ define an “integration” $\int_V : \algA \otimes \exter^\grast V^* \to \bbC$ which is non zero only on $\algA \otimes \exter^n V^*$ where it is defined, for any $\omega$ uniquely written as $\omega = \sqrt{\abs{g}} a \otimes \theta^1 \wedge \cdots \wedge \theta^n$, by $\int_V \omega \defeq \tau(a)$. This definition does not depend on the basis $\{ \theta^k \}_{k}$ (only on its orientation up to a sign). The $n$-form $\omega_{\vol} \defeq \sqrt{\abs{g}} \bbbone \otimes \theta^1 \wedge \cdots \wedge \theta^n$ is called the volume form.

The metric $g$ defines also a Hodge $\hstar$-operator on $\algA \otimes \exter^\grast V^*$ defined on $\omega = \tfrac{1}{p!} \omega_{\ell_1, \dots, \ell_p} \otimes \theta^{\ell_1} \wedge \cdots \wedge \theta^{\ell_p}$ by the usual formula
\begin{align}
\hstar ( \tfrac{1}{p!} \omega_{\ell_1, \dots, \ell_p} & \otimes \theta^{\ell_1} \wedge \cdots \wedge \theta^{\ell_p} )
\nonumber
\\
&= \tfrac{1}{(n-p)!} \tfrac{1}{p!} \sqrt{\abs{g}} \omega_{\ell'_1, \dots, \ell'_p} g^{\ell_1 \ell'_1} \cdots g^{\ell_p \ell'_p} \epsilon_{\ell_1, \dots, \ell_n} \otimes \theta^{\ell_{p+1}} \wedge \cdots \wedge \theta^{\ell_n}
\label{eq hodge on V}
\end{align}
where $\epsilon_{\ell_1, \dots, \ell_n}$ is the completely antisymmetric tensor such that $\epsilon_{1, \dots, n} = 1$.  For any $\omega, \omega' \in \algA \otimes \exter^p V^*$, a standard computation gives (see for instance~\cite[Sect.~2.4]{Bert96a})
\begin{align}
\label{eq hodge star omega omega'}
\omega \wedge \hstar \omega'
&= \tfrac{1}{p!} \sqrt{\abs{g}} \omega_{\ell_1, \dots, \ell_p} \omega'^{\ell_1, \dots, \ell_p} \otimes \theta^{1} \wedge \cdots \wedge \theta^{n}
\end{align}
with $\omega'^{\ell_1, \dots, \ell_p} \defeq g^{\ell_1 \ell'_1} \cdots g^{\ell_p \ell'_p} \omega'_{\ell'_1, \dots, \ell'_p}$.

\begin{lemma}
\label{lemma *omega decomposed}
Suppose that $V = \toplus_{i=1}^{r} V_i$ is an orthogonal decomposition for $g$. Denote by $g_i$ the restriction of $g$ to $V_i$, denote by $\hstar_i$ the corresponding Hodge star operator on $\algA \otimes \exter^\grast V_i^*$, and denote by $\int_{V_i}$ the corresponding integration with volume form $\omega_{\vol, i}$ such that $\omega_{\vol} =  \omega_{\vol, 1} \wedge \cdots \wedge \omega_{\vol, r}$. 

Let $\omega_i, \omega'_i \in \algA \otimes \exter^{p_i} V_i^*$ and $\omega = \tsum_{i=1}^{r} \omega_i, \omega' = \tsum_{i=1}^{r} \omega'_i \in \algA \otimes \exter^\grast V^*$. Then
\begin{align*}
\int_{V} \omega \wedge \hstar \omega'
&=
\tsum_{i=1}^{r} \int_{V_i} \omega_i \wedge \hstar_i \omega'_i
\end{align*}
\end{lemma}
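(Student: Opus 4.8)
The plan is to sidestep any explicit computation of $\hstar$ and instead push everything through the bilinear pairing formula \eqref{eq hodge star omega omega'}, whose right-hand side is a component contraction that is manifestly local to each summand of the decomposition. First I would fix a basis $\{e_k\}_k$ of $V$ adapted to $V = \toplus_{i=1}^r V_i$, grouping the indices into consecutive blocks $B_1 < \cdots < B_r$ so that $\{e_k : k \in B_i\}$ is a basis of $V_i$, and choosing the orientation $\theta^1 \wedge \cdots \wedge \theta^n$ compatible with $\omega_{\vol} = \omega_{\vol,1}\wedge \cdots \wedge \omega_{\vol,r}$. Because the decomposition is $g$-orthogonal, both $(g_{k\ell})$ and $(g^{k\ell})$ are block diagonal and $\abs{g} = \prod_{i=1}^r \abs{g_i}$. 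Viewing $V^* = \toplus_i V_i^*$, I regard each $\omega_i, \omega'_i$ as a form on $V$ whose components vanish unless all their indices lie in $B_i$.

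Next, by bilinearity of $(\omega, \omega') \mapsto \int_V \omega \wedge \hstar \omega'$ it suffices to handle a single pair $\omega_i, \omega'_k$ of homogeneous forms of degrees $p_i$ and $p_k$. The product $\omega_i \wedge \hstar \omega'_k$ has degree $n + p_i - p_k$, so $\int_V$ kills it unless $p_i = p_k =: p$. For such a pair I apply \eqref{eq hodge star omega omega'} on $V$:
\begin{align*}
\omega_i \wedge \hstar \omega'_k = \tfrac{1}{p!}\sqrt{\abs{g}}\,(\omega_i)_{\ell_1 \dots \ell_p}(\omega'_k)^{\ell_1 \dots \ell_p}\otimes \theta^1 \wedge \cdots \wedge \theta^n,
\end{align*}
whence, by the very definition of $\int_V$, one gets $\int_V \omega_i \wedge \hstar \omega'_k = \tau\big(\tfrac{1}{p!}(\omega_i)_{\ell_1 \dots \ell_p}(\omega'_k)^{\ell_1 \dots \ell_p}\big)$, with $(\omega'_k)^{\ell_1 \dots \ell_p} = g^{\ell_1 \ell'_1}\cdots g^{\ell_p \ell'_p}(\omega'_k)_{\ell'_1 \dots \ell'_p}$.

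The decisive step is then purely combinatorial. The factor $(\omega_i)_{\ell_1 \dots \ell_p}$ vanishes unless all $\ell_a \in B_i$, while block-diagonality of $(g^{k\ell})$ together with the support of $\omega'_k$ forces $(\omega'_k)^{\ell_1 \dots \ell_p}$ to vanish unless all $\ell_a \in B_k$. Hence for $p \geq 1$ and $i \neq k$ the contraction is identically zero and the cross terms disappear. For $i = k$ the same contraction runs over indices in $B_i$ only, and there $g^{\ell_a \ell'_a}$ coincides with the block inverse $g_i^{\ell_a \ell'_a}$, so the expression is exactly the one produced by applying \eqref{eq hodge star omega omega'} inside $V_i$. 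Since $\int_{V_i}$ divides out its normalization $\sqrt{\abs{g_i}}$ in precisely the way $\int_V$ divides out $\sqrt{\abs{g}}$, the spurious factor $\prod_{k \neq i}\sqrt{\abs{g_k}}$ hidden in $\sqrt{\abs{g}}$ never needs to be tracked and one reads off $\int_V \omega_i \wedge \hstar \omega'_i = \tau\big(\tfrac{1}{p!}(\omega_i)_{\ell_1 \dots \ell_p}(\omega'_i)^{\ell_1 \dots \ell_p}\big) = \int_{V_i}\omega_i \wedge \hstar_i \omega'_i$. Summing over $i$ gives the claim.

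For degrees $p \geq 1$ this is airtight, and I expect the genuine obstacle to be the degenerate degree-zero contributions: a degree-zero component lies in $\exter^0 V_i^* \cong \bbC$ for \emph{every} $i$ at once, so it cannot be attributed to a single block, and when two blocks both carry a nonzero scalar part $\omega_i, \omega'_k \in \algA$ with $p_i = p_k = 0$ the above cancellation produces a residual cross term $\tau(\omega_i \omega'_k)$ that is not forced to vanish by the form structure alone. The point is that these terms do vanish in the intended applications, where $\algA$ is a matrix algebra, $\tau$ is the trace, and the scalar parts associated to distinct blocks occupy disjoint diagonal corners (so their product is $0$); I would therefore either restrict to the case where at most one block has a degree-zero part, or add the hypothesis $\tau(\omega_i\omega'_k)=0$ for $i\neq k$, and treat the scalar part of $\omega$ separately. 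With that caveat the block-locality argument carries through verbatim.
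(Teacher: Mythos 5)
Your argument is correct and takes a genuinely different route from the paper's. The paper never invokes the pairing identity \eqref{eq hodge star omega omega'}; it works directly with the definition \eqref{eq hodge on V} and proves, by reorganizing the index sums via $(n_i-p_i,\, n-n_i)$-shuffles, the form-level factorization
\begin{align*}
\hstar \omega'_i
&=
(-1)^{p_i(n_1 + \cdots + n_{i-1})}\,
\omega_{\vol, 1} \wedge \cdots \wedge \omega_{\vol, i-1}
\wedge
(\hstar_i \omega'_i)
\wedge
\omega_{\vol, i+1} \wedge \cdots \wedge \omega_{\vol, r},
\end{align*}
from which $\omega_i \wedge \hstar \omega'_i = a_i\, \omega_{\vol}$ with $\omega_i \wedge \hstar_i \omega'_i = a_i\, \omega_{\vol, i}$, and the claim follows by applying $\tau$. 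You bypass that combinatorics entirely: \eqref{eq hodge star omega omega'} reduces both $\int_V \omega_i \wedge \hstar \omega'_k$ and $\int_{V_i} \omega_i \wedge \hstar_i \omega'_i$ to metric contractions of components, block-diagonality of $(g^{k\ell})$ kills the cross terms in degree $\geq 1$ and identifies the surviving diagonal contraction with the one internal to $V_i$, and the normalizations $\sqrt{\abs{g}}$ and $\sqrt{\abs{g_i}}$ are absorbed by the very definitions of $\int_V$ and $\int_{V_i}$. Your route is shorter and free of sign and shuffle bookkeeping; what the paper's longer computation buys in exchange is the explicit factorization of $\hstar\omega'_i$ displayed above, a pointwise statement stronger than the integrated identity --- though only the integrated identity is ever used, namely in Prop.~\ref{prop eta start eta omega star omega}.

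Your degree-zero caveat is not a weakness of your proof but a genuine defect of the lemma as stated, and it is the paper's proof that has the gap here: the assertion there that $\omega_i \wedge \hstar \omega'_j = 0$ for $i \neq j$ is false when $p_i = p_j = 0$, since then $\omega_i \wedge \hstar \omega'_j = \omega_i \omega'_j\, \omega_{\vol}$. Concretely, take $r=2$, $\algA = \bbC$, $\tau = \Id$, and $\omega_i = \omega'_i = 1 \in \algA \otimes \exter^0 V_i^*$ for $i = 1,2$: then $\int_V \omega \wedge \hstar \omega' = 4$ while $\tsum_{i=1}^{2} \int_{V_i} \omega_i \wedge \hstar_i \omega'_i = 2$. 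So the lemma needs either the restriction $p_i \geq 1$, or precisely your supplementary hypothesis that products of degree-zero components attached to distinct blocks vanish (in $\algA$, or at least under $\tau$). The latter is what holds in the one place the lemma is applied, Prop.~\ref{prop eta start eta omega star omega}, where the values of the forms $\eta_i$ lie in the pairwise disjoint diagonal blocks $\phi_i(M_{n_i})$, so all such cross products are zero.
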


\begin{proof}
Let us introduce some notations. Let $n_i = \dim V_i$ (so that $n = \tsum_{i=1}^{r} n_i$); let $\{ e^i_k \}_{k=1, \dots, n_i}$ be a basis of $V_i$; for $\ell = 1, \dots, n$ written as $\ell = n_1 + \cdots + n_{i-1} + k$ with $k=1, \dots, n_i$, let $e_\ell \defeq e^i_k$ be the elements of a basis of $V$; let $\{ \theta_i ^k \}_{k=1, \dots, n_i}$ and $\{ \theta^\ell \}_{\ell=1, \dots, n}$ be the corresponding dual basis. Let $I_i$ be the set of indices $\ell = n_1 + \cdots + n_{i-1} + k$ with $k=1, \dots, n_i$, so that $e_\ell \in V_i$ for $\ell \in I_i$, and let $I_i^c$ be its complement in $\{1, \dots, n\}$.

The matrix $(g_{k\ell})$ is block diagonal, and so is its inverse $(g^{k\ell})$ with blocks $(g_i^{k\ell})$ and $\sqrt{\abs{g}} = \prod_{i=1}^{r} \sqrt{\abs{g_i}}$. The orientation of the basis $\{ \theta^\ell \}_{\ell=1, \dots, n}$ is chosen such that $\omega_{\vol} =  \omega_{\vol, 1} \wedge \cdots \wedge \omega_{\vol, r}$ with $\omega_{\vol, i} \defeq \sqrt{\abs{g_i}} \bbbone \otimes \theta_i^1 \wedge \cdots \wedge \theta_i^{n_i}$.

By linearity and the fact that $\omega_i \wedge \hstar \omega'_j = 0$ for $i \neq j$, one has $\omega \wedge \hstar \omega' = \tsum_{i=1}^{r} \omega_i \wedge \hstar \omega'_i$. For fixed $i$, to compute $\hstar \omega'_i$ we use \eqref{eq hodge on V}: in the RHS, only components of $(g^{k\ell})$ belonging to the block $(g_i^{k\ell})$ can appear, so that $\epsilon_{\ell_1, \dots, \ell_n}$ must contain all the indices $\ell \in I_i^c$. This implies that the RHS contains all the volume forms $\omega_{\vol, j} =\sqrt{\abs{g_j}} \bbbone \otimes \theta_j ^1 \wedge \cdots \wedge \theta_j ^{n_j}$ for $j \neq i$.

Let us consider a fixed $n$-uplet $(\ell_1, \dots, \ell_n)$ in the sum in the RHS of \eqref{eq hodge on V}. Our strategy is to collect all the $\ell_r \in I_i$ in front of the $n$-uplet. Since for $r=1, \dots, p_i$ one has $\ell_r \in I_i$, we need only to use a permutation of the remaining indices to collect the other $n_i - p_i$ indices which belong to $I_i$. As a permutation, we use the unique $(n_i - p_i, n - n_i)$-shuffle\footnote{A $(r,s)$-shuffle is a permutation $\sigma \in \kS_{r+s}$ such that $\sigma(1) < \sigma(2) < \cdots < \sigma(r)$ and $\sigma(r+1) < \sigma(r+2) < \cdots < \sigma(r+s)$. It is well-known that there are $\tfrac{(r+s)!}{r! s!}$ $(r,s)$-shuffles in $\kS_{r+s}$.} that maps $(\ell_{p_i + 1}, \dots, \ell_n)$ into $(\ell'_{p_i + 1}, \dots, \ell'_{n_i}, \ell''_{1}, \dots, \ell''_{n - n_i})$ such that $\ell'_r \in I_i$ for  $r = p_i + 1, \dots, n_i$ and $\ell''_{r'} \in I_i^c$ for $r' = n_i + 1, \dots, n$. This shuffle transforms the term (no summation) $\epsilon_{\ell_1, \dots, \ell_n} \theta^{\ell_{p_i+1}} \wedge \cdots \wedge \theta^{\ell_n}$ into $\epsilon_{\ell_1, \dots, \ell_{p_i}, \ell'_{p_i + 1}, \dots, \ell'_{n_i}, \ell''_{1}, \dots, \ell''_{n - n_i}} \theta^{\ell'_{p_i + 1}} \wedge \cdots \wedge \theta^{\ell'_{n_i}} \wedge \theta^{\ell''_{1}} \wedge \cdots \wedge \theta^{\ell''_{n - n_i}}$ (since the shuffle acts on both the $\epsilon$ indices and the $\theta^\ell$, there is no sign).

The summation on all the $n - p_i$-uplets $(\ell_{p_i + 1}, \dots, \ell_n)$ can now be performed in 3 steps: first using the $(n_i - p_i, n - n_i)$-shuffle which separates the indices belonging to $I_i$ and $I_i^c$ (there are $\tfrac{(n-p)!}{(n_1-p)! n_2!}$ such shuffles to use); then using a permutation on the $n - n_i$ indices $\ell''_{r'} \in I_i^c$ for $r' = n_i + 1, \dots, n$ to order them in increasing order (there are $(n - n_i)!$ such permutations) so that we can make appear the $\omega_{\vol, j}$ for $j \neq i$; finally managing the summation on the indices $\ell'_r \in I_i$ for  $r = p_i + 1, \dots, n_i$. This gives the series of equalities (with the previous notations and convention for the indices):
\begin{align*}
&\tfrac{1}{(n - p_i)!} 
\sqrt{\abs{g}}
\epsilon_{\ell_1, \dots, \ell_n} 
\theta^{\ell_{p_i + 1}} \wedge \cdots \wedge \theta^{\ell_n}
\\
&= \begin{multlined}[t]
\tfrac{(n - p_i)!}{(n - p_i)! (n_i - p_i)! (n - n_i)!} 
\sqrt{\abs{g}}
\epsilon_{\ell_1, \dots, \ell_{p_i}, \ell'_{p_i + 1}, \dots, \ell'_{n_i}, \ell''_{1}, \dots, \ell''_{n - n_i}} 
\\
\theta^{\ell'_{p_i + 1}} \wedge \cdots \wedge \theta^{\ell'_{n_i}} 
\wedge 
\theta^{\ell''_{1}} \wedge \cdots \wedge \theta^{\ell''_{n - n_i}}
\end{multlined}
\\
&= \begin{multlined}[t]
\tfrac{(n - n_i)!}{(n_i - p_i)! (n - n_i)!}
\sqrt{\abs{g}}
\epsilon_{\ell_1, \dots, \ell_{p_i}, \ell_{p_i + 1}, \dots, \ell_{n_i}, 1, \dots, n_1 + \cdots + n_{i-1}, n_1 + \cdots + n_{i} + 1, \dots, n}
\\
 \theta^{\ell_{p_i + 1}} \wedge \cdots \wedge \theta^{\ell_{n_i}}
 \wedge 
(\wedge_{\ell' \in I_i^c} \theta^{\ell'})
\end{multlined}
\\
&= \begin{multlined}[t]
\tfrac{1}{(n_i - p_i)!} 
\sqrt{\abs{g_i}}
\epsilon_{\ell_1, \dots, \ell_{p_i}, \ell_{p_i + 1}, \dots, \ell_{n_i}, 1, \dots, n_1 + \cdots + n_{i-1}, n_1 + \cdots + n_{i} + 1, \dots, n}
\\
 \theta^{\ell_{p_i + 1}} \wedge \cdots \wedge \theta^{\ell_{n_i}}
 \wedge 
(\wedge_{j=1, \dots, r ; j\neq i} \omega_{\vol, j})
\end{multlined}
\\
&= \begin{multlined}[t]
(-1)^{p_i(n_1 + \cdots + n_{i-1})}
\tfrac{1}{(n_i - p_i)!} 
\sqrt{\abs{g_i}}
\epsilon_{1, \dots, n_1 + \cdots + n_{i-1}, \ell_1, \dots, \ell_{n_i}, n_1 + \cdots + n_{i} + 1, \dots, n}
\\
\omega_{\vol, 1} \wedge \cdots \wedge \omega_{\vol, i-1}
\wedge
\theta^{\ell_{p_i + 1}} \wedge \cdots \wedge \theta^{\ell_{n_i}}
\wedge 
\omega_{\vol, i+1} \wedge \cdots \wedge \omega_{\vol, r}
\end{multlined}
\end{align*}
To deal with the summation on the indices $\ell_r \in I_i$, we shift their values by $-(n_1 + \cdots + n_{i-1})$ to get indices $k_r = 1, \dots, n_i$. Then we can replace the $\epsilon$ tensor by the tensor $\epsilon_{k_1, \dots, k_{n_i}}$ and at the same time replacing the $\theta^{\ell_r}$'s by the $\theta_i^{k_r}$'s. Using the factor $\tfrac{1}{(n_i - p_i)!} \sqrt{\abs{g_i}}$ in front, this makes appears $\hstar_i$:
\begin{align*}
\hstar \omega'_i
&=
(-1)^{p_i(n_1 + \cdots + n_{i-1})}
\omega_{\vol, 1} \wedge \cdots \wedge \omega_{\vol, i-1}
\wedge
(\hstar_i \omega'_i)
\wedge 
\omega_{\vol, i+1} \wedge \cdots \wedge \omega_{\vol, r}
\end{align*}
so that, with $\omega_i \wedge \hstar_i \omega'_i = a_i \omega_{\vol,i}$, where $a_i = \tfrac{1}{p_i !} \omega_{i, k_1, \dots, k_{p_i}} {\omega'_i}^{k_1, \dots, k_{p_i}}$,
\begin{align*}
\omega_i \wedge \hstar \omega'_i
&=
\omega_{\vol, 1} \wedge \cdots \wedge \omega_{\vol, i-1}
\wedge
(\omega_i \wedge \hstar_i \omega'_i)
\wedge 
\omega_{\vol, i+1} \wedge \cdots \wedge \omega_{\vol, r}
\\
&=
a_i
\omega_{\vol, 1} \wedge \cdots \wedge \omega_{\vol, r}
=
a_i \omega_{\vol}
\end{align*}
Notice that moving $\omega_i$ inside the product of the volume forms $\omega_{\vol, j}$ and then putting $a_i$ in front of this product is possible since volume forms have commutative values in $\algA$. Since $\int_{V_i} \omega_i \wedge \hstar_i \omega'_i = \tau(a_i)$, one gets $\int_{V} \omega \wedge \hstar \omega' = \tsum_{i=1}^{r} \int_{V} \omega_i \wedge \hstar \omega'_i = \tsum_{i=1}^{r} \tau(a_i) = \tsum_{i=1}^{r} \int_{V_i} \omega_i \wedge \hstar_i \omega'_i$.
\end{proof}

%%%%%%%%%%%%%%%%
\subsection{Construction of a block orthogonal basis}
\label{sec construction block orthogonal basis}

Let $\{ \partial^{i}_{\!\!\algA, \alpha} \defeq \ad_{E^{i}_{\!\!\algA, \alpha}} \}_{\alpha \in I_{i}}$ be a basis in $\Der(\algA_i)$, for $i=1,\dots, r$. We have constructed in Sect.~\ref{sec one step in the sequence} a free family of elements $\partial^{j}_{\algB, \beta} \defeq \ad_{E^{j}_{\algB, \beta}}$ in $\Der(\algB_j)$, where $E^{j}_{\algB, \beta} \defeq \phi_{i}^{j, \ell}(E^{i}_{\!\!\algA, \alpha}) \in \ksl_{m_j}$ for $\beta \defeq (i, \ell, \alpha) \in J^\phi_{j}$ and we have shown that the derivations $\partial^{j}_{\algB, \beta}$ are orthogonal if the derivations $\partial^{i}_{\!\!\algA, \alpha}$ are orthogonal.

In this appendix, we will complete this free family with elements $\partial^{j}_{\algB, \beta} = \ad_{E^{j}_{\algB, \beta}}$ for $\beta \in J^c_{j}$ such that
\begin{align*}
g_{\algB} ( \partial^{j}_{\algB, \beta}, \partial^{j}_{\algB, \beta'} ) = 0 
\text{ for any $\beta \in J^\phi_{j}$ and $\beta' \in J^c_{j}$.}
\end{align*}
This will decompose $\Der(\algB_j)$ into two orthogonal summands. One knows that such a procedure is always possible, but for practical applications (for instance to construct gauge field theories, as in Sect.~\ref{sec numerical exploration of the SSBM}), such a concrete basis could be useful, in particular since the basis that we construct is adapted to the map $\phi : \algA \to \algB$.

Recall that the maps $\phi_{i}^{j, \ell}$ send the matrix algebras $M_{n_i}$, for $i=1,\dots, r$ and $\ell = 1, \dots, \alpha_{ji}$, on the diagonal of $M_{m_j}$, with a possible remaining block $\bbbzero_{n_0}$ on this diagonal. In order to manage this last block in the same way as the others, let us add the value $i=0$ to refer to this block, with $\alpha_{j0}=1$. In the following, we will use the notation $\phi_{i}^{j, \ell}(M_{n_i})$ with $i=0$ (and $\ell = 1$) to refer to this block.

Consider any matrix $E \in M_{m_j}$ which have only non zero entries outside the blocks $\phi_{i}^{j, \ell}(M_{n_i})$ for $i=0, \ldots, r$. Then a straightforward computation using block matrices shows that $\tr( E^{j}_{\algB, \beta} E ) = \tr( E E^{j}_{\algB, \beta} ) = 0$ for any $\beta \in J^\phi_{j}$ (the product in the trace is off diagonal). In the same way, for any $E$ in the block $\phi_{0}^{j, 1}(M_{n_0})$, one has $E^{j}_{\algB, \beta} E = E E^{j}_{\algB, \beta} = 0$.  This implies that $\ad_E$ for any $E$ outside of the blocks $\phi_{i}^{j, \ell}(M_{n_i})$ for $i=1, \ldots, r$ is orthogonal (for the metric induced by the trace) to $\ad_{E^{j}_{\algB, \beta}}$ for any $\beta \in J^\phi_{j}$. 

To complete the free family $\{ \ad_{E^{j}_{\algB, \beta}} \}_{\beta \in J^\phi_{j}}$ in $\Der(\algB_j)$, it is then sufficient to describe the elements $E \in \ksl_{m_j} \simeq \Der(\algB_j)$ “outside” of the blocks $\phi_{i}^{j, \ell}(M_{n_i})$ for $i=1, \ldots, r$. We will do that using the block decomposition induced by the $\phi_{i}^{j, \ell}(M_{n_i})$.

We can introduce a first family of matrices $E^{j}_{\algB, \beta}$ for new indices $\beta$ (in a set $J^c_{j}$) as (traceless) matrices with non zero entries in $\phi_{0}^{j, 1}(M_{n_0})$. There are $n_0^2 -1$ such elements.

Then, let us notice that for fixed $i=0, \dots,r$, the $\phi_{i}^{j, \ell}(M_{n_i})$ for $\ell =1, \dots, \alpha_{ji}$ are in diagonal blocks $\alpha_{ji} n_i \times \alpha_{ji} n_i$. We call these blocks the \emph{enveloping blocks} of $M_{n_i}$ inside $M_{m_j}$. We introduce a second family of matrices $E^{j}_{\algB, \beta}$ as matrices with non zero entries outside the enveloping blocks of the $M_{n_i}$'s. For every $i=0, \dots, r$, the row containing the enveloping block of $M_{n_i}$ contains $\tsum_{i,i', i\neq i'} (\alpha_{ji} n_i)(\alpha_{ji'} n_{i'})$ entries (non zero outside the enveloping block).

The next level of blocks embedding we consider is the one inside the enveloping block of $M_{n_i}$, for every $i$. In such a block, $\phi_{i}^{j, \ell}$ maps $M_{n_i}$ into the diagonal. For every $i = 0, \dots, r$, we can then introduce a family of matrices  $E^{j}_{\algB, \beta}$ with non zero entries inside the enveloping blocks of $M_{n_i}$  but outside the blocks $\phi_{i}^{j, \ell}(M_{n_i})$ (for $\ell = 1, \dots, \alpha_{ji}$). For a fixed $i$, there are $\alpha_{ji}(\alpha_{ji}-1)$ blocks of size $n_i \times n_i$, so that one can construct $\tsum_{i} \alpha_{ji}(\alpha_{ji}-1) (n_i)^2$ such matrices in the third family (as expected, for $i=0$ there is no contribution since $\alpha_{j0} = 1$).

Inside the enveloping block of $M_{n_i}$, we can also construct matrices with non zero entries in the blocks $\phi_{i}^{j, \ell}(M_{n_i})$. Indeed, for fixed $i$ and $\ell$, let us consider the matrix $E_i^\ell$ with $\bbbone_{n_i}$ in the block $\phi_{i}^{j, \ell}(M_{n_i})$. For $\beta = (i,\ell, \alpha)$ one has $E^{j}_{\algB, \beta} E_i^\ell = E^{j}_{\algB, \beta}$ and for $\beta' = (i,\ell', \alpha)$ with $\ell \neq \ell'$ one has $E^{j}_{\algB, \beta'} E_i^\ell = 0$. Notice that $E_i^\ell \notin \ksl_{m_j}$, but, for $\ell = 1, \dots, \alpha_{ji}-1$, the matrices $E_{i}^{\ell} - E_{i}^{\ell+1}$ belong to $\ksl_{m_j}$ and, by the previous remark, are orthogonal to the $E^{j}_{\algB, \beta}$ for $\beta \in J^\phi_{j}$. These matrices constitute the fouth family: there are $\tsum_{i} (\alpha_{ji}-1)$ such matrices (once again, there is not contribution for $i=0$) 

The fifth and last family of matrices are constructed as $n_{i+1} E_i^1 - n_i E_{i+1}^1 \in \ksl_{m_j}$ for $i=0, \dots, r-1$. These $r$ matrices have entries in different enveloping blocks.

Let us collect the number of matrices $E^{j}_{\algB, \beta}$ for $\beta \in J^c_{j}$ that we have constructed:
\begin{align*}
\card(J^c_{j})
&= \begin{multlined}[t]
n_0^2 - 1 
+ \tsum_{i \geq 0} \tsum_{i' \geq 0 ; i\neq i'} (\alpha_{ji} n_i)(\alpha_{ji'} n_{i'}) 
\\
+ \tsum_{i \geq 0} \alpha_{ji}(\alpha_{ji}-1) (n_i)^2
+ \tsum_{i \geq 0} (\alpha_{ji}-1)
+ r
\end{multlined}
\\
&= \begin{multlined}[t]
n_0^2
+ 2 \tsum_{i \geq 1} (\alpha_{ji} n_i) n_0 
+ \tsum_{i,i' \geq 1} (\alpha_{ji} n_i)(\alpha_{ji'} n_{i'})
\\
- \tsum_{i \geq 1} \alpha_{ji} [ (n_i)^2 - 1 ]
- 1
\end{multlined}
\\
&= (m_j)^2 - 1 - \card(J^\phi_{j})
\end{align*}
where we have used $m_j = n_0 + \tsum_{i \geq 1} \alpha_{ji} n_i$ and $\card(J^\phi_{j}) = \tsum_{i \geq 1} \alpha_{ji} [ (n_i)^2 - 1 ]$. This shows that $\card(J_{j}) = (m_j)^2 - 1 = \dim \ksl_{m_j}$ for $J_{j} \defeq J^\phi_{j} \cup J^c_{j}$ and that the free family $\{ E^{j}_{\algB, \beta} \}_{\beta \in J_{j}}$ is a basis for $\ksl_{m_j}$. This construction and this computation can be adapted to the case $n_0 = 0$.

Notice that the derivations $\partial^{j}_{\algB, \beta} = \ad_{E^{j}_{\algB, \beta}}$ for $\beta \in J^c_{j}$ are not necessarily orthogonal. One can apply the Gram-Schmidt process to transform this basis into an orthonormal one.

\bibliography{bibliography}

\clearpage

\begin{figure}[t]
\centering
\subfloat[Minimum values for the Higgs potential with details in insert.]{%
\includegraphics[width=0.48\linewidth]{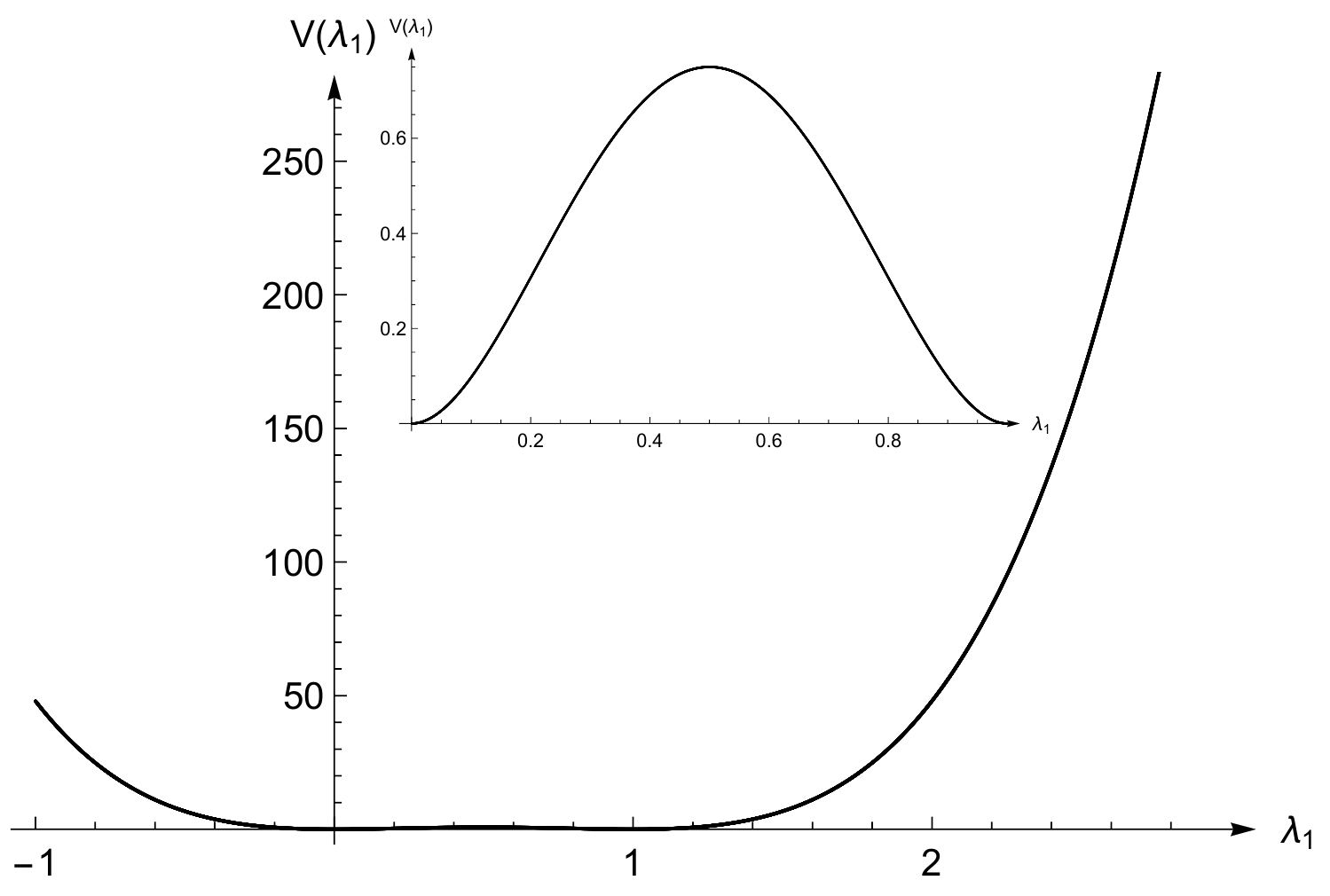}
\label{fig-5Vd}
}% subfloat
\subfloat[Masses for the gauge fields.]{%
\includegraphics[width=0.48\linewidth]{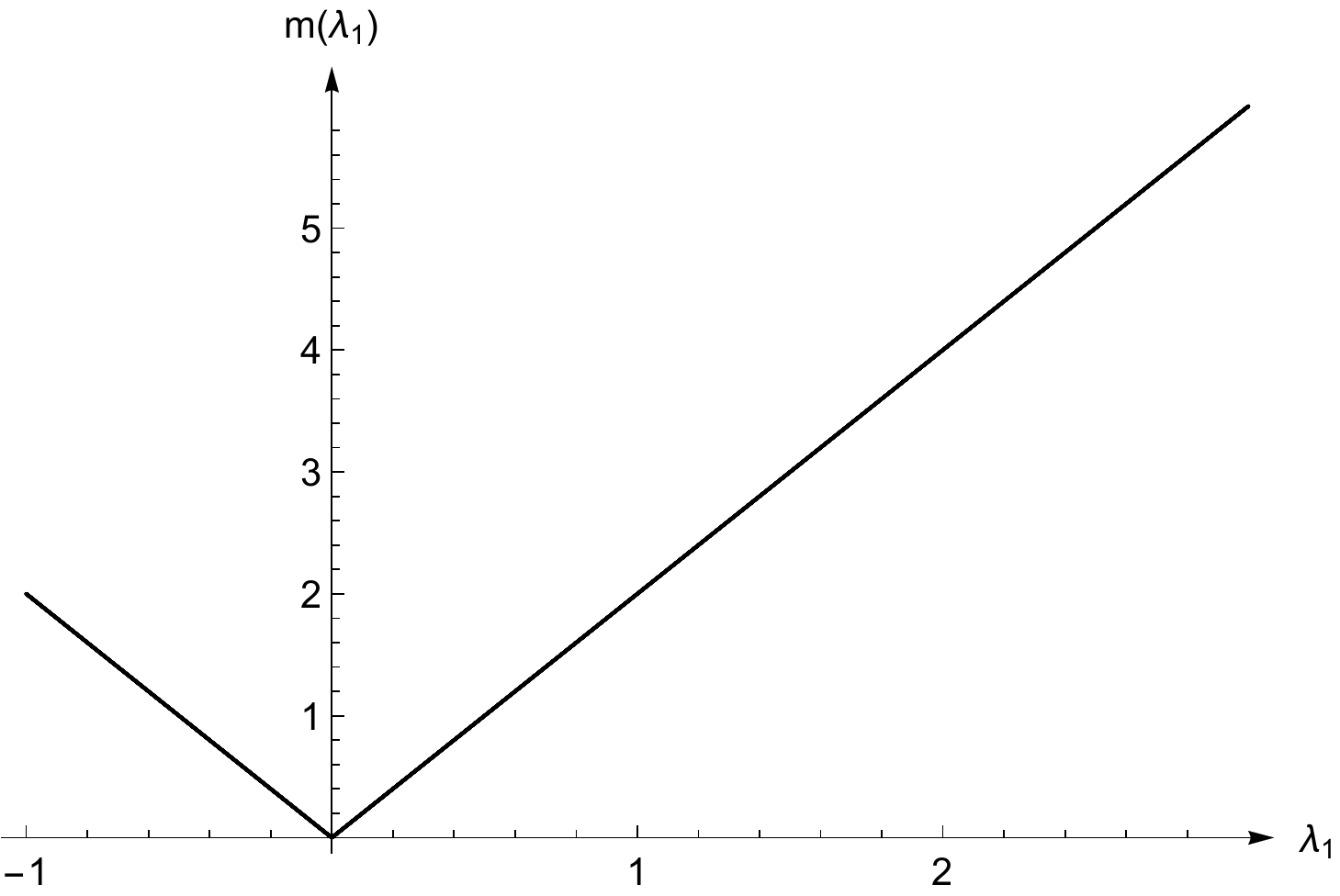}
\label{fig-5Md}
}% subfloat
\caption{$\algA = M_2$: plots for $\lambda_1 \in [-1, 3]$.}
\label{fig-5Vd-5Md}
\end{figure}

\begin{figure}[t]
\centering
\subfloat[Minimum values for the Higgs potential with details in insert.]{%
\includegraphics[width=0.48\linewidth]{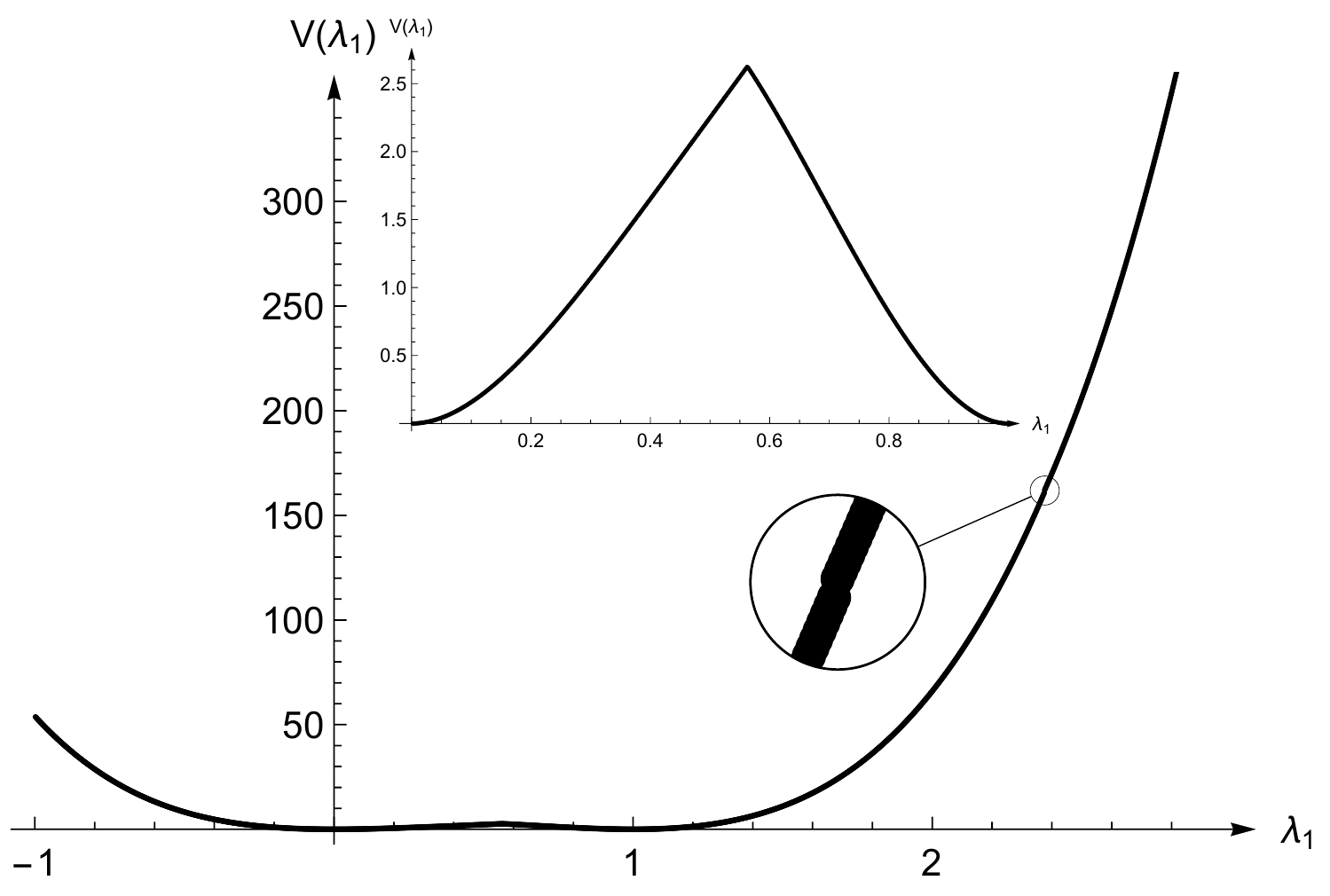}
\label{fig-1Vd}
}% subfloat
\subfloat[Masses for the gauge fields.]{%
\includegraphics[width=0.48\linewidth]{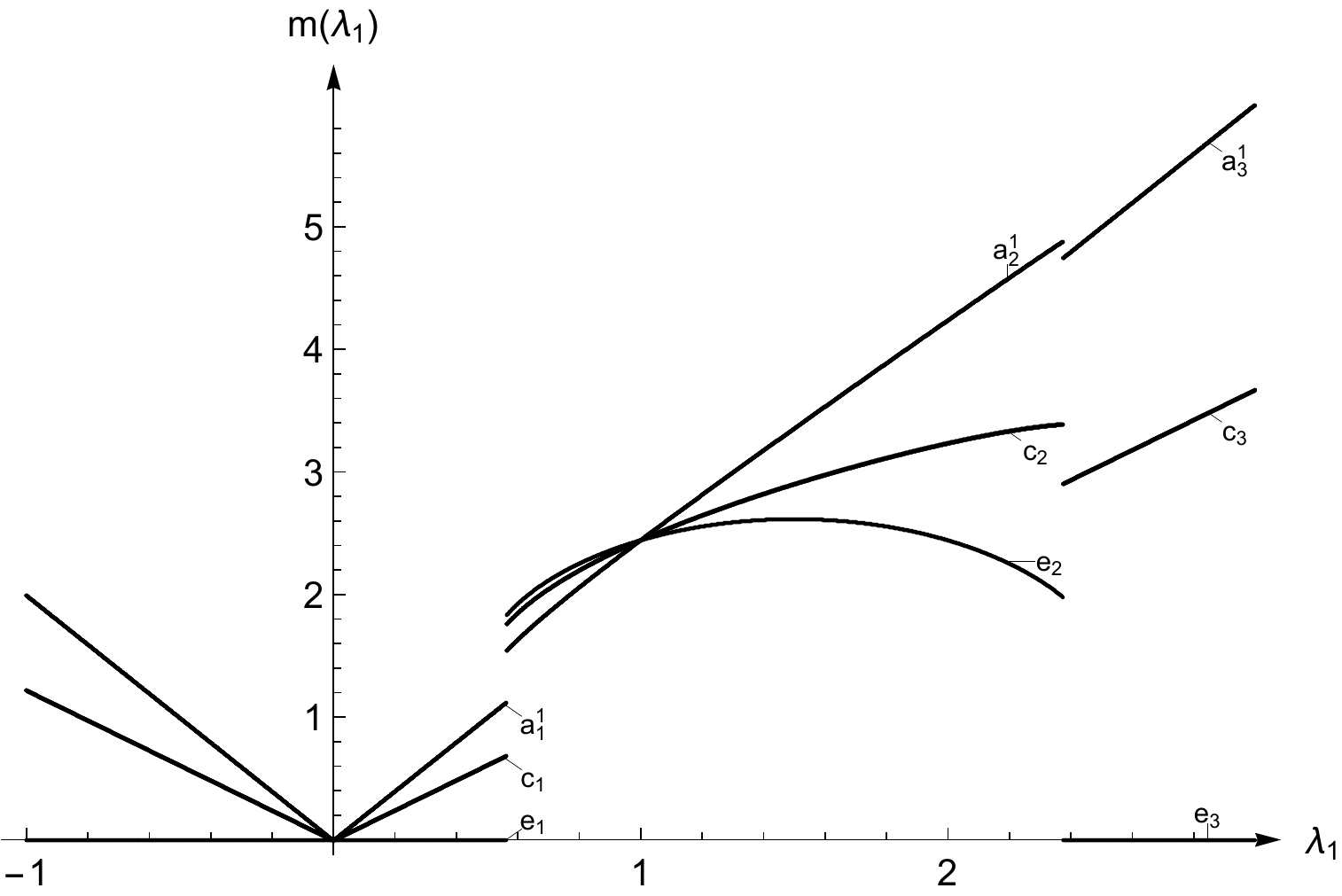}
\label{fig-1Md}
}% subfloat
\caption{$\algA = M_2 \to \algB = M_3$: plots for $\lambda_1 \in [-1, 3]$.}
\label{fig-1Vd-1Md}
\end{figure}

\begin{figure}
\centering
\includegraphics[width=0.25\linewidth]{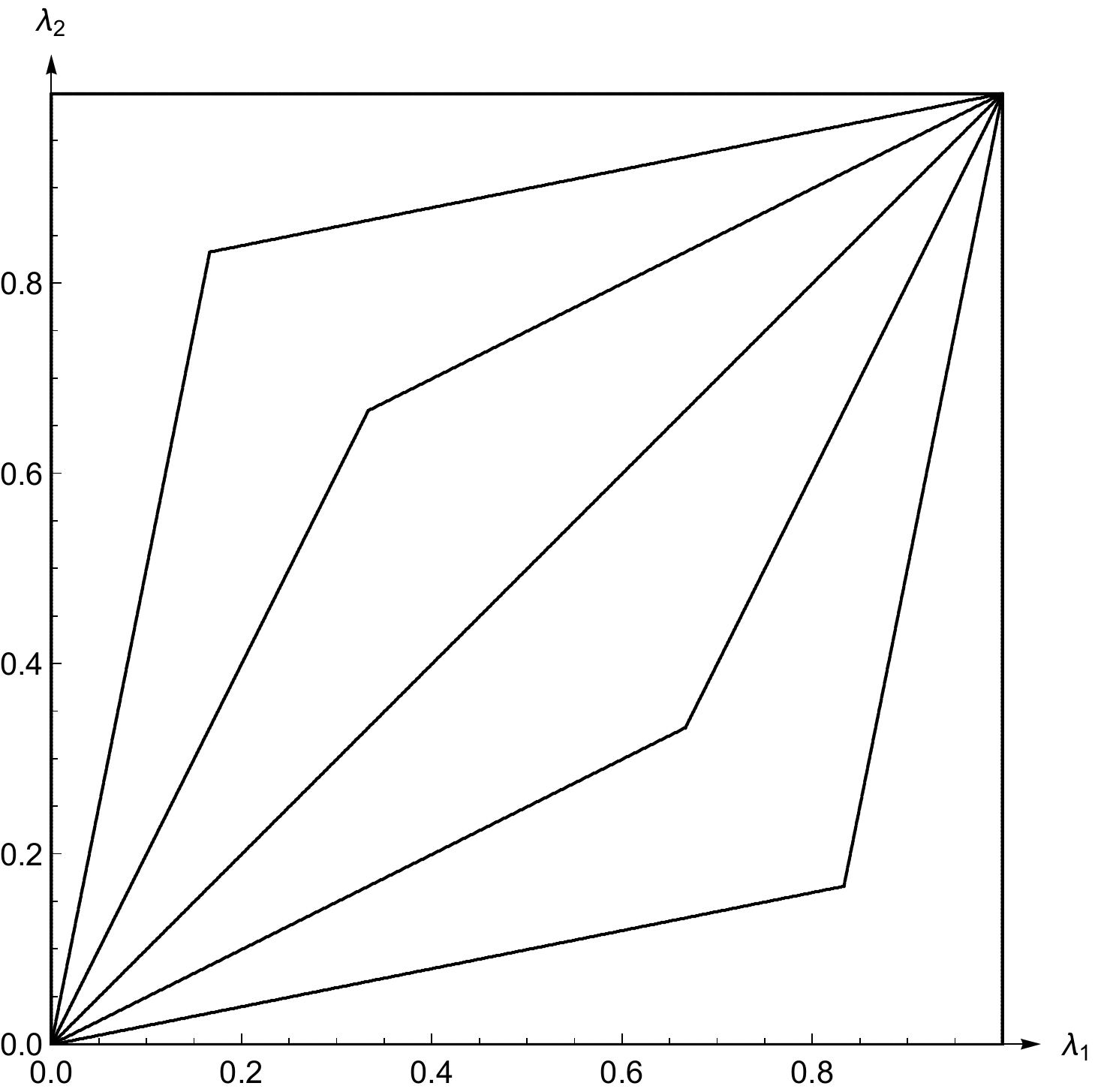}
\caption{The 7 lines for $(\lambda_1, \lambda_1) \in [0,1]^2$ along which computations of masses have been performed.}
\label{fig-lambda-square}
\end{figure}

\begin{figure}
\centering
\subfloat[Minimum values for the Higgs potential.]{%
\includegraphics[width=0.48\linewidth]{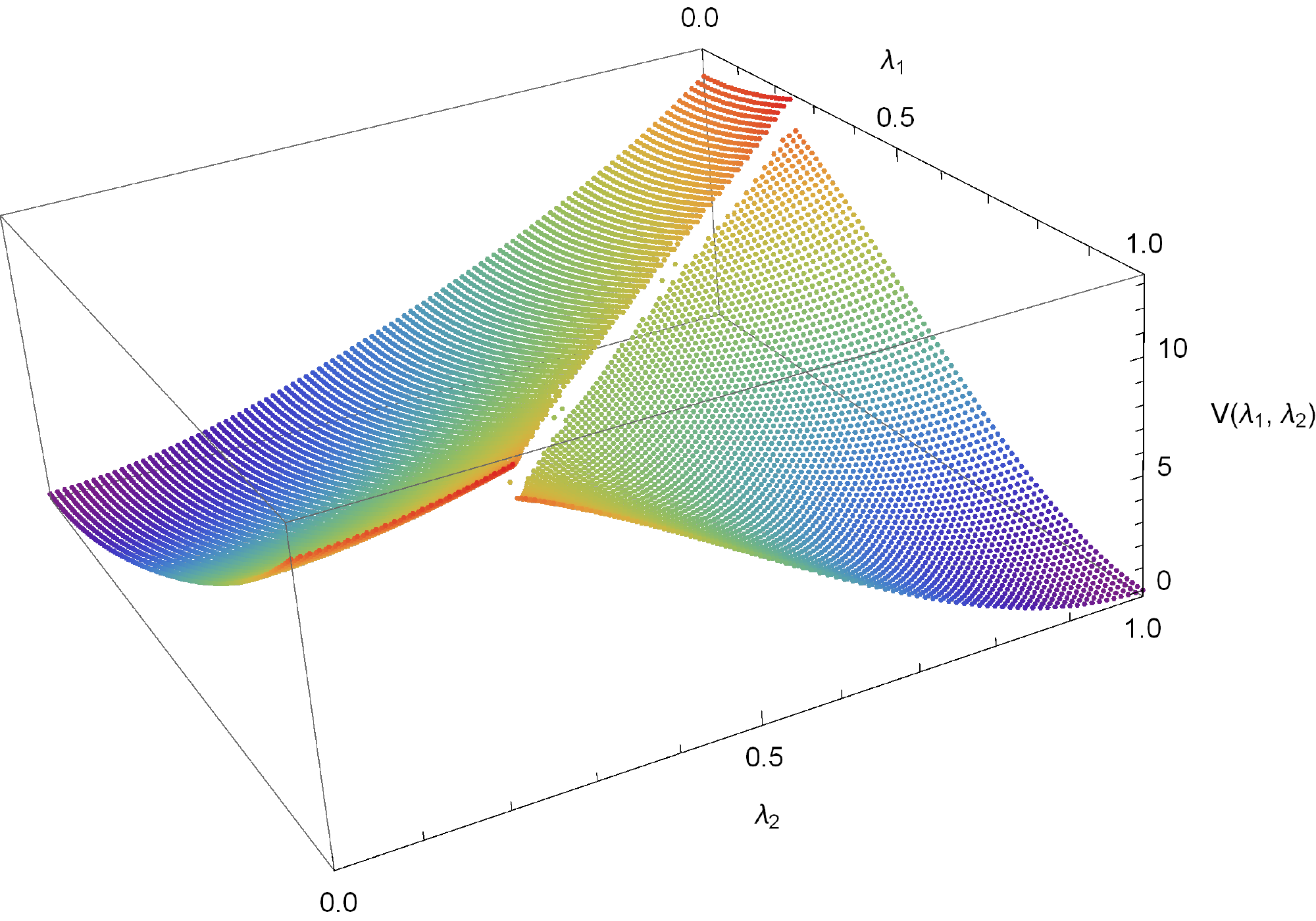}
\label{fig-2Vn}
}% subfloat
\subfloat[Masses for the gauge fields.]{%
\includegraphics[width=0.48\linewidth]{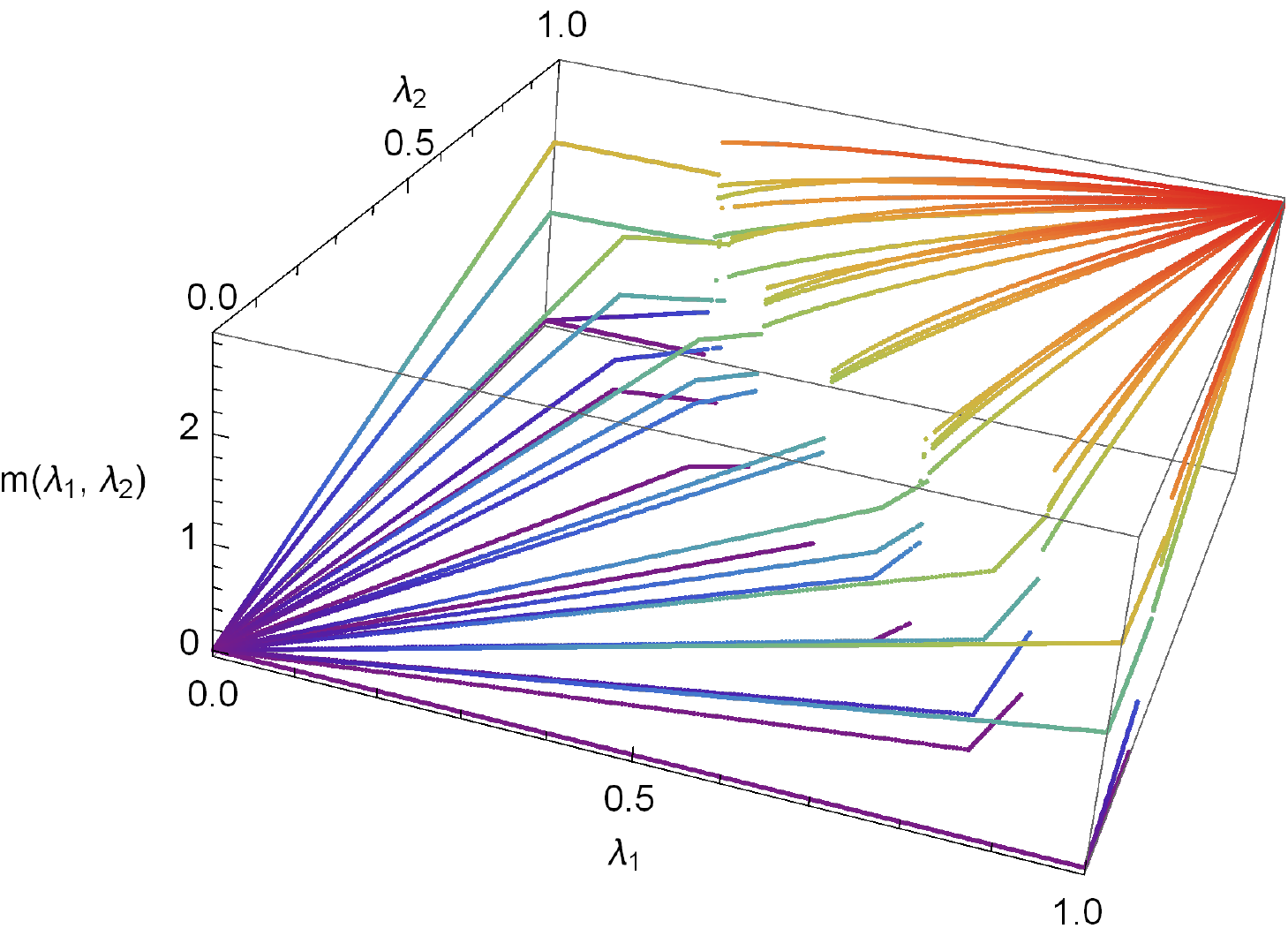}
\label{fig-2Mlc}
}% subfloat
\caption{$\algA = M_2 \oplus M_2 \to \algB = M_4$: plots for the square $[0,1]^2$ in the plane $(\lambda_1, \lambda_2)$.}
\label{fig-2Vn-2Mlc}
\end{figure}

\begin{figure}
\centering
\subfloat[Minimum values for the Higgs potential.]{%
\includegraphics[width=0.48\linewidth]{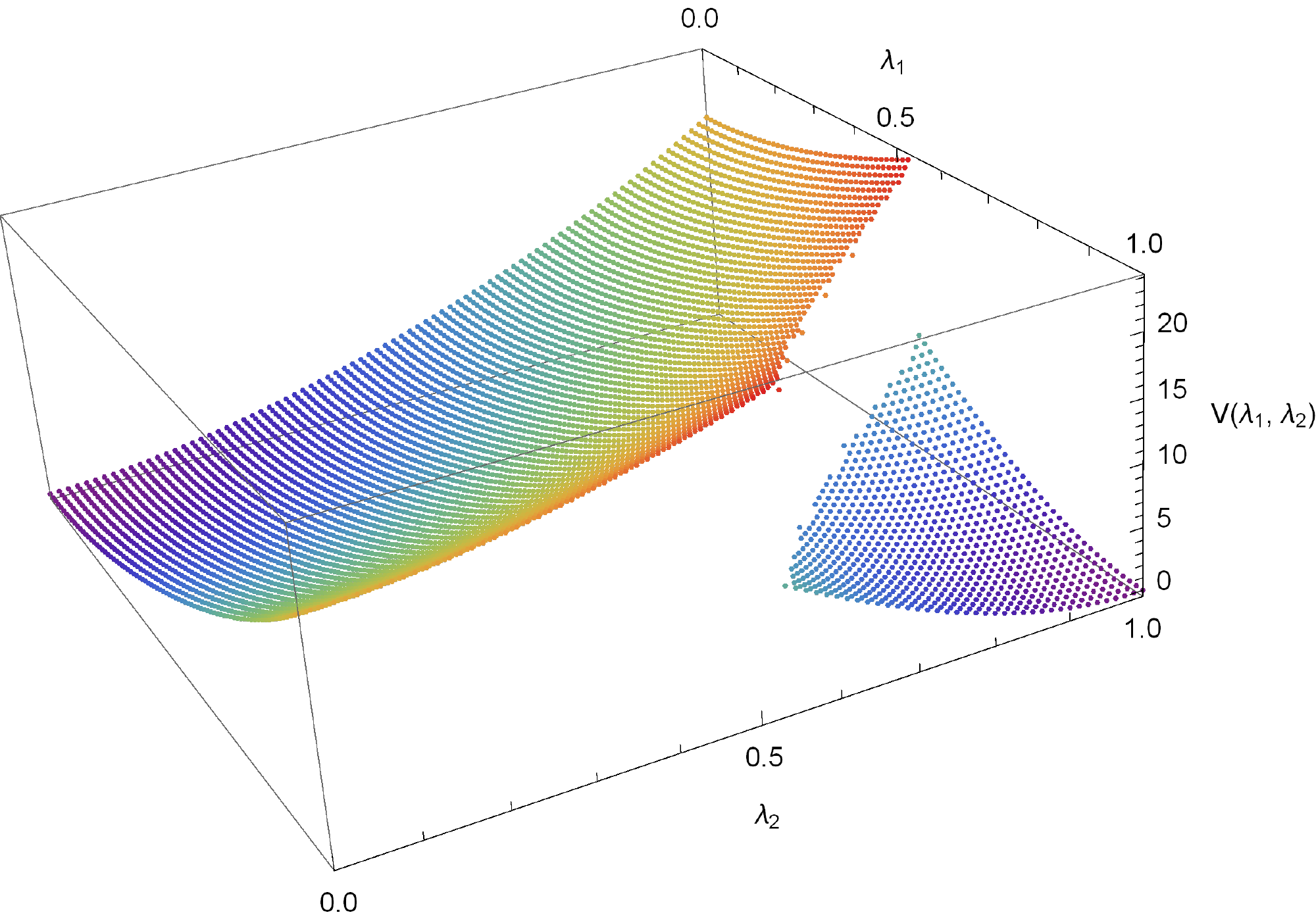}
\label{fig-3Vn}
}% subfloat
\subfloat[Masses for the gauge fields.]{%
\includegraphics[width=0.48\linewidth]{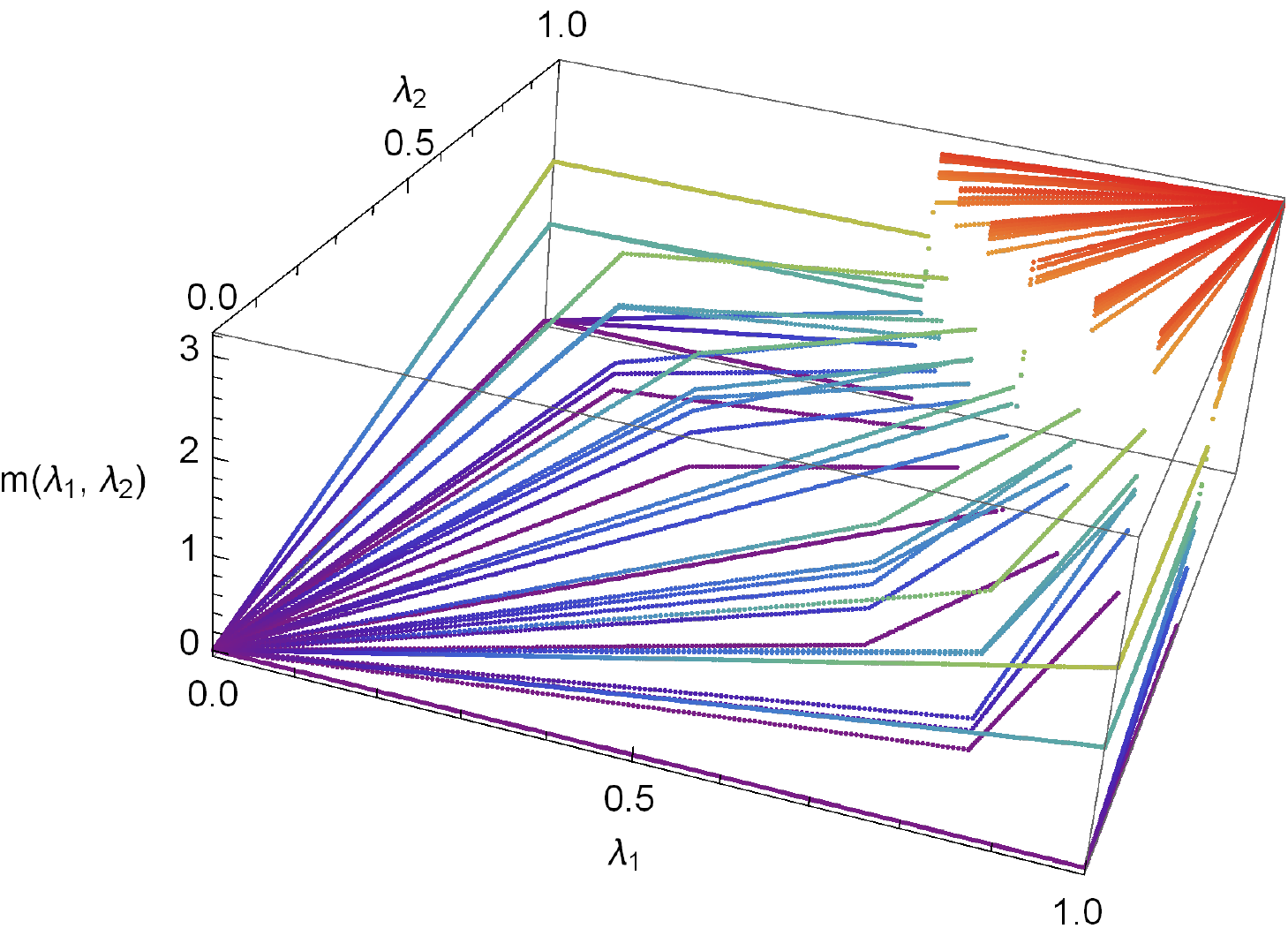}
\label{fig-3Mlc}
}% subfloat
\caption{$\algA = M_2 \oplus M_2 \to \algB = M_5$: plots for the square $[0,1]^2$ in the plane $(\lambda_1, \lambda_2)$.}
\label{fig-3Vn-3Mlc}
\end{figure}

\begin{figure}
\centering
\subfloat[Minimum values for the Higgs potential.]{%
\includegraphics[width=0.48\linewidth]{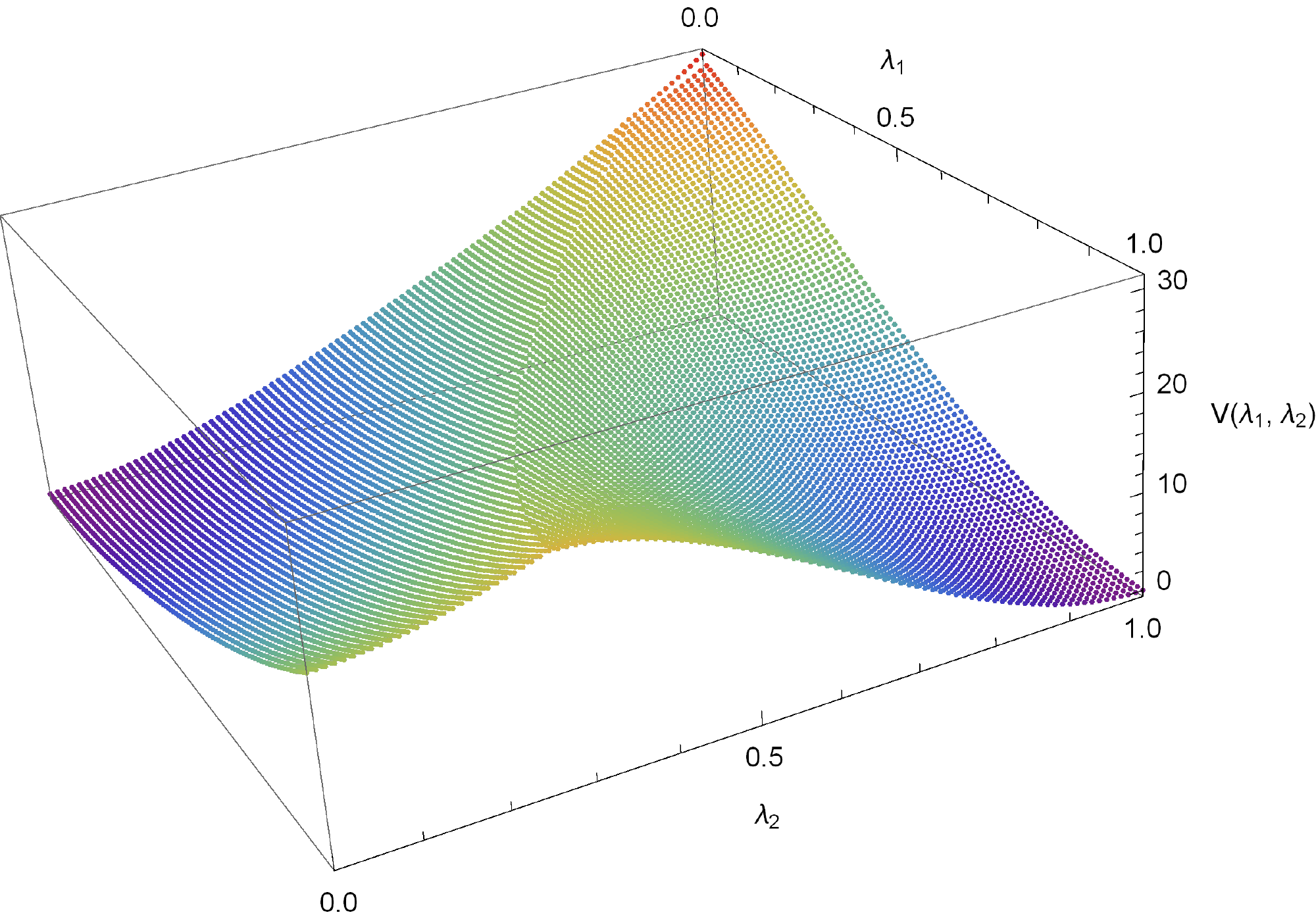}
\label{fig-4Vn}
}% subfloat
\subfloat[Masses for the gauge fields.]{%
\includegraphics[width=0.48\linewidth]{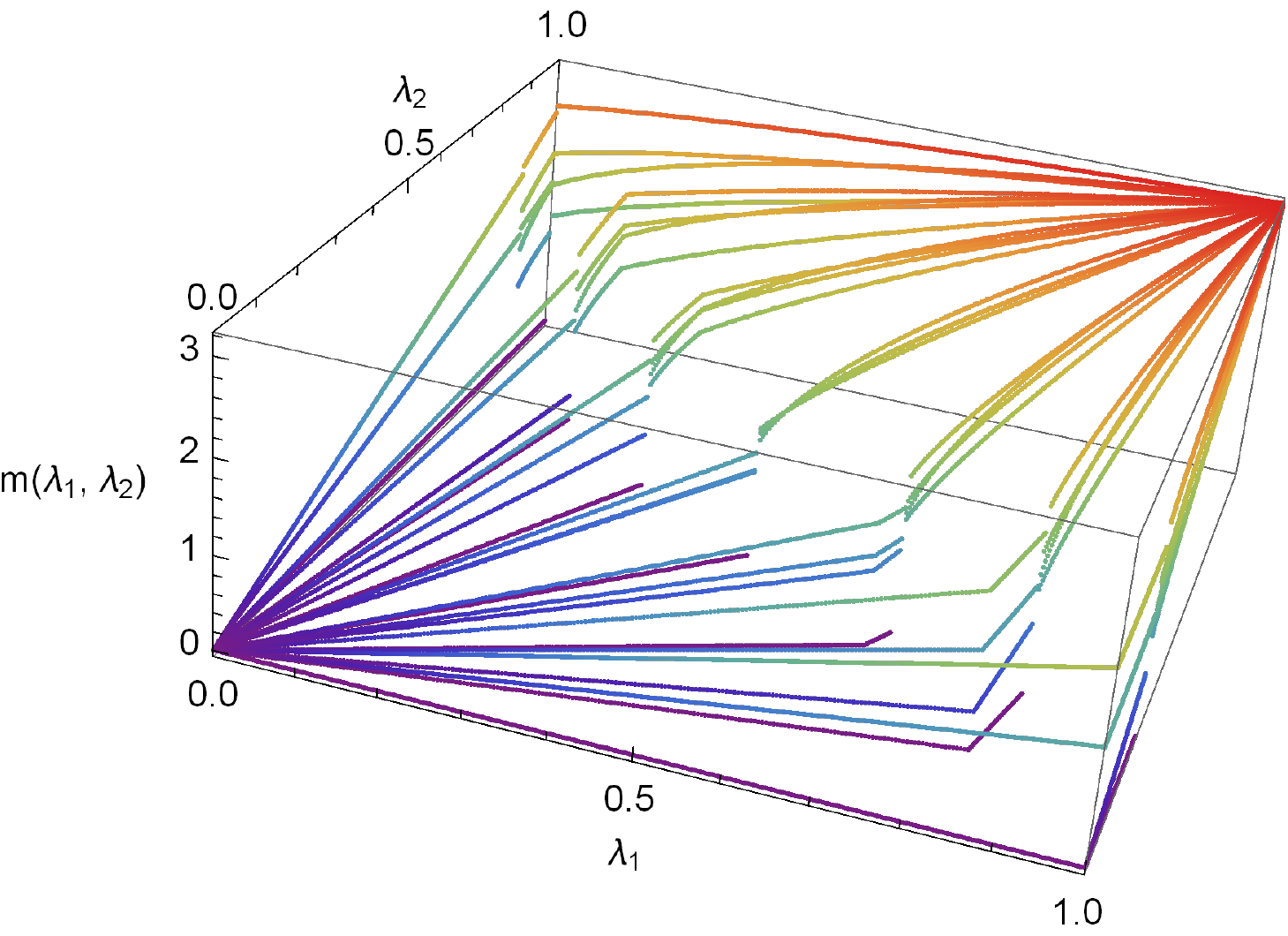}
\label{fig-4Mlc}
}% subfloat
\caption{$\algA = M_2 \oplus M_3 \to \algB = M_5$: plots for the square $[0,1]^2$ in the plane $(\lambda_1, \lambda_2)$.}
\label{fig-4Vn-4Mlc}
\end{figure}

\begin{figure}
\centering
\subfloat[Minimum values for the Higgs potential with details in insert.]{%
\includegraphics[width=0.48\linewidth]{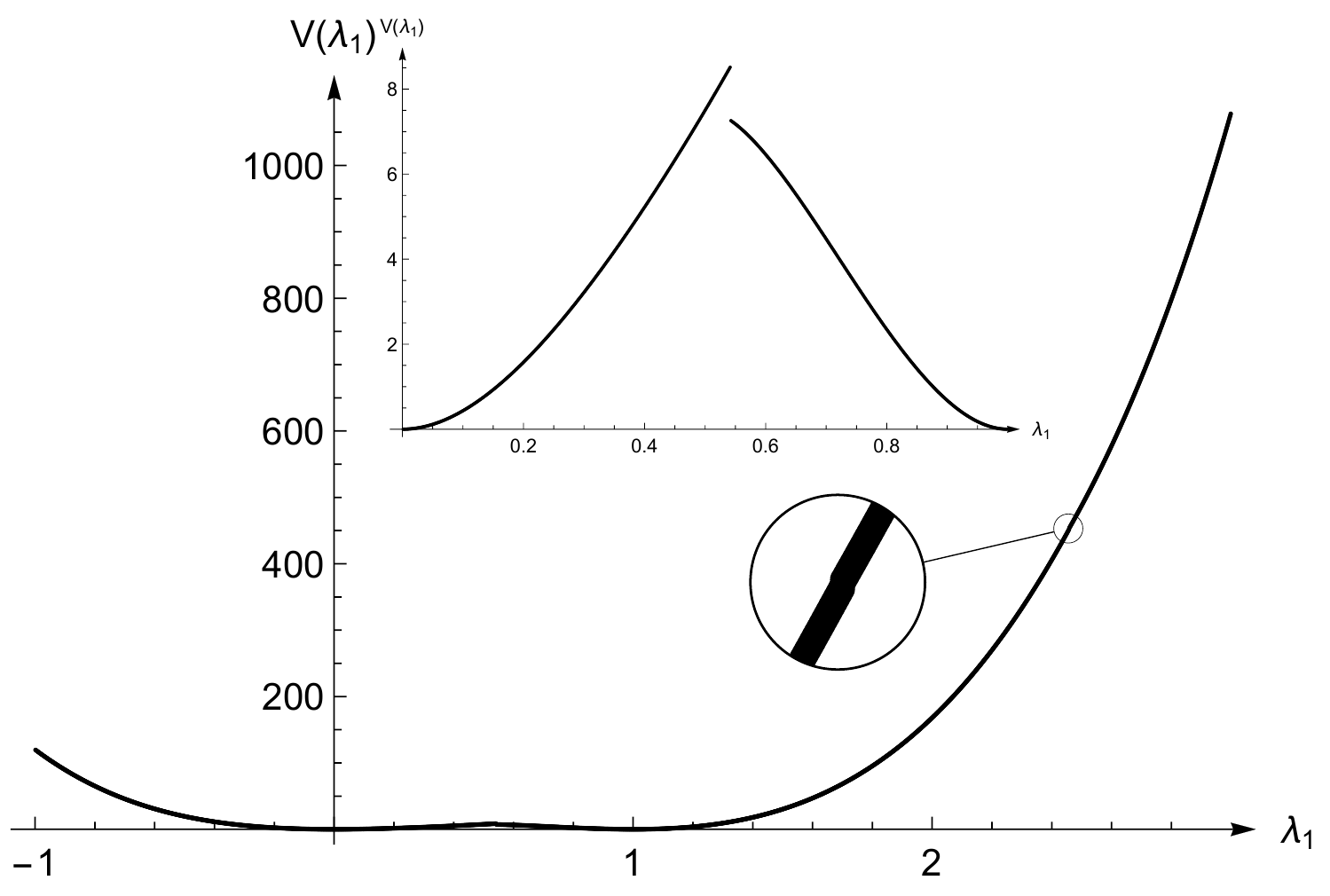}
\label{fig-2Vd}
}% subfloat
\subfloat[Masses for the gauge fields.]{%
\includegraphics[width=0.48\linewidth]{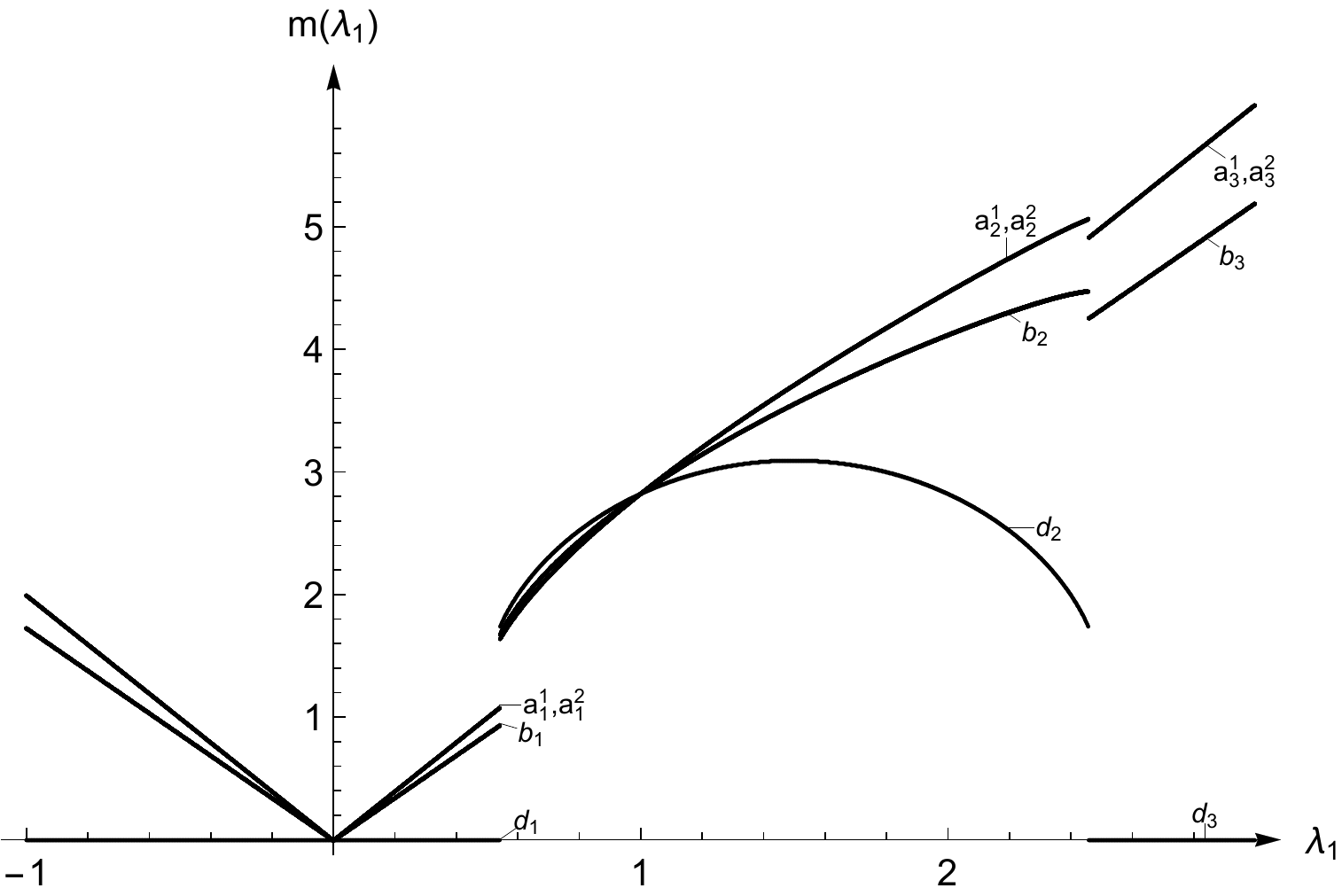}
\label{fig-2Md}
}% subfloat
\caption{$\algA = M_2 \oplus M_2 \to \algB = M_4$: plots on the diagonal $(\lambda_1, \lambda_1)$ for $\lambda_1 \in [-1, 3]$.}
\label{fig-2Vd-2Md}
\end{figure}

\begin{figure}
\centering
\subfloat[Minimum values for the Higgs potential with details in insert.]{%
\includegraphics[width=0.48\linewidth]{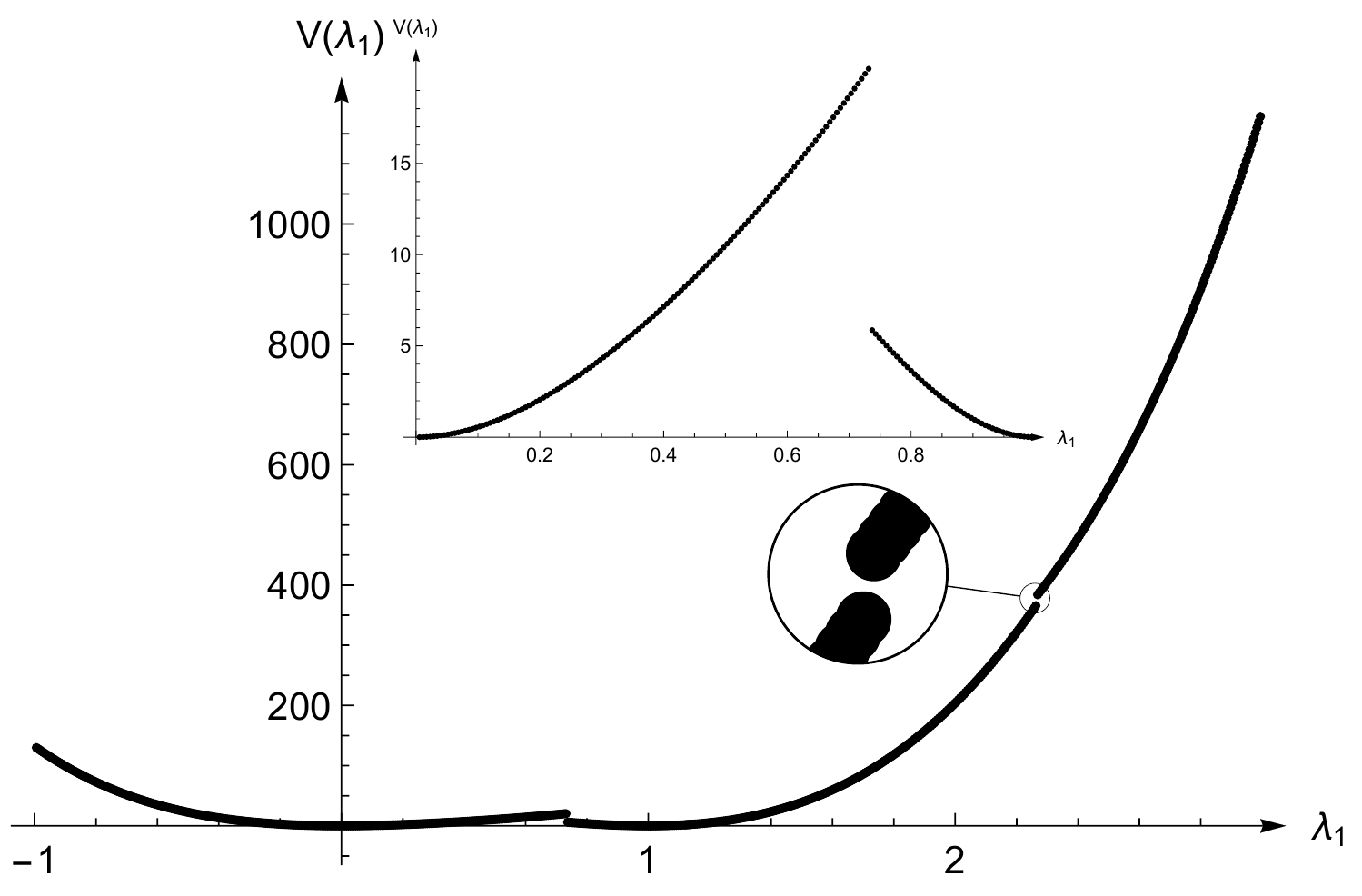}
\label{fig-3Vd}
}% subfloat
\subfloat[Masses for the gauge fields.]{%
\includegraphics[width=0.48\linewidth]{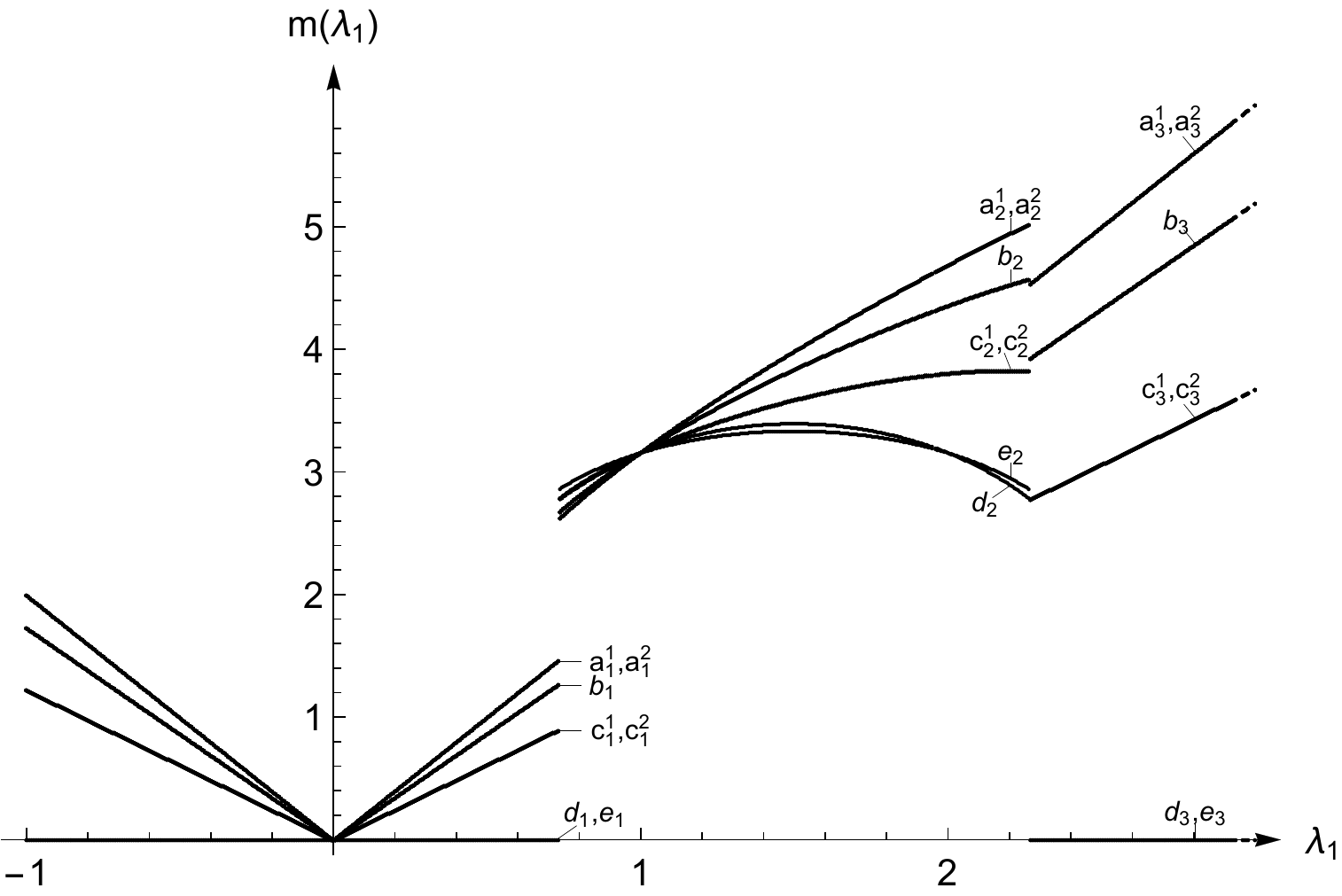}
\label{fig-3Md}
}% subfloat
\caption{$\algA = M_2 \oplus M_2 \to \algB = M_5$: plots on the diagonal $(\lambda_1, \lambda_1)$ for $\lambda_1 \in [-1, 3]$.}
\label{fig-3Vd-3Md}
\end{figure}

\begin{figure}
\centering
\subfloat[Minimum values for the Higgs potential with details in insert.]{%
\includegraphics[width=0.48\linewidth]{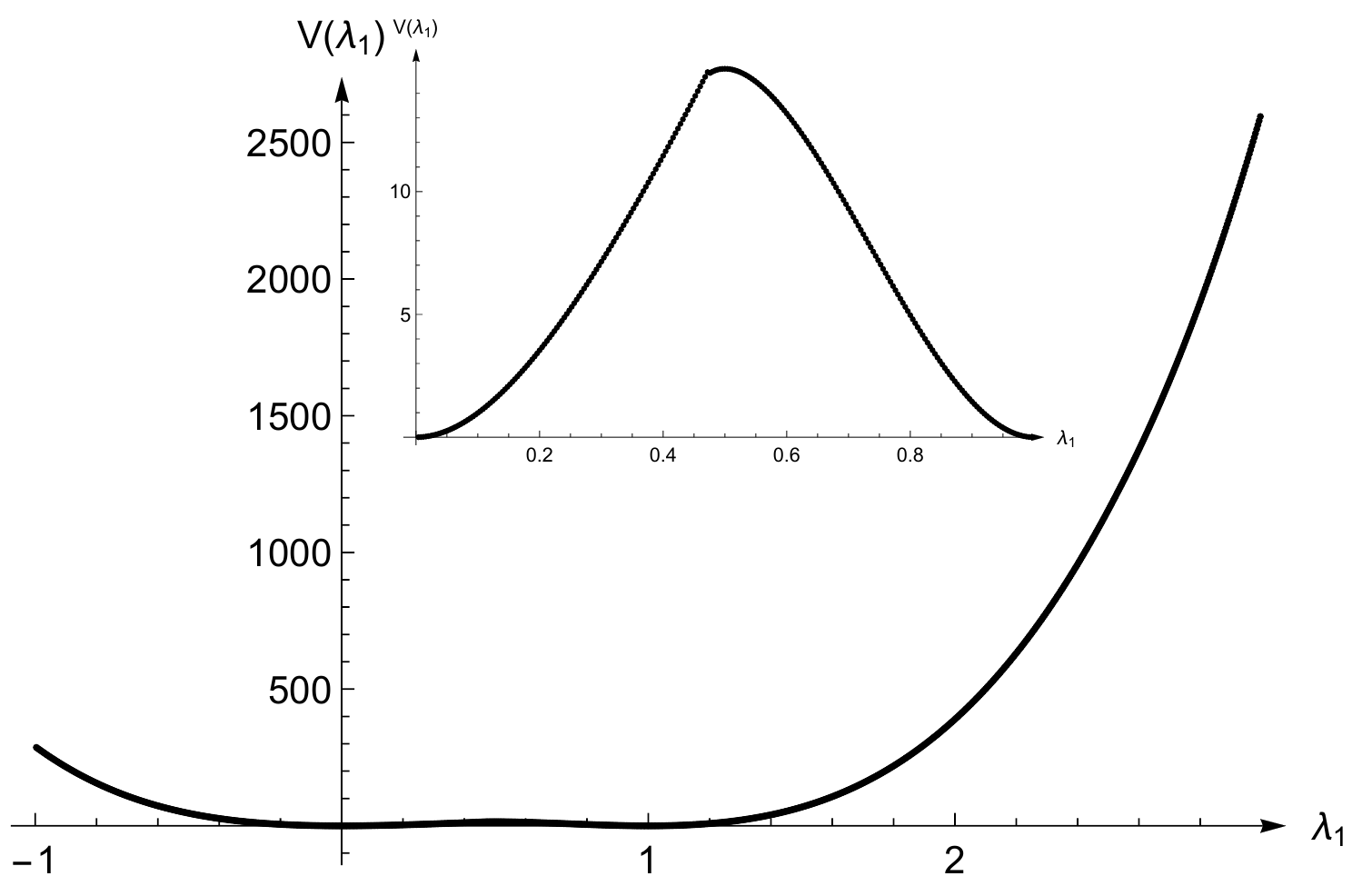}
\label{fig-4Vd}
}% subfloat
\subfloat[Masses for the gauge fields.]{%
\includegraphics[width=0.48\linewidth]{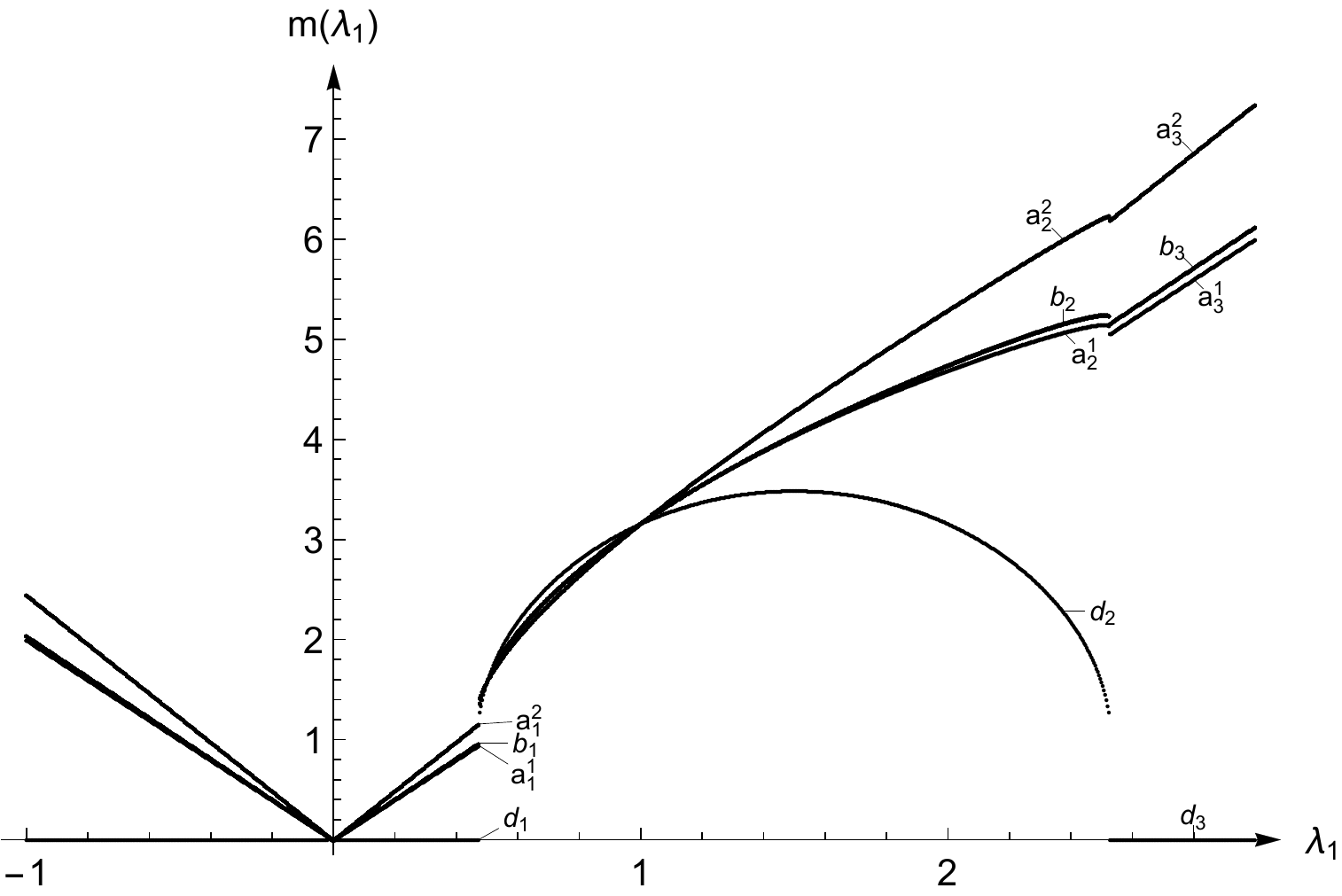}
\label{fig-4Md}
}% subfloat
\caption{$\algA = M_2 \oplus M_3 \to \algB = M_5$: plots on the diagonal $(\lambda_1, \lambda_1)$ for $\lambda_1 \in [-1, 3]$.}
\label{fig-4Vd-4Md}
\end{figure}

\begin{figure}
\centering
\subfloat[$\algA = M_2 \oplus M_2 \to \algB = M_4$.]{%
\includegraphics[width=0.32\linewidth]{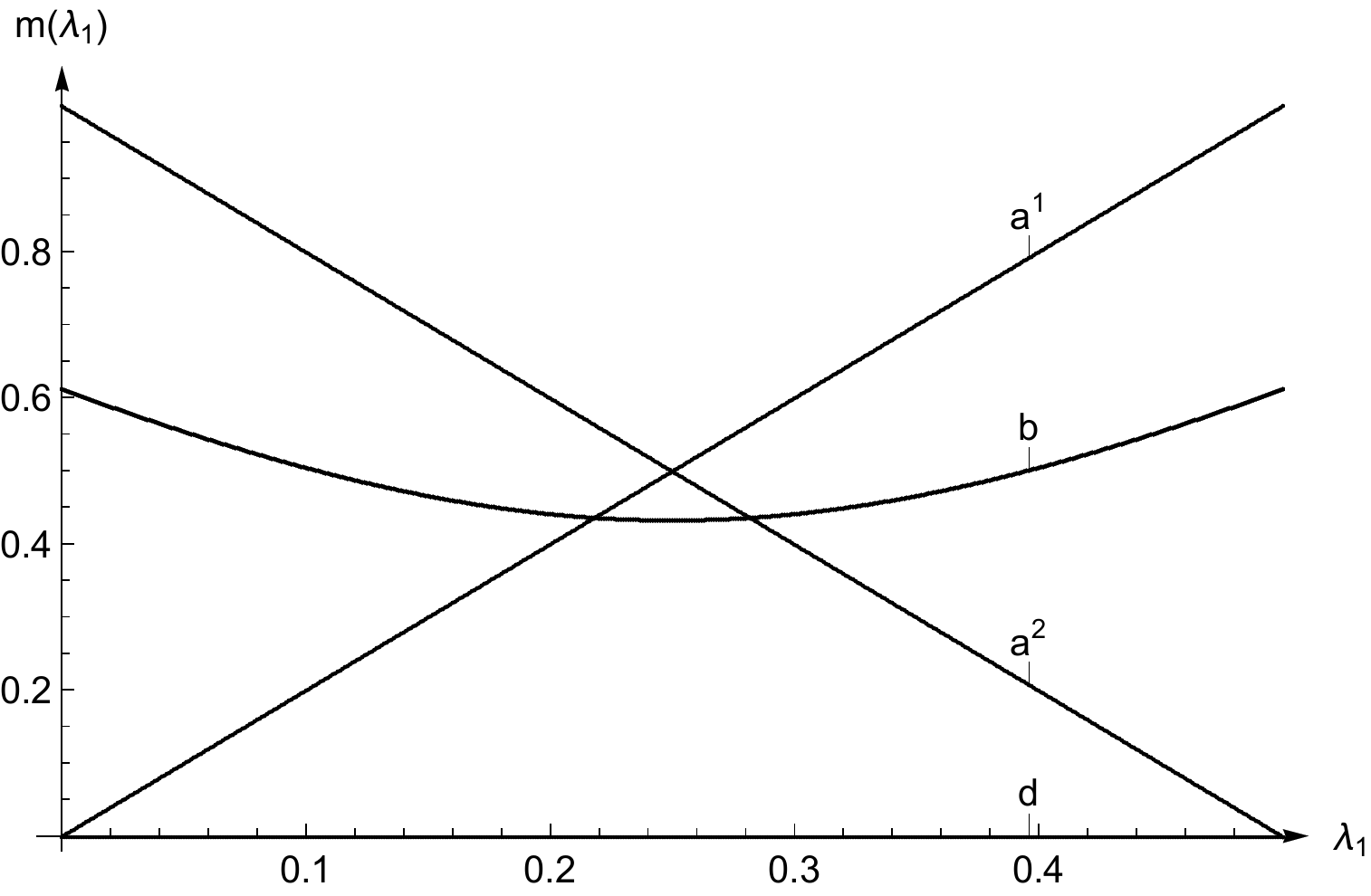}
\label{fig-2Mad_0.5}
}% subfloat
\subfloat[$\algA = M_2 \oplus M_2 \to \algB = M_5$.]{%
\includegraphics[width=0.32\linewidth]{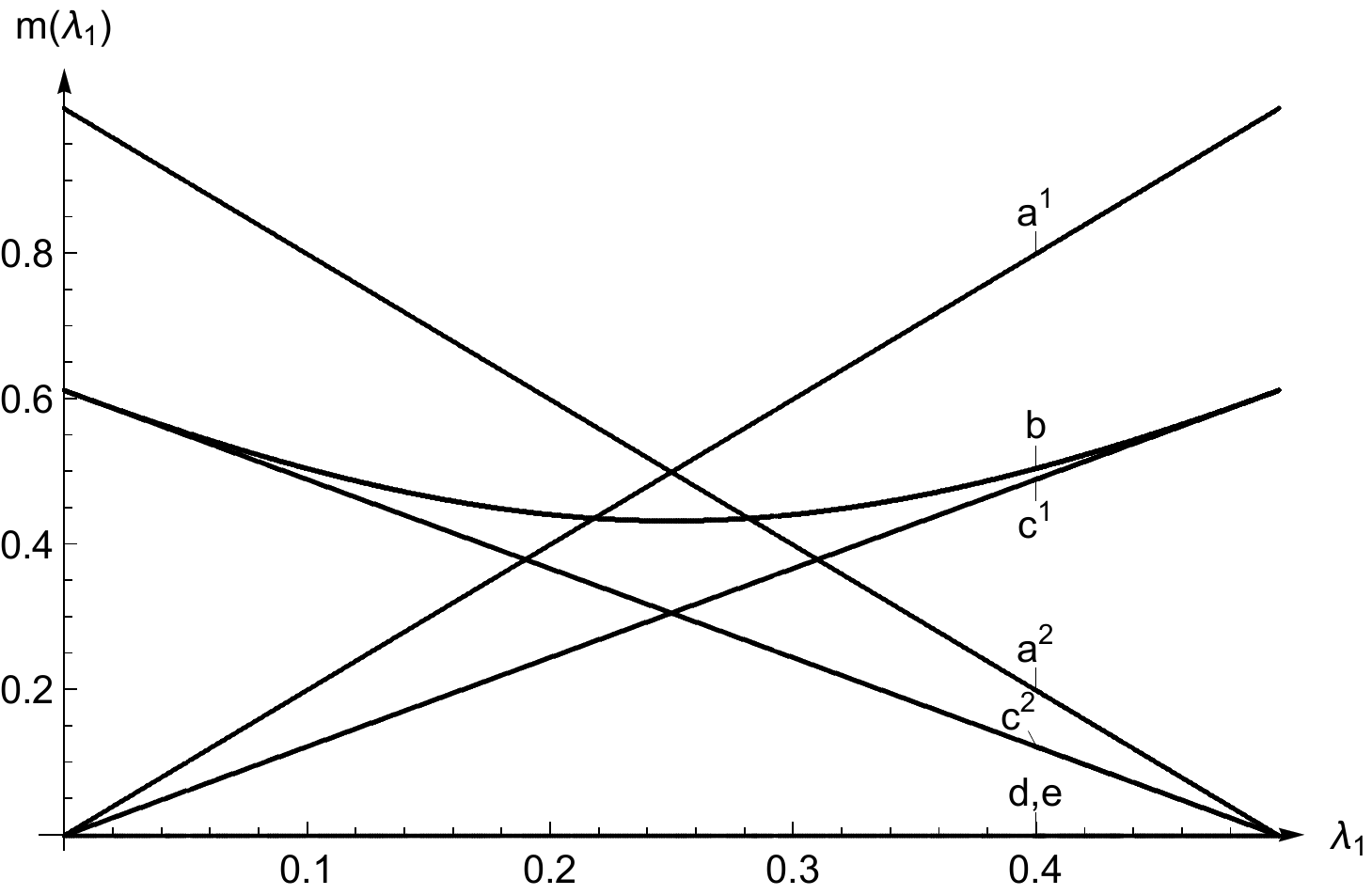}
\label{fig-3Mad_0.5}
}% subfloat
\subfloat[$\algA = M_2 \oplus M_3 \to \algB = M_5$.]{%
\includegraphics[width=0.32\linewidth]{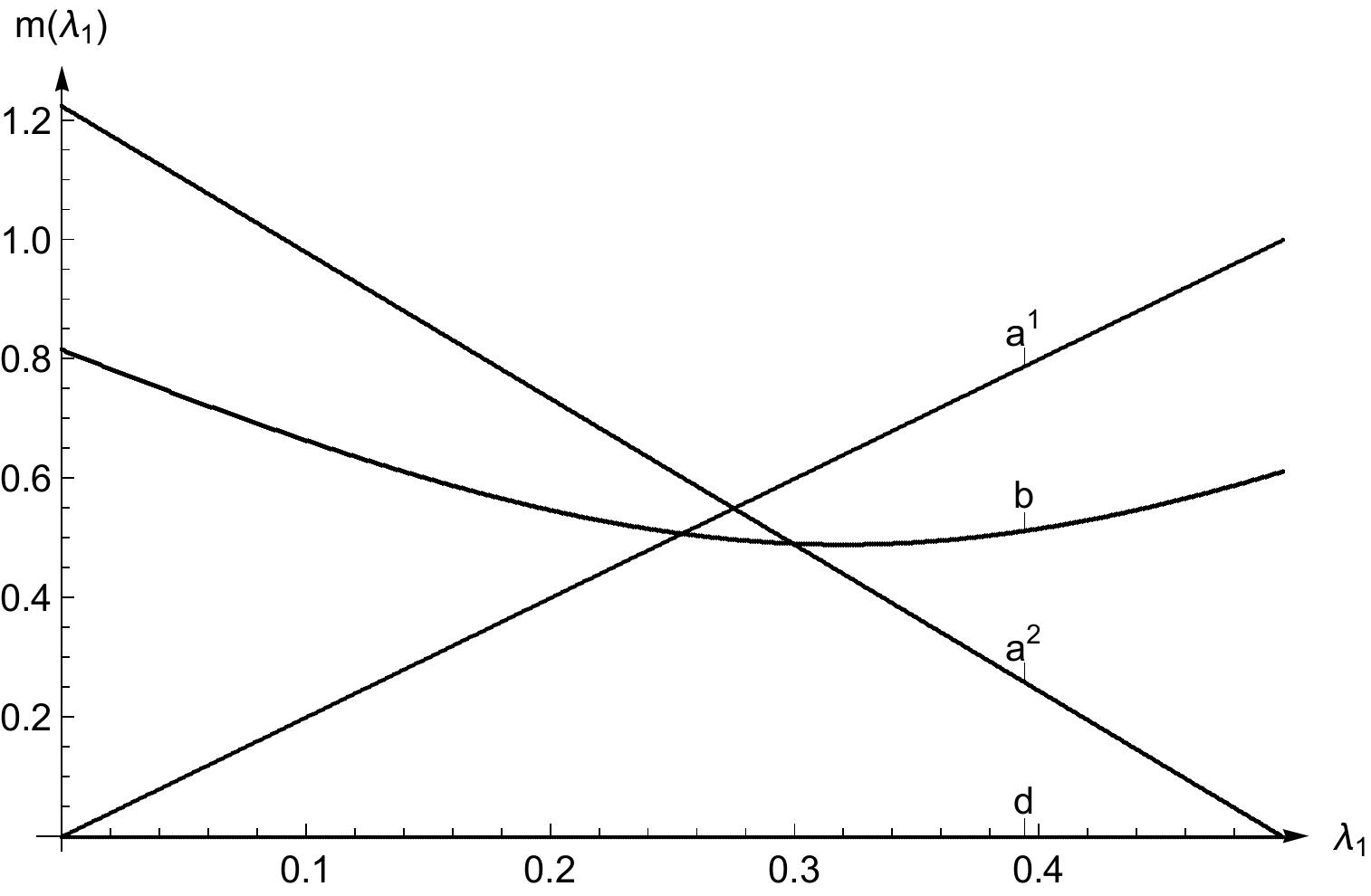}
\label{fig-4Mad_0.5}
}% subfloat
\caption{Masses for the gauge fields along the line $\lambda_1 + \lambda_2 = 0.5$.}
\label{fig-2Mad-3Mad-4Mad}
\end{figure}

\end{document}